\newtheorem{theorem}{Theorem}[section]
\newtheorem{proposition}{Proposition}[section]
\newtheorem{corollary}[theorem]{Corollary}
\newtheorem{condition}[theorem]{Condition}
\newtheorem{remark}[theorem]{Remark}
\newtheorem{definition}[theorem]{Definition}
\newtheorem{lemma}[theorem]{Lemma}
\newtheorem{example}[theorem]{Example}
\newcommand{\MX}{\mathcal{X}}
\newcommand{\MG}{\mathcal{G}}
\newcommand{\MF}{\mathcal{F}}
\newcommand{\MH}{\mathcal{H}}
\newcommand{\MI}{\mathcal{I}}
\newcommand{\MY}{\{-1,1\}}
\newcommand{\MP}{\mathcal{P}}
\newcommand{\MR}{\mathcal{R}}
\newcommand{\MB}{\mathcal{B}}
\newcommand{\Bk}{\mathbf{k}}
\newcommand{\Real}{\mathbb{R}}
\newcommand{\sign}{\mathrm{sign}}
\begin{document}
\title{{Constrained Classification and Policy Learning}{\Large \thanks{%
We thank Ashesh Rambachan, J\"{o}rg Stoye, and Max Tabord-Meehan for valuable discussions and comments. We also thank participants at the 2021 Cowles Foundation Econometrics Conference, 2021 NASMES, 2021 SEA conference, and seminar participants at Bristol, CEMFI, Chicago, Cornell, CUHK, Glasgow, Northwestern, NYU, Penn State, SciencesPo, Syracuse, UBC, UC-Berkeley, UC-Irvine, UC-Riverside, UPenn, UW Madison, and Z\"{u}rich for beneficial comments. The authors gratefully acknowledge financial support from ERC grant 715940,
the ESRC Centre for Microdata Methods and Practice (CeMMAP) (RES-589-28-0001),
Swiss NSF grant 192580,
and JSPS KAKENHI grant 22K20155.}}}
\author{Toru Kitagawa\thanks{ Department of Economics, Brown University, and Department of Economics, University College London. Email: toru\_kitagawa@brown.edu}, Shosei Sakaguchi\thanks{Faculty of Economics, University of Tokyo. Email: sakaguchi@e.u-tokyo.ac.jp}, and Aleksey Tetenov\thanks{Geneva School of Economics and Management, University of Geneva. Email: aleksey.tetenov@unige.ch}}
\date{\today}
\maketitle
\begin{abstract}
Modern machine learning approaches to classification, including AdaBoost, support vector machines, and deep neural networks, utilize surrogate loss techniques to circumvent the computational complexity of minimizing empirical classification risk.
These techniques are also useful for causal policy learning problems, since estimation of individualized treatment rules can be cast as a weighted (cost-sensitive) classification problem. 
Consistency of the surrogate loss approaches studied in \citet{Zhang_2004} and \citet{Bartlett_et_al_2006} relies on the assumption of \textit{correct specification}, which means that the specified set of classifiers is rich enough to contain a first-best classifier.
This assumption is, however, less credible when the set of classifiers is constrained by interpretability or fairness, leaving the applicability of surrogate loss-based algorithms unknown in such second-best scenarios. 
This paper studies consistency of surrogate loss procedures under a constrained set of classifiers without assuming correct specification. 
We show that in settings where the constraint restricts the classifier's prediction set only, hinge losses (i.e., $\ell_1$-support vector machines) are the only surrogate losses that preserve consistency in second-best scenarios. 
If the constraint additionally restricts the functional form of the classifier, consistency of a surrogate loss approach is not guaranteed, even with hinge loss. 
We therefore characterize conditions on the constrained set of classifiers that can guarantee consistency of hinge risk minimizing classifiers. 
Exploiting our theoretical results, we develop robust and computationally attractive hinge loss-based procedures for a monotone classification problem. 
\end{abstract}

\textbf{Keywords:} Surrogate loss, support vector machine, monotone classification, fairness in machine learning, statistical treatment choice, personalized medicine
\newpage{}

\onehalfspacing

\section{Introduction}

Binary classification, the prediction of a binary dependent variable $Y \in \{-1, +1\}$ based upon covariate information $X \in \mathcal{X}$, is one of the most fundamental problems in statistics and econometrics. Many modern machine learning algorithms build on statistically and computationally efficient classification algorithms, and their application has had a sizeable impact on various fields of study and in society in general, e.g., pattern recognition, credit approval systems, personalized recommendation systems, to list but a few examples.
Since estimation of an optimal treatment assignment policy can be cast as a weighted (cost-sensitive) classification problem (\citet{Zadrozny03}), methodological advances in the study of the classification problem apply to the causal problem of designing individualized treatment assignment policies. As the allocation of resources in both business and public policy settings has become more evidence-based and dependent upon algorithms, so too has there been increasingly active debate on how to make allocation algorithms respect societal preferences for interpretability and fairness (\citet{Dwork_et_al_2012}). Understanding the theoretical performance guarantee and efficient implementation of classification algorithms under interpretability or fairness constraints is a problem of fundamental importance with a strong connection to real life. 

In the supervised binary classification problem, the typical objective is to learn a classification rule that minimizes the probability of false prediction. We denote the distribution of $(Y,X)$ by $P$, and a (non-randomized) classifier that predicts $Y \in \{-1, +1 \}$ based upon $\text{sign}(f(X))$ by $f: \mathcal{X} \to \mathbb{R}$, where $\text{sign}(\alpha) = 1\{\alpha \geq 0\} - 1\{ \alpha < 0\}$. We denote the 0-level set of $f$ by $G_f \equiv \{x \in \mathcal{X} : f(x) \geq 0 \} \subset \mathcal{X}$, and refer to $G_f$ as the \textit{prediction set} of $f$. The goal is to learn a classifier that minimizes \textit{classification risk}:
\begin{equation} \label{eq:classification_risk}
R(f) \equiv P(\text{sign}(f(X)) \neq Y) = E_P[ 1 \{ Y \cdot \text{sign}( f(X) ) \leq 0 \} ]. 
\end{equation}
Given a training sample $\{ (Y_i,X_i) \sim_{iid} P : i=1,\dots,n \}$, the empirical risk minimization principle of \citet{Vapnik98} recommends estimating the optimal classifier by minimizing empirical classification risk,  
\begin{align}
&\hat{f} \in \arg \inf_{f \in \MF} \widehat{R}(f), \label{eq:ERM_optimization} \\
& \widehat{R}(f) \equiv \frac{1}{n} \sum_{i=1}^{n} 1\{ Y_i \cdot \mbox{sign}(f(X_i)) \leq 0 \},  \tag*{}
\end{align}
over a class of classifiers $\MF = \{f : \mathcal{X} \to \mathbb{R} \}$. If the complexity of $\MF$ is properly constrained, the empirical risk minimizing (ERM) classifier $\hat{f}$ has statistically attractive properties including risk consistency and minimax rate optimality. See, for example, \citet{DGLbook96} and \citet{Lugosi02}.

Despite the desirable performance guarantee of the ERM classifer, the computational complexity of solving the optimization in (\ref{eq:ERM_optimization}) becomes a serious hurdle to practical implementation, especially when the dimension of covariates is moderate to large. To get around this issue, the existing literature has offered various alternatives to the ERM classifier, including support vector machines (\citet{Cortes_Vapnik_1995}), AdaBoost (\citet{Freund_Schapire_1997}), and neural networks. Focusing on optimization, each of these algorithms can be viewed as targeting the minimization of \textit{surrogate risk},
\begin{equation} \label{eq:surrogate_risk}
    R_{\phi}(f) \equiv E_P [\phi (Y f(X))], 
\end{equation}
where $\phi: \mathbb{R} \to \mathbb{R}$ is called the \textit{surrogate loss} function, a different specification of which corresponds to a different learning algorithm. Convex functions make for a desirable choice of surrogate loss function as, combined with some functional form specification for $f$, the  minimization problem for the empirical analogue of the surrogate risk in (\ref{eq:surrogate_risk}) is a convex optimization problem. This insight and the computational benefit that it yields has been pivotal to learning algorithms being able to handle large scale problems with high-dimensional features. 

Can surrogate risk minimization lead to an optimal classifier in terms of the original classification risk? 
The seminal works of \citet{Zhang_2004} and \citet{Bartlett_et_al_2006} provide theoretical justification for the use of surrogate losses by clarifying the conditions under which surrogate risk minimization also minimizes the original classification risk. 
A crucial assumption for this important result is \textit{correct specification} of the classifiers, requiring that the class of classifiers $\MF$ over which the surrogate risk is minimized contains a classifier that globally minimizes the original classification risk, i.e., a classifier that is identical to or performs as well as the Bayes classifier $f^{\ast}_{Bayes}(x) \equiv 2 P(Y=1|X=x) -1$ in terms of its classification risk. 

The credibility of the assumption of correct specification is, however, limited if the set of implementable classifiers is constrained exogenously, independently of any belief concerning the underlying data generating process. Such a situation is becoming more prevalent due to the increasing need for interpretability or fairness of classification algorithms. Given that $f$ determines the classification rule only through $G_f$, such constraints can be represented by shape restrictions on the prediction set of $f$, i.e., the class of feasible $f$ is represented by $\MF_{\MG} \equiv \{f \in \MF : G_f \in  \MG \}$, where $\MG$ is a restricted class of sets in $\mathcal{X}$ satisfying the requirements for interpretability and fairness. To the best of our knowledge, how the validity of a surrogate loss approach is affected if $\MF_{\MG}$ misses the first-best classifier is not known.

The main contribution of this paper is to establish conditions under which a surrogate loss approach is valid without assuming correct specification. We first characterize those conditions on surrogate loss such that minimization of the surrogate risk can lead to a second-best rule (i.e., constrained optimum) in terms of the original classification risk. Specifically, we show that hinge losses $\phi_h(\alpha) = c \max\{ 0, 1- \alpha \}$, $c>0$, are the only surrogate losses that guarantee consistency of the surrogate risk minimization for a second-best classifier. An important implication of this result is that $\ell_1$-support vector machines are the only surrogate loss-based methods that are robust to misspecification. 

The computational attractiveness of a surrogate loss approach crucially depends not only upon the convexity of the surrogate loss function $\phi$ but also upon the functional form restrictions on the classifer $f$ that lead to a convex $\MF$. We therefore investigate how additional constraints on $f$ on top of $G_f \in \MG$ can affect the consistency of the hinge risk minimization. As a second contribution of this paper, we characterize a simple-to-check sufficient condition for consistency of the hinge risk minimization in terms of the additional functional form restrictions we can impose on $\MF_{\MG}$. We term a subclass of classifiers of $\MF_{\MG}$ satisfying the sufficient condition a \textit{classification-preserving reduction} of $\MF_{\MG}$. 

Exploiting our main theoretical results, we develop novel procedures for monotone classification. In monotone classification, prediction sets are constrained to 
\begin{equation}
\MG_M \equiv \{G \subset \mathcal{X} : x \in G  \Rightarrow x' \in G \ \forall x' \leq x \}, \notag
\end{equation}
where $x' \leq x$ is an element-wise weak inequality. Since $\MG_M$ coincides with the class of prediction sets spanned by the class of monotonically decreasing bounded functions $\MF_M \equiv \{ f: f \ \text{decreasing in }x, \  -1 \leq f \leq  1 \}$, hinge loss-based estimation for monotone classification can be performed by solving
\begin{align}
& \hat{f}_M \in \arg \inf_{f \in \MF_M} \widehat{R}_{\phi_h} (f), \label{eq:monotone_empriical_surrogate_risk} \\
& \widehat{R}_{\phi_h} (f) \equiv \frac{1}{n} \sum_{i=1}^n \phi_h (y_i f(x_i)). \notag
\end{align}
We show that the class of monotone classifiers $\MF_M$ is a constrained classification-preserving reduction of $\MF_{\MG_M}$, guaranteeing consistency of the hinge-risk minimizing classifier $\hat{f}_M$. 
Furthermore, we show that convexity of $\MF_{M}$ reduces the optimization of (\ref{eq:monotone_empriical_surrogate_risk}) to a finite dimensional linear programming problem and hence delivers significant computational gains relative to minimization of the original empirical classification risk. We also consider approximating $\MF_M$ using a sieve of Bernstein polynomials and estimating a monotone classifier by solving (\ref{eq:monotone_empriical_surrogate_risk}) over the Bernstein polynomials. Adopting either approach, the application of our main theorems guarantees
\begin{equation}
    R(\hat{f}_M) - \inf_{f \in \MF_M}R(f) \xrightarrow[p]{} 0, \notag
\end{equation}
as $n \to 0$, and this convergence is valid regardless of whether $\MF_M$ attains the first-best risk, i.e., $\inf_{f \in \MF_M} R(f) = \inf_{f \in \bar{\MF}}R(f)$, or not, where $\bar{\MF}$ is the class of measurable functions $f: \mathcal{X} \to \mathbb{R}$. We also derive the uniform upper bound of the mean of $R(\hat{f}_M) - \inf_{f \in \MF_M}R(f)$ to characterize the regret convergence rate attained by $\hat{f}_M$. 

\subsection{Connection and contributions to causal policy learning} \label{sec:Connection and contributions to causal policy learning}
For simplicity of exposition, this paper mainly focuses on the prototypical setting of binary classification. The main theoretical results can easily be extended to weighted (cost-sensitive) classification, where the canonical representation of the population risk criterion is given by
\begin{equation}
R^{\omega}(f) \equiv E_P[\omega \cdot 1 \{ Y \cdot \text{sign}( f(X) ) \leq 0 \} ]. \label{eq:weighted_classification_risk}
\end{equation}
Here, $\omega$ is a non-negative random variable defining the cost of misclassifying $Y$ that typically depends on $(Y,X)$. The cost of misclassification $\omega$ may represent the decision-maker's economic cost (\citet{LieliWhite2010}) or welfare weights over the individuals to be classified, as considered in \citet{Rambachan_2020} and \citet{Babii_et_al_2020}. The surrogate risk for weighted classification can be defined similarly to (\ref{eq:surrogate_risk}), as
\begin{equation}
    R_{\phi}^{\omega}(f) = E_P[\omega \cdot \phi(Y f(X)) ]. \label{eq:weighted_surrogate_risk}
\end{equation}

As discussed in \citet{KT18}, there are fundamental conceptual differences between the prediction problem of classification and the causal problem of treatment choice.
Nevertheless, if the training sample is obtained from a randomized control trial (RCT) or an observational study satisfying unconfoundedness (selection on observables), we can view minimization of the weighted classification risk in (\ref{eq:weighted_classification_risk}) as being equivalent to the maximization of the additive welfare criterion commonly specified in treatment choice problems. To see this equivalence, let $\{ (Z_i,D_i,X_i) : i=1, \dots, n \}$ be an independent and identically distributed RCT sample of $n$ experimental subjects, where $Z_i \in \mathbb{R}$ is subject $i$'s observed outcome, $D_i \in \{-1, +1\}$ is an indicator for his assigned treatment, and $X_i \in \mathcal{X}$ is a vector of pretreatment covariates, and let $(Z_i(d): d \in \{-1,+1\})$ be $i$'s potential outcomes satisfying $Z_i= Z_i(+1) \cdot 1\{D_i = +1 \} + Z_i(-1) \cdot 1\{D_i = -1 \}$. We denote the propensity score in the RCT sample by $e(x) \equiv P(D = +1|X=x)$ and assume that $e(x)$ is bounded away from 0 and 1 for all $x \in \mathcal{X}$. We denote the joint distribution of $(Z_i(+1),Z_i(-1),D_i,X_i)$ by $P$ and assume $P$ satisfies unconfoundedness, $(Z(+1), Z(-1)) \perp D | X$.

Similar to our consideration of classification, we represent a (non-randomized) treatment assignment rule by the sign of  $f: \mathcal{X} \to \mathbb{R}$ --i.e., the 0-level set $G_f = \{x \in \mathcal{X} : f(x) \geq 0 \} \subset \mathcal{X}$ specifies the subgroup of the population assigned to the treatment $+1$. Following \citet{Manski2004}, we consider evaluating the welfare performance of the assignment policy $f$ by the average outcomes attained under its associated assignment rule:
\begin{equation}
    W(f) \equiv E_P \left[ Z(+1) \cdot 1\{ X \in G_f \} + Z(-1) \cdot 1\{ X \notin G_f \} \right] \nonumber
\end{equation}
Relying on unconfoundedness of the experimental data and employing the inverse propensity score weighting technique, we can express this welfare in terms of the observable variables as\footnote{\citet{KL21} makes use of this transformation of the welfare objective function to develop an Adaboost algorithm for treatment choice.}
\begin{align}
W(f) &= E_P \left[ \frac{Z}{D e(X) + (1- D)/2 } \cdot 1\{ D = \text{sign}(f(X)) \} \right] \notag \\
     & = E_P \left[ \max \left\{0,\frac{Z}{D e(X) + (1- D)/2 } \right\} \right]  - E_P \left[ \omega_p \cdot 1\{\sign(Z) \cdot D \cdot \text{sign}(f(X)) \leq 0 \} \right], \mspace{15mu}  \label{eq:policy_learning_welfare}
     \intertext{where}
     \omega_p & \equiv \frac{|Z|}{D e(X) + (1- D)/2 } \geq 0. \notag
\end{align}
Provided that the first moment of $ \omega_p$ is finite, maximization of $W(f)$ is equivalent to minimization of the weighted classification risk $R^{\omega}(f)$ defined in (\ref{eq:weighted_classification_risk}) with $\omega = \omega_p$ and $Y=\sign(Z)\cdot D$. As a result, optimal treatment assignment rules can be viewed as optimal classifiers for $D$ in terms of weighted classification risk. This equivalence also holds for other methods of policy learning, such as the offset-tree learning of \citet{BeygelzimerLangford09} and the doubly-robust approaches of \citet{SJ15} and \citet{AW17}, which correspond to different ways of constructing or estimating the weighting term $\omega_p$.  

Due to its equivalence to weighted classification, a surrogate loss approach to policy learning proceeds by minimizing the empirical analogue of (\ref{eq:weighted_surrogate_risk}) with $\omega = \omega_p$ and $Y=\sign(Z)\cdot D$. 
Section \ref{sec:Extension to individualized treatment rules} of this paper shows that our main theoretical results established for constrained binary classification carry over to the setting of policy learning in which feasible treatment assignment policies are constrained exogenously due to fairness and legislative considerations. This paper therefore offers valuable and novel contributions to current research and public debate regarding how to make use of machine learning algorithms to design individualized policies. If treatment assignment rules are constrained to be monotone, our concrete proposals for monotone classification algorithms can be applied to policy learning, which yields significant gains in computational efficiency relative to the mixed integer programming approaches considered in \citet{KT18} and \citet{MT17}. 



\subsection{Related literature}

This paper is closely related to the literature of consistency and performance guarantees for surrogate risk minimization. Notable works in this literature include \cite{Mannor_et_al_2003}, \cite{Jiang_2004}, \cite{Lugosi_Vayatis_2004} , \cite{Zhang_2004}, \citet{Steinwart_2005, Steinwart_2007}, \cite{Bartlett_et_al_2006}, \cite{Nguyen_et_al_2009}, and \cite{Scott_2012}.
Under the assumption of correct specification, \cite{Zhang_2004} and \cite{Bartlett_et_al_2006} derive quantitative relationships between excess classification risk and excess surrogate risk, and then provide general conditions for surrogate risk minimization to achieve risk consistency. \cite{Bartlett_et_al_2006} show that the classification-calibration property of surrogate loss, defined in Section \ref{sec:calibration of MG-constrained classification} below, guarantees risk consistency. 
\cite{Zhang_2004} and \cite{Bartlett_et_al_2006} show that many commonly used surrogate loss functions, including hinge loss, exponential loss, and truncated quadratic loss, satisfy the conditions needed for risk consistency.
In a classification problem different from ours, where a pair comprising a quantizer and a classifier is chosen, \cite{Nguyen_et_al_2009} study sufficient and necessary conditions for surrogate risk minimization to yield risk consistency.
\cite{Nguyen_et_al_2009} show that only hinge loss functions satisfy the conditions required for risk consistency in their problem. Correct specification of the class of classifiers is an essential condition for consistency in all of the surrogate risk minimization approaches studied in the literature. 
The key contribution of our paper is to relax the assumption of correct specification and to clarify the conditions that are required for the surrogate loss function to yield a consistent surrogate risk minimization procedure.

Relaxing the assumption of correct specification connects this paper to classification problems with exogenous constraints. Such problems are studied in machine learning and statistics, and include interpretable classification (e.g., \cite{Zeng_et_al_2017}, and \cite{Zhang_et_al_2018}), fair classification (e.g., \cite{Dwork_et_al_2012}), and monotone classification (e.g., \cite{Cano_et_al_2019}). Some works in the existing literature adopt a surrogate loss approach. \cite{Donini_et_al_2018} use the $\ell_1$-support vector machine in fair classification, where the hinge risk minimization is subject to a statistical fairness constraint. \cite{Chen_Li_2014} use the $\ell_1$-support vector machine with a monotonicity constraint, which constrains the class of feasible classifiers to a class of certain monotone functions. However, neither paper shows consistency of their hinge risk minimization procedures in terms of classification risk.

\cite{Agarwal_et_al_2018} propose an approach to reduce fairness constrained classification problems to weighted classification. Their reduction can accommodate various fairness constraints proposed in the machine learning literature and the monotonicy constraint of this paper if $X$ is discrete. Our consistency results on surrogate risk minimizing classifiers can apply to an arbitrary class $\mathcal{G}$ regardless of whether their reduction applies or not. 

Focusing on optimization, ERM classification and maximum score estimation (\cite{Manski1975}, \cite{ManskiThompson1989}) share the same objective function. 
\cite{Horowitz_1992} proposes smooth maximum score estimation, where kernel smoothing is performed on the 0-1 loss to obtain a differentiable objective function. However, the smoothed objective function remains non-convex and does not offer the computational gains that the surrogate risk minimization approach with convex surrogates can deliver.

This paper also contributes to a growing literature on statistical treatment rules in econometrics, including \cite{Manski2004}, \cite{Dehejia2005}, \cite{HiranoPorter2009}, \citet{Stoye2009,Stoye2012}, \cite{Chamberlain2011}, \cite{BhattacharyaDupas2012}, \cite{Tetenov2012}, \cite{Kasy2018}, \cite{KT18, KT21}, \cite{Viviano_2019}, \cite{AW17}, \cite{MT17}, \cite{Sakaguchi_2021}, \cite{KW23}, among others. As discussed above, the policy learning methods of \cite{KT18}, \cite{AW17}, and \cite{MT17} build on the similarity between empirical welfare maximizing treatment choice and ERM classification.
\cite{MT17} propose penalization methods to control the complexity of treatment assignment rules, and derive relevant finite sample upper bounds on the regret of the estimated treatment rules. 
\cite{AW17} apply doubly-robust estimators to estimate the weight $\omega$ in (\ref{eq:weighted_surrogate_risk}), and show that an $1/\sqrt{n}$-upper bound on regret can also be achieved in the observational study setting.  
These works optimize an empirical welfare objective involving an indicator loss function. As a result, the practical implementation of such methods is sometimes discouraging, especially when the sample size or number of covariates is moderate to large.

Estimation of individualized treatment rules is a topic of active research in other fields including medical statistics, machine learning, and computer science. Notable works in these fields include \cite{Zadrozny03}, \cite{BeygelzimerLangford09}, \cite{Qian_Murphy_2011}, \cite{Zhao2012JASA}, \cite{SJ15}, \cite{Zhao_et_al_2015}, and \cite{Kallus_2020}, among others. \cite{Zhao2012JASA} propose using $\ell_{1}$-support vector machines to solve the weighted classification with individualized treatment choice problem, and show risk consistency. They specify a rich class of treatment choice rules that is a reproducing kernel Hilbert space, and assume correct specification. \cite{Zhao_et_al_2015} extend this approach to estimate optimal dynamic treatment regimes.


\section{Constrained classification with surrogate loss}
\label{sec:constrained classification and surrogate loss approach}

Consider the binary classification problem of ascribing a binary label $Y \in \{-1,+1\}$ based upon covariates $X \in \mathcal{X}$, which are collectively distributed according to a joint distribution $P$. We let $X$ be a $d_x$-dimensional vector, $d_x< \infty$, and denote its marginal distribution by $P_{X}$. We denote the conditional probability of $Y=+1$ given $X=x$ by $\eta(x) \equiv P(Y = +1|X=x)$ and otherwise maintain the notation introduced in the Introduction. The ultimate objective is to minimize the classification risk of (\ref{eq:classification_risk}).

We study constrained classification problems where an optimal classifier is searched for over a restricted class of functions. Section \ref{sec:misspecification in constrained classification} studies the consistency of surrogate risk minimization in the special case that the prespecified class of classifiers contains a classifier whose prediction set agrees with the prediction set of the Bayes classifier.  
Section \ref{sec:MG-constrained classification} introduces a classification problem that embeds a constraint on the prediction sets, which is a central problem throughout the paper.

\subsection{Misspecification in constrained classification} \label{sec:misspecification in constrained classification}

Let $\MF$ be a constrained class of classifiers $f: \MX \rightarrow \Real$. If the set of classifiers were unconstrained, it is well known that the Bayes classifier defined by 
\begin{equation}
    f^{\ast}_{Bayes} = 2 \eta(x) - 1 \notag
\end{equation} 
minimizes the classification risk. Due to the constraints on the class of classifiers, however, the minimized classification risk on $\MF$ can be strictly larger than the first-best minimal risk $R(f^{\ast}_{Bayes})$. We refer to this situation as $R$-misspecification of $\MF$, which we formally define in the following definition.  

\bigskip{}

\begin{definition}[$R$-misspecification] \label{def:misspecification}
$\MF$ is $R$-misspecified if 
\begin{equation} \notag
\inf_{f \in \MF} R(f) > R(f^{\ast}_{Bayes}).
\end{equation}
If the inequality instead holds with equality, we say that $\MF$ is $R$-correctly specified.
\end{definition}

\bigskip{}

Because the 0-1 loss function is neither convex nor continuous, minimizing the empirical analog of $R(f)$ is computationally challenging and often infeasible given the scale of the problems that we encounter in practice. Commonly used classification algorithms, such as boosting and support vector machines, replace the 0-1 loss with a surrogate loss function, $\phi:\Real\rightarrow\Real$, and aim to minimize the surrogate risk $R_{\phi}(f) \equiv E_P[\phi(Yf(X))]$.
Table \ref{tb:surrogate loss functions and their forms of H} below lists some commonly used surrogate loss functions including
the \textit{hinge loss} $\phi_h(\alpha)=c\max\{0, 1-\alpha\}$, which corresponds to $\ell_1$-support vector machines, and the \textit{exponential loss}  $\phi_e(\alpha)=\exp(-\alpha)$, which corresponds to AdaBoost.

We also introduce the concept of misspecification of $\MF$ in terms of surrogate risk as follows.
\bigskip{}

\begin{definition}[$R_\phi$-misspecification]
\label{def:surrogate risk misspecification}
Let $f_{\phi,FB}^{\ast}$ be a minimizer of $R_{\phi}$ over the unconstrained class of classifiers, i.e., the class of all measurable functions $f: \mathcal{X} \to \mathbb{R}$.  A constrained class $\MF$ is $R_\phi$-misspecified if 
\begin{align*}
    \inf_{f \in \MF} R_{\phi}(f) > R_{\phi}(f_{\phi,FB}^{\ast}).
\end{align*}
If the inequality instead holds with equality, we say that $\MF$ is $R_{\phi}$-correctly specified. 
\end{definition}
\bigskip{}

The seminal theoretical results that guarantee consistency of surrogate-risk classification (\citet{Zhang_2004}, \citet{Bartlett_et_al_2006}, and
\citet{Nguyen_et_al_2009}) crucially rely on the assumption that $\MF$ is both $R$-correctly specified and $R_{\phi}$-correctly specified in the sense of Definitions \ref{def:misspecification} and \ref{def:surrogate risk misspecification}, respectively. The central question that this paper poses is \textit{how is a surrogate loss approach affected if $\MF$ is $R$-misspecified or $R_\phi$-misspecified?} This misspecification is a likely scenario, especially when the origins of the constraints have nothing to do with the assumptions on $P$, as is the case in the examples discussed in the next subsection. 

Throughout the paper, we limit our analysis to the class of classification-calibrated loss functions defined in \citet{Bartlett_et_al_2006}.

\bigskip{}

\begin{definition}[Classification-calibrated loss functions]\label{def:classification-calibrated loss functions} For $a \in \Real$ and $0 \leq b \leq 1$, define $C_{\phi}(a, b) \equiv \phi ( a ) b+\phi(-a )(1-b)$.
A loss function $\phi$ is classification-calibrated if for any $b \in [0,1]\backslash\{1/2\} $, 
\begin{equation*}
\inf_{\left\{ a\in\mathbb{R}\, :\, a \left(2b-1\right)<0\right\} }C_{\phi}(a,b) >  \inf_{\left\{ a\in\mathbb{R} \,:\, a \left(2b-1\right) \geq 0\right\} }C_{\phi}(a,b).
\end{equation*}
\end{definition}

\bigskip{}

Noting that the surrogate risk can be expressed as
\begin{equation}
    E_P[\phi(Yf(X))]=E_{P_X}[C_{\phi}(f(X), \eta(X))], \label{eq:surrogate rick expression}
\end{equation}
the definition of classification-calibrated loss functions implies that at every $x \in \MX$ with $\eta(x)\neq 1/2$, every $f(x)$ that minimizes $C_{\phi}(f(x), \eta(x))$ has the same sign as the Bayes classifier, $\sign(2\eta(x)-1))$.
\citet{Bartlett_et_al_2006} shows that many commonly used surrogate loss functions including those listed in Table \ref{tb:surrogate loss functions and their forms of H} are classification-calibrated.\footnote{\citet{Bartlett_et_al_2006} also show that any convex loss function $\phi$ is classification-calibrated if and only if it is differentiable at $0$ and $\phi^{\prime}(0)<0$.}

Having introduced two notions of misspecification, we now clarify the relationship between $R$-misspecification and $R_\phi$-misspecification.
\bigskip{}

\begin{proposition}\label{prop:misspecification relation}
Let $\MF$ be a constrained class of classifiers and $f_{\phi}^{\ast} \in \MF$ be a minimizer of $R_{\phi}$ over $\MF$. Suppose $\phi$ is a classification-calibrated loss function. 
\\
(i) For any distribution $P$ on $\{-1,1\}\times \MX$, if $\MF$ is $R_{\phi}$-correctly specified, then $\MF$ is $R$-correctly specified and $R(f_{\phi}^{\ast})=R(f_{Bayes}^{\ast})$ holds;\\
(ii) If $\phi$ is, in addition, convex, there exist a distribution $P$ on $\{-1,1\}\times \MX$ and a class of classifiers $\MF$ under which $\MF$ is $R$-correctly specified but $R_{\phi}$-misspecified, and $R (f_{\phi}^{\ast}) > R(f_{Bayes}^{\ast})$ holds. 
\end{proposition}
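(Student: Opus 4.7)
The plan is to prove part (i) by a pointwise calibration argument and part (ii) by constructing an explicit two-point counterexample that works uniformly in the convex loss.

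For part (i), I would start from the fact that $R_\phi$-correct specification means $f_\phi^\ast$ achieves the unconstrained infimum $R_\phi(f_{\phi,FB}^\ast)$. Using the representation $R_\phi(f) = E_{P_X}[C_\phi(f(X),\eta(X))]$ in (\ref{eq:surrogate rick expression}), the standard conditioning argument forces $f_\phi^\ast(x)$ to minimize $a \mapsto C_\phi(a,\eta(x))$ for $P_X$-almost every $x$. Classification-calibration of $\phi$ then says that at every such $x$ with $\eta(x)\neq 1/2$, any minimizer of $C_\phi(\cdot,\eta(x))$ has the Bayes sign, $\sign(2\eta(x)-1)$. Since the excess 0-1 risk admits the representation $R(f)-R(f_{Bayes}^\ast)=E_{P_X}[|2\eta(X)-1|\cdot 1\{\sign(f(X))\neq \sign(2\eta(X)-1)\}]$, the integrand vanishes for $f=f_\phi^\ast$ on $\{\eta\neq 1/2\}$ and the weight vanishes on $\{\eta=1/2\}$. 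Hence $R(f_\phi^\ast)=R(f_{Bayes}^\ast)$, which, together with $f_\phi^\ast\in\MF$, forces $\inf_{\MF} R\leq R(f_{Bayes}^\ast)$ and thus $R$-correct specification.

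For part (ii), I would exhibit a single two-point example that is simultaneously $R$-correctly specified and $R_\phi$-misspecified for any convex classification-calibrated $\phi$. Take $\MX=\{x_1,x_2\}$, $P_X(x_1)=p$, with $\eta(x_1),\eta(x_2)>1/2$, so the Bayes classifier predicts $+1$ at both points. Let $\MF=\{f_a:f_a(x_1)=a,\ f_a(x_2)=-a,\ a\in[-M,M]\}$ for some large $M$. The element $f_0$ has $\sign(f_0)=(+1,+1)$ by the paper's sign convention, so $f_0$ attains the Bayes prediction set and $\MF$ is $R$-correctly specified. On $\MF$ the surrogate risk reduces to $R_\phi(a)=A\phi(a)+B\phi(-a)$ with $A=p\eta(x_1)+(1-p)(1-\eta(x_2))$ and $B=p(1-\eta(x_1))+(1-p)\eta(x_2)$. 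Choose parameters (for instance $p=0.9$, $\eta(x_1)=0.51$, $\eta(x_2)=0.55$) so that $p(2\eta(x_1)-1)>(1-p)(2\eta(x_2)-1)$, i.e., $A>B$.

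The key ingredient is the footnote fact that any convex classification-calibrated $\phi$ is differentiable at $0$ with $\phi'(0)<0$. Then $R_\phi'(0)=(A-B)\phi'(0)<0$, so $R_\phi$ is strictly decreasing at $a=0$, and by convexity every minimizer $a^\ast$ satisfies $a^\ast>0$. Consequently $f_\phi^\ast(x_2)=-a^\ast<0$ disagrees with Bayes at $x_2$, giving $R(f_\phi^\ast)-R(f_{Bayes}^\ast)=(1-p)(2\eta(x_2)-1)>0$. For the $R_\phi$-misspecification claim, each unconstrained pointwise minimizer of $C_\phi(\cdot,\eta(x_i))$ is strictly positive (because $\eta(x_i)>1/2$ and $\phi'(0)<0$ give a strictly decreasing directional derivative at $a=0$), so the unconstrained infimum is achieved by some $f$ with $f(x_1),f(x_2)>0$, violating the antisymmetry required by $\MF$; hence $\inf_{\MF}R_\phi>R_\phi(f_{\phi,FB}^\ast)$ strictly.

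The main obstacle is making the construction truly uniform across convex calibrated $\phi$, including hinge loss, where minimizers of $C_\phi(\cdot,\eta)$ are not unique. The derivative-at-zero argument sidesteps this by only using the sign of $\phi'(0)$, which the Bartlett--Jordan--McAuliffe characterization gives for free; boundedness of the interval $[-M,M]$ ensures attainment of $f_\phi^\ast$ without any growth assumption on $\phi$ at $-\infty$.
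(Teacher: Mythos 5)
Your proof is correct, and for part (ii) it takes a genuinely different route from the paper.

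For part (i), you give a self-contained pointwise conditioning argument, whereas the paper simply cites Claim 3 of Theorem 1 in Bartlett et al.\ (2006). Your argument is the standard one behind that citation and is sound. One small caveat you should be aware of: the classification-calibration definition only forces $f_\phi^\ast(x)(2\eta(x)-1)\geq 0$, which in the case $\eta(x)<1/2$ technically allows the edge value $f_\phi^\ast(x)=0$, whose sign is $+1$ under the paper's convention $\sign(0)=+1$ and thus disagrees with Bayes. The paper glosses over this in the same way in its discussion after Definition 3. It is harmless here, and once $\phi$ is additionally convex and classification-calibrated, differentiability at $0$ with $\phi'(0)<0$ rules out the minimizer being exactly $0$ when $\eta\neq 1/2$, so for part (ii) there is no issue at all.

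For part (ii), your construction is genuinely different from the paper's and, in my view, cleaner. The paper picks two points with degenerate conditional probabilities $\eta(x_1)=0$, $\eta(x_2)=1$, two values $z_1,z_2$ in a neighborhood of $0$ chosen so that $\phi(z_2)<\phi(z_1)$, and a class $\MF=\MF_1\cup\MF_2$ built from two disjoint families of linear classifiers with specified signs on $(b_1,b_2)$. The counterexample then amounts to comparing $R_\phi$ on $\MF_1$ versus $\MF_2$. You instead use nondegenerate $\eta(x_1),\eta(x_2)>1/2$, a one-parameter antisymmetric family $f_a(x_1)=a$, $f_a(x_2)=-a$, and the derivative sign $R_\phi'(0)=(A-B)\phi'(0)<0$, which forces the optimizer to some $a^\ast>0$ and hence a wrong sign at $x_2$. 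Your route has a couple of advantages: it works uniformly in $\phi$ without having to locate points where $\phi$ strictly decreases (only $\phi'(0)<0$ is used, a consequence of Theorem 2 in Bartlett et al.); it makes the mechanism of failure transparent (the surrogate trades away accuracy at the lower-weight point to buy margin at the higher-weight point); and because the $\eta$'s are in $(0,1)$, the unconstrained pointwise minimizer $f_{\phi,FB}^\ast$ exists by coercivity, which makes the $R_\phi$-misspecification claim easy to close. The only detail worth spelling out when you conclude $\inf_{\MF}R_\phi>R_\phi(f_{\phi,FB}^\ast)$ is that since $a^\ast>0$, the value $f_{a^\ast}(x_2)=-a^\ast<0$ lies on the ``wrong side'' in the calibration inequality with $\eta(x_2)>1/2$, so the pointwise surrogate cost at $x_2$ is \emph{strictly} above $\inf_a C_\phi(a,\eta(x_2))$; this gives the strict gap directly rather than relying on the looser observation that $\MF$ does not contain the unconstrained minimizer.
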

\begin{proof}
See Appendix \ref{appx:proof 1}.
\end{proof}

\bigskip

Proposition \ref{prop:misspecification relation} (i), which rephrases Claim 3 of Theorem 1 in \cite{Bartlett_et_al_2006}, implies that surrogate risk minimization on the $R_\phi$-correctly specified class $\MF$ leads to (first-best) optimal classification in terms of the classification risk. 
An equivalent statement following Theorem 1 in \cite{Bartlett_et_al_2006} is that for any $P$ and every sequence of measurable functions $\{ f_{i}:\MX \rightarrow \Real\}$, 
\begin{align}
    R_{\phi}(f_{i}) \rightarrow \inf_{f\in \MF}R_{\phi}(f) \mbox{ implies that } R(f_{i}) \rightarrow \inf_{f\in \MF}R(f). \nonumber
\end{align}
This result justifies the approach of surrogate risk minimization when $\MF$ is a sufficiently rich class of classifiers (e.g., the reproducing kernel Hilbert space of functions with a large number of features as used in support vector machines), since $R_{\phi}$-correct specification, which is a credible assumption to make given a rich class of classifiers, guarantees $R$-correct specification.  

Proposition \ref{prop:misspecification relation} (ii), in contrast, shows that $R$-correct specification of $\MF$ does \textit{not} guarantee $R_{\phi}$-correct specification.\footnote{
Given a convex classification-calibrated loss function $\phi$, our proof of Proposition \ref{prop:misspecification relation} (ii) in Appendix \ref{appx:proof 1} constructs a pair comprising a $R$-correctly specified class of classifiers $\MF$ and a distribution $P$ that leads to $R_{\phi}$-misspecification. In the construction, we assume that $x_1 \neq x_2 \in \MX$ supported by $P_X$ on which $\phi(f(x_1))<\phi(-f(x_2)))$ holds for all $f \in \MF$ and that $f(x_1)<0 \leq f(x_2)$ holds for some $f \in \MF$, and consider $P$ that specifies a value of $\eta(x_2)$ close to 1, and a value of $\eta(x_1)$ slightly below $1/2$. Such a construction of $P$ is not pathological or limited to the specific class of classifiers considered in the proof.}
$R_{\phi}$-misspecification of $\MF$ can lead to the selection of a suboptimal classifier in $\MF$ in terms of the classification risk,
which illustrates the pitfall of adopting a surrogate loss approach with constrained classifiers. Even when we are confident that the constrained class $\MF$ is $R$-correctly specified, we cannot justify the use of $\MF$ in the surrogate risk minimization.

\subsection{$\MG$-constrained classification}
\label{sec:MG-constrained classification}

In this section, we consider restricting the class of classifiers by requiring that their prediction sets belong to a prespecified class of sets, $\MG\subset2^{\mathcal{X}}$. See Examples \ref{exm:interpretable classification}--\ref{exm:fair classification} below for motivating examples.

We denote by 
\begin{equation}
\MF_{\MG}\equiv\left\{ f:G_{f}\in\MG,\ f(\cdot)\in[-1,1]\right\} \label{eq:g_restriction} \notag
\end{equation}
the class of classifiers whose prediction sets are constrained to $\MG$. In this definition, we restrict $f$ to be bounded and, without loss of generality, normalize its range to $[-1,1]$. Other than on the shape of the 0-level set and on the range, $\MF_{\MG}$ does not impose any constraint on the functional form of $f \in \MF_{\MG}$. The goal of the constrained classification problem is then to find a best classifier, in the sense that it minimizes the classification risk $R\left(\cdot \right)$ over $\MF_{\MG}$. We refer to $\MF_{\MG}$ as the \textit{$\MG$-constrained class of classifiers} and to the classification problem over $\MF_\MG$ as \textit{$\MG$-constrained classification}. 

The specification of the class of prediction sets $\MG$ represents the fairness, interpretability, and other exogenous requirements that are desired for classification rules. Some examples follow.

\bigskip{}

\begin{example}[Interpretable classification]
\label{exm:interpretable classification}
Decision-makers may prefer simple decision or classification rules that are easily understood or explained even at the cost of harming prediction accuracy. This concept, often referred to as \textit{interpretable machine learning}, has been pursued, for instance, in the prediction analysis of recidivism (\citet{Zeng_et_al_2017}) and the decision on medical intervention protocol (\citet{Zhang_et_al_2018}). An example is a linear classification rule, in which
$\MG$ is a class of half-spaces with linear boundaries in $\mathcal{X}$, 
\begin{equation}
    \MG=\{x\in\Real^{d_x}:x^{T}\beta\geq0,\beta\in\Real^{d_x}\}. \notag
\end{equation}
Note that $f \in \MF_{\MG}$ is not restricted to be a linear function. Any function $f$, including nonlinear functions, is included in $\MF_{\MG}$ as long as its prediction set $G_{f}$ is a hyperplane in $\mathcal{X}$. A classification tree is another type of classification rule that is interpretable. See, e.g., \citet{Breiman84book}.
\end{example}

\bigskip{}

\begin{example}[Monotone classification]
\label{exm:monotone classification} The framework we study can accommodate monotonicity constraints on classification.
Formally, a monotonicity constraint corresponds to a partial order $\precsim$ on $\MX$, and any prediction set $G_{f}$ has to respect this partial order in the sense that if $x_{1} \precsim x_{2}$ and $x_{1} \in G_{f}$, then $x_{2}\in G_{f}$. Monotonicity constraints have been utilized in the classification of credit rating (\citet{Chen_Li_2014}), and in the assignment of job training in the context of policy learning (\citet{MT17}). 
\end{example}

\bigskip{}

\begin{example}[Fair classification]
\label{exm:fair classification}  Specification of $\MG$ can accommodate some fairness constraints introduced
in the literature on \textit{fair classification}. 
Let $A=\left\{ 0,1\right\} $ be an element of $X$ indicating a binary protected group variable (e.g., race, gender). The decision-maker wants to ensure fairness of classification by, for instance, equalizing the raw positive classification rate (known as statistical parity): $P_{X}\left(f(x)\geq0\mid A=1\right)=P_{X}\left(f(x)\geq0\mid A=0\right)$.
The classification problem embedding this constraint is equivalent to $\MG$-constrained classification with 
\begin{align*}
{\cal G} & =\left\{ G\in2^{{\cal X}}:P_{X}\left(X\in G\mid A=1\right)=P_{X}\left(X\in G\mid A=0\right)\right\} ,
\end{align*}
where $\MG$ depends on $P_{X}$ in this case. This fairness constraint is studied by \citet{Calders_Verwer_2010},
\citet{Kamishima_et_al_2011}, \citet{Dwork_et_al_2012}, \citet{Feldman_et_al_2015},
among others.
Some other forms of
fairness constraint, such as equalized odds and equalized positive
predictive value as reviewed by \citet{Chouldechova_et_al_2018},
can be accommodated in our framework as well via an appropriate construction of $\MG$. 
\end{example}

\bigskip

In the $\MG$-constrained classification problem, $R$-correct specification of $\MF_{\MG}$ is necessary and sufficient for the surrogate risk minimizer $f_{\phi}^{*}$ to achieve the first-best minimum risk.

\bigskip{}

\begin{proposition}\label{prop:MG-constrained misspecification relation}
Suppose $\phi$ is a classification-calibrated loss function. Let $\MG \subseteq 2^{\MX}$ be a class of measurable subsets of $\MX$ and $f_{\phi}^{\ast} \in \MF_{\MG}$ be a minimizer of $R_{\phi}$ over $\MF_{\MG}$. Then, for any distribution $P$ on $\{-1,1\}\times \MX$, $R(f_{\phi}^{\ast}) = R(f_{Bayes}^{\ast})$ holds if and only if $\MF_{\MG}$ is $R$-correctly specified. 
\end{proposition}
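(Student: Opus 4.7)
The plan is to dispose of the ``only if'' direction at once and concentrate on the converse. If $R(f_{\phi}^{\ast}) = R(f_{Bayes}^{\ast})$, then $f_{\phi}^{\ast} \in \MF_{\MG}$ certifies $\inf_{f \in \MF_{\MG}} R(f) \leq R(f_{\phi}^{\ast}) = R(f_{Bayes}^{\ast})$, and the reverse inequality is automatic, so $\MF_{\MG}$ is $R$-correctly specified. For the converse, assume $\MF_{\MG}$ is $R$-correctly specified and let $G_B \in \MG$ denote a prediction set whose associated classifier attains $R(f_{Bayes}^{\ast})$. An elementary calculation with $R(G) = \int_G (1-\eta)\, dP_X + \int_{G^c} \eta\, dP_X$ shows that any such $G_B$ satisfies $G_B \supseteq \{x: \eta(x) > 1/2\}$ and $G_B \cap \{x: \eta(x) < 1/2\} = \emptyset$ up to $P_X$-null sets.

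The substantive step is a pointwise decomposition of the surrogate risk minimization. Using $R_\phi(f) = E_{P_X}[C_\phi(f(X),\eta(X))]$ from (\ref{eq:surrogate rick expression}) and the fact that membership in $\MF_{\MG}$ only constrains $f(x)$ to lie in $[0,1]$ on $G_f$ and in $[-1,0)$ off $G_f$, define
\begin{equation*}
H_+(b) \equiv \inf_{a \in [0,1]} C_\phi(a,b), \quad H_-(b) \equiv \inf_{a \in [-1,0)} C_\phi(a,b), \quad V(G) \equiv \int_G H_+(\eta)\, dP_X + \int_{G^c} H_-(\eta)\, dP_X.
\end{equation*}
A measurable-selection argument using continuity of $\phi$ yields $\inf_{f \in \MF_{\MG}} R_{\phi}(f) = \inf_{G \in \MG} V(G)$, so $G_\phi^{\ast} \equiv G_{f_\phi^{\ast}}$ satisfies $V(G_\phi^{\ast}) \leq V(G_B)$, which rearranges to
\begin{equation*}
\int \bigl(H_+(\eta) - H_-(\eta)\bigr)\bigl(\mathbf{1}_{G_\phi^{\ast}} - \mathbf{1}_{G_B}\bigr)\, dP_X \leq 0.
\end{equation*}
Invoking classification-calibration, I next show $H_+(b) < H_-(b)$ for $b > 1/2$ and $H_+(b) > H_-(b)$ for $b < 1/2$ (with $H_+ = H_-$ at $b=1/2$ by the symmetry $C_\phi(a,1/2) = C_\phi(-a,1/2)$). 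Combining this sign pattern with the characterization of $G_B$, a case check on $\{\eta > 1/2\}$ and $\{\eta < 1/2\}$ shows the integrand above is pointwise nonnegative, forcing it to vanish $P_X$-almost everywhere. Hence $G_\phi^{\ast}$ coincides with $G_B$ on $\{\eta \neq 1/2\}$, and since classification risk is insensitive to the prediction set on $\{\eta = 1/2\}$, we obtain $R(f_\phi^{\ast}) = R(G_\phi^{\ast}) = R(G_B) = R(f_{Bayes}^{\ast})$.

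The principal obstacle is verifying $H_+(b) \neq H_-(b)$ for $b \neq 1/2$ despite the range restriction $a \in [-1,1]$ baked into $\MF_{\MG}$, since Definition \ref{def:classification-calibrated loss functions} asserts calibration only against the infimum over all of $\mathbb{R}$. For the convex classification-calibrated losses of primary interest this is immediate: convexity of $C_\phi(\cdot,b)$ together with $\partial_a C_\phi(0,b) = (2b-1)\phi'(0) \neq 0$ for $b \neq 1/2$ (from the convex-calibration characterization in \citet{Bartlett_et_al_2006}) forces $H_-(b) = C_\phi(0,b)$ and $H_+(b) < C_\phi(0,b)$ when $b > 1/2$, and symmetrically when $b < 1/2$. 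For general non-convex classification-calibrated $\phi$ one would insert a short compactness/continuity lemma before the pointwise decomposition to transport the calibration inequality from $\mathbb{R}$ to the restricted range $[-1,1]$.
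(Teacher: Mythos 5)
Your proof follows the same skeleton as the paper's but makes the key step fully explicit. Both use the decomposition of the surrogate risk along prediction sets --- your $H_\pm$ and $V(G)$ are exactly the paper's $C_\phi^\pm$ and $\mathcal{R}_\phi(G)$ --- and both rest on the sign pattern of $\Delta C_\phi(\eta) = H_+(\eta) - H_-(\eta)$. Where the paper condenses the pointwise argument into the assertion that $f_\phi^\ast(x)$ must lie in the Bayes-sign constrained minimizer set ``because otherwise $f^\ast$ dominates $f_\phi^\ast$,'' you display the inequality $\int(H_+ - H_-)(\mathbf{1}_{G_\phi^\ast} - \mathbf{1}_{G_B})\,dP_X \leq 0$ and observe that the integrand is pointwise nonnegative, which is cleaner and supplies the missing justification. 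Your ``only if'' direction is also a more direct rendering of the paper's contrapositive argument.

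The ``principal obstacle'' you raise is genuine, and, importantly, the paper does not address it either. Definition \ref{def:classification-calibrated loss functions} compares infima of $C_\phi(\cdot,b)$ over all of $\mathbb{R}$, but $C_\phi^\pm$ (and the range normalization built into $\MF_\MG$) restrict $a$ to $[-1,1]$; shrinking the feasible set can only increase both infima, so the strict inequality need not survive the restriction. The paper's appendix simply states the required sign pattern in display (\ref{eq:classification calibrated}) without proof. Your convexity argument is the right remedy: for convex calibrated $\phi$, $\phi'(0)<0$, so $C_\phi(\cdot,b)$ has strictly negative one-sided derivative at $0$ when $b>1/2$; convexity then gives $H_-(b)=C_\phi(0,b)$, and the local descent into $[0,1]$ gives $H_+(b)<C_\phi(0,b)$. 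However, I would push back on your closing sentence: for genuinely non-convex classification-calibrated $\phi$, no compactness or continuity lemma can transport the inequality to $[-1,1]$, because the conclusion can actually fail there. Take a continuous nonincreasing $\phi$ that is constant on $[-1,1]$ and only drops for $\alpha>1$ and rises for $\alpha<-1$; one can arrange such a $\phi$ to satisfy Definition \ref{def:classification-calibrated loss functions}, yet $C_\phi(a,\eta)$ is then constant over $a\in[-1,1]$, so $\Delta C_\phi\equiv 0$, every $f\in\MF_\MG$ is a surrogate-risk minimizer, and the claimed equality $R(f_\phi^\ast)=R(f_{Bayes}^\ast)$ need not hold. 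The correct scope of the proposition is therefore the class of $\phi$ for which calibration persists on $[-1,1]$ (in particular all convex calibrated losses), which is exactly the case your proof handles.
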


\begin{proof}
See Appendix \ref{appx:proof 1}.
\end{proof}

\bigskip

Proposition \ref{prop:MG-constrained misspecification relation} shows that if $\phi$ is classification-calibrated, $f_{\phi}^{\ast} \in \MF_{\MG}$ that minimizes the surrogate risk over $\MF_{\MG}$ leads to a globally optimal classifier in terms of the classification risk if and only if $\MF_{\MG}$ is $R$-correctly specified. A comparison of Proposition \ref{prop:misspecification relation} (ii) and Proposition \ref{prop:MG-constrained misspecification relation} clarifies a special feature of the $\MG$-constrained class of classifiers. Specifically, Proposition \ref{prop:misspecification relation} (ii) establishes that, in general, $R$-correct specification of a constrained class of classifiers $\MF$ does not guarantee $R(f^{\ast}_{\phi}) = R(f^{\ast}_{Bayes})$. In contrast to the seminal results about surrogate risk consistency shown in \cite{Zhang_2004} and \cite{Bartlett_et_al_2006}, our claim does not require $R_\phi$-correct specification of $\MF_\MG$.

If constraints defining $\MG$ are motivated by some considerations that are independent of any belief on the underlying data generating process (e.g., Examples \ref{exm:interpretable classification}--\ref{exm:fair classification} above), $R$-correct specification of $\MF_{\MG}$ is hard to justify. Therefore, an important question for our analysis to consider is whether or not surrogate risk minimization procedures can yield a classifier achieving $\inf_{f \in \MF_{\MG}} R(f)$ \textit{without} requiring $R$-correct specification of $\MF_{\MG}$.

\section{Calibration of $\MG$-constrained classification}
\label{sec:calibration of MG-constrained classification}

This section investigates the risk consistency of a surrogate risk minimization approach over $\MF_{\MG}$, where $\MF_{\MG}$ is now allowed to be $R$-misspecified. Let $f^{\ast}$ be an optimal classifier that minimizes the classification risk over $\MF_{\MG}$: 
\begin{align}
f^{*} & \in\arg\inf_{f\in\MF_{\MG}}R(f).\notag 
\end{align}
Similarly, we denote a best classifier among $\MF_{\MG}$ in terms of
the surrogate risk by $f_{\phi}^{\ast}$,
\begin{align}
f_{\phi}^{*} & \in\arg\inf_{f\in\MF_{\MG}}R_{\phi}(f),\label{eqsurr_problem}. \notag
\end{align}

To begin our analysis, let us first perform a simple numerical example to assess the influence of misspecification in constrained classification.  

\bigskip{}

\begin{example}[Numerical example 1]
\label{ex:numerical example 1} Let $\MX=\{0,1,2\}$ and $\MG=\{\emptyset,\{2\},\{2,1\},\{2,1,0\}\}$. Here, $\MG$ imposes monotonicity of the prediction sets in a way that is compatible with Example \ref{exm:monotone classification}. 
We specify $P_X$ to be uniform on $\MX$ and $P(Y=+1\mid X=0)=0.9$, $P(Y=+1\mid X=1)=0.3$, and $P(Y=+1\mid X=2)=0.2$. The Bayes classifier therefore predicts $Y= +1$ at $x=0$ and $Y= -1$ at $x=1$ and $2$, but such a prediction set is excluded from $\MG$. That is, $\MF_{\MG}$ is R-misspecified.
Under this specification, the second-best (constrained optimum) classifier
$f^{*}$ has a prediction set equal to $\emptyset$, and attains the classification risk $R(f^{\ast}) = 0.47$.

For each of hinge loss $\phi_h$ with $c=1$, exponential loss $\phi_e$, and truncated quadratic loss $\phi_q$, we compute the classifier minimizing the surrogate risk $f_{\phi}^{*}$ and the classification risk at the surrogate optimal classifier $R(f_{\phi}^{*})$. 
Figure \ref{fig:monotone classification} illustrates each computed classifier with each loss function.
We obtain
\begin{align*}
& R(f_{\phi_{h}}^{*}) = 0.47 = R(f^{*}), \mspace{15mu} R(f_{\phi_{e}}^{*})=R(f_{\phi_{q}}^{*}) = 0.53, \\ 
& G_{f_{\phi_{h}}^{*}}=\emptyset = G_{f^{*}}, \mspace{15mu} G_{f_{\phi_{e}}^{*}}=G_{f_{\phi_{q}}^{*}}=\{2,1,0\}.
\end{align*}
In this specification, the hinge risk optimal classifier agrees with the second best optimal classifier, whereas this is not the case for the exponential or truncated quadratic loss. 

\bigskip
\begin{figure}[ht]
\caption{Monotone classifiers minimizing classification and surrogate risks}
\begin{center}
\begin{tabular}{c}
\includegraphics[scale=0.45]{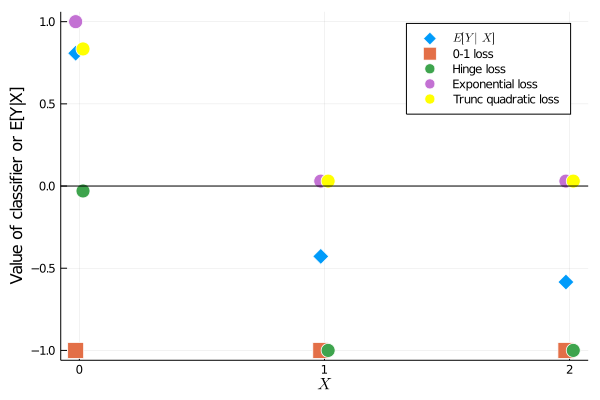}
\label{fig:monotone classification}
\end{tabular}
\begin{tablenotes}
\item[] {\footnotesize Notes: The square points correspond to the values of $f^{\ast}(x)$ at $x=0,1$, and $2$. The circular points correspond to the values of each of $f_{\phi_h}^{\ast}(x)$, $f_{\phi_e}^{\ast}(x)$, and $f_{\phi_q}^{\ast}(x)$ at $x=0,1$, and $2$.}
\end{tablenotes}
\end{center}
\end{figure}
\end{example}

This example illustrates that hinge loss is robust to $R$-misspecification of $\MF_{\MG}$, but exponential and truncated quadratic losses are not. To what extent, can we generalize this finding? What conditions do we need to guarantee that surrogate risk minimizing classifiers are consistent to the second-best (constrained optimal) classification rule in terms of the classification risk? We answer these questions below.

For any classifier $f$, we define the \textit{$\MG$-constrained excess risk} of $f$ as \begin{equation}
R(f)-\inf_{f\in\MF_{\MG}}R(f), \notag  
\end{equation}
which is the \textit{regret} of $f$ relative to a constrained optimum $f^{*}$ in terms of the classification risk.
Similarly, we define the $\MG$-constrained excess $\phi$-risk of $f$ as
\begin{align*}
    R_{\phi}(f) - \inf_{f \in \MF_{\MG}}R_{\phi}(f).
\end{align*}
Fix $G\in\MG$ and let 
\begin{align*}
\MF_{G}\equiv\{f:G_{f}=G,f(\cdot)\in[-1,1]\}
\end{align*}
be the class of classifiers that share the prediction set $G$. Then $\{\MF_{G}:G\in\MG\}$ forms a partition of $\MF_{\MG}$
indexed by the prediction set, and satisfies $\MF_{\MG}=\cup_{G\in\MG}\MF_{G}$
and $\MF_{G}\cap\MF_{G'}=\emptyset$ for $G,G'\in\MG$ with $G\neq G'$. With this definition to hand, choosing a classifier from $\MF_{\MG}$ can
be decomposed into two steps: choosing a prediction set $G$ from
$\MG$ and, then, choosing a classifier $f$ from $\MF_{G}$.

Denote the classification risk evaluated at a prediction set $G$ by $\mathcal{R}(G)\equiv\inf_{f\in\MF_{G}}R(f)$. 
Note that any $f\in\MF_{G}$ attains the same level of classification risk, so $\mathcal{R}(G)=R(f)$ holds for all $f\in\MF_{G}$. $\mathcal{R}(G)$ can
be written as 
\begin{align}
\mathcal{R}(G) & =\int_{\MX}\left[\eta(x)1\{x\notin G\}+(1-\eta(x))1\{x\in G\}\right]dP_{X}(x),\nonumber \\
 & =\int_{\mathcal{X}}(1-2\eta(x))\cdot1\{x\in G\}dP_{X}(x)+P(Y=1).\label{eq:classification risk at G}
\end{align}
Similarly, we define the surrogate risk evaluated at $G$ by $\mathcal{R}_{\phi}(G)\equiv\inf_{f\in\MF_{G}}R_{\phi}(f)$, which can be written as 
\begin{align}
\mathcal{R}_{\phi}(G) & =\inf_{f\in\MF_{G}}\int_{\MX}\left[\eta(x)\phi(f(x))+(1-\eta(x))\phi(-f(x))\right]dP_{X}(x)\nonumber \\
 & =\int_{G}\inf_{0 \leq f(x)\leq1}C_{\phi}(f(x),\eta(x))dP_{X}(x)+\int_{G^{c}}\inf_{-1\leq f(x) < 0}C_{\phi}(f(x),\eta(x))dP_{X}(x),\nonumber
\end{align}
where the second line follows from the fact that $f \in \MF_{G}$ is unconstrained other than via its prediction set and that the minimization over $f\in\MF_{G}$ can
be performed pointwise at each $x$. For $f\in\MF_{G}$ with $x\in G$, $f(x)$ is constrained to $[0,1]$,
and with $x\in G^{c}$, $f(x)$ is constrained to $[-1,0)$. 
To simplify the notation, we define 
\begin{align*}
C_{\phi}^{+}(\eta(x)) & \equiv\inf_{0 \leq f(x)\leq1}C_{\phi}(f(x),\eta(x)),\\
C_{\phi}^{-}(\eta(x)) & \equiv\inf_{-1\leq f(x) < 0}C_{\phi}(f(x),\eta(x)),\\
\Delta C_{\phi}(\eta(x)) & \equiv C_{\phi}^{+}(\eta(x))-C_{\phi}^{-}(\eta(x)),
\end{align*}
where $C_{\phi}^{+}(\eta(x))$ and $C_{\phi}^{-}(\eta(x))$ are the minimized surrogate risks conditional on $X=x$ under the constraints $f(x)\in[0,1]$ and $f(x)\in[-1,0)$, respectively. Using these definitions, the surrogate risk at $G$ can be written as 
\begin{align}
\mathcal{R}_{\phi}(G) & =\int_{\mathcal{X}}\left[C_{\phi}^{+}(\eta(x))\cdot1\{x\in G\}+C_{\phi}^{-}(\eta(x))\cdot1\{x\notin G\}\right]dP_{X}(x)\nonumber \\
 & =\int_{\mathcal{X}}\Delta C_{\phi}(\eta(x))\cdot1\{x\in G\}dP_{X}(x)+\int_{\mathcal{X}}C_{\phi}^{-}(\eta(x))dP_{X}(x).\label{eq:simplified surrogate risk}
\end{align}
By comparing the expressions of the risks in (\ref{eq:classification risk at G}) and (\ref{eq:simplified surrogate risk}), we obtain the first main theorem that clarifies the condition for the surrogate risk $\mathcal{R}_{\phi}(G)$ to calibrate the global ordering of the classification risk $\mathcal{R}(G)$ over $G\in\MG$.

\bigskip{}

\begin{theorem}[Global calibration of the $\MG$-constrained excess risk] \label{thm:risk equivalence} 
Let $P$ be an arbitrary distribution on $\{-1,1\}\times \MX$ and $\MG \subseteq 2^{\MX}$ be a class of measurable subsets of $\MX$. For $G,G'\in\MG$, the risk ordering $\mathcal{R}(G)\geq\mathcal{R}(G')$ in terms of the classification risk
is equivalent to 
\begin{equation}
\int_{G\setminus G'}(1-2\eta(x))dP_{X}(x)\geq\int_{G'\setminus G}(1-2\eta(x))dP_{X}(x),\label{eq:thm eq1}
\end{equation}
while the risk ordering $\mathcal{R}_{\phi}(G)\geq\mathcal{R}_{\phi}(G')$ in terms of the surrogate risk
is equivalent to 
\begin{equation}
\int_{G\setminus G'}\Delta C_{\phi}(\eta(x))dP_{X}(x)\geq\int_{G'\setminus G}\Delta C_{\phi}(\eta(x))dP_{X}(x).\label{eq:thm eq2}
\end{equation}
Hence, if $\Delta C_{\phi}(\eta(x))$ is proportional to $1-2\eta(x)$ up
to a positive constant, i.e., 
\begin{align}
\Delta C_{\phi}(\eta(x))=c(1-2\eta(x))\mbox{ for some }c>0,\label{eq:conditon_risk equivalence}
\end{align}
the risk ordering over $\MG$ in terms of the surrogate risk
$\mathcal{R}_{\phi}(G)$ agrees with the risk ordering over $\MG$
in terms of the classification risk $\mathcal{R}(G)$ for any distribution $P$ on $\{-1,1\}\times \MX$.

In particular, when $\phi$ is the hinge loss $\phi_h(\alpha) = c \max \{0, 1-\alpha \}$, $c>0$,
\begin{equation}
\Delta C_{\phi}(\eta(x))=c(1-2\eta(x))\label{eq:H difference for hinge} \notag
\end{equation}
holds, establishing that hinge risk preserves the risk ordering of the classification
risk. \end{theorem}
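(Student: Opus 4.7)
The plan is to work directly from the integral expressions for $\mathcal{R}(G)$ in equation \eqref{eq:classification risk at G} and for $\mathcal{R}_{\phi}(G)$ in equation \eqref{eq:simplified surrogate risk}, which have already been derived in the text. Both expressions consist of an integral of a function times $1\{x\in G\}$, plus a term that does not depend on $G$. For the classification risk, subtracting $\mathcal{R}(G')$ from $\mathcal{R}(G)$ kills the constant term $P(Y=1)$, giving
\begin{equation*}
\mathcal{R}(G)-\mathcal{R}(G')=\int_{\MX}(1-2\eta(x))\bigl[1\{x\in G\}-1\{x\in G'\}\bigr]dP_X(x),
\end{equation*}
and decomposing $\MX$ into $G\setminus G'$, $G'\setminus G$, $G\cap G'$, $(G\cup G')^c$ yields \eqref{eq:thm eq1}. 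The same trick applied to $\mathcal{R}_\phi$ eliminates the $G$-free term $\int_{\MX}C_\phi^-(\eta(x))dP_X(x)$ and yields \eqref{eq:thm eq2} verbatim. This handles the two equivalences.

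The proportionality claim then follows immediately: if $\Delta C_\phi(\eta(x))=c(1-2\eta(x))$ with $c>0$, then the two inequalities in \eqref{eq:thm eq1} and \eqref{eq:thm eq2} differ only by the positive factor $c$, so they hold simultaneously for every pair $G,G'\in\MG$ and every $P$.

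The last piece is to verify the proportionality explicitly for the hinge loss. Writing out $C_\phi(a,b)=c\max\{0,1-a\}b+c\max\{0,1+a\}(1-b)$ and restricting to $a\in[-1,1]$, both max's become linear and the expression simplifies to $C_\phi(a,b)=c\bigl[1+a(1-2b)\bigr]$. I would then compute the pointwise infima $C_\phi^+(\eta)=\inf_{a\in[0,1]}C_\phi(a,\eta)$ and $C_\phi^-(\eta)=\inf_{a\in[-1,0)}C_\phi(a,\eta)$ by splitting on the sign of $1-2\eta$: for $\eta<1/2$ the optimal $a$ in $[0,1]$ is $0$ and in $[-1,0)$ is $-1$, giving $C_\phi^+(\eta)=c$ and $C_\phi^-(\eta)=2c\eta$; for $\eta>1/2$ the minimizers flip, giving $C_\phi^+(\eta)=2c(1-\eta)$ and $C_\phi^-(\eta)=c$ (as an infimum approached as $a\uparrow 0$); and for $\eta=1/2$ both equal $c$. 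Taking the difference in each case produces $\Delta C_\phi(\eta)=c(1-2\eta)$ uniformly.

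The only mild subtlety I anticipate is the asymmetric treatment of $a=0$ in the definitions of $C_\phi^\pm$ (the $+$ side is closed at $0$, the $-$ side is open), which could in principle make $C_\phi^-$ an unattained infimum when $\eta>1/2$. However, since $C_\phi(a,\eta)$ is continuous in $a$, the infimum over $[-1,0)$ still equals $C_\phi(0,\eta)=c$, so the proportionality formula holds without qualification. Beyond this, the argument is purely algebraic and the proof should be short.
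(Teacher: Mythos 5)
Your proof is correct and follows essentially the same route as the paper: form the risk differences to cancel the $G$-free terms, decompose over $G\setminus G'$ and $G'\setminus G$, then compute $C_{\phi_h}^{\pm}(\eta)$ case-by-case on the sign of $1-2\eta$ to obtain $\Delta C_{\phi_h}(\eta)=c(1-2\eta)$. Your explicit remark that the infimum of $C_{\phi_h}(\cdot,\eta)$ over the half-open interval $[-1,0)$ is unattained for $\eta>1/2$ but still equals $c$ by continuity is a small point the paper glosses over, and is a welcome bit of care.
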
 
\begin{proof}
By equation (\ref{eq:classification risk at G}), 
\begin{align*}
 \mathcal{R}(G)-\mathcal{R}(G')
= & \int_{\mathcal{X}}(1-2\eta(x))\cdot[1\{x\in G\}-1\{x\in G'\}]dP_{X}(x)\\
= & \int_{\mathcal{X}}(1-2\eta(x))\cdot[1\{x\in G\setminus G'\}-1\{x\in G'\setminus G\}]dP_{X}(x)\\
= & \int_{G\setminus G'}(1-2\eta(x))dP_{X}(x)-\int_{G'\setminus G}(1-2\eta(x))dP_{X}(x).
\end{align*}
This proves (\ref{eq:thm eq1}), the first claim of the theorem.

Given the representation of the surrogate risk shown in (\ref{eq:simplified surrogate risk}),
a similar argument yields (\ref{eq:thm eq2}), the second claim of
the theorem.

For the hinge loss $\phi_h(\alpha)=c\max\{0,1-\alpha\}$ and $f\in\MF_{G}$,
we have 
\begin{align*}
C_{\phi_h}(f(x),\eta(x))=c(1-2\eta(x))f(x)+c.
\end{align*}
Hence, we obtain 
\begin{align}
C_{\phi_h}^{+}(\eta) & =\begin{cases}
c(1-2\eta)+c & \text{for \ensuremath{\eta>1/2},}\\
c & \text{for \ensuremath{\eta\leq1/2},}
\end{cases}\nonumber \\
C_{\phi_h}^{-}(\eta) & =\begin{cases}
c & \text{for \ensuremath{\eta>1/2},}\\
2c\eta & \text{for \ensuremath{\eta\leq1/2}.}\notag
\end{cases}
\end{align}
Hence, $\Delta C_{\phi_h}(\eta)=c(1-2\eta)$ holds for all $\eta\in[0,1]$. 
\end{proof}
\bigskip{}

Theorem \ref{thm:risk equivalence} does not exploit the condition that $\phi$ is classification-calibrated, but if a surrogate loss function satisfies condition (\ref{eq:conditon_risk equivalence}), it is automatically classification-calibrated. Another remark
follows.

\bigskip{}

\begin{remark} Many commonly used surrogate loss functions do
not satisfy condition (\ref{eq:conditon_risk equivalence}) in
Theorem \ref{thm:risk equivalence}. Table \ref{tb:surrogate loss functions and their forms of H}
shows the forms of $\Delta C_{\phi}(\eta)$ for the hinge loss, exponential loss, logistic
loss, quadratic loss, and truncated quadratic loss functions. With the exception of the hinge loss function, none of these functions satisfy condition (\ref{eq:conditon_risk equivalence}).
That is, among the surrogate loss-based algorithms that are commonly used in practice, the $\ell_1$-support vector machine corresponding to hinge loss is the only algorithm whose surrogate risk preserves the classification risk.

\begin{table}[h]
\centering \caption{Surrogate loss functions and their associated forms for $\Delta C_{\phi}$}
\label{tb:surrogate loss functions and their forms of H} 
\begin{threeparttable}
\scalebox{0.9}{ %
\begin{tabular}{c:c:c}
\hline 
Loss function  & $\phi(\alpha)$  & $\Delta C_{\phi}\left(\eta\right)$\tabularnewline
\hline 
0-1 loss  & $1\{\alpha\leq0\}$  & $1-2\eta$\tabularnewline
\hdashline 
Hinge loss  & $c \max\{0,1-\alpha\}$  & $c(1-2\eta)$\tabularnewline
\hdashline 
Exponential loss  & $e^{-\alpha}$  & $\begin{cases}
\begin{array}{c}
-2\sqrt{\eta(1-\eta)}+1\\
2\sqrt{\eta(1-\eta)}-1
\end{array} & \begin{array}{l}
\mbox{if }0\leq\eta<1/2\\
\mbox{if }1/2\leq\eta \leq 1
\end{array}\end{cases}$\tabularnewline
\hdashline 
Logistic loss  & $\log(1+e^{-\alpha})$  & $\begin{cases}
\begin{array}{c}
\log(2\eta^{\eta}(1-\eta)^{1-\eta})\\
-\log(2\eta^{\eta}(1-\eta)^{1-\eta})\\
\end{array} & \begin{array}{l}
\mbox{if }0\leq\eta < 1/2\\
\mbox{if }1/2\leq\eta\leq 1\\
\end{array}\end{cases}$\tabularnewline
\hdashline 
Quadratic loss  & $(1-\alpha)^{2}$  & $\begin{cases}
\begin{array}{c}
(1-2\eta)^{2}\\
-(1-2\eta)^{2}
\end{array} & \begin{array}{l}
\mbox{if }0\leq\eta <1/2\\
\mbox{if }1/2\leq\eta\leq 1
\end{array}\end{cases}$\tabularnewline
\hdashline 
Truncated quadratic loss  & $(\max\{0,1-\alpha\})^{2}$  & $\begin{cases}
\begin{array}{c}
(1-2\eta)^{2}\\
-(1-2\eta)^{2}
\end{array} & \begin{array}{l}
\mbox{if }0\leq\eta <1/2\\
\mbox{if }1/2\leq\eta\leq 1
\end{array}\end{cases}$\tabularnewline
\hline 
\end{tabular}}
\end{threeparttable}
\end{table}

\end{remark}

\bigskip{}


The well known inequality by \citet{Zhang_2004} relates the excess surrogate risk to the excess classification risk under $R$-correct specification. 
As a corollary of Theorem \ref{thm:risk equivalence}, if we set $\phi = \phi_h$, we can generalize Zhang's inequality by allowing $R$-misspecification of the classifiers. To formally state this generalization, we let $G^{\ast}\in\arg\inf_{G\in\MG}\mathcal{R}(G)$, and set
$G'=G^{\ast}$ in Theorem \ref{thm:risk equivalence}. Let $f\in\MF_{\MG}$
be arbitrary and $G_{f} = \{ x \in \MX: f(x) \geq 0 \}\in\MG$.
The alignment of the risk ordering between the classification and hinge risks implies that the minimizers
of $\mathcal{R}(\cdot)$ also minimize $\mathcal{R}_{\phi_h}(\cdot)$,
i.e., $\inf_{f\in\MF_{\MG}}R_{\phi_h}(f)=\inf_{G\in\MG}\mathcal{R}_{\phi_h}(G)=\mathcal{R}_{\phi_h}(G^{\ast})$. Theorem \ref{thm:risk equivalence} therefore implies that the $\MG$-constrained excess classification risk of $f$ satisfies the following inequality: 
\begin{align}
R(f)-\inf_{f\in\MF_{\MG}}R(f)=&\ \mathcal{R}(G_f)-\mathcal{R}(G^{\ast}) \notag \\
=&\ \int_{G_f\setminus G^{\ast}}(1-2\eta(x))dP_{X}(x)-\int_{G^{\ast}\setminus G_f}(1-2\eta(x))dP_{X}(x)
\notag \\
=&\ c^{-1} \left[ \mathcal{R}_{\phi_h}(G_f)-\mathcal{R}_{\phi_h}(G^{\ast}) \right] \notag \\
=&\ c^{-1} \left[ \inf_{f^{\prime}\in\MF_{G_f}}R_{\phi_h}(f^{\prime})-\inf_{f\in\MF_{\MG}}R_{\phi_h}(f) \right] \notag \\
\leq&\ c^{-1} \left[ R_{\phi_h}(f)-\inf_{f\in\MF_{\MG}}R_{\phi_h}(f) \right], \label{eq:generalized zhang's ineq}
\end{align}
where the second equality follows by equation (\ref{eq:classification risk at G}); and
the third equality follows by equation (\ref{eq:simplified surrogate risk})
and $\Delta C_{\phi_h}(\eta)=c(1-2\eta)$.
That is, when $\phi = \phi_h$, Zhang's inequality holds without requiring the $R$-correct specification of the classifiers.

\bigskip{}

\begin{corollary}\label{corr:zhang's inequality} 
For any distribution $P$ on $\{-1,1\}\times \MX$ and class of measurable subsets $\MG \subseteq 2^{\MX}$, if $\Delta C_{\phi}(\eta(x))$ is proportional to $1-2\eta(x)$ with a proportionality
constant $c>0$, i.e., $\Delta C_{\phi}(\eta(x))=c(1-2\eta(x))$, then the
following inequality holds 
\begin{align}
c(R(f)-\inf_{f\in\MF_{\MG}}R(f))\leq R_{\phi}(f)-\inf_{f\in\MF_{\MG}}R_{\phi}(f)\label{eq:zhangs inequality} \notag
\end{align}
for any $f\in\MF_{\MG}$. 
\end{corollary}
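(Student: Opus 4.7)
The plan is to chain together three ingredients already developed in the excerpt: the two integral representations (\ref{eq:thm eq1}) and (\ref{eq:thm eq2}) from Theorem \ref{thm:risk equivalence}, the decomposition $\MF_{\MG} = \cup_{G \in \MG} \MF_G$, and the trivial inequality $\inf_{f' \in \MF_{G_f}} R_\phi(f') \leq R_\phi(f)$ for any $f \in \MF_{G_f}$. Concretely, I would first fix an arbitrary $f \in \MF_{\MG}$ with prediction set $G_f \in \MG$, and pick $G^{\ast} \in \arg\inf_{G \in \MG} \mathcal{R}(G)$. Because $\mathcal{R}(G) = R(f)$ for every $f \in \MF_G$, we have $\inf_{f \in \MF_{\MG}} R(f) = \mathcal{R}(G^{\ast})$, so $R(f) - \inf_{f \in \MF_{\MG}} R(f) = \mathcal{R}(G_f) - \mathcal{R}(G^{\ast})$.

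Next, I would apply Theorem \ref{thm:risk equivalence}. Identity (\ref{eq:thm eq1}) (without the inequality restriction) gives
\begin{equation*}
\mathcal{R}(G_f) - \mathcal{R}(G^{\ast}) = \int_{G_f \setminus G^{\ast}}(1-2\eta(x))\,dP_X(x) - \int_{G^{\ast} \setminus G_f}(1-2\eta(x))\,dP_X(x),
\end{equation*}
while (\ref{eq:thm eq2}), combined with the hypothesis $\Delta C_\phi(\eta) = c(1-2\eta)$, yields the analogous identity
\begin{equation*}
\mathcal{R}_\phi(G_f) - \mathcal{R}_\phi(G^{\ast}) = c\int_{G_f \setminus G^{\ast}}(1-2\eta(x))\,dP_X(x) - c\int_{G^{\ast} \setminus G_f}(1-2\eta(x))\,dP_X(x).
\end{equation*}
Comparing these two lines gives the crucial equality $c\,[\mathcal{R}(G_f) - \mathcal{R}(G^{\ast})] = \mathcal{R}_\phi(G_f) - \mathcal{R}_\phi(G^{\ast})$. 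Since Theorem \ref{thm:risk equivalence} also tells us that the orderings coincide, $G^{\ast}$ is a minimizer of $\mathcal{R}_\phi(\cdot)$ on $\MG$ as well, so $\mathcal{R}_\phi(G^{\ast}) = \inf_{f \in \MF_{\MG}} R_\phi(f)$.

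Finally, I would use $\mathcal{R}_\phi(G_f) = \inf_{f' \in \MF_{G_f}} R_\phi(f') \leq R_\phi(f)$ (which holds because $f$ itself lies in $\MF_{G_f}$) to upper bound $\mathcal{R}_\phi(G_f) - \mathcal{R}_\phi(G^{\ast})$ by $R_\phi(f) - \inf_{f \in \MF_{\MG}} R_\phi(f)$. Chaining these gives the desired inequality. I do not anticipate a real obstacle: the only subtle point is noticing that one must pass from $\mathcal{R}_\phi(G_f)$ (a pointwise-optimized surrogate risk) up to the actual $R_\phi(f)$ via the one-sided inequality, which is why the corollary is stated as an inequality rather than an equality even though the classification-risk side is exactly $c^{-1}$ times a difference of surrogate risks at the infimum level.
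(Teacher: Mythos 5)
Your proposal is correct and follows essentially the same chain of reasoning as the paper's displayed derivation in equation (\ref{eq:generalized zhang's ineq}): express the excess classification risk as $\mathcal{R}(G_f)-\mathcal{R}(G^{\ast})$, use the integral representations from Theorem \ref{thm:risk equivalence} together with $\Delta C_\phi(\eta)=c(1-2\eta)$ to obtain the exact proportionality $c[\mathcal{R}(G_f)-\mathcal{R}(G^{\ast})]=\mathcal{R}_\phi(G_f)-\mathcal{R}_\phi(G^{\ast})$, invoke the ordering alignment to replace $\mathcal{R}_\phi(G^{\ast})$ by $\inf_{f\in\MF_\MG}R_\phi(f)$, and finally bound $\mathcal{R}_\phi(G_f)=\inf_{f'\in\MF_{G_f}}R_\phi(f')\leq R_\phi(f)$. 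You also correctly isolate where the one-sided inequality enters, which matches the paper's presentation.
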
 
\begin{proof} See equation (\ref{eq:generalized zhang's ineq}).
\end{proof}
\bigskip{}

Corollary \ref{corr:zhang's inequality} shows that if the surrogate
loss $\phi$ satisfies condition (\ref{eq:conditon_risk equivalence}),
then the classifier $f_{\phi}^{\ast}$ that minimizes the surrogate risk over $\MF_{\MG}$ also minimizes the classification risk over $\MF_{\MG}$. 
Importantly, this result holds without assuming the $R$-correct specification of $\MF_{\MG}$. It justifies the use of hinge loss in the constrained classification problem irrespective of whether or not $\MF_{\MG}$  is correctly $R$-specified. Note, however, that the result relies on the fact that at every $x \in \MX$ we can choose any $f(x) \in [-1,1]$ as long as the prediction set constraint $G_{f} \in \MG$ is satisfied. We relax this requirement in the next section.

Further analysis can show that the condition (\ref{eq:conditon_risk equivalence})
in Theorem \ref{thm:risk equivalence} is not only sufficient but
also necessary. To formally show this, we adopt the concept of universal equivalence of loss functions introduced by \citet{Nguyen_et_al_2009} to the current setting.

\bigskip{}

\begin{definition}[Universal equivalence] Loss functions $\phi_{1}$
and $\phi_{2}$ are universally equivalent, denoted by $\phi_{1}\overset{u}{\sim}\phi_{2}$,
if for any distribution $P$ on $\MY\times\MX$ and class of measurable subsets $\MG\subseteq2^{\MX}$,
\begin{align*}
{\cal R}_{\phi_{1}}\left(G_{1}\right)\leq{\cal R}_{\phi_{1}}\left(G_{2}\right)  \Leftrightarrow{\cal R}_{\phi_{2}}\left(G_{1}\right)\leq{\cal R}_{\phi_{2}}\left(G_{2}\right)
\end{align*}
holds for any $G_{1},G_{2}\in\MG$. \end{definition}

\bigskip{}

Universally equivalent loss functions $\phi_{1}$ and $\phi_{2}$ lead to the same risk ordering over $\MG$. Hence, if a loss function $\phi$ is universally equivalent to the 0-1 loss, the $\phi$-risk shares the same risk ordering with the classification risk.

The following theorem establishes a necessary and sufficient condition for two classification-calibrated loss functions to be universally equivalent.

\bigskip{}

\begin{theorem} \label{thm:univesal equivalence} Let $\phi_{1}$
and $\phi_{2}$ be classification-calibrated loss functions. Then
$\phi_{1}\overset{u}{\sim}\phi_{2}$ if and only if $\Delta C_{\phi_{2}}\left(\eta\right)=c\Delta C_{\phi_{1}}\left(\eta\right)$
for some $c>0$ and any $\eta\in\left[0,1\right]$, i.e., $\Delta C_{\phi_{2}}$
is proportional to $\Delta C_{\phi_{1}}$ up to a positive constant. \end{theorem}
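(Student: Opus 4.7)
The plan is to use Theorem \ref{thm:risk equivalence} as the bridge between pointwise identities on $\Delta C_{\phi}$ and global orderings of $\mathcal{R}_{\phi}$. The sufficiency (``if'') direction is almost immediate: if $\Delta C_{\phi_{2}}(\eta)=c\,\Delta C_{\phi_{1}}(\eta)$ for all $\eta\in[0,1]$ with some $c>0$, then for any $P$ on $\MY\times\MX$, any $\MG\subseteq 2^{\MX}$, and any $G,G'\in\MG$, the inequality (\ref{eq:thm eq2}) for $\phi_{2}$ is obtained from the one for $\phi_{1}$ by multiplying both sides by $c>0$. Theorem \ref{thm:risk equivalence} then yields $\mathcal{R}_{\phi_{1}}(G)\leq \mathcal{R}_{\phi_{1}}(G')\Leftrightarrow \mathcal{R}_{\phi_{2}}(G)\leq \mathcal{R}_{\phi_{2}}(G')$, so $\phi_{1}\overset{u}{\sim}\phi_{2}$.

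For necessity I would argue by contraposition. Write $h_{i}(\eta)\equiv \Delta C_{\phi_{i}}(\eta)$. Classification-calibration of $\phi_{i}$, combined with the expressions for $C_{\phi}^{\pm}$ from Section \ref{sec:calibration of MG-constrained classification}, implies (after a short verification analogous to the one carried out for hinge loss in the proof of Theorem \ref{thm:risk equivalence}) that $h_{i}(\eta)>0$ on $[0,1/2)$, $h_{i}(\eta)<0$ on $(1/2,1]$, and $h_{i}(1/2)=0$. Hence the ratio $r(\eta)\equiv h_{2}(\eta)/h_{1}(\eta)$ is a well-defined positive function on $[0,1]\setminus\{1/2\}$. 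Suppose $r$ is not identically constant on $[0,1]\setminus\{1/2\}$. Then at least one of the following holds: (a) $r$ is non-constant on $[0,1/2)$; (b) $r$ is non-constant on $(1/2,1]$; or (c) $r\equiv c_{+}$ on $[0,1/2)$ and $r\equiv c_{-}$ on $(1/2,1]$ with $c_{+}\neq c_{-}$. In each case I construct a discrete distribution $P$ on $\MY\times\MX$ and a class $\MG$ whose risk orderings under $\mathcal{R}_{\phi_{1}}$ and $\mathcal{R}_{\phi_{2}}$ disagree, contradicting $\phi_{1}\overset{u}{\sim}\phi_{2}$.

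In cases (a) and (b) a two-point support suffices: pick $\eta_{1},\eta_{2}$ on the same side of $1/2$ with $r(\eta_{1})\neq r(\eta_{2})$, set $\MX=\{x_{1},x_{2}\}$ with $P(Y=1\mid X=x_{j})=\eta_{j}$ and $P_{X}(\{x_{j}\})=p_{j}$, and take $\MG=\{\{x_{1}\},\{x_{2}\}\}$. By Theorem \ref{thm:risk equivalence}, $\mathcal{R}_{\phi_{i}}(\{x_{1}\})\geq \mathcal{R}_{\phi_{i}}(\{x_{2}\})$ reduces to $p_{1}h_{i}(\eta_{1})\geq p_{2}h_{i}(\eta_{2})$; since $h_{i}$ has the same sign at $\eta_{1}$ and $\eta_{2}$, dividing by $|h_{i}|$ shows the threshold for $p_{1}/p_{2}$ differs across $i=1,2$ exactly when $r(\eta_{1})\neq r(\eta_{2})$, so choosing $p_{1}/p_{2}$ strictly between the two thresholds reverses the orderings. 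The main obstacle is case (c), because any two-atom construction only probes one side of $1/2$ at a time. I handle it with a three-point support $\{x_{1},x_{2},x_{3}\}$ with $\eta_{1},\eta_{2}\in[0,1/2)$ (for simplicity taking $\eta_{1}=\eta_{2}$) and $\eta_{3}\in(1/2,1]$, and compare $G=\{x_{1}\}$ against $G'=\{x_{2},x_{3}\}$. Theorem \ref{thm:risk equivalence} then reduces the two orderings to the signs of
\begin{align*}
u-v \quad \text{and} \quad c_{+}u - c_{-}v,
\end{align*}
where $u\equiv p_{1}h_{1}(\eta_{1})-p_{2}h_{1}(\eta_{2})$ can be made of arbitrary sign by adjusting $p_{1},p_{2}$, and $v\equiv p_{3}|h_{1}(\eta_{3})|>0$ can be scaled independently via $p_{3}$. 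Because $c_{+}\neq c_{-}$, the half-planes $\{u\geq v\}$ and $\{c_{+}u\geq c_{-}v\}$ in $\mathbb{R}^{2}$ are distinct, so one can pick $(u,v)$ in their symmetric difference and then rescale to obtain strictly positive weights $p_{1},p_{2},p_{3}$ with $p_{1}+p_{2}+p_{3}=1$, delivering reversed orderings. This completes the contrapositive and finishes the proof.
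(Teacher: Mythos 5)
Your ``if'' direction matches the paper's exactly. For ``only if'' you argue by contraposition, which is a genuinely different organization: the paper works forward, fixing a two-point support $\MX=\{1,2\}$ with $\MG$ the full power set and deriving from universal equivalence that $\Delta C_{\phi_1}(\eta_1)/\Delta C_{\phi_1}(\eta_2)=\Delta C_{\phi_2}(\eta_1)/\Delta C_{\phi_2}(\eta_2)$ holds for all $(\eta_1,\eta_2)\in([0,1]\setminus\{1/2\})^2$ (splitting into four quadrants and comparing $\{x_1\}$ vs.\ $\{x_2\}$ within a half and $\emptyset$ vs.\ $\MX$ across halves), then invoking Lemma \ref{lem:universal equivalence} to upgrade the ratio identity to proportionality. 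You instead assume non-proportionality, split by where $r=h_2/h_1$ fails to be constant, and directly exhibit a DGP whose orderings disagree, thereby dispensing with Lemma \ref{lem:universal equivalence}. Your cases (a), (b) coincide in substance with the paper's same-side cases. Your case (c) uses a three-point support and compares $\{x_1\}$ against $\{x_2,x_3\}$, whereas the paper's mixed case does the same job more parsimoniously with two atoms and $\emptyset$ vs.\ $\MX$; both work, though your version carries a sign slip --- with $v\equiv p_3|h_1(\eta_3)|>0$, the two order conditions are $u+v\geq 0$ and $c_+u+c_-v\geq 0$ rather than $u-v$ and $c_+u-c_-v$, which fortunately leaves the ``distinct half-planes iff $c_+\neq c_-$'' conclusion intact. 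One shared soft spot: the sign pattern of $\Delta C_{\phi_i}$ around $\eta=1/2$ is asserted rather than derived from Definition \ref{def:classification-calibrated loss functions} (you state it with the signs that the hinge computation actually yields; the paper's displayed inequality (\ref{eq:classification calibrated}) has them reversed relative to Table \ref{tb:surrogate loss functions and their forms of H}). A fully rigorous treatment would verify that the constrained objects $C_\phi^{\pm}$ inherit the ordering that calibration imposes on the unconstrained infima, but since the paper's own proof relies on the same unproved step, this is a pre-existing issue rather than a gap you introduce.
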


\begin{proof}
See Appendix \ref{appx:proof 1}.
\end{proof}

\bigskip{}

The `if' part of the theorem is a generalization of Theorem \ref{thm:risk equivalence} in that it does not assume that either of $\phi_1$ or $\phi_2$ is the 0-1 loss function.

When we set $\phi_{2}$ to the 0-1 loss function, Theorem \ref{thm:univesal equivalence} yields the class of loss functions that are universally equivalent to the 0-1 loss functions. This class exactly coincides with the class of loss functions that satisfy the condition (\ref{eq:conditon_risk equivalence}) in Theorem
\ref{thm:risk equivalence}. Hence, the following corollary holds.

\bigskip{}

\begin{corollary} \label{corr:universally equivalence} A classification-calibrated
loss function $\phi$ is universally equivalent to the 0-1 loss function
if and only if $\phi$ satisfies condition (\ref{eq:conditon_risk equivalence})
for any $\eta(x)\in[0,1]$. That is, the class of hinge loss functions $\{\phi (\alpha) = a \max\{0, 1- \alpha \} + b: a>0, b\geq 0 \}$ agrees with the class of loss functions that are universally equivalent to the 0-1 loss function.
\end{corollary}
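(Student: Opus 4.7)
The plan is to derive the corollary as an immediate specialization of Theorem \ref{thm:univesal equivalence} by taking the second loss function to be the 0-1 loss $\phi_{0-1}(\alpha) = 1\{\alpha \leq 0\}$.

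First I would verify that $\phi_{0-1}$ is itself classification-calibrated so that Theorem \ref{thm:univesal equivalence} applies to the pair $(\phi,\phi_{0-1})$. This is a direct check of Definition \ref{def:classification-calibrated loss functions}: for $b \in [0,1]\setminus\{1/2\}$, a short calculation gives $\inf_{\{a : a(2b-1) < 0\}} C_{\phi_{0-1}}(a,b) = \max\{b,1-b\}$ and $\inf_{\{a : a(2b-1) \geq 0\}} C_{\phi_{0-1}}(a,b) = \min\{b,1-b\}$, and these strictly differ. Along the way one obtains $C^{+}_{\phi_{0-1}}(\eta) = 1-\eta$ (any $f(x)>0$ achieves the infimum) and $C^{-}_{\phi_{0-1}}(\eta) = \eta$ (the integrand is constant on $[-1,0)$), so $\Delta C_{\phi_{0-1}}(\eta) = 1 - 2\eta$ as already recorded in the first row of Table \ref{tb:surrogate loss functions and their forms of H}.

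Next I would apply Theorem \ref{thm:univesal equivalence} with $\phi_1 = \phi$ and $\phi_2 = \phi_{0-1}$. The theorem asserts $\phi \overset{u}{\sim} \phi_{0-1}$ if and only if there is a constant $c' > 0$ with $\Delta C_{\phi_{0-1}}(\eta) = c'\,\Delta C_{\phi}(\eta)$ for every $\eta \in [0,1]$. Substituting $\Delta C_{\phi_{0-1}}(\eta) = 1-2\eta$ and setting $c = 1/c' > 0$ gives the equivalent statement $\Delta C_{\phi}(\eta) = c(1-2\eta)$, which is precisely condition (\ref{eq:conditon_risk equivalence}). This establishes the main biconditional of the corollary.

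For the concluding sentence identifying the universally equivalent class with the hinge family, one direction is immediate: for $\phi(\alpha) = a\max\{0,1-\alpha\} + b$ with $a>0$ and $b \geq 0$, the explicit evaluation carried out in the final display of the proof of Theorem \ref{thm:risk equivalence} yields $\Delta C_{\phi}(\eta) = a(1-2\eta)$ (the additive constant $b$ shifts both $C^{+}_{\phi}$ and $C^{-}_{\phi}$ by $b$ and so cancels in the difference), so every hinge loss in the stated family satisfies condition (\ref{eq:conditon_risk equivalence}) and is therefore universally equivalent to $\phi_{0-1}$ by the biconditional. The reverse inclusion --- that no other classification-calibrated $\phi$ can satisfy condition (\ref{eq:conditon_risk equivalence}) --- is the main obstacle, because $\Delta C_\phi$ depends only on $\phi|_{[-1,1]}$ and may be preserved under certain symmetric or antisymmetric perturbations of $\phi$; I would therefore read the last sentence as identifying the hinge family as a canonical set of representatives (one per admissible proportionality constant $c > 0$) rather than as a literal set-theoretic equality over all functions $\phi : \mathbb{R} \to \mathbb{R}$.
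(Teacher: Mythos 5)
Your derivation of the biconditional is exactly the intended specialization of Theorem \ref{thm:univesal equivalence}; the paper gives no separate proof for this corollary, and your explicit checks that the 0-1 loss is classification-calibrated and that $\Delta C_{\phi}(\eta)=1-2\eta$ for the 0-1 loss (matching Table \ref{tb:surrogate loss functions and their forms of H}) are correct. Your caveat about the concluding sentence is also well founded: since classifiers in $\MF_\MG$ take values in $[-1,1]$, the quantities $C_\phi^{\pm}$ and $\Delta C_\phi$ depend only on $\phi|_{[-1,1]}$, and for the hinge loss the infima defining $C_\phi^{\pm}$ are attained or approached at $\alpha\in\{-1,0,1\}$, so any continuous nonnegative perturbation of $\phi$ that vanishes at $-1$, $0$, and $1$ (e.g.\ adding $K\alpha^2(1-\alpha^2)$ with $K>0$) leaves $\Delta C_\phi$ unchanged while destroying the hinge form. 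The literal set-theoretic reading of the last sentence is therefore too strong; your reading of the hinge family as the canonical representatives (one per admissible slope $c>0$, with $b\geq 0$ absorbed into an irrelevant additive constant) of the class of losses universally equivalent to the 0-1 loss is the right one.
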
 

\bigskip{}

In the following sections, without loss of generality, we maintain the assumption that $c=1$ in the definition of the hinge loss function where it is convenient to do so.
We conclude this section with a remark to compare our constrained classification framework to that of \citet{Nguyen_et_al_2009}.

\bigskip{}

\begin{remark} \label{remark:Nguyen_et_al_2009} \citet{Nguyen_et_al_2009} show that, for the classification problem in which an optimal pair comprising a quantizer and a classifier is to be chosen,
the hinge loss function is also the only surrogate loss function that preserves the consistency of surrogate loss classification. In their framework, the quantizer is a stochastic mapping $Q\in\mathcal{Q}:{\cal X}\mapsto{\cal Z}$,
where ${\cal Z}$ is a discrete space and ${\cal Q}$ is a possibly
constrained class of conditional distributions of $Z$ given $X$,
$Q\left(Z\mid X\right)$. The classifier is a function $\gamma\in\Gamma:{\cal Z}\mapsto\mathbb{R}$,
where $\Gamma$ is the set of all measurable functions on ${\cal Z}$. The motivation for using $Z$ as an input, instead of $X$, is to reduce the dimension of $X$, which might be a high-dimensional vector. \citet{Nguyen_et_al_2009} propose estimating the pair $\left(Q,\gamma\right)\in\mathcal{Q}\times\Gamma$ that minimizes the risk $R\left(\gamma,Q\right){\equiv}P\left(Y\neq\mbox{sign}\left(\gamma\left(Z\right)\right)\right)$, by solving the surrogate loss classification problem: $\inf_{(Q,\gamma)\in\mathcal{Q}\times\Gamma}R_{\phi}(Q,\gamma)$, where $R_{\phi}(Q,\gamma)=E\phi(Y\gamma(Z))$. They show that, among the commonly used surrogate loss functions, only hinge loss classification leads to the optimal pair of $(Q,\gamma)$.

The framework we study is different from that of \citet{Nguyen_et_al_2009}, and neither nests the other. The framework \citet{Nguyen_et_al_2009}
study constrains the mapping $Q:{\cal X}\mapsto{\cal Z}$, whereas the framework we study constrains prediction sets $G_{f}$ for all classifiers $f$. Furthermore, the class of classifiers $\Gamma$ considered in \citet{Nguyen_et_al_2009} contains the Bayes classifier, whereas the class of classifiers $\MF_{\MG}$ we consider may not contain the Bayes classifier. 
\end{remark}

\section{Consistency of hinge risk classification with functional form constraints}
\label{sec:classification preserving reduction}

The previous section considers $\MF_{\MG}$, the class of all functions whose prediction sets are in $\MG$. The generalized Zhang's inequality shown in Corollary \ref{corr:zhang's inequality} heavily relies on the richness of $\MF_{\MG}$. This richness, however, limits the computational
attractiveness of a surrogate-loss approach, since convexity in optimization of an empirical analogue of the surrogate risk does not directly follow from $\MF_{\MG}$, and typically requires additional functional form restrictions for $f$. 

Unfortunately, once a functional form restriction on $f$ is imposed on top of the prediction set constraint $G_f \in \MG$, the global calibration property of the hinge risk shown in Theorem \ref{thm:risk equivalence} breaks down. The following example illustrates this phenomenon.

\bigskip{}

\begin{example}[Numerical example 2]\label{ex:numerical example 2} 
Maintain $\MX=\{0,1,2\}$ and $\MG=\{\emptyset,\{2\},\{2,1\},\{2,1,0\}\}$ as in Example \ref{ex:numerical example 1}. We here consider choosing a classifier from the following
class of non-decreasing linear functions: 
\begin{align*}
\MF_{L}=\{f(x)=c_{0}+c_{1}x:c_{0}\in\Real,\ c_{1}\in\Real_{+},\ f(x)\in[-1,1]\mbox{ for all }x\in\MX\}.
\end{align*}
Note that the class of prediction sets $\{G_{f}:f\in\MF_{L}\}$ agrees with $\MG$; hence, $\MF_{L}$ is a subclass of $\MF_{\MG}$. We set $X$ to be uniformly distributed on $\MX$, and $Y$ to have conditional probabilities $P(Y=1\mid X=0)=0.6$, $P(Y=1\mid X=1)=0.2$, and $P(Y=1\mid X=2)=0.8$.

The Bayes classifier predicts positive $Y$ at
$x=0$ and $2$. Hence, no classifier in $\MF_{L}$ shares the prediction set with the Bayes classifier, and $\MF_L$ is $R$-misspecified.

Figure \ref{fig:linear classification} illustrates the computed classifiers, $f^{*}$ and $f_{\phi_h}^{*}$, that minimize the classification and hinge risks, respectively, over $\MF_L$. 
The optimal classification risk $R(f^{\ast})$ over $\MF_{L}$ (equivalently, over $\MF_{\MG}$ since $\{G_{f}:f\in\MF_{L}\}$ agrees with $\MG$) is $R(f^{\ast}) = 0.33$ with $G_{f^{\ast}}=\{2\}$, while the classification risk at $f_{\phi_h}^{*}$ is $R(f_{\phi_h}^{\ast}) = 0.54$ with $G_{f_{\phi_h}^{\ast}}=\{2,1\}$. Thus, in contrast to Example \ref{ex:numerical example 1} where $f$ is unconstrained other than via the constraint $G_f \in \MG$, adding the linear functional form constraint to $\MF_{\MG}$ invalidates the calibration property of the hinge risk, and the hinge risk minimization is no longer consistent to the second-best (constrained optimal) classifier in terms of the classification risk. 

\bigskip
\begin{figure}[ht]
\caption{Linear monotone classifiers minimizing classification and hinge risks}
\begin{center}
\includegraphics[scale=0.45]{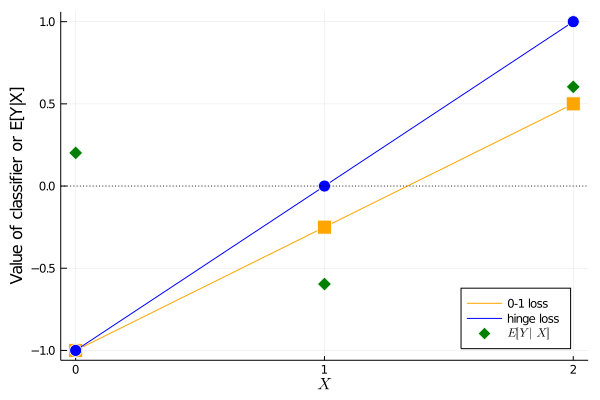}
\label{fig:linear classification}
\begin{tablenotes}
\item[] {\footnotesize Note: The orange and blue lines are the graphs of the computed classifiers, $f^{\ast}$ and $f_{\phi_h}^{\ast}$, respectively.}
\end{tablenotes}
\end{center}
\end{figure}

\end{example}

This example illustrates that even with hinge loss, consistency to the second best classifier becomes a fragile property once the functional form of $f$ is constrained in addition to the prediction set constraint $G_{f}\in\MG$. 
Consequently, it is natural to ask what additional functional form restriction we can safely introduce to $\MF_{\MG}$ without threatening consistency, i.e., for which subclass $\widetilde{\MF}_{\MG} \subset \MF_{\MG}$ does minimizing the hinge risk $R_{\phi_h}(f)$ over $f \in \widetilde{\MF}_{\MG}$ lead to a classifier that minimizes the classification risk $R(f)$ over $f \in \MF_{\MG}$?

Formally, we introduce the following definition of \textit{classification-preserving reduction of $\MF_{\MG}$}. 

\bigskip{}

\begin{definition}[Classification-preserving reduction]
\label{def:Constrained classification preserving reduction}
Let $\tilde{f}^{\ast} \in \arg\inf_{f \in \widetilde{\MF}_{\MG}}R_{\phi_{h}}(f)$. A subclass of classifiers $\widetilde{\MF}_{\MG}$ $(\subseteq \MF_{\MG})$ is a classification-preserving reduction of $\MF_{\MG}$ if
\begin{align*}
    R(\tilde{f}^{\ast}) = \inf_{f \in \MF_{\MG}}R(f)
\end{align*}
holds for any $P$, distribution on $\{-1,1\}\times \MX$.
\end{definition}
\bigskip{}

To start with the heuristic, consider a simple case where $\widetilde{\MF}_{\MG}$ consists of piecewise constant functions with at most $J$ jumps, $J \geq 1$, of the following form:  
\begin{equation}
\label{eq:class of step functioins}
\begin{split}    
\widetilde{\MF}_{\MG,J}= & \left\{ f(\cdot)=2\sum_{j=1}^{J}c_{j}1\{\cdot\in G_{j}\} -1: 
G_{j}\in \MG\mbox{ and }c_{j}\geq0\mbox{ for }j=1,\ldots,J; \right.\\
& \quad G_{J}\subseteq\cdots\subseteq G_{1};\ \left. \sum_{j=1}^{J}c_{j}=1 \right\}. 
\end{split}
\end{equation}
By construction, any function in $\widetilde{\MF}_{\MG,J}$ is a step function bounded in $[-1, 1]$ and its sublevel sets $\{x\in\MX:f(x) \leq t\}$ belong to $\MG$ for any $t\in[-1,1]$.

Let 
\begin{equation*}
\MG^{*}\equiv\arg\inf_{G\in\MG}\MR(G)
\end{equation*}
be the collection of best prediction sets in $\MG$, and 
\begin{equation*}
\MR^{\ast}\equiv\inf_{G\in\MG}\MR(G)
\end{equation*}
be the optimal classification risk. 
For any $G \in \MG$, we define $\tilde{f}_{G}(x) \equiv 2\cdot 1\{x \in G\}-1$,  a step function over $\MX$ that indicates $x \in G^{\ast}$ and $x \notin G^{\ast}$ with values $+1$ and $-1$, respectively.
The following lemma shows that $\widetilde{\MF}_{\MG,J}$ is a classification-preserving reduction of $\MF_{\MG}$.

\bigskip{}

\begin{lemma}\label{lem:step functions} 
Let $\MG \subseteq 2^{\MX}$ be a class of measurable subsets of $\MX$. The following two claims hold:
\\
(i) $\widetilde{\MF}_{\MG,J}$  is a classification-preserving reduction of $\MF_{\MG}$.
\\
(ii) For any distribution $P$ on $\{-1,1\} \times \MX$ and $G^{*}\in{\cal G}^{\ast}$,  $\tilde{f}_{G^{\ast}}$ is a minimizer of $R_{\phi_h}(\cdot)$ over $\widetilde{\MF}_{\MG,J}$, and $\inf_{f \in \widetilde{\MF}_{\MG,J}}R_{\phi_h}(f) = 2\MR^{\ast}$ holds. \end{lemma}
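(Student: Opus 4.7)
The plan is to first compute the hinge risk $R_{\phi_h}(f)$ explicitly for $f \in \widetilde{\MF}_{\MG,J}$ and show it equals a convex combination of $\MR(G_j)$'s, which immediately gives claim (ii); claim (i) will then follow by analyzing the level set structure of any minimizer.

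For step 1, I would fix $f(x) = 2\sum_{j=1}^J c_j 1\{x \in G_j\} - 1 \in \widetilde{\MF}_{\MG,J}$. Because $f$ is bounded in $[-1,1]$, we have $\phi_h(y f(x)) = 1 - y f(x)$ for every $(y,x) \in \{-1,+1\} \times \MX$, so that
\begin{align*}
R_{\phi_h}(f) = 1 + \int_{\MX}(1 - 2\eta(x)) f(x)\, dP_X(x).
\end{align*}
Substituting the step-function form, using $\int (1-2\eta(x)) 1\{x \in G_j\}dP_X = \MR(G_j) - P(Y=1)$ from equation (\ref{eq:classification risk at G}), and applying $\sum_j c_j = 1$, the terms involving $P(Y=1)$ cancel and I arrive at the identity
\begin{align*}
R_{\phi_h}(f) = 2 \sum_{j=1}^J c_j \MR(G_j).
\end{align*}

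From this identity, claim (ii) is immediate: since $c_j \geq 0$, $\sum_j c_j = 1$, and $\MR(G_j) \geq \MR^{\ast}$ for each $j$, we get $R_{\phi_h}(f) \geq 2\MR^{\ast}$ on $\widetilde{\MF}_{\MG,J}$. Taking $G_1 = \cdots = G_J = G^{\ast}$ with $c_1 = 1$, $c_j = 0$ for $j \geq 2$, one verifies $\tilde{f}_{G^{\ast}} \in \widetilde{\MF}_{\MG,J}$ and $R_{\phi_h}(\tilde{f}_{G^{\ast}}) = 2\MR(G^{\ast}) = 2\MR^{\ast}$, so the infimum is attained with value $2\MR^{\ast}$.

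For claim (i), I take any minimizer $\tilde{f}^{\ast}(x) = 2\sum_j c_j^{\ast}1\{x \in G_j^{\ast}\} - 1$. Optimality forces $\sum_j c_j^{\ast}\MR(G_j^{\ast}) = \MR^{\ast}$, and since each $\MR(G_j^{\ast}) \geq \MR^{\ast}$ with nonnegative weights summing to one, every index $j$ with $c_j^{\ast} > 0$ must satisfy $G_j^{\ast} \in \MG^{\ast}$. The key step, which is where the nesting constraint $G_J^{\ast} \subseteq \cdots \subseteq G_1^{\ast}$ becomes essential, is to identify the prediction set $G_{\tilde{f}^{\ast}}$. Writing $S_k^{\ast} \equiv \sum_{j \leq k} c_j^{\ast}$, the nesting implies $\tilde{f}^{\ast}(x) = 2S_{k(x)}^{\ast} - 1$ where $k(x) \equiv \max\{k : x \in G_k^{\ast}\}$, so $\tilde{f}^{\ast}(x) \geq 0$ iff $k(x) \geq k^{\ast} \equiv \min\{k : S_k^{\ast} \geq 1/2\}$; again by nesting, this condition is equivalent to $x \in G_{k^{\ast}}^{\ast}$, giving $G_{\tilde{f}^{\ast}} = G_{k^{\ast}}^{\ast}$. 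Finally, $c_{k^{\ast}}^{\ast} > 0$ must hold, since $c_{k^{\ast}}^{\ast} = 0$ would yield $S_{k^{\ast}-1}^{\ast} = S_{k^{\ast}}^{\ast} \geq 1/2$, contradicting minimality of $k^{\ast}$. Hence $G_{\tilde{f}^{\ast}} \in \MG^{\ast}$ and $R(\tilde{f}^{\ast}) = \MR(G_{\tilde{f}^{\ast}}) = \MR^{\ast} = \inf_{f \in \MF_{\MG}} R(f)$, verifying $\widetilde{\MF}_{\MG,J}$ is a classification-preserving reduction. The main subtlety is this last identification: without the nesting and nonnegativity conditions imposed in the definition of $\widetilde{\MF}_{\MG,J}$, there would be no guarantee that the prediction set of a hinge-optimal $\tilde{f}^{\ast}$ coincides with one of the optimal $G_j^{\ast}$, and the argument would collapse (as Example \ref{ex:numerical example 2} already illustrated in the linear case).
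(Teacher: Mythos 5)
Your proposal is correct and follows essentially the same route as the paper: you first derive the identity $R_{\phi_h}(f) = 2\sum_j c_j \MR(G_j)$ via the linear form of the hinge loss on $[-1,1]$, then read off claim (ii), then for claim (i) identify the prediction set of a minimizer with $G_{k^{\ast}}$ (the paper's index $m$) and show $c_{k^{\ast}}>0$. The only cosmetic difference is that you give a direct argument (optimality forces $\MR(G_j)=\MR^{\ast}$ for all $j$ with $c_j>0$, hence $G_{\tilde f^{\ast}}\in\MG^{\ast}$) whereas the paper argues the contrapositive by contradiction, and you spell out the verification that $G_{\tilde f^{\ast}}=G_{k^{\ast}}$ which the paper merely asserts.
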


\begin{proof}
See Appendix \ref{appx:proof 1}.
\end{proof}

\bigskip{}

Characteristic features of $\widetilde{\MF}_{\MG,J}$ are (i) sublevel sets of any $f \in \widetilde{\MF}_{\MG,J}$ are in $\MG$, and (ii)
$\widetilde{\MF}_{\MG,J}$ contains $\tilde{f}_{G}$ for any $G \in \MG$.
It transpires that these two features are the key features that need to be maintained for $\widetilde{\MF}_{\MG}$ to generalize Lemma \ref{lem:step functions}. 

The next theorem is the second main theorem of the paper that extends Lemma \ref{lem:step functions} to a more general class of classifiers that can accommodate continuous ones.
\bigskip{}

\begin{theorem}[Consistency under classification-preserving reduction] \label{thm:continuous functions} 
Given a class of measurable subsets $\MG \subseteq 2^{\MX}$ and $\MF_{\MG}=\left\{ f:G_{f}\in\MG,\ f(\cdot)\in[-1,1]\right\}$, suppose $\widetilde{\MF}_{\MG} \subset \MF_{\MG}$ satisfies the following two conditions:  
\begin{enumerate}[label=(A\arabic*)]
    \item \label{asm:sublevel set condition} For every $f\in\widetilde{\MF}_{\MG}$, $\{x\in\MX:f(x)\leq t\}\in\MG$ for all $t\in[-1,1]$;
    \item \label{asm:optimizer condition} For any $G \in \MG$, $\tilde{f}_{G}\in\widetilde{\MF}_{\MG}$.
\end{enumerate}
Then the following claims hold:\\
(i) $\widetilde{\MF}_{\MG}$ is a classification-preserving reduction of $\MF_{\MG}$;
\\
(ii) For any distribution $P$ on $\{-1,1\}\times \MX$ and $G^{*}\in{\cal G}^{\ast}$,  $\tilde{f}_{G^{\ast}}$ is a minimizer of $R_{\phi_h}(\cdot)$ over $\widetilde{\MF}_{\MG}$, and $\inf_{f \in \widetilde{\MF}_{\MG} }R_{\phi_h}(f) = 2 \MR^{\ast}$ holds. 
\end{theorem}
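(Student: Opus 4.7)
My plan is to derive a layer-cake representation of the hinge risk as an integral of the classification risk $\MR(\cdot)$ over the level sets of $f$, and then to use conditions~\ref{asm:sublevel set condition} and \ref{asm:optimizer condition} to pin down both sides of the bound $\inf_{f\in\widetilde{\MF}_{\MG}}R_{\phi_h}(f)=2\MR^{\ast}$. Since $f(X)\in[-1,1]$, we have $|Yf(X)|\leq 1$, so the hinge loss is in its linear regime, $\phi_h(Yf(X))=1-Yf(X)$; using $E[Y\mid X]=2\eta(X)-1$, this gives $R_{\phi_h}(f)=1+E_{P_X}[(1-2\eta(X))f(X)]$. Substituting the layer-cake identity $f(x)=-1+\int_{-1}^{1}\mathbf{1}\{f(x)>t\}\,dt$ (valid on $[-1,1]$), applying Fubini, and rewriting each inner integral via identity~(\ref{eq:classification risk at G}) yields, after the $P(Y=1)$ terms cancel, the key representation
\[
R_{\phi_h}(f)=\int_{-1}^{1}\MR\bigl(\{x:f(x)>t\}\bigr)\,dt.
\]

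By \ref{asm:sublevel set condition}, each level set $\{f>t\}$ of $f$ lies in $\MG$, so $\MR(\{f>t\})\geq \MR^{\ast}$ pointwise in $t$, and integrating gives $R_{\phi_h}(f)\geq 2\MR^{\ast}$ for every $f\in\widetilde{\MF}_{\MG}$. A matching upper bound follows from \ref{asm:optimizer condition}: for any $G^{\ast}\in\MG^{\ast}$ the step classifier $\tilde{f}_{G^{\ast}}$ lies in $\widetilde{\MF}_{\MG}$, and since $\{\tilde{f}_{G^{\ast}}>t\}=G^{\ast}$ for every $t\in[-1,1)$, direct evaluation gives $R_{\phi_h}(\tilde{f}_{G^{\ast}})=2\MR(G^{\ast})=2\MR^{\ast}$. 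Hence $\tilde{f}_{G^{\ast}}$ is a minimizer and $\inf_{f\in\widetilde{\MF}_{\MG}}R_{\phi_h}(f)=2\MR^{\ast}$, establishing claim~(ii).

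For claim~(i), let $\tilde{f}^{\ast}$ be any minimizer of $R_{\phi_h}$ over $\widetilde{\MF}_{\MG}$. The equality $\int_{-1}^{1}\MR(\{\tilde{f}^{\ast}>t\})\,dt=2\MR^{\ast}$ combined with the pointwise lower bound $\MR(\{\tilde{f}^{\ast}>t\})\geq\MR^{\ast}$ forces $\MR(\{\tilde{f}^{\ast}>t\})=\MR^{\ast}$ for almost every $t\in(-1,1)$. Selecting a sequence $t_n<0$ with $t_n\uparrow 0$ along which this equality holds, the sets $\{\tilde{f}^{\ast}>t_n\}$ decrease to $\{\tilde{f}^{\ast}\geq 0\}=G_{\tilde{f}^{\ast}}$, and continuity of $\MR$ from above (by dominated convergence, as $|1-2\eta|\leq 1$) yields $\MR(G_{\tilde{f}^{\ast}})=\MR^{\ast}$, hence $R(\tilde{f}^{\ast})=\inf_{f\in\MF_{\MG}}R(f)$. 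The main obstacle is this last step---transferring an almost-everywhere-in-$t$ equality to the specific level $t=0$ that defines the prediction set---which requires both the full-Lebesgue-measure property of the ``good'' $t$-set (to produce a left-accumulating sequence) and continuity of $\MR$ from above along nested decreasing sequences of sets.
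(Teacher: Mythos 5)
Your layer-cake route is genuinely different from the paper's argument and, once a small gap is repaired, it works. The paper first proves the step-function case (Lemma \ref{lem:step functions}), then approximates an arbitrary $f\in\widetilde{\MF}_{\MG}$ by the step functions $\tilde{f}_J^\ast(\cdot)=\frac{2}{J}\sum_{j=1}^J 1\{\tilde{f}^\ast(\cdot)\geq 2(j/J)-1\}-1$, and passes to the limit by dominated convergence; your identity $R_{\phi_h}(f)=\int_{-1}^1\MR(\{f>t\})\,dt$ is precisely the $J\to\infty$ limit of that discretization, packaged once and for all, so you bypass both Lemma~\ref{lem:step functions} and the limiting argument. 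What the layer-cake buys is a single clean inequality $R_{\phi_h}(f)\geq\int_{-1}^1\MR^\ast\,dt=2\MR^\ast$; what the paper's route buys is that the discrete lemma is reused verbatim when the argument is repeated with $P$ replaced by the empirical measure $P_n$.

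The gap is in the sentence ``By~\ref{asm:sublevel set condition}, each level set $\{f>t\}$ of $f$ lies in $\MG$.'' That does not follow. As stated, \ref{asm:sublevel set condition} gives membership of the \emph{sublevel} sets $\{f\leq t\}$, whose complements $\{f>t\}$ need not lie in $\MG$ ($\MG$ is not assumed closed under complementation; in the monotone example $\MG_M$ consists of upper sets and is not). Reading~\ref{asm:sublevel set condition} the way the paper's own proof actually uses it --- as $\{f\geq t\}\in\MG$ (it sets $G_j=\{\tilde{f}^\ast\geq 2(j/J)-1\}$, and this is the version consistent with Example~\ref{example:Monotonic classification with a class of monotonic functions}) --- still leaves the open/closed mismatch: you get $\{f\geq t\}\in\MG$, not $\{f>t\}\in\MG$. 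The cheap fix is to run the layer-cake with closed superlevel sets, $f(x)+1=\int_{-1}^1\mathbf{1}\{f(x)\geq t\}\,dt$ (valid since the two integrands agree for Lebesgue-a.e.\ $t$), giving $R_{\phi_h}(f)=\int_{-1}^1\MR(\{f\geq t\})\,dt$ and then $\MR(\{f\geq t\})\geq\MR^\ast$ pointwise directly from the corrected~\ref{asm:sublevel set condition}. Alternatively, keep $\{f>t\}$ but note that the set of $t$ with $P_X(f=t)>0$ is countable, so $\MR(\{f>t\})=\MR(\{f\geq t\})\geq\MR^\ast$ for Lebesgue-a.e.\ $t$, which is enough for the integral bound. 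With either repair the rest of your argument --- including the almost-everywhere equality $\MR(\{\tilde{f}^\ast>t\})=\MR^\ast$, the choice of $t_n\uparrow 0$ along the full-measure good set, and continuity of $\MR$ from above by dominated convergence since $|1-2\eta|\leq 1$ --- is sound and establishes both claims.
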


\begin{proof}
See Appendix \ref{appx:proof 1}.
\end{proof}
\bigskip{}

The theorem establishes that the two conditions \ref{asm:sublevel set condition} and \ref{asm:optimizer condition} are sufficient for  $\widetilde{\MF}_{\MG}$ to be a classification-preserving reduction of $\MF_{\MG}$. 
This result holds regardless of whether $\MF_{\MG}$ is correctly $R$-specified or not. Examples \ref{ex:Linear classification with a class of transformed logistic functions} and \ref{example:Monotonic classification with a class of monotonic functions} at the end of this section give examples of classification-preserving reductions for linear classification and monotone classification.

The conditions \ref{asm:sublevel set condition} and \ref{asm:optimizer condition} in Theorem \ref{thm:continuous functions}  are simple to interpret and guarantee the consistency of the hinge risk minimization, but they do not imply that the empirical hinge risk minimization over $\widetilde{\MF}_{\MG}$ can be reduced to a convex optimization. We are unaware of a general way to construct a classification-preserving reduction that makes the empirical hinge risk minimization a convex program. For monotone classification, analyzed in Section \ref{sec:Applications to monotone classification}, we propose two constructions of $\widetilde{\MF}_{\MG_M}$, one of which is exactly a classification-preserving reduction of $\MF_{\MG_M}$ while the other is approximately classification-preserving. We show that for both cases, minimization of the empirical hinge risk is a linear programming problem. 

Although Theorem \ref{thm:continuous functions} shows the consistency of the hinge risk minimization over $\widetilde{\MF}_{\MG}$, it does not lead to the generalized \citeauthor{Zhang_2004}'s (\citeyear{Zhang_2004}) inequality in Corollary \ref{corr:zhang's inequality}. Instead, the following corollary gives proportional equality between the $\MG$-constrained excess classification risk and the $\MF_{\MG}$-constrained excess hinge risk with an extra term added. 

\bigskip{}

\begin{corollary}\label{cor:zhang's ineuality for admissible refinement}
Assume $\widetilde{\MF}_{\MG}$ is a subclass of $\MF_{\MG}$ satisfying conditions \ref{asm:sublevel set condition} and \ref{asm:optimizer condition} in Theorem \ref{thm:continuous functions}. If $\Delta C_{\phi}(\eta) = c(1- 2 \eta)$ holds for some $c>0$ and any $\eta \in [0,1]$, 
\begin{align}
c(R(f)-\inf_{f\in \MF_{\MG}} R(f)) = \frac{1}{2}\left(R_{\phi}(f)-\inf_{f\in\widetilde{\MF}_{\MG}}R_{\phi}(f)\right) 
+ \frac{1}{2}\left(R_{\phi}(\tilde{f}_{G_f}) - R_{\phi}(f) \right)
\label{eq:zhang's inequality} 
\end{align}
for any classifier $f:\MX \mapsto [-1,1]$. Moreover, the following holds:
\begin{align}
c(R(f)-\inf_{f\in \MF_{\MG}} R(f)) \leq  \frac{1}{2}\left(R_{\phi}(f)-\inf_{f\in\widetilde{\MF}_{\MG}}R_{\phi}(f)\right)
+ \frac{1}{2}\left(R_{\phi}(f)-\inf_{f\in \MF_\MG }R_{\phi}(f)\right) \label{eq:zhang's inequality_average excess risks} 
\end{align}
for any $f \in \MF_\MG$.
\end{corollary}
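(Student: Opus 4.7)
The key observation is that the hypothesis $\Delta C_{\phi}(\eta) = c(1-2\eta)$ forces $\phi$, via Corollary~\ref{corr:universally equivalence}, to be a scaled and shifted hinge loss of the form $\phi(\alpha) = c\max\{0,1-\alpha\} + \kappa$ for some $\kappa \ge 0$. In particular, the conditional surrogate risk $C_{\phi}(f,\eta) = c[1 + f(1-2\eta)] + \kappa$ is affine in $f$ on $[-1,1]$, so that $C_{\phi}(1,\eta) = 2c(1-\eta) + \kappa$ and $C_{\phi}(-1,\eta) = 2c\eta + \kappa$.

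The first step is the pivotal identity
\begin{equation*}
R_{\phi}(\tilde{f}_G) \;=\; \int_G C_{\phi}(1,\eta)\,dP_X + \int_{G^c} C_{\phi}(-1,\eta)\,dP_X \;=\; 2c\,\MR(G) + \kappa,
\end{equation*}
valid for any measurable $G \subseteq \MX$, obtained by substituting the two values of $C_{\phi}(\pm 1, \eta)$ above and using the expression (\ref{eq:classification risk at G}) for $\MR(G)$. Theorem~\ref{thm:continuous functions}(ii) then identifies $\inf_{f\in\widetilde{\MF}_{\MG}} R_{\phi}(f) = R_{\phi}(\tilde{f}_{G^{\ast}}) = 2c\,\MR^{\ast} + \kappa$ for any $G^{\ast} \in \MG^{\ast}$, and the constant $\kappa$ cancels upon subtraction:
\begin{equation*}
R_{\phi}(\tilde{f}_{G_f}) - \inf_{f\in\widetilde{\MF}_{\MG}} R_{\phi}(f) \;=\; 2c\bigl[\MR(G_f) - \MR^{\ast}\bigr] \;=\; 2c\bigl[R(f) - \inf\nolimits_{f'\in\MF_{\MG}} R(f')\bigr].
\end{equation*}
Dividing by $2$ and inserting $\tfrac{1}{2}R_{\phi}(f) - \tfrac{1}{2}R_{\phi}(f) = 0$ on the right yields the identity (\ref{eq:zhang's inequality}). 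At this stage no constraint on $f$ beyond $f:\MX\to[-1,1]$ has been used, which matches the scope stated for (\ref{eq:zhang's inequality}).

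For the inequality (\ref{eq:zhang's inequality_average excess risks}), I would replace the last term $\tfrac{1}{2}[R_{\phi}(\tilde{f}_{G_f}) - R_{\phi}(f)]$ in the identity just derived by the a priori larger quantity $\tfrac{1}{2}[R_{\phi}(f) - \inf_{f'\in\MF_{\MG}} R_{\phi}(f')]$. Doing so reduces the claim to proving
\begin{equation*}
R_{\phi}(\tilde{f}_{G_f}) + \MR_{\phi}(G^{\ast}) \;\le\; 2\,R_{\phi}(f) \qquad \text{for every } f \in \MF_{\MG},
\end{equation*}
where $\MR_{\phi}(G^{\ast}) = \inf_{f'\in\MF_{\MG}} R_{\phi}(f')$ by the definition of $\MR_{\phi}$ combined with Theorem~\ref{thm:risk equivalence}. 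This is the main obstacle of the proof. My plan is to combine the affine representation $R_{\phi}(f) = c + c\int f(1-2\eta)\,dP_X + \kappa$ with the analogue for $\MR_{\phi}(G^{\ast})$ derived from (\ref{eq:simplified surrogate risk}) together with $\Delta C_{\phi}(\eta) = c(1-2\eta)$, and then to exploit the sign constraints $f\ge 0$ on $G_f$ and $f<0$ on $G_f^c$ to control the resulting integrand termwise; the generalised Zhang inequality of Corollary~\ref{corr:zhang's inequality} applied to $f$ provides the auxiliary bound that ties these integrals back to $R_{\phi}(f) - \MR_{\phi}(G^{\ast})$.
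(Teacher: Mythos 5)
Your proof of the identity (\ref{eq:zhang's inequality}) is correct and essentially coincides with the paper's: both hinge on the fact that $R_\phi(\tilde{f}_{G})=2c\,\MR(G)$ (up to an additive constant $\kappa$) together with $\inf_{f\in\widetilde{\MF}_{\MG}}R_\phi(f)=2c\,\MR^{\ast}+\kappa$ from Theorem \ref{thm:continuous functions}(ii). The paper reaches the same place from the complementary representation (\ref{eq:classification risk_complement}) of $\MR(G)$ and equation (\ref{eq:hinge risk expression}), whereas you use the pointwise closed form $C_{\phi}(f,\eta)=c[1+f(1-2\eta)]+\kappa$; that is a cosmetic difference, not a different route.

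The second half of your proposal, however, contains a genuine gap. The quantity $\tfrac{1}{2}\bigl[R_\phi(f)-\inf_{f'\in\MF_{\MG}}R_\phi(f')\bigr]$ is \emph{not} a priori larger than $\tfrac{1}{2}\bigl[R_\phi(\tilde{f}_{G_f})-R_\phi(f)\bigr]$ on all of $\MF_{\MG}$, and the claim you reduce to, namely $R_\phi(\tilde{f}_{G_f})+\MR_\phi(G^{\ast})\le 2R_\phi(f)$ for every $f\in\MF_{\MG}$, is false. Take $\MX=\{1,2\}$ with $P_X$ uniform, $\eta(1)=0.9$, $\eta(2)=0.4$, $\MG=\{\emptyset,\MX\}$ and the hinge loss with $c=1$. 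Then $G^{\ast}=\MX$ and $\MR^{\ast}=0.35$, so $\inf_{f\in\widetilde{\MF}_{\MG}}R_\phi(f)=2\MR^{\ast}=0.7$, whereas $\inf_{f\in\MF_{\MG}}R_\phi(f)=\MR_\phi(G^{\ast})=0.6$ is attained by the pointwise $\phi$-optimizer $f^{\ast}\in\MF_{\MG}$ with $f^{\ast}(1)=1$, $f^{\ast}(2)=0$. For that $f^{\ast}$ one computes $R_\phi(\tilde{f}_{G_{f^{\ast}}})=0.7$, so $R_\phi(\tilde{f}_{G_{f^{\ast}}})+\MR_\phi(G^{\ast})=1.3>1.2=2R_\phi(f^{\ast})$; equivalently, the right-hand side of (\ref{eq:zhang's inequality_average excess risks}) evaluated at $f^{\ast}$ equals $-0.05$ while the left-hand side is $0$. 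The source of the trouble is that $R_\phi(f)-\inf_{f'\in\widetilde{\MF}_{\MG}}R_\phi(f')$ can be strictly negative for $f\in\MF_{\MG}\setminus\widetilde{\MF}_{\MG}$, because $\inf_{\widetilde{\MF}_{\MG}}R_\phi=2c\MR^{\ast}$ strictly exceeds $\MR_\phi(G^{\ast})$ whenever $G^{\ast}$ is not the Bayes set $\{\eta\ge 1/2\}$, and the compensating second excess-risk term can vanish at the $\MF_{\MG}$-optimizer. No amount of sign bookkeeping or appeal to Corollary \ref{corr:zhang's inequality} can rescue a false target, so your planned argument cannot be completed. You should also be aware that the paper's own proof of (\ref{eq:zhang's inequality_average excess risks}) runs into the same obstruction: it invokes identity (\ref{eq:zhang's inequality}) ``when $\widetilde{\MF}_{\MG}=\MF_{\MG}$'', but $\MF_{\MG}$ does not satisfy condition \ref{asm:sublevel set condition} (only the $0$-level sets of its members are constrained, not every $t$-level set), so that substitution is not licensed; the example above shows the second inequality, as stated for arbitrary $f\in\MF_{\MG}$, does not hold.
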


\begin{proof}
See Appendix \ref{appx:proof 1}.
\end{proof}
\bigskip{}

The extra term (the right-most term) in (\ref{eq:zhang's inequality}) measures the difference in the hinge risks between a classifier $f$ and the step function indicating the prediction set of $f$ by the values $+1$ or $-1$. 
Due to the fact that some of the best classifiers are of the form $\tilde{f}^{\ast}(\cdot)=1\left\{\cdot \in G^\ast \right\} - 1\left\{\cdot \notin G^\ast \right\}$ for $G^{\ast} \in \MG^{\ast}$ (Theorem \ref{thm:continuous functions} (ii)), if $f$ closely approximates such a classifier, the extra term is close to zero.  
In the following section, we use equation (\ref{eq:zhang's inequality}) to derive the statistical properties of the hinge risk minimization in terms of the $\MG$-constrained excess classification risk. Equation (\ref{eq:zhang's inequality_average excess risks}) implies that the $\MG$-constrained excess classification risk is bounded from above by the average of the two $\MF_{\MG}$-constrained excess hinge risks. One is over $\widetilde{\MF}_\MG$ and the other is over $\MF_{\MG}$. We are unable to determine if the excess hinge risk over $\MF_{\MG}$ can be bounded from above by a term that is proportional to the excess hinge risk over $\widetilde{\MF}_\MG$. As such, the constrained-classification-preserving reduction $\widetilde{\MF}_{\MG}$ cannot replace $\MF_{\MG}$ in Zhang's inequality, shown in Corollary \ref{corr:zhang's inequality}.

We conclude this section by presenting examples of classes of classifiers that approximately or exactly satisfy the conditions for classification-preserving reduction.

\bigskip{}

\begin{example}[Linear classification with a class of transformed logistic functions]
\label{ex:Linear classification with a class of transformed logistic functions}
Suppose that the prediction sets are subject to the linear index rules:
\begin{equation*}
    \MG_L=\{x\in\Real^{d_x}:x^{T}\beta\geq0:\beta\in\Real^{d_{x}}\},
\end{equation*}
where $\MX=\Real^{d_x}$. Let $\pi(\alpha,k)\equiv(1-e^{-k\alpha})/(1+e^{-k\alpha})=2/(1+e^{-k\alpha})-1$
be a transformed logistic function and define a class of classifiers 
\begin{align}
\MF_{Logit}=\{\pi(x^{T}\beta,k):\beta\in\Real^{d_x}\mbox{ and }k\in\Real_{+}\},\nonumber
\end{align}
where $k$ is a tuning parameter that determines the steepness of the logistic curve. $\MF_{Logit}$ satisfies the condition (\hyperlink{sublevel set condition}{A1}) in Theorem \ref{thm:continuous functions}.\footnote{Fix $\beta\in\Real^{d_x}$ and $k\in\Real_{+}$. The condition
\ref{asm:sublevel set condition} is satisfied as, for any $t\in[-1,1]$,
$\{x:\pi(x^{T}\beta,k)\leq t\}=\{x:x^{T}\beta\leq\pi^{-1}(t,k)\}\in\MG$,
where $\pi^{-1}(\cdot,k)$ is an inverse function of $\pi(\cdot,k)$
with fixed $k$.} Since $\MF_{Logit}$ for fixed $k < \infty$ rules out any step functions, the condition \ref{asm:optimizer condition} in Theorem \ref{thm:continuous functions} is not exactly met. Fix $G \in \MG$, and let $\tilde{\beta}$ be such that $\{x\in\MX:x^{T}\tilde{\beta}\geq0\}=G$.
Then, as $k\rightarrow\infty$, $\pi(x^{T}\tilde{\beta},k)$ approximates
$\sign(x^{T}\tilde{\beta})$, so the condition \ref{asm:optimizer condition} is approximately met for large $k$. Every function in $\MF_{Logit}$ is smooth and depends on a finite number of parameters. Hence, the empirical hinge risk  becomes a smooth and continuous function with a finite number of parameters, although it is not generally convex. 
\end{example}

\bigskip{}

\begin{example}[Monotonic
classification with a class of monotone functions]
\label{example:Monotonic classification with a class of monotonic functions} Hinge risk
minimization embedding a monotonicity restriction remains consistent when we use a class of monotone functions. Let $\precsim$ be a partial order on $\MX$, and let $\MG_{\precsim}$ be the collection of all $G\in2^{\MX}$ that respect monotonicty (i.e., if $x_{1}\precsim x_{2}$ and $x_{1}\in G$, then $x_{2}\in G$). 
Define $\widetilde{\MF}_{\MG_{\precsim}}$ as a class of functions $f:\MX\rightarrow[-1,1]$ that are weakly monotonic
in $\precsim$ (i.e., satisfying $f(x_{1})\leq f(x_{2})$ if $x_{1}\precsim x_{2}$). Then the prediction set of any $f\in\widetilde{\MF}_{\MG_{\precsim}}$ respects the partial order $\precsim$ (i.e., if $x_{1}\precsim x_{2}$ and $x_{1}\in G_{f}$, then $x_{2}\in G_{f}$). 
For any $t\in[-1,1]$ and $f\in\widetilde{\MF}_{\precsim}$, $\{x:f(x)\leq t\}=\{x:x\precsim\tilde{x}\mbox{ for any }\tilde{x}\mbox{ such that }f(\tilde{x})=t\}\in\MG_{\precsim}$ holds,
satisfying condition \ref{asm:sublevel set condition} in Theorem \ref{thm:continuous functions}. 
In addition, for any $G\in\MG_{\precsim}$, since $\tilde{f}_{G}(x)=2\cdot1\{x\in G\}$
is weakly monotonic in $\precsim$, $\tilde{f}_{G} \in\MG_{\precsim}$
holds, satisfying condition \ref{asm:optimizer condition} in Theorem \ref{thm:continuous functions}. Hence $\widetilde{\MF}_{\MG_{\precsim}}$
is a classification-preserving reduction of $\MF_{\MG_{\precsim}}$. Therefore, according to Theorem \ref{thm:continuous functions},
hinge risk minimization over $\widetilde{\MF}_{\precsim}$ yields the optimal classifier in terms of the classification risk. Section \ref{sec:Applications to monotone classification} focuses on monotone classification and investigates its statistical and computational properties. 
\end{example}


\section{Statistical properties}
\label{sec:statistical property}

The analyses presented so far concern the consistency of a surrogate loss approach in terms of the population risk criterion. 
It is important to note that Theorems \ref{thm:risk equivalence} and \ref{thm:continuous functions} do not impose any restriction on the underlying distribution of $(Y,X)$. Accordingly, equivalence of the risk orderings and risk minimizing classifiers between the classification and hinge risks remains valid even if we consider empirical analogues of the risks constructed from the empirical distribution of the sample. 
Theorems \ref{thm:risk equivalence} and \ref{thm:continuous functions} hence guarantee that a classifier minimizing the empirical hinge risk over $\MF_{\MG}$ or over a classification-preserving reduction $\widetilde{\MF}_{\MG}$ also minimizes the empirical classification risk. 

In this section, we assess the generalization performance of hinge risk minimizing classifiers, allowing for general misspecification of the constrained class of classifiers. Towards that goal, let $\MG$ be fixed and consider 
$\Check{\MF}$, a class of classifiers whose members satisfy $-1 \leq f \leq 1$.
$\Check{\MF}$ may or may not be a subclass of $\MF_{\MG}$, while in our analysis of monotone classification below, $\Check{\MF}$ corresponds to an approximation of a classification-preserving reduction $\widetilde{\MF}_{\MG}$. 
Let $\{\left(Y_{i},X_{i}\right):i=1,\ldots,n\}$
be a sample of observations that are independent and identically
distributed (i.i.d.) as $(Y,X)$. We denote the joint distribution
of a sample of $n$ observations by $P^{n}$ and the expectation with respect
to $P^{n}$ by $E_{P^{n}}[\cdot]$. We define the empirical classification risk
and empirical hinge risk, respectively, as 
\begin{align*}
\hat{R}(f)&\equiv  \frac{1}{n}\sum_{i=1}^{n}1\left\{ Y_{i} \cdot \text{sign} (f\left(X_{i}\right) ) \leq0\right\} ,\\
\hat{R}_{\phi_{h}}(f)&\equiv \frac{1}{n}\sum_{i=1}^{n}\max\{0,1-Y_{i}f\left(X_{i}\right)\} = \frac{1}{n}\sum_{i=1}^{n}\left(1-Y_{i}f\left(X_{i}\right) \right),
\end{align*}
where the max operator in the hinge loss is redundant if we constrain $f(\cdot)$ to $[-1, 1]$.
Let $\hat{f}$ be a classifier that minimizes $\hat{R}_{\phi_{h}}(\cdot)$
over $\Check{\MF}$. We evaluate the statistical properties
of $\hat{f}$ in terms of the excess classification risk relative to the minimal risk over $\MF_{\MG}$, $R(\hat{f})-\inf_{f\in\MF_{\MG}}R(f)$. In particular, we later derive a distribution-free upper bound on the mean of the excess classification risk.

Let $\widetilde{\MF}_\MG$ be a subclass of $\MF_{\MG}$ that satisfies conditions \ref{asm:sublevel set condition} and \ref{asm:optimizer condition} in Theorem \ref{thm:continuous functions}. $\widetilde{\MF}_\MG$ is a
classification-preserving reduction of $\MF_\MG$ (Definition \ref{def:Constrained classification preserving reduction}). Following  Corollary \ref{cor:zhang's ineuality for admissible refinement}, we have
\begin{align}
R(\hat{f})-\inf_{f\in\MF_{\MG}}R(f)= \frac{1}{2}\left(R_{\phi_h}(\hat{f})-\inf_{f\in\widetilde{\MF}_{\MG}}R_{\phi_h}(f)\right) 
+\frac{1}{2}\left(R_{\phi_h} \left( \tilde{f}_{G_{\hat{f}}} \right) -R_{\phi_h}(\hat{f})\right).
\label{eq:excess risk decomposition}
\end{align}
When $\Check{\MF}=\widetilde{\MF}_\MG$, evaluating each term on the right hand side of (\ref{eq:excess risk decomposition}) gives an upper bound on the mean of the $\MG$-constrained excess classification risk of $\hat{f}$.

Let $H_{1}^{B}\left(\epsilon,\MF,P_{X}\right)$ be the $L_{1}\left(P_X\right)$-bracketing entropy of a class of functions ${\MF}$ and  $H_{1}^{B}\left(\epsilon,\MG,P_{X}\right)$ be that of a class of prediction sets $\MG$.\footnote{With a slight abuse of notation, we denote by $H_{1}^{B}\left(\epsilon,\MG,P_{X}\right)$ the bracketing entropy number of the class of indicator functions, $H_{1}^{B}\left(\epsilon,\MH_\MG,P_{X}\right)$, where $\MH_{\MG}\equiv \{1\{\cdot \in G\}: G \in \MG\}$.}
For definitions of these two terms, see Definition \ref{def:bracketing entropy} in Appendix \ref{appx:proof 2}. When $\Check{\MF}$ coincides with $\widetilde{\MF}_\MG$, the following theorem gives a non-asymptotic distribution-free upper bound on  the mean of the $\MG$-constrained excess classification risk in terms of the bracketing entropy.   

\bigskip

\begin{theorem}\label{thm:statistical propery for excess classification risk}
Let $\widetilde{\MF}_\MG$ be a subclass of $\MF_{\MG}$ that satisfies conditions \ref{asm:sublevel set condition} and \ref{asm:optimizer condition} in Theorem \ref{thm:continuous functions}. 
Suppose that $\MP$ is a class of distributions on $\{-1,1\}\times \MX$ such that there exist positive constants $C$ and $r$ for which
\begin{align}
H_{1}^{B}\left(\epsilon,\MG,P_{X}\right) \leq C\epsilon^{-r} \label{eq:bracketing entoropy conditoin_G}
\end{align}
holds for any $P \in \MP$ and $\epsilon>0$, or
\begin{align}
H_{1}^{B}\left(\epsilon,\widetilde{\MF}_{\MG},P_{X}\right) \leq C\epsilon^{-r} \label{eq:bracketing entoropy conditoin_tildeF}
\end{align}
holds for any $P \in \MP$ and $\epsilon>0$. 
Define $\tau_{n}=n^{-1/2}$ if $r<1$, $\tau_{n}=\log\left(n\right)/\sqrt{n}$ if $r=1$, and
$\tau_{n}=n^{-1/\left(r+1\right)}$ if $r\geq2$. Let $q_{n}=\sqrt{n}\tau_{n}$.
Then, for $\hat{f} \in \arg\inf_{f \in \widetilde{\MF}_{\MG}}R_{\phi_h}(f)$, the following holds:
\begin{align}
\sup_{P\in\mathcal{P}}E_{P^{n}}\left[R(\hat{f})-\inf_{f\in\MF_{\MG}}R(f)\right]\leq  L_{C}(r,n), \label{eq:upper bound_statistical property_admissible refinement}
\end{align}
where
\begin{align}
    L_{C}(r,n)	=\begin{cases}
\begin{array}{c}
2D_{1}\tau_{n}+4D_{2}\exp\left(-D_{1}^{2}q_{n}^{2}\right)\\
2D_{3}\tau_{n}+2n^{-1}D_{4}
\end{array} & \begin{array}{l}
\mbox{if }r\geq1\\
\mbox{if }r<1
\end{array}\end{cases}
\nonumber
\end{align}
for some positive constants $D_{1}, D_{2}, D_{3}, D_{4}$, which depend only on $C$ and $r$.
\end{theorem}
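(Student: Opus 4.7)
The plan rests on one structural lemma followed by two essentially standard empirical-process bounds, one for each of the two bracketing-entropy hypotheses.

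\textbf{Step 1 (key lemma).} I will show that for any $\hat{f}\in\arg\inf_{f\in\widetilde{\MF}_\MG}\hat{R}_{\phi_h}(f)$, the sample prediction set $G_{\hat{f}}$ satisfies $\hat{\MR}(G_{\hat{f}})=\hat{\MR}^{\ast}$, where $\hat{\MR}^{\ast}\equiv\inf_{G\in\MG}\hat{\MR}(G)$. The argument uses the layer-cake representation
\begin{equation*}
\hat{f}(x)=\tfrac{1}{2}\int_{-1}^{1}\mathrm{sign}\bigl(\hat{f}(x)-t\bigr)\,dt
\end{equation*}
together with condition \ref{asm:sublevel set condition}, which puts every super-level set $G_t\equiv\{x:\hat{f}(x)\geq t\}$ into $\MG$. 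Combining this with the identity $\hat{R}_{\phi_h}(\tilde{f}_G)=2\hat{\MR}(G)$ and Fubini will give $\hat{R}_{\phi_h}(\hat{f})=\int_{-1}^{1}\hat{\MR}(G_t)\,dt$. Since each $\hat{\MR}(G_t)\geq\hat{\MR}^{\ast}$ and, by Theorem \ref{thm:continuous functions}(ii) applied to the empirical distribution, $\hat{R}_{\phi_h}(\hat{f})=2\hat{\MR}^{\ast}$, the integrand must equal $\hat{\MR}^{\ast}$ for Lebesgue-almost every $t$. Because the sample is finite, $t\mapsto\hat{\MR}(G_t)$ is piecewise constant with breakpoints at the values $\{\hat{f}(X_i)\}$; a short perturbation argument handles the degenerate case in which $\hat{f}(X_i)=0$ for some $i$, giving $\hat{\MR}(G_{\hat{f}})=\hat{\MR}^{\ast}$.

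\textbf{Step 2 (entropy on $\MG$, condition \eqref{eq:bracketing entoropy conditoin_G}).} Since $R(\hat{f})=\MR(G_{\hat{f}})$ and $\MR^{\ast}=\MR(G^{\ast})$ for any $G^{\ast}\in\arg\inf_{G\in\MG}\MR(G)$, I will add and subtract empirical risks:
\begin{equation*}
R(\hat{f})-\MR^{\ast}=\bigl[\MR(G_{\hat{f}})-\hat{\MR}(G_{\hat{f}})\bigr]+\bigl[\hat{\MR}(G_{\hat{f}})-\hat{\MR}(G^{\ast})\bigr]+\bigl[\hat{\MR}(G^{\ast})-\MR(G^{\ast})\bigr].
\end{equation*}
By Step 1 the middle bracket equals $\hat{\MR}^{\ast}-\hat{\MR}(G^{\ast})\leq 0$, so $R(\hat{f})-\MR^{\ast}\leq 2\sup_{G\in\MG}\lvert\hat{\MR}(G)-\MR(G)\rvert$. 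Taking expectations and invoking a standard maximal inequality for a uniformly bounded indicator-class empirical process with polynomial bracketing entropy (e.g., van der Vaart and Wellner, Chapter 2.14) yields the rate $\tau_n$ and the exponential tail term in the three regimes of $r$.

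\textbf{Step 3 (entropy on $\widetilde{\MF}_\MG$, condition \eqref{eq:bracketing entoropy conditoin_tildeF}).} I will use the equality \eqref{eq:excess risk decomposition}: $R(\hat{f})-\MR^{\ast}=\tfrac{1}{2}A+\tfrac{1}{2}B$, with $A\equiv R_{\phi_h}(\hat{f})-\inf_{\widetilde{\MF}_\MG}R_{\phi_h}$ and $B\equiv R_{\phi_h}(\tilde{f}_{G_{\hat{f}}})-R_{\phi_h}(\hat{f})$. The term $A$ is the standard ERM excess hinge risk, bounded via the usual two-step argument by $2\sup_{f\in\widetilde{\MF}_\MG}\lvert R_{\phi_h}(f)-\hat{R}_{\phi_h}(f)\rvert$. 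For $B$, Step 1 gives $\hat{R}_{\phi_h}(\tilde{f}_{G_{\hat{f}}})-\hat{R}_{\phi_h}(\hat{f})=2\hat{\MR}(G_{\hat{f}})-2\hat{\MR}^{\ast}=0$ (using $\tilde{f}_{G_{\hat{f}}}\in\widetilde{\MF}_\MG$ by condition \ref{asm:optimizer condition}), so after adding and subtracting empirical quantities,
\begin{equation*}
B=\bigl[R_{\phi_h}(\tilde{f}_{G_{\hat{f}}})-\hat{R}_{\phi_h}(\tilde{f}_{G_{\hat{f}}})\bigr]+\bigl[\hat{R}_{\phi_h}(\hat{f})-R_{\phi_h}(\hat{f})\bigr]\leq 2\sup_{f\in\widetilde{\MF}_\MG}\lvert R_{\phi_h}(f)-\hat{R}_{\phi_h}(f)\rvert.
\end{equation*}
Because $\phi_h$ is $1$-Lipschitz in its argument and $\lvert f\rvert\leq 1$, the bracketing numbers of the hinge-loss class $\{\phi_h(yf(x)):f\in\widetilde{\MF}_\MG\}$ inherit the assumed polynomial bound on $\widetilde{\MF}_\MG$, and the same maximal inequality produces the bound $L_C(r,n)$. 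The principal technical obstacle I anticipate is Step 1: the layer-cake identity is straightforward, but cleanly justifying the almost-everywhere conclusion together with the measure-zero boundary case requires care; the remaining empirical-process work is then routine.
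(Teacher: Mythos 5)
Your proof is correct and shares the paper's overall architecture: the decomposition of the excess classification risk into an excess hinge risk via \eqref{eq:excess risk decomposition}, followed by an empirical-process bound under the assumed bracketing entropy. The one substantive difference is Step 1. The paper obtains $\hat{\MR}(G_{\hat{f}}) = \hat{\MR}^{\ast}$ (together with the stronger fact that $\tilde{f}_{G_{\hat{f}}}$ minimizes $\hat{R}_{\phi_h}$ over $\widetilde{\MF}_\MG$) in a single line, by observing that Theorem \ref{thm:continuous functions} --- which holds for an \emph{arbitrary} distribution $P$ --- applies verbatim to the empirical measure $P_n$; this gives \eqref{eq:decomposition_admissible refinement} immediately, and the theorem then follows from one invocation of the estimation-error bound (Theorem \ref{thm:estimation error}) with $\Check{\MF}$ taken to be $\MI_\MG \equiv \{\tilde{f}_G : G\in\MG\}$ or $\widetilde{\MF}_\MG$, depending on which entropy hypothesis holds. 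Your layer-cake argument re-proves the same intermediate fact from scratch using the piecewise constancy of $t\mapsto\hat{\MR}(\{x:\hat{f}(x)\geq t\})$ over a finite sample; it is a valid elementary alternative, and the boundary case you flag is in fact harmless (the map is constant on the half-open intervals $(t_{k-1},t_k]$ cut by the distinct values $\hat{f}(X_i)$, so almost-everywhere equality transfers to $t=0$ without any perturbation argument). What you lose is brevity; what you gain is a self-contained derivation that does not rely on noticing that the earlier consistency theorem applies to $P_n$. Your Steps 2 and 3 then unwind to the same $\leq 2\sup|\cdot|$ empirical-process reduction that the paper packages in the proof of Theorem \ref{thm:estimation error}, and match it term by term.
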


\begin{proof}
See Appendix \ref{appx:proof 2}.
\end{proof}

\bigskip

The upper bound on  the mean of the $\MG$-constrained excess classification risk converges to zero at the rate of $\tau_n$, which depends on $r$ in the bracketing entropy conditions (\ref{eq:bracketing entoropy conditoin_G}) and (\ref{eq:bracketing entoropy conditoin_tildeF}).
\cite{Dudley1999} shows many examples that satisfy these bracketing entropy conditions. In particular, the class $\MG_\precsim \subseteq 2^{\MX}$ that is compatible with monotone classification and introduced in Example \ref{example:Monotonic classification with a class of monotonic functions} satisfies condition (\ref{eq:bracketing entoropy conditoin_G}) with $r$ equal to $d_x-1$ (see Theorem 8.3.2 in \cite{Dudley1999}).

We next consider the case when $\Check{\MF}$ does not coincide with $\widetilde{\MF}_{\MG}$.
This case corresponds to a scenario where minimizing the empirical hinge risk over $\widetilde{\MF}_{\MG}$ is difficult but minimizing over $\Check{\MF}$, a class approximating $\widetilde{\MF}_{\MG}$, is feasible. 

A further decomposition of $R_{\phi_h}(\hat{f})-\inf_{f\in\widetilde{\MF}_{\MG}}R_{\phi_h}(f)$ in (\ref{eq:excess risk decomposition})
leads to
\begin{align}
R(\hat{f})-\inf_{f\in\MF_{\MG}}R(f) & =  \frac{1}{2}\left(R_{\phi_{h}}(\hat{f})-\inf_{f\in \Check{\MF}}R_{\phi_h}(f)\right)+\frac{1}{2}\left(\inf_{f\in \Check{\MF}}R_{\phi_h}(f)-\inf_{f\in\widetilde{\MF}_{\MG}}R_{\phi_h}(f)\right) \notag \\
 & +\frac{1}{2}\left(R_{\phi_h} (\tilde{f}_{G_{\hat{f}}})-R_{\phi_h}(\hat{f})\right).
\label{eq:decomposition}
\end{align}
Hence the $\MG$-constrained excess classification risk is decomposed into three terms. We call the first term estimation error, the second term approximation error to a best classifier, and the third term approximation error to a step classifier. 
Evaluating each error gives an upper bound on  the $\MG$-constrained excess classification risk.

The following theorem evaluates the estimation error in terms of bracketing entropy.   

\bigskip{}

\begin{theorem}\label{thm:statistical propery for excess classification risk_2}
Let $\widetilde{\MF}_\MG$ be a subclass of $\MF_{\MG}$ that satisfies conditions \ref{asm:sublevel set condition} and \ref{asm:optimizer condition} in Theorem \ref{thm:continuous functions}. 
Suppose that $\MP$ is a class of distributions on $\MY \times \MX$ such that there exist positive constants $C^{\prime}$ and $r^{\prime}$ for which
\begin{align}
H_{1}^{B}\left(\epsilon,\Check{\MF},P_{X}\right) \leq C^{\prime}\epsilon^{-r^{\prime}} \label{eq:bracketing entoropy conditoin}
\end{align}
holds for any $P \in \MP$ and $\epsilon>0$. 
Let $\hat{f} \in \arg \inf_{f \in \Check{\MF}}R_{\phi_h}(f)$. 
Then 
\begin{align}
\sup_{P\in\mathcal{P}}E_{P^{n}}\left[R(\hat{f})-\inf_{f\in\MF_{\MG}}R(f)\right] & \leq 
L_{C^{\prime}}(r^{\prime},n) + \frac{1}{2}\left(\inf_{f\in \Check{\MF}}R_{\phi_{h}}(f)-\inf_{f\in\widetilde{\MF}_{\MG}}R_{\phi_{h}}(f)\right) \nonumber \\
 & + \frac{1}{2} \left( R_{\phi_{h}} ( \tilde{f}_{G_{\hat{f}}})-R_{\phi_{h}}(\hat{f})\right), \label{eq:upper bound_statistical propery}
\end{align}
where $L_{C^\prime}(r^\prime,n)$ is as defined in Theorem \ref{thm:statistical propery for excess classification risk}.
\end{theorem}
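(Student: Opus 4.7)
The plan is to start from the decomposition in equation~(\ref{eq:decomposition}), take the expectation under $P^n$, and then bound each of the three terms separately. Since the second term (approximation error to a best classifier in $\widetilde{\MF}_\MG$) and the third term (approximation error to a step classifier) appear unchanged on the right-hand side of~(\ref{eq:upper bound_statistical propery}), only the first term, the estimation error $\tfrac{1}{2}\bigl(R_{\phi_h}(\hat f)-\inf_{f\in\Check{\MF}}R_{\phi_h}(f)\bigr)$, needs to be controlled; this will produce the $L_{C'}(r',n)$ summand.

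For the estimation error, the plan is to exploit the fact that $\hat f$ minimizes $\hat R_{\phi_h}(\cdot)$ over $\Check{\MF}$, so that for any $f\in\Check{\MF}$,
\begin{equation*}
R_{\phi_h}(\hat f) - R_{\phi_h}(f) \;=\; \bigl(R_{\phi_h}(\hat f) - \hat R_{\phi_h}(\hat f)\bigr) + \bigl(\hat R_{\phi_h}(\hat f) - \hat R_{\phi_h}(f)\bigr) + \bigl(\hat R_{\phi_h}(f) - R_{\phi_h}(f)\bigr) \;\leq\; 2\sup_{g\in\Check{\MF}}\bigl|R_{\phi_h}(g)-\hat R_{\phi_h}(g)\bigr|.
\end{equation*}
Taking the infimum over $f\in\Check{\MF}$ on the left and then the expectation under $P^n$ on both sides reduces the task to upper-bounding $E_{P^n}[\sup_{g\in\Check{\MF}}|R_{\phi_h}(g)-\hat R_{\phi_h}(g)|]$ uniformly over $P\in\MP$. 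The essential step is to transfer the bracketing entropy assumption on $\Check{\MF}$ in~(\ref{eq:bracketing entoropy conditoin}) to the induced class $\{(y,x)\mapsto\phi_h(y f(x)):f\in\Check{\MF}\}$. Since $\phi_h(y f(x))=1-yf(x)$ on $\Check{\MF}$ (because $|f|\leq 1$), the map $f\mapsto \phi_h(y f(\cdot))$ is 1-Lipschitz in the $L_1(P_{Y,X})$ distance, so the bracketing entropy of the induced class is bounded by $H_1^B(\epsilon,\Check{\MF},P_X)\leq C'\epsilon^{-r'}$ up to constants. This is the exact same setup as Theorem~\ref{thm:statistical propery for excess classification risk} produces for $\widetilde{\MF}_\MG$; one then invokes the standard maximal inequality for $L_1$-bracketing entropy (Talagrand/Alexander-type inequalities as used in Theorem 5.11 of van~de~Geer, 2000, or the peeling argument used in the proof of Theorem~\ref{thm:statistical propery for excess classification risk} in Appendix~\ref{appx:proof 2}), yielding a uniform bound of the form $L_{C'}(r',n)$ with the three-regime rate $\tau_n$.

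Combining the three pieces under the expectation produces exactly the right-hand side of~(\ref{eq:upper bound_statistical propery}); note that the approximation-error terms remain inside the supremum over $\MP$, since they are deterministic functions of $\hat f$ once $P$ is fixed. The main obstacle is not conceptual but bookkeeping: the bracketing entropy bound on $\Check{\MF}$ needs to be lifted to the hinge-loss class with the correct constants and Lipschitz factor, and the three-regime rate $\tau_n$ and the associated constants $D_1,\dots,D_4$ must be tracked carefully to match the $L_{C'}(r',n)$ expression. However, since Theorem~\ref{thm:statistical propery for excess classification risk} has already established precisely this empirical-process bound for an arbitrary class satisfying the entropy condition, the present theorem is obtained by substituting $\Check{\MF}$ for $\widetilde{\MF}_\MG$ in that argument and then adding back the two deterministic approximation terms from~(\ref{eq:decomposition}) that do not appear when $\Check{\MF}=\widetilde{\MF}_\MG$.
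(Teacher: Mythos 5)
Your proposal matches the paper's proof: the paper obtains the result by combining equation~(\ref{eq:decomposition}) with Theorem~\ref{thm:estimation error}, which is exactly the estimation-error bound you derive via the standard symmetrization step $R_{\phi_h}(\hat f)-\inf_{f\in\Check{\MF}}R_{\phi_h}(f)\leq 2\sup_{g}|R_{\phi_h}(g)-\hat R_{\phi_h}(g)|$ and the bracketing-entropy maximal inequality. The only cosmetic difference is that the paper packages the estimation-error argument as the standalone Theorem~\ref{thm:estimation error} (and transfers the entropy through the product structure $g(z_1)h(z_2)$ in Corollary~\ref{cor:empirical process result}) rather than via the explicit Lipschitz contraction you describe, but these are the same argument.
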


\begin{proof}
See Appendix \ref{appx:proof 2}.
\end{proof}

\bigskip{}

\begin{remark}[Approximation errors]\label{rem:approximation error}
Evaluating each approximation error (the final two terms on the right-hand side) in (\ref{eq:upper bound_statistical propery}) depends on the functional form restriction placed on $f \in \Check{\MF}$.
If $\Check{\MF}$ grows and approaches $\widetilde{\MF}_\MG$ as $n \to \infty$,
each approximation error converges to zero.
In Section \ref{sec:Monotone classification with Bernstein polynomial} below, where we consider the monotone classification problem and set $\Check{\MF}$ being a sieve of Bernstein polynomials, we characterize convergence of these two approximation errors. We then apply Theorem \ref{thm:statistical propery for excess classification risk_2} to obtain the regret convergence rate of the estimated monotone classifier.
\end{remark}


\section{Applications to monotone classification}
\label{sec:Applications to monotone classification}

This section applies the general theoretical results shown in Sections \ref{sec:calibration of MG-constrained classification}--\ref{sec:statistical property} to the monotone classification problem (Example \ref{exm:monotone classification}). By Theorem \ref{thm:risk equivalence}, we limit our analysis to hinge loss. 
We assume that $\mathcal{X}$ is compact in $\Real^{d_x}$, $d_x < \infty$, and without loss of generality, we represent $\MX$ as the $d_x$-dimensional unit hypercube (i.e., $\mathcal{X}=\left[0,1\right]^{d_x}$).
To be specific, we consider the class of monotone prediction sets $\MG_{M}$ such that, for any $G\in\MG_{M}$ and $x,\tilde{x}\in\MX$,
$x\in G$ and $x\leq\tilde{x}$ implies $\tilde{x}\in G$ holds\footnote{We define the partial order $\leq$ on $\MX$ as follows. For any $x=\left(x_{1},\ldots,x_{d}\right)^{T}$ and $\tilde{x}=\left(\tilde{x}_{1},\ldots,\tilde{x}_{d}\right)^{T}$,
we say $x\leq\tilde{x}$ if $x_{j}\leq\tilde{x}_{j}$ for every $j=1,\ldots,d$.
We further say $x<\tilde{x}$ if $x\leq\tilde{x}$ holds and for some $j\in\left\{ 1,\ldots,d\right\} $, $x_{j}<\tilde{x}_{j}$ holds.} (i.e., $\MG_{M}$ respects the partial order $\leq$ on $\mathcal{X}$). Accordingly, the class of monotonically increasing classifiers can be represented as 
\begin{align*}
\MF_{M} \equiv & \left\{ f:f(x)\leq f\left(\tilde{x}\right)\mbox{ for any \ensuremath{x,\tilde{x}\in\mathcal{X}} with \ensuremath{x\leq\tilde{x}} }; f(\cdot) \in [-1,1]\right\}.
\end{align*}

In this section, we first study the monotone classification problem on $\MF_M$. Note that $\MF_M$ is a classification-preserving reduction of $\MF_{\MG_M}$ (see Example \ref{example:Monotonic classification with a class of monotonic functions}).
As an alternative to $\MF_{M}$, we next consider using a sieve of Bernstein polynomials to approximate a hinge risk minimizing classifier on $\MF_{M}$. 
The Bernstein polynomial is known for its capability to accommodate bound constraints and various shape constraints on functions (e.g., monotonicity or convexity). 
The class of Bernstein polynomials becomes a classification-preserving reduction only at the limit with a growing order of polynomials.

\subsection{Nonparametric monotone classification}

\label{sec:Nonparametric monotone classification}

We first consider hinge risk minimization given the class of monotonically increasing classifiers $\MF_{M}$. Let $\hat{f}_{M}$ be a minimizer of $\hat{R}_{\phi_h}(\cdot)$ over $\MF_{M}$. Since the hinge risk for classifiers constrained to $-1 \leq f(x) \leq 1$ gives the linear loss $\phi_h(yf(x)) = 1-yf(x)$, minimization of the empirical hinge risk can be formulated as the following linear programming:
\begin{align}
&\max_{f_{1},\ldots,f_{n}}	\sum_{i=1}^{n}Y_{i}f_{i} \label{eq:LP_monotone classification}\\
\mbox{s.t.}& \quad	f_{i}\geq f_{j}\mbox{ for any }X_{i}\neq X_{j}\mbox{ with }X_{i}\geq X_{j}\mbox{ for }1\leq i\leq j\leq n; \notag \\
& \quad	-1\leq f_{i}\leq1\mbox{ for }1\leq i\leq n, \notag    
\end{align}
where the first inequality constraints correspond to the monotononicity constraint on $\MF_M$, and the second inequality constraints correspond to the range constraint for $f \in \MF_M$. Solving this linear program yields the values of $\hat{f}_{M}$ at the values of $x$ observed in the training sample.  
Let $\left(\hat{f}_{M}\left(X_{1}\right),\ldots,\hat{f}_{M}\left(X_{n}\right)\right)$ be the solution of (\ref{eq:LP_monotone classification}). 
Then any function in $\MF_{M}$ that passes the points $\left(\left(X_{1},\hat{f}_{M}\left(X_{1}\right)\right),\ldots,\left(X_{n},\hat{f}_{M}\left(X_{n}\right)\right)\right)$ minimizes the empirical hinge risk over $\MF_M$.\footnote{All classifiers obtained from this procedure predict a unique label at each point $x$ observed in the training sample, whereas they may not give a unique prediction at a point $x$ not observed in the training sample. One possible way to predict a label at an unobserved point $x$ without violating the monotonicity constraint is to predict its label by the largest label among those predicted by all classifiers in $\arg\inf_{f \in \MF_M}\hat{R}_{\phi_h}(f)$. 
Let $\widetilde{\MX}$ be a set of $x$ observed in the training sample.
Given any $\hat{f}_M \in \arg\inf_{f \in \MF_M}\hat{R}_{\phi_h}(f)$, this way of predicting a label is equivalent to predicting the label of $x \in \MX \backslash \widetilde{\MX}$ as the sign of $\min\{\hat{f}_{M}(\tilde{x}):\tilde{x}\in \widetilde{\MX}, \tilde{x} \geq x\}$ if there exists $\tilde{x}\in \widetilde{\MX}$ such that $\tilde{x} \geq x$, and as 1 otherwise.}
Since $\MF_{M}$ is a classification-preserving reduction of $\MF_{\MG_M}$, Theorem \ref{thm:continuous functions} with $P$ replaced by $P^n$ shows that any solution to (\ref{eq:LP_monotone classification}) exactly minimizes $\hat{R}_{\phi_h}(\cdot)$ over $\MF_{\MG_M}$.

We investigate the statistical properties of this procedure.
Since $\MF_M$ is a classification-preserving reduction of $\MF_{\MG_M}$,  we can apply Theorem \ref{thm:statistical propery for excess classification risk}. 
Towards this goal, we first characterize an upper bound on the bracketing entropy number of the class of monotone prediction sets $\MG_M$. The next lemma, which we borrow from Theorem 8.3.2 in \cite{Dudley1999}, gives an upper bound on the $L_{1}\left(P_X\right)$-bracketing entropy of $\MG_{M}$. 
Here, we assume that $X$ is continuously distributed with bounded density.

\bigskip{}

\begin{lemma} \label{lemma:bracketing entropy_monotone G}
Suppose that $P_{X}$ is absolutely continuous with respect to the Lebesgue measure on $\mathcal{X}$ and has a density that is bounded from above by a finite constant $A>0$. Then there exists a constant $C$, which depends only on $A$, such that 
\begin{align*}
H_{1}^{B}\left(\epsilon,\MG_{M},P_{X}\right) \leq C\epsilon^{1-d_x}.
\end{align*}
holds for all $\epsilon>0$. 
\end{lemma}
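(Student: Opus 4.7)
The plan is to reduce the problem to the classical bracketing entropy bound for monotone sets under Lebesgue measure and then invoke the density bound. The result is explicitly cited as coming from Theorem 8.3.2 of \cite{Dudley1999}, so the core argument is a change-of-measure followed by an appeal to that theorem; the main work is making the change-of-measure step precise.

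First I would observe that, writing $\lambda$ for Lebesgue measure on $\mathcal{X}=[0,1]^{d_x}$ and $p_X$ for the density of $P_X$, the bound $p_X \leq A$ implies that for any $G_1, G_2 \in \MG_M$,
\begin{equation*}
P_X(G_1 \triangle G_2) \;=\; \int_{\mathcal{X}} \bigl|1\{x\in G_1\} - 1\{x \in G_2\}\bigr|\, p_X(x)\, d\lambda(x) \;\leq\; A\, \lambda(G_1 \triangle G_2).
\end{equation*}
Consequently, if $[1_{G^-}, 1_{G^+}]$ is an $L_1(\lambda)$-bracket covering $1_G$ (i.e., $G^- \subseteq G \subseteq G^+$) with $\lambda(G^+ \setminus G^-) \leq \epsilon/A$, then the same pair is an $L_1(P_X)$-bracket of size at most $\epsilon$. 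This yields
\begin{equation*}
H_1^B(\epsilon, \MG_M, P_X) \;\leq\; H_1^B(\epsilon/A, \MG_M, \lambda).
\end{equation*}

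Second, I would invoke Theorem 8.3.2 of \cite{Dudley1999}, which furnishes a constant $C_0$, depending only on $d_x$, such that $H_1^B(\epsilon, \MG_M, \lambda) \leq C_0 \epsilon^{1-d_x}$. Combining the two inequalities produces the claimed bound with $C = C_0 A^{d_x - 1}$, which indeed depends only on $A$ for fixed $d_x$.

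For a self-contained sketch of the Lebesgue entropy bound, one discretizes $\mathcal{X}$ into a regular grid of cubes of side $\delta \asymp \epsilon$. For each $G \in \MG_M$, define the inner approximation $L(G)$ as the union of grid cubes entirely contained in $G$, and the outer approximation $U(G)$ as the union of grid cubes that meet $G$; then $L(G) \subseteq G \subseteq U(G)$. The key geometric fact is that the boundary of a monotone set in $[0,1]^{d_x}$ is a $(d_x-1)$-dimensional monotone ``staircase'' surface, which is met by at most $O(\delta^{-(d_x-1)})$ grid cubes, so $\lambda(U(G) \setminus L(G)) = O(\delta) = O(\epsilon)$. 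Counting distinct pairs $(L(G), U(G))$ amounts to counting monotone binary labelings of the grid, of which there are at most $\exp\bigl(O(\delta^{-(d_x-1)})\bigr) = \exp\bigl(O(\epsilon^{-(d_x-1)})\bigr)$; taking logs yields the $\epsilon^{1-d_x}$ entropy bound. The main obstacle in a fully self-contained proof is precisely this combinatorial step---a rigorous count of admissible monotone staircases in $d_x$ dimensions---which is exactly what Theorem 8.3.2 in \cite{Dudley1999} supplies, so in practice one simply cites that result.
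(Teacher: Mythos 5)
Your proof is correct and follows exactly the paper's route: a change-of-measure step $H_1^B(\epsilon,\MG_M,P_X) \leq H_1^B(\epsilon/A,\MG_M,\lambda)$ via the density bound $p_X \leq A$, followed by an appeal to Theorem 8.3.2 of Dudley (1999) for the Lebesgue-measure bracketing bound on monotone (lower-layer) sets. Your derivation moreover spells out the correct constant $C = C_0 A^{d_x-1}$ and the correct direction of the change-of-measure inequality, tacitly cleaning up what appear to be typographical slips in the paper's proof (where the intermediate inequality is printed as $H_1^B(A^{-1}\epsilon,\MG_M,P_X)\le H_1^B(\epsilon,\MG_M,\mu_X)$ and the final constant as $A^{-d_x}K$); the supplementary grid-staircase sketch is a helpful gloss on what Dudley's theorem provides and rightly defers the combinatorial count to that reference.
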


\begin{proof}
See Appendix \ref{appx:proof 3}.
\end{proof}

\bigskip{}


With this lemma to hand, setting $r= 1 - d_x$ in Theorem \ref{thm:statistical propery for excess classification risk} yields a finite sample uniform upper bound on  the $\MG_M$-constrained excess classification risk of $\hat{f}_{M}$. The following theorem shows that the excess risk of $\hat{f}_M$ obtained from the linear program in (\ref{eq:LP_monotone classification}) attains the same convergence rate as the welfare regret of monotone treatment rules shown by \cite{MT17}.

\bigskip{}

\begin{theorem} \label{thm:nonparametric monotone classifiation}
Let $\mathcal{P}$ be a class of distributions on $\MY\times\MX$
such that the marginal distribution $P_{X}$ is absolutely continuous
with respect to the Lebesgue measure on $\MX$ and has a density that
is bounded from above by some finite constant $A>0$. Define $\tau_n=n^{-1/2}$ if $d_x=1$, 
$\tau_{n}=\log\left(n\right)/\sqrt{n}$ if $d_x=2$, and $\tau_{n}=n^{-1/d_x}$ if $d_x\geq3$. Let $q_n=\sqrt{n}\tau_n$. Then, for $\hat{f}_{M} \in \arg\inf_{f\in \MF_M}\hat{R}_{\phi_{h}}(f)$,
\begin{align}
\sup_{P\in\mathcal{P}}E_{P^{n}}\left[R(\hat{f}_{M})-\inf_{f\in\MF_{\MG_{M}}}R(f)\right]  \leq \begin{cases}
\begin{array}{c}
2D_{1}\tau_{n}+4D_{2}\exp\left(-D_{1}^{2}q_{n}^{2}\right)\\
2D_{3}\tau_{n}+2n^{-1}D_{4}
\end{array}  \begin{array}{l}
\mbox{if }d_x \geq 2\\
\mbox{if }d_x = 1
\end{array}\end{cases}\label{eq:regret bound_nonparametric monotone classification} \notag
\end{align}
for some positive constants $D_1, D_2, D_3 ,D_4$, which depend only on $d_x$ and $A$.
\end{theorem}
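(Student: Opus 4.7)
The plan is to recognize this theorem as essentially a direct specialization of Theorem 5.1 to the monotone setting, with the bracketing entropy input supplied by Lemma 6.1. The first step is to note from Example 4.7 that $\MF_M$ is a classification-preserving reduction of $\MF_{\MG_M}$, i.e., $\MF_M$ satisfies conditions (A1) and (A2) in Theorem 4.5 with $\MG = \MG_M$. This means Theorem 5.1 is directly applicable with $\widetilde{\MF}_\MG = \MF_M$ and $\hat{f} = \hat{f}_M$.

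Next, I verify the bracketing entropy hypothesis of Theorem 5.1. Since $\MP$ consists of distributions whose marginals $P_X$ are absolutely continuous with densities bounded by $A$, Lemma 6.1 yields a uniform bound
\begin{equation*}
H_1^B(\epsilon, \MG_M, P_X) \leq C \epsilon^{1-d_x}
\end{equation*}
for all $P \in \MP$ and $\epsilon > 0$, where $C$ depends only on $A$. This is exactly the hypothesis of Theorem 5.1 with $r = d_x - 1$.

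The final step is to translate the general rate $\tau_n$ from Theorem 5.1 under $r = d_x - 1$: if $d_x = 1$ then $r = 0 < 1$, giving $\tau_n = n^{-1/2}$; if $d_x = 2$ then $r = 1$, giving $\tau_n = \log(n)/\sqrt{n}$; and if $d_x \geq 3$ then $r = d_x - 1 \geq 2$, giving $\tau_n = n^{-1/(r+1)} = n^{-1/d_x}$. Plugging these into the bound $L_C(r,n)$ from Theorem 5.1 yields the two-case bound in the statement, with constants $D_1,\dots,D_4$ inherited from Theorem 5.1 and hence depending only on $C$ and $r$, i.e., only on $A$ and $d_x$.

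I do not anticipate a substantive obstacle, because all the heavy lifting — the decomposition of the excess classification risk through the classification-preserving reduction identity of Corollary 4.6, the empirical process bound translating bracketing entropy into a mean-regret bound, and the verification that $\MF_M \subset \MF_{\MG_M}$ preserves the $\MG_M$-constrained optimum — has already been carried out in Theorem 5.1 and Example 4.7. The only mild care required is to confirm that the bracketing entropy condition on $\MG_M$ (rather than on $\widetilde{\MF}_{\MG_M}$ itself) suffices, which is precisely the form (\ref{eq:bracketing entoropy conditoin_G}) allowed by Theorem 5.1, so Lemma 6.1 can be invoked directly without a separate entropy calculation for $\MF_M$.
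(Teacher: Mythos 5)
Your proposal is correct and matches the paper's own proof essentially line for line: the paper likewise cites Example \ref{example:Monotonic classification with a class of monotonic functions} to establish that $\MF_M$ satisfies conditions \ref{asm:sublevel set condition} and \ref{asm:optimizer condition}, invokes Lemma \ref{lemma:bracketing entropy_monotone G} to supply the entropy bound with $r = d_x - 1$, and applies Theorem \ref{thm:statistical propery for excess classification risk} in the form that uses the entropy of $\MG_M$. Your explicit case check on the translation of $\tau_n$ across $d_x = 1$, $d_x = 2$, and $d_x \geq 3$ is a helpful spelling-out of what the paper leaves implicit, but it is the same argument.
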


\begin{proof}
Since $\MF_M$ satisfies conditions \ref{asm:sublevel set condition} and \ref{asm:optimizer condition} in Theorem \ref{thm:continuous functions} with $\MG$ equal to $\MG_M$ (Example \ref{example:Monotonic classification with a class of monotonic functions}), the result follows from Theorem \ref{thm:statistical propery for excess classification risk} and Lemma \ref{lemma:bracketing entropy_monotone G}.
\end{proof}

\bigskip{}

This theorem guarantees the consistency of monotone classification using hinge loss and the class of monotone classifiers $\MF_{M}$.  The rate of convergence corresponds to $\tau_n$.

\subsection{Monotone classification with Bernstein polynomials}
\label{sec:Monotone classification with Bernstein polynomial}

To illustrate our theoretical results for monotone classification, the second approach we consider is to use multivariate Bernstein polynomials to approximate a best classifier in $\MF_{M}$. 
Let $b_{kj}(x)=\left(\begin{array}{c}
k\\
j
\end{array}\right)x^{j}\left(1-x\right)^{k-j}$ be the Bernstein basis. The Bernstein polynomial for a $d_x$-dimensional function takes the following form: 
\begin{align*}
B_{\mathbf{k}}\left(\theta,x\right)= & \sum_{j_{1}=0}^{k_{1}}\cdots\sum_{j_{d_x}=0}^{k_{d_x}}\theta_{j_{1}\ldots j_{d}}\cdot \left(b_{k_{1}j_{1}}\left(x_1\right) \times \cdots \times b_{k_{d_{x}}j_{d_{x}}}\left(x_{d_x}\right) \right),
\end{align*}
where $\mathbf{k}=\left(k_{1},\ldots,k_{d_x}\right)^{T}$ is a vector collecting the orders of the Bernstein polynomial bases specified by the analyst, $\theta\equiv \left\{ \theta_{j_{1}\ldots j_{d_x}}\right\} _{j_{1}=0,\ldots,k_{1};\cdots;j_{d_x}=0,\ldots,k_{d_x}}$ is a $(k_{1}+1)\times\cdots\times(k_{d_x}+1)$-dimensional vector of the parameters to be estimated, and $x_j$ denotes the $j$-th element of the $d_x$-dimensional vector $x$. If $-1\leq\theta_{j_{1}\ldots j_{d_x}}\leq1$ for all $\left(j_{1},\ldots,j_{d_x}\right)$, the range of the function $B_{\Bk}\left(\theta,\cdot\right)$ is bounded in $\left[-1,1\right]$.
Moreover, if $\theta_{j_{1}\ldots j_{d_x}}\geq\theta_{\tilde{j}_{1}\ldots\tilde{j}_{d_x}}$ for all $\left(j_{1},\ldots,j_{d_x}\right)\geq\left(\tilde{j}_{1},\ldots,\tilde{j}_{d_x}\right)$,
$B_{\mathbf{k}}\left(\theta,\cdot\right)$ is non-decreasing in $x$.\footnote{See, e.g., \citet{Wang_Ghosh_2012} for the bound and shape preserving properties of the multivariate Bernstein polynomials.} 
Hence, to preserve the bound and non-decreasing constraints on $\MF_M$, we constrain the class of Bernstein polynomials to 
\begin{align*}
\mathcal{B}_{\mathbf{k}}= & \left\{ B_{\mathbf{k}}\left(\theta,\cdot\right):\theta \in\widetilde{\Theta}\right\} ,
\end{align*}
where $\widetilde{\Theta}$ is the set of $\theta$ such that $\theta_{j_{1}\ldots j_{d_x}}\in\left[-1,1\right]$
for all $\left(j_{1},\ldots,j_{d_x}\right)$ and $\theta_{j_{1}\ldots j_{d_x}}\geq\theta_{\tilde{j}_{1}\ldots\tilde{j}_{d_x}}$
for all $\left(j_{1},\ldots,j_{d_x}\right)\geq\left(\tilde{j}_{1},\ldots,\tilde{j}_{d_x}\right)$.
An appropriate choice of $\mathbf{k}$ is discussed later. 

Noting that some hinge risk minimizing classifiers on $\MF_M$ have the form of step functions taking only the values $-1$ and $1$ (Theorem \ref{thm:continuous functions}), we propose approximating such a step function using the sieve of Bernstein polynomials. To this end, we propose the following two steps:
\begin{enumerate}
    \item Minimize the empirical hinge risk $\hat{R}_{\phi_h}(f)$ over $f \in \mathcal{B}_{\mathbf{k}}$ and obtain $\hat{f}_B \in \arg\inf_{f \in \mathcal{B}_{\mathbf{k}}}\hat{R}_{\phi_h}(f)$. 
    \item Let $\{\hat{\theta}_{j_1\ldots j_{d_{x}}}\}_{j_1=0,\ldots,k_1;\cdots;j_{d_x}=0,\ldots,k_{d_x}}$ be the vector of coefficients in $\hat{f}_B$. Compute a modified classifier 
\begin{align*}
    \hat{f}_{B}^{\dagger}(x)\equiv \sum_{j_{1}=1}^{k_{1}}\cdots\sum_{j_{d_{x}=1}}^{k_{d_{x}}}\sign\left(\hat{\theta}_{j_{1}\ldots j_{d_{x}}}\right)\cdot\left(b_{k_{1}j_{1}}\left(x_{1}\right)\times\cdots\times b_{k_{d_{x}}j_{d_{x}}}\left(x_{d_x}\right)\right),
\end{align*}
which converts each estimated coefficient $\hat{\theta}_{j_1\ldots j_{d_x}}$ to either $-1$ or $1$ depending on its sign.
\end{enumerate}

Our proposal is to use $\hat{f}_{B}^{\dagger}$ rather than $\hat{f}_B$. Lemma \ref{lem:step function approximation optimality_Bernstein polynomial} in Appendix \ref{appx:proof 3} shows that $\hat{f}_{B}^{\dagger}$ also minimizes $R_{\phi_h}(f)$ over $f \in \MB_{\Bk}$. 
With respect to the first step, since the hinge loss of a classifier $f$ constrained on $[-1,1]$ has the linear form $\phi_h(yf(x))=1-yf(x)$, any function in $\mathcal{B}_{\mathbf{k}}$ is linear in the parameters $\theta$, and the parameter space $\widetilde{\Theta}$ is a polyhedron, minimization of $\hat{R}_{\phi_h}(\cdot)$ over $\mathcal{B}_{\mathbf{k}}$ can be formulated as the following linear program:
\begin{equation}
\label{eq:LP_bernstein polynomial}
\begin{split}
    & \max_{\theta}	\sum_{i=1}^{n}Y_{i}\cdot \left(\sum_{j_{1}=0}^{k_{1}}\cdots\sum_{j_{1}=d_{x}}^{k_{d_{x}}}\theta_{j_{1}\ldots j_{d_{x}}}\cdot\left(b_{k_{1}j_{1}}\left(X_{i1}\right)\times\cdots\times b_{k_{d_{x}}j_{d_{x}}}\left(X_{id_{x}}\right)\right)\right) \\
\mbox{s.t.}& \quad	\theta_{j_{1}\ldots j_{d_{x}}}\geq\theta_{\tilde{j}_{1}\ldots\tilde{j}_{d_{x}}}\mbox{ for any }\left(j_{1},\ldots,j_{d_{x}}\right)\geq\left(\tilde{j}_{1},\ldots,\tilde{j}_{d_{x}}\right); \\
&\quad	-1\leq\theta_{j_{1}\ldots j_{d_{x}}}\leq1\mbox{ for all }\left(j_{1},\ldots,j_{d_{x}}\right), 
\end{split}
\end{equation}
where $X_{ij}$ denotes the $j$-th element of $X_i$.\footnote{The linear program in (\ref{eq:LP_monotone classification}) for the nonparametric monotone classification problem has $n$-decision variables, whereas the linear program in (\ref{eq:LP_bernstein polynomial}) has $(k_1+1)\times \cdots \times (k_{d_x}+1)$-decision variables. Hence when the dimension of $X$ is small to moderate relative to the sample size $n$, the linear programming for the Bernstein polynomials would be easier to compute. The reverse is also true.}
The first inequality constraints restrict the feasible classifiers to a class of non-decreasing functions. The second inequality constraints bound the feasible classifiers to $[-1,1]$.

We then consider applying the general result for the excess risk bound in Theorem \ref{thm:statistical propery for excess classification risk_2} with $\Check{F} = \MB_{\Bk}$.
Lemma \ref{lem:berstein approximation error} in Appendix \ref{appx:proof 3} gives finite upper bounds on two approximation errors:
\begin{align*}
    \inf_{f \in \mathcal{B}_{\mathbf{k}}}R_{\phi_h}(f) - \inf_{f \in \MF_M}R_{\phi_h}(f) \mbox{ and } R_{\phi_h}(1\{\cdot \in G_{\hat{f}_{B}^{\dagger}}\} - 1\{\cdot \notin G_{\hat{f}_{B}^{\dagger}}\})
    - R_{\phi_h}(\hat{f}_{B}^{\dagger})
\end{align*}
in (\ref{eq:upper bound_statistical propery}) upon setting $(\Check{\MF},\widetilde{\MF},\hat{f})=(\mathcal{B}_{\mathbf{k}},\MF_M,\hat{f}_{B}^{\dagger})$.
The binarized coefficients in $\hat{f}_{B}^\dagger$ help us to make the second approximation error shrink to zero.
Moreover, Lemma \ref{lemma:bracketing entropy_monotone function} in Appendix \ref{appx:proof 3} gives a finite upper bound on  the bracketing entropy of $\mathcal{B}_{\mathbf{k}}$. 
Combining these results, the following theorem gives a finite sample upper bound on  the mean of the $\MG_M$-constrained excess classification risk of $\hat{f}_B^{\dagger}$.

\bigskip{}

\begin{theorem} \label{thm:berstein polynomial approximation} 
Let $\MP$ be a class of distributions on $\MY\times\MX$ that satisfy the same conditions as in Theorem \ref{thm:nonparametric monotone classifiation}.
Let $\tilde{\tau}_{n}=\log\left(n\right)/\sqrt{n}$ if $d_x=1$ and $\tilde{\tau}_{n}=n^{-1/d_x}$ if $d_x\geq2$. Define $\tilde{q}_n=\sqrt{n}\tilde{\tau}_n$. Then the following holds:
\begin{align}
\sup_{P\in \MP}E_{P^{n}}\left[R(\hat{f}_{B}^{\dagger})-\inf_{f\in\MF_{\MG_{M}}}R(f)\right] & \leq   2D_{1}\tilde{\tau}_{n}+4D_{2}\exp\left(-D_{1}^{2}\tilde{q}_{n}^{2}\right) \notag \\
 & + 4A\sum_{j=1}^{d_x}\sqrt{\frac{\log k_{j}}{k_{j}}}+\sum_{j=1}^{d_x}\frac{8}{\sqrt{k_{j}}},  \label{eq:regret bound_berstein monotone classification}
\end{align}
where $D_{1}$ and $D_{2}$ are some positive constants, which depend only on $d_x$ and $A$.
\end{theorem}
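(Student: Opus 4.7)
The plan is to invoke Theorem \ref{thm:statistical propery for excess classification risk_2} with $\Check{\MF}=\MB_{\Bk}$, $\widetilde{\MF}_{\MG}=\MF_M$, and $\hat{f}=\hat{f}_B^{\dagger}$. Three ingredients make this application legitimate: (i) by Example \ref{example:Monotonic classification with a class of monotonic functions}, $\MF_M$ is a classification-preserving reduction of $\MF_{\MG_M}$; (ii) the monotonicity and range constraints built into $\widetilde{\Theta}$, combined with the bound- and shape-preserving properties of multivariate Bernstein polynomials, imply $\MB_{\Bk}\subseteq\MF_M$; and (iii) Lemma \ref{lem:step function approximation optimality_Bernstein polynomial} ensures that $\hat{f}_B^{\dagger}\in\arg\inf_{f\in\MB_{\Bk}}\hat{R}_{\phi_h}(f)$, so the binarization does not destroy empirical optimality. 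The bound (\ref{eq:upper bound_statistical propery}) then applies with the data-dependent classifier replaced by $\hat{f}_B^{\dagger}$, and the task reduces to bounding the three terms on its right-hand side.

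For the estimation-error term $L_{C'}(r',n)$, I would appeal to Lemma \ref{lemma:bracketing entropy_monotone function}, which, combined with $\MB_{\Bk}\subseteq\MF_M$ and the density bound $A$, furnishes an $L_1(P_X)$ bracketing-entropy bound of the form $H_1^B(\epsilon,\MB_{\Bk},P_X)\leq C'\epsilon^{-r'}$ for constants $r'$ and $C'$ depending only on $d_x$ and $A$. Substituting this into the $\tau_n$ formula of Theorem \ref{thm:statistical propery for excess classification risk} and tracking constants delivers precisely the first component $2D_1\tilde{\tau}_n+4D_2\exp(-D_1^2\tilde{q}_n^2)$ of the claimed bound with $\tilde{\tau}_n$ as stated. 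For the two approximation errors, I would invoke Lemma \ref{lem:berstein approximation error}: its first bound controls $\inf_{f\in\MB_{\Bk}}R_{\phi_h}(f)-\inf_{f\in\MF_M}R_{\phi_h}(f)$ by $8A\sum_{j=1}^{d_x}\sqrt{\log k_j/k_j}$, exploiting a uniform Bernstein approximation of bounded monotone functions (with the density bound $A$ converting sup-norm approximation into $L_1(P_X)$ approximation); and its second bound controls $R_{\phi_h}(\tilde{f}_{G_{\hat{f}_B^{\dagger}}})-R_{\phi_h}(\hat{f}_B^{\dagger})$ by $\sum_{j=1}^{d_x}16/\sqrt{k_j}$. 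Halving each of these (they enter (\ref{eq:upper bound_statistical propery}) with the factor $1/2$) yields exactly the remaining two summands of (\ref{eq:regret bound_berstein monotone classification}).

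The substantive obstacle is the step-approximation error $R_{\phi_h}(\tilde{f}_{G_{\hat{f}_B^{\dagger}}})-R_{\phi_h}(\hat{f}_B^{\dagger})$, which is exactly what the binarization device $\hat{f}_B\mapsto\hat{f}_B^{\dagger}$ is designed to tame. Without binarization, the raw Bernstein minimizer $\hat{f}_B$ can be a smooth function far from any step classifier, so the analogous gap need not vanish as $n\to\infty$. With coefficients forced into $\{-1,+1\}$, the $1$-Lipschitzness of the hinge loss reduces the gap to the $L_1(P_X)$-distance between $\hat{f}_B^{\dagger}$ and $\tilde{f}_{G_{\hat{f}_B^{\dagger}}}$, which equals the $P_X$-mass of the thin transition layer around $\partial G_{\hat{f}_B^{\dagger}}$. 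Estimating this mass via the classical $O(1/\sqrt{k_j})$ concentration width of each univariate basis $b_{k_j j_j}$ around $j_j/k_j$, propagated through the tensor-product structure, is the non-routine step underlying Lemma \ref{lem:berstein approximation error} and is the source of the $1/\sqrt{k_j}$ rate; the remaining pieces (entropy calculation and standard Bernstein approximation of monotone functions) are essentially bookkeeping.
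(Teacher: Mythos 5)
Your approach is the paper's own: invoke Theorem~\ref{thm:statistical propery for excess classification risk_2} with $\Check{\MF}=\MB_{\Bk}$ and $\widetilde{\MF}_{\MG}=\MF_M$, feed in the entropy bound from Lemma~\ref{lemma:bracketing entropy_monotone function} (via $\MB_{\Bk}\subseteq\MF_M$) to control the estimation term, and control the two approximation terms via Lemma~\ref{lem:berstein approximation error}. The structure, the role of Example~\ref{example:Monotonic classification with a class of monotonic functions} (so that $\MF_M$ is a classification-preserving reduction), and the role of Lemma~\ref{lem:step function approximation optimality_Bernstein polynomial} (so that $\hat{f}_B^{\dagger}$ is still an empirical minimizer) all match.

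There is, however, a concrete bookkeeping error in your reading of Lemma~\ref{lem:berstein approximation error}. You assert that its part~(i) bounds $\inf_{f\in\MB_{\Bk}}R_{\phi_h}(f)-\inf_{f\in\MF_M}R_{\phi_h}(f)$ by $8A\sum_{j}\sqrt{\log k_j/k_j}$ while part~(ii) bounds the step-approximation gap by $\sum_{j}16/\sqrt{k_j}$, so that halving and adding reproduces the theorem's $4A\sum_j\sqrt{\log k_j/k_j}+\sum_j 8/\sqrt{k_j}$. In fact Lemma~\ref{lem:berstein approximation error} bounds \emph{each} of the two approximation errors by the \emph{same} quantity $2A\sum_{j}\sqrt{\log k_j/k_j}+\sum_{j}4/\sqrt{k_j}$; both errors carry both a $\sqrt{\log k_j/k_j}$ and a $1/\sqrt{k_j}$ component (they come from the same Hoeffding-type concentration estimate applied to both $G$ and $G^c$). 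Halving and adding the actual lemma bounds gives $2A\sum_j\sqrt{\log k_j/k_j}+\sum_j 4/\sqrt{k_j}$, which is in fact \emph{tighter} than the constants stated in the theorem, so the theorem still follows; but the clean split you propose, with the $\log k$ term coming only from best-classifier approximation and the $1/\sqrt{k}$ term only from the transition-layer width, does not reflect what the lemma says. Your heuristic for the step-approximation error (binarized coefficients plus $1$-Lipschitzness of the hinge loss reduce the gap to the $P_X$-mass of a thin boundary layer) is a fair description of why that error is controllable at all, but the actual rate the lemma proves is $\max_j\sqrt{\log k_j/k_j}$, not $\max_j 1/\sqrt{k_j}$.
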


\begin{proof}
From the fact that $\MB_\Bk \subseteq \MF_M$ and Lemma \ref{lemma:bracketing entropy_monotone function} in Appendix \ref{appx:proof 3}, we have $H_{1}^{B}(\epsilon, \MB_{\Bk},P_X) \leq C \epsilon^{-d_x}$ for some positive constant $C$, which depends only on $A$, and all $\epsilon>0$ . 
Then the result follows by combining Theorem \ref{thm:statistical propery for excess classification risk_2} and Lemma \ref{lem:berstein approximation error}.
\end{proof}

\bigskip{}

The upper bound in (\ref{eq:regret bound_berstein monotone classification}) converges to zero as the sample size $n$ and the order of the Bernstein polynomials $k_j$ ($j=1,\ldots,d_x$) increase.
Note that the rate of convergence for the estimation error in this theorem, $\tilde{\tau}_n$, is slower than that in Theorem \ref{thm:nonparametric monotone classifiation}, $\tau_n$. The difference in the rates of convergence is due to the different orders of the upper bounds on $H_{1}^{B}(\epsilon,\MG_M,P_X)$ and $H_{1}^{B}(\epsilon,\MF_M,P_X)$ in Lemmas \ref{lemma:bracketing entropy_monotone G} and \ref{lemma:bracketing entropy_monotone function}.
To achieve the convergence rate of $\tilde{\tau}_n$ for the mean of the excess risk of $\hat{f}_{B}^{\dagger}$, Theorem  \ref{thm:berstein polynomial approximation}  suggests the tuning parameters $k_{j}$, for $j=1,\ldots,d_x$, should be set sufficiently large so that $\sqrt{\log k_{j}/k_{j}}=O\left(\tilde{\tau}_n\right)$. 

In practice, one may want to select the complexity of the Bernstein polynomials by minimizing penalized empirical surrogate risk. The classification and treatment choice literature (\citet{Koltchinskii_2006}, \citet{MT17}, and references therein) analyze the regret properties and oracle inequalities for penalized risk minimizing classifiers. We leave theoretical investigation of the applicability of penalization methods to the current hinge risk minimization using Bernstein polynomials for future research.

\section{Extension to individualized treatment rules}
\label{sec:Extension to individualized treatment rules}

This section extends the primary results obtained in Sections \ref{sec:calibration of MG-constrained classification} and \ref{sec:classification preserving reduction} for binary classification to the weighted classification introduced in Section \ref{sec:Connection and contributions to causal policy learning}, and to causal policy learning. Extensions of the results in Sections \ref{sec:statistical property} and \ref{sec:Applications to monotone classification} to weighted classification are presented in Appendix \ref{appx:weighted classification}.
We use the same notation and definitions as those introduced in Section \ref{sec:Connection and contributions to causal policy learning}. We term $R^{\omega}$ and $R_{\phi}^{\omega}$, defined in (\ref{eq:weighted_classification_risk}) and (\ref{eq:weighted_surrogate_risk}), weighted classification risk and weighted $\phi$-risk, respectively.
Throughout this section, with some abuse of notation, we denote by $P$ a distribution on $\Real_{+} \times \{-1,1\} \times \MX$ and suppose that $(\omega,Y,D) \sim P$.

\subsection{Consistency of weighted classification with hinge loss}
\label{sec:Consistency of the weighted classification with hinge loss}

We first show consistency of weighted classification with hinge risk by adapting the analyses in Sections \ref{sec:calibration of MG-constrained classification} and \ref{sec:classification preserving reduction}. Given a prespecified $\MG$, let $\MF_\MG$ be as in Section \ref{sec:constrained classification and surrogate loss approach}. Analogous to $\MR(G)$ and $\MR_\phi(G)$, we define $\MR^{\omega}(G) \equiv \inf_{f \in \MF_G}R^{\omega}(f)$, the weighted-classification risk evaluated at $G$, and $\MR_{\phi}^{\omega}(G) \equiv \inf_{f \in \MF_G}R_{\phi}^{\omega}(f)$, the weighted $\phi$-risk evaluated at $G$. Note that $\MR^{\omega}(G) = R^{\omega}(f)$ for all $f \in \MF_G$. Let $\MR^{w\ast} \equiv \inf_{G\in \MG}\MR^{\omega}(G) = \inf_{f \in \MF_{\MG}}R^{\omega}(f)$ be the optimal weighted risk, and $\MG^{\ast} \equiv \arg\inf_{G\in \MG}\MR^{\omega}(G)$ be the collection of best prediction sets. 

For the non-negative weight variable $\omega$, we define
\begin{align*}
    \omega_{+}(x) & \equiv E_{P}\left[\omega\mid X=x,Y=+1\right] \\
    \omega_{-}(x) & \equiv E_{P}\left[\omega\mid X=x,Y=-1\right].
\end{align*}
Let $C_{\phi}\left(a,b,c,d\right)\equiv a\phi\left(c\right)d+b\phi\left(-c\right)\left(1-d\right)$, and
\begin{align*}
C_{\phi}^{w+}\left(\omega_{+},\omega_{-},\eta\right)&\equiv  \underset{0\leq f\leq1}{\inf}C_{\phi}\left(\omega_{+},\omega_{-},f,\eta\right),\\
C_{\phi}^{w-}\left(\omega_{+},\omega_{-},\eta\right)&\equiv  \underset{-1\leq f<0}{\inf}C_{\phi}\left(\omega_{+},\omega_{-},f,\eta\right),\\
\Delta C_{\phi}^{\omega}\left(\omega_{+},\omega_{-},\eta\right)&\equiv  C_{\phi}^{w+}\left(\omega_{+},\omega_{-},\eta\right)
  -C_{\phi}^{w-}\left(\omega_{+},\omega_{-},\eta\right),
\end{align*}
which are analogous to $C_{\phi}^{+}$, $C_{\phi}^{-}$ and $\Delta C_{\phi}$
defined in Section \ref{sec:calibration of MG-constrained classification}. 

The next theorem generalizes Theorems \ref{thm:risk equivalence}, \ref{thm:univesal equivalence}, and Corollary \ref{corr:universally equivalence} to weighted classification, giving a necessary and sufficient condition for equivalence of the risk ordering among surrogate loss functions. In particular, we show that hinge loss functions share a common risk ordering with the 0-1 loss function.

\bigskip{}

\begin{theorem}\label{thm:risk equivalence_WC}
Let $\phi_{1}$ and $\phi_{2}$ be classification-calibrated loss functions in the sense of Definition \ref{def:classification-calibrated loss functions}. Then
\begin{align}
\MR^{\omega}_{\phi_{1}}\left(G_{1}\right)\leq \MR^{\omega}_{\phi_{1}}\left(G_{2}\right) & \Leftrightarrow \MR^{\omega}_{\phi_{2}}\left(G_{1}\right)\leq \MR^{\omega}_{\phi_{2}}\left(G_{2}\right)
\nonumber
\end{align}
holds for any distribution $P$ on $\Real_{+}\times \{-1,1\} \times \MX$, any class of measurable subsets $\MG \subseteq 2^{\MX}$, and any $G_1,G_2 \in \MG$ if and only if there exists $c>0$ such that $\Delta C_{\phi_{2}}^{\omega}\left(\omega_{+},\omega_{-},\eta\right)=c\Delta C_{\phi_{1}}^{\omega}\left(\omega_{+},\omega_{-},\eta\right)$ holds
for any $(\omega_{+},\omega_{-},\eta)\in \Real_{+} \times \Real_{+} \times \left[0,1\right]$. In particular, the 0-1 loss function, $\phi_{01}(\alpha) = 1\{\alpha \leq 0\}$, satisfies
\begin{align}
    \Delta C_{\phi_{01}}^{\omega}\left(\omega_{+},\omega_{-},\eta\right)=-\omega_{+}\eta+\omega_{-}\left(1-\eta\right), \label{eq:conditon_risk equivalence_WC}
\end{align}
and the hinge loss function $\phi_{h}(\alpha) = c \max\{0,1-\alpha\}$ satisfies
\begin{equation}
    \Delta C_{\phi_{h}}^{\omega}\left(\omega_{+},\omega_{-},\eta\right)=c\left(-\omega_{+}\eta+\omega_{-}\left(1-\eta\right) \right) = c \Delta C_{\phi_{01}}^{\omega}\left(\omega_{+},\omega_{-},\eta\right).\notag
\end{equation}

\end{theorem}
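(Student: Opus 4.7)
The plan is to follow the blueprint of Theorem \ref{thm:risk equivalence} and Theorem \ref{thm:univesal equivalence}, extending each step to accommodate the weighting variable $\omega$ via conditional weights $\omega_+(x)$ and $\omega_-(x)$. The first step is to derive the weighted analogue of representation (\ref{eq:simplified surrogate risk}). Conditioning on $(X,Y)$ and using iterated expectations,
\begin{align*}
R^{\omega}_{\phi}(f) = E_{P_X}\bigl[\,\omega_+(X)\eta(X)\phi(f(X)) + \omega_-(X)(1-\eta(X))\phi(-f(X))\,\bigr] = E_{P_X}\bigl[C_\phi(\omega_+(X),\omega_-(X),f(X),\eta(X))\bigr].
\end{align*}
Since any $f\in\MF_G$ is unconstrained across $x$ other than through its sign, we may minimize pointwise, which gives
\begin{align*}
\MR^{\omega}_\phi(G) = \int_\MX \Delta C^{\omega}_\phi(\omega_+(x),\omega_-(x),\eta(x))\,\mathbf{1}\{x\in G\}\,dP_X(x) + \int_\MX C^{w-}_\phi(\omega_+(x),\omega_-(x),\eta(x))\,dP_X(x).
\end{align*}
Consequently $\MR^{\omega}_\phi(G_1)-\MR^{\omega}_\phi(G_2) = \int_{G_1\setminus G_2}\Delta C^{\omega}_\phi\,dP_X - \int_{G_2\setminus G_1}\Delta C^{\omega}_\phi\,dP_X$, the direct analogue of (\ref{eq:thm eq2}).

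The sufficiency direction is then immediate: if $\Delta C^{\omega}_{\phi_2}=c\,\Delta C^{\omega}_{\phi_1}$ pointwise with $c>0$, the displayed difference satisfies $\MR^{\omega}_{\phi_2}(G_1)-\MR^{\omega}_{\phi_2}(G_2) = c\bigl(\MR^{\omega}_{\phi_1}(G_1)-\MR^{\omega}_{\phi_1}(G_2)\bigr)$, which yields the stated equivalence of risk orderings for all $P$ and all $\MG$.

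The hard part is the necessity direction. The plan is to argue by contrapositive: suppose there exist triples $(\omega^{(j)}_+, \omega^{(j)}_-, \eta^{(j)})\in\Real_+\times\Real_+\times[0,1]$, $j=1,2$, at which the cross-product
\begin{align*}
\Delta C^{\omega}_{\phi_1}(\omega^{(1)}_+,\omega^{(1)}_-,\eta^{(1)})\,\Delta C^{\omega}_{\phi_2}(\omega^{(2)}_+,\omega^{(2)}_-,\eta^{(2)}) \;\neq\; \Delta C^{\omega}_{\phi_1}(\omega^{(2)}_+,\omega^{(2)}_-,\eta^{(2)})\,\Delta C^{\omega}_{\phi_2}(\omega^{(1)}_+,\omega^{(1)}_-,\eta^{(1)})
\end{align*}
fails to hold (otherwise proportionality holds on the image set). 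I then build a two-point counterexample: take $\MX=\{x_1,x_2\}$, let $P_X$ put masses $p_1,p_2>0$ on these points, and choose the conditional law of $(\omega,Y)$ given $X=x_j$ so that $\eta(x_j)=\eta^{(j)}$, $\omega_\pm(x_j)=\omega^{(j)}_\pm$; classification-calibration ensures $\Delta C^{\omega}_{\phi_k}$ is well defined and the relevant signs are preserved. Taking $\MG=\{\emptyset,\{x_1\},\{x_2\},\MX\}$ and computing $\MR^{\omega}_{\phi_k}(\{x_1\})-\MR^{\omega}_{\phi_k}(\{x_2\}) = p_1\Delta C^{\omega}_{\phi_k}(x_1)-p_2\Delta C^{\omega}_{\phi_k}(x_2)$ for $k=1,2$, I can tune $p_1/p_2$ so that this difference is positive for $k=1$ and negative for $k=2$, contradicting the presumed equivalence. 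The main subtlety to handle carefully is ensuring that the constructed conditional law is admissible (non-negative weights and a valid conditional probability of $Y$) and that the triples $(\omega^{(j)}_+,\omega^{(j)}_-,\eta^{(j)})$ can indeed be realized by some joint distribution at each $x_j$—this is routine since $\omega_\pm$ are free conditional means of the non-negative $\omega$ given $(Y,X)$.

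Finally, the explicit formulas for the 0-1 and hinge loss follow from direct pointwise minimization of $C_\phi(\omega_+,\omega_-,f,\eta) = \omega_+\eta\,\phi(f) + \omega_-(1-\eta)\,\phi(-f)$ over $f\in[0,1]$ and $f\in[-1,0)$. For $\phi_{01}(\alpha)=\mathbf{1}\{\alpha\leq 0\}$ this gives $C^{w+}_{\phi_{01}}=\omega_-(1-\eta)$ and $C^{w-}_{\phi_{01}}=\omega_+\eta$, so $\Delta C^{\omega}_{\phi_{01}}=\omega_-(1-\eta)-\omega_+\eta$. For the hinge loss, $C_{\phi_h}$ is linear in $f$ with slope $c(\omega_-(1-\eta)-\omega_+\eta)$ on both pieces, so the infima are attained at the endpoints; evaluating the three cases (slope positive, negative, zero) gives in each case $\Delta C^{\omega}_{\phi_h}(\omega_+,\omega_-,\eta)=c\bigl(\omega_-(1-\eta)-\omega_+\eta\bigr)=c\,\Delta C^{\omega}_{\phi_{01}}(\omega_+,\omega_-,\eta)$, which establishes the claimed proportionality and, via the first part of the theorem, the equivalence of the hinge and 0-1 weighted risk orderings.
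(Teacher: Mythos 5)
Your proof is correct and the ``if'' direction and the explicit calculations for the 0--1 and hinge losses mirror the paper's own argument exactly (pointwise minimization over $\MF_G$, the weighted analogue of (\ref{eq:simplified surrogate risk}), and the resulting difference formula over $G_1\triangle G_2$). Where you genuinely diverge is on the ``only if'' direction. The paper proves necessity by specialization: it restricts attention to distributions with $\omega \equiv 1$, so that the weighted risk ordering hypothesis collapses to the unweighted one, and then invokes Theorem \ref{thm:univesal equivalence} to obtain $\Delta C_{\phi_2}(\eta) = c\,\Delta C_{\phi_1}(\eta)$ for all $\eta$. To lift this to the claimed proportionality over all triples $(\omega_+,\omega_-,\eta)$ the paper implicitly relies on the identity $C_\phi(\omega_+,\omega_-,f,\eta) = (\mu_+ + \mu_-)\,C_\phi\bigl(f,\, \mu_+/(\mu_+ + \mu_-)\bigr)$ with $\mu_\pm$ as in Table \ref{tb:surrogate loss functions and their forms of H_WC}, which gives $\Delta C^{\omega}_\phi(\omega_+,\omega_-,\eta)=(\mu_++\mu_-)\,\Delta C_\phi\bigl(\mu_+/(\mu_++\mu_-)\bigr)$; once this reparametrization is observed, the weighted proportionality is an immediate corollary of the unweighted one. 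Your route instead reruns the two-point construction directly in the weighted space: you show that the cross-product equality $\Delta C^{\omega}_{\phi_1}(t_1)\Delta C^{\omega}_{\phi_2}(t_2)=\Delta C^{\omega}_{\phi_1}(t_2)\Delta C^{\omega}_{\phi_2}(t_1)$ over all triple-pairs $t_1,t_2$ is equivalent to proportionality (the weighted analogue of Lemma \ref{lem:universal equivalence}), and then builds a two-point $P$ that breaks the risk-ordering equivalence whenever a cross-product fails. This is more self-contained and never uses the scaling identity, but the price is that you must redo the sign case analysis (both $\Delta C^{\omega}_{\phi_1}$ values positive, both negative, one of each --- the last case requires comparing $\emptyset$ against $\MX$ rather than $\{x_1\}$ against $\{x_2\}$) that Theorem \ref{thm:univesal equivalence} already carries out; you wave at this via ``the relevant signs are preserved'' but do not spell it out. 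Both routes are valid; the paper's is the shorter proof given the machinery already established, while yours is the argument one would write if Theorem \ref{thm:univesal equivalence} were not available.
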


\begin{proof}
See Appendix \ref{appx:proof 4}.
\end{proof}

\bigskip{}



Theorem \ref{thm:risk equivalence_WC} and inequalities similar to (\ref{eq:generalized zhang's ineq}) lead to a generalized \citeauthor{Zhang_2004}'s \citeyearpar{Zhang_2004} inequality for weighted classification, as shown in the next corollary.

\bigskip{}

\begin{corollary}\label{cor:zhang's inequality_WC}
For any distribution $P$ on $\Real_{+}\times \{-1,1\} \times \MX$ and any surrogate loss function $\phi$ satisfying $\Delta C_{\phi}^{\omega}\left(\omega_{+},\omega_{-},\eta\right)=c\left(-\omega_{+}\eta+\omega_{-}\left(1-\eta\right) \right)$, 
\begin{align}
c(R^{\omega}(f)-\inf_{f\in\MF_{\MG}}R^{\omega}(f))\leq R_{\phi}^{\omega}(f)-\inf_{f\in\MF_{\MG}}R_{\phi}^{\omega}(f) \label{eq:zhangs inequality_WC}
\end{align}
holds for any $f\in\MF_{\MG}$. 
\end{corollary}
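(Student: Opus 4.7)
The plan is to mirror the derivation of inequality (\ref{eq:generalized zhang's ineq}) from the unweighted case, adapted to the weighted risks $R^{\omega}$ and $R^{\omega}_{\phi}$. The essential ingredients are all supplied by Theorem \ref{thm:risk equivalence_WC} and by the pointwise representations of the two risks over the partition $\{\MF_{G}:G\in\MG\}$ of $\MF_\MG$.

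First I would re-express the two risks evaluated at a prediction set $G\in\MG$. By conditioning on $X$ and using the definitions of $\omega_{+}(x)$ and $\omega_{-}(x)$, I obtain
\begin{align*}
\MR^{\omega}(G) &= \int_{\MX}\bigl[-\omega_{+}(x)\eta(x)+\omega_{-}(x)(1-\eta(x))\bigr]\cdot 1\{x\in G\}\,dP_{X}(x)+\int_{\MX}\omega_{+}(x)\eta(x)\,dP_{X}(x),\\
\MR^{\omega}_{\phi}(G) &= \int_{\MX}\Delta C^{\omega}_{\phi}(\omega_{+}(x),\omega_{-}(x),\eta(x))\cdot 1\{x\in G\}\,dP_{X}(x)+\int_{\MX}C^{w-}_{\phi}(\omega_{+}(x),\omega_{-}(x),\eta(x))\,dP_{X}(x),
\end{align*}
where the second identity uses the pointwise minimization argument exactly as in the derivation of (\ref{eq:simplified surrogate risk}). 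Under the hypothesis $\Delta C^{\omega}_{\phi}(\omega_{+},\omega_{-},\eta)=c(-\omega_{+}\eta+\omega_{-}(1-\eta))$, subtracting at two sets $G,G'\in\MG$ cancels the set-free integrals and gives the key proportionality
\[
\MR^{\omega}_{\phi}(G)-\MR^{\omega}_{\phi}(G') = c\bigl[\MR^{\omega}(G)-\MR^{\omega}(G')\bigr].
\]

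Next I would let $G^{\ast}\in\arg\inf_{G\in\MG}\MR^{\omega}(G)$. The displayed proportionality implies $G^{\ast}\in\arg\inf_{G\in\MG}\MR^{\omega}_{\phi}(G)$, so
$\inf_{f\in\MF_\MG}R^{\omega}(f)=\MR^{\omega}(G^{\ast})$ and $\inf_{f\in\MF_\MG}R^{\omega}_{\phi}(f)=\MR^{\omega}_{\phi}(G^{\ast})$. Fix an arbitrary $f\in\MF_{\MG}$ with prediction set $G_{f}\in\MG$. Because every $f'\in\MF_{G_{f}}$ attains the same classification risk, $R^{\omega}(f)=\MR^{\omega}(G_{f})$, while $\MR^{\omega}_{\phi}(G_{f})=\inf_{f'\in\MF_{G_{f}}}R^{\omega}_{\phi}(f')\leq R^{\omega}_{\phi}(f)$ by definition of the infimum. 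Chaining these facts with the proportionality yields
\begin{align*}
c\bigl[R^{\omega}(f)-\inf_{f\in\MF_\MG}R^{\omega}(f)\bigr]
&= c\bigl[\MR^{\omega}(G_{f})-\MR^{\omega}(G^{\ast})\bigr]\\
&= \MR^{\omega}_{\phi}(G_{f})-\MR^{\omega}_{\phi}(G^{\ast})\\
&\leq R^{\omega}_{\phi}(f)-\inf_{f\in\MF_\MG}R^{\omega}_{\phi}(f),
\end{align*}
which is exactly (\ref{eq:zhangs inequality_WC}).

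There is no real obstacle: the proof is essentially a bookkeeping exercise that transplants (\ref{eq:generalized zhang's ineq}) to the weighted setting, using Theorem \ref{thm:risk equivalence_WC} to obtain the proportionality of surrogate and classification excess risks at the set level, and using the richness of $\MF_\MG$ (unconstrained pointwise on $[-1,1]$ once $G_f$ is fixed) to write $R^{\omega}_{\phi}(f)\geq\MR^{\omega}_{\phi}(G_{f})$. The only point deserving a line of care is to note that the pointwise minimization yielding $C^{w\pm}_{\phi}$ is legitimate because $f\in\MF_{G}$ is restricted only through its sign, matching the argument used in Section \ref{sec:calibration of MG-constrained classification}.
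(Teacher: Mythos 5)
Your proposal is correct and follows essentially the same route as the paper: both express $\MR^{\omega}$ and $\MR^{\omega}_{\phi}$ via their integral representations over prediction sets, use the hypothesis $\Delta C^{\omega}_{\phi}=c\,L$ to obtain the exact proportionality of risk differences at the set level, note that $G^{\ast}$ then minimizes both risks simultaneously, and finish with $\MR^{\omega}_{\phi}(G_{f})\leq R^{\omega}_{\phi}(f)$. This is the weighted analogue of the chain (\ref{eq:generalized zhang's ineq}), and the paper's own proof proceeds by exactly this bookkeeping.
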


\begin{proof}
See Appendix \ref{appx:proof 4}.
\end{proof}

\bigskip{}

\begin{remark} Table \ref{tb:surrogate loss functions and their forms of H_WC}
shows the forms of $\Delta C_{\phi}^{\omega}(\omega_+,\omega_-,\eta)$ for the hinge loss, exponential loss, logistic
loss, quadratic loss, and truncated quadratic loss functions, where $\mu_+ \equiv \omega_+ \eta$ and $\mu_- \equiv \omega_- (1-\eta)$. With the exception of the hinge loss function, none of these functions satisfy $\Delta C_{\phi}^{\omega}(\omega_+,\omega_-,\eta) = c(-\mu_{+} + \mu_{-})$ for some positive constant $c>0$.
That is, similar to the standard binary classification, hinge losses also have a special status in weighted classification, since they are the only surrogate losses that preserve classification risk.

\begin{table}[h]
\centering \caption{Surrogate loss functions and their associated forms for $\Delta C_{\phi}^w$}
\label{tb:surrogate loss functions and their forms of H_WC} 
\begin{threeparttable}[h]
\scalebox{0.9}{ 
\begin{tabular}{c:c:c}
\hline 
Loss function  & $\phi(\alpha)$  & $\Delta C_{\phi}^{\omega}\left(\omega_+,\omega_-,\eta\right)$ 
\tabularnewline
\hline 
0-1 loss  & $1\{\alpha\leq0\}$  & $-\mu_+ + \mu_-$\tabularnewline
\hdashline 
Hinge loss  & $c \max\{0,1-\alpha\}$  & $c\left(-\mu_+ + \mu_-\right)$\tabularnewline
\hdashline
Exponential loss  & $e^{-\alpha}$  & $\begin{cases}
\begin{array}{c}
 (\sqrt{\mu_+} - \sqrt{\mu_-})^2\\
-(\sqrt{\mu_+} - \sqrt{\mu_-})^2
\end{array} & \begin{array}{l}
\mbox{if } \mu_+ \leq \mu_-\\
\mbox{if }\mu_+ > \mu_-
\end{array}\end{cases}$\tabularnewline
\hdashline
Logistic loss  & $\log(1+e^{-\alpha})$  & $\begin{cases}
\begin{array}{c}
-\mu_+ \log\left(\frac{2\mu_+}{\mu_+ + \mu_-}\right) 
- \mu_- \log\left(\frac{2\mu_-}{\mu_+ + \mu_-}\right) \\
\mu_+ \log\left(\frac{2\mu_+}{\mu_+ + \mu_-}\right) 
+ \mu_- \log\left(\frac{2\mu_-}{\mu_+ + \mu_-}\right)
\end{array} & 
\begin{array}{l}
\mbox{if } \mu_+ \leq \mu_-\\
\mbox{if }\mu_+ > \mu_- 
\end{array}\end{cases}$\tabularnewline
\hdashline
Quadratic loss  & $(1-\alpha)^{2}$  & $\begin{cases}
\begin{array}{c}
\frac{(\mu_+ - \mu_-)^{2}}{\mu_+ + \mu_-}\\
-\frac{(\mu_+ - \mu_-)^{2}}{\mu_+ + \mu_-}
\end{array} & \begin{array}{l}
\mbox{if } \mu_+ \leq \mu_-\\
\mbox{if } \mu_+ > \mu_-
\end{array}\end{cases}$\tabularnewline
\hdashline
Truncated quadratic loss  & $(\max\{0,1-\alpha\})^{2}$  & $\begin{cases}
\begin{array}{c}
\frac{(\mu_+ - \mu_-)^{2}}{\mu_+ + \mu_-}\\
-\frac{(\mu_+ - \mu_-)^{2}}{\mu_+ + \mu_-}
\end{array} & \begin{array}{l}
\mbox{if } \mu_+ \leq \mu_-\\
\mbox{if } \mu_+ > \mu_-
\end{array}\end{cases}$\tabularnewline
\hline
\end{tabular}
}
\begin{tablenotes}\footnotesize
\item[]Note: $\mu_+ = \omega_+ \eta$ and $\mu_- = \omega_- (1-\eta)$.
\end{tablenotes} 
\end{threeparttable}
\end{table}

\end{remark}

\bigskip{}

Similar to the analysis in Section \ref{sec:classification preserving reduction}, we consider adding functional form restrictions to the class of classifiers $\MF_{\MG}$. 
Let $\widetilde{\MF}_{\MG}$ be a subclass of $\MF_\MG$. We suppose that the non-negative weight variable $\omega$ is bounded from above.

\bigskip{}
\begin{condition}[Bounded weight variable]\label{con:bounded weight variable}
There exists $M<\infty$ such that $0 \leq \omega \leq M$ a.s.
\end{condition}
\bigskip{}

In causal policy learning, Condition \ref{con:bounded weight variable} holds if the outcome variable $Z$ has bounded support and the propensity score $e(x)$ satisfies a strict overlap condition. For example, if the support of $Z$ is contained in $[-\tilde{M},\tilde{M}]$ for some $\tilde{M} <\infty$, and the propensity score satisfies $\kappa < e(x) < 1-\kappa$ for some $\kappa \in (0,1/2)$ and all $x\in \MX$, then the weight variable for the causal policy learning $\omega_{p}$ defined in (\ref{eq:policy_learning_welfare}) is bounded from above by $\tilde{M}/\kappa$ a.s.

The following theorem, which is analogous to Theorem \ref{thm:continuous functions}, shows that
the two conditions \ref{asm:sublevel set condition} and \ref{asm:optimizer condition} in Theorem \ref{thm:continuous functions} remain sufficient for $\widetilde{\MF}_\MG$ to guarantee the consistency of the hinge risk minimization approach to weighted classification.

\bigskip{}

\begin{theorem}\label{thm:continuous functions_WC} Given a distribution $P$ on $\Real_{+}\times \{-1,1\} \times \MX$ and a class of measurable subsets $\MG \subseteq 2^{\MX}$, suppose that $\widetilde{\MF}_{\MG} \subset \MF_{\MG}$ satisfy the conditions \ref{asm:sublevel set condition} and \ref{asm:optimizer condition} in Theorem \ref{thm:continuous functions} and that the weight variable $\omega$ satisfies Condition \ref{con:bounded weight variable}. Then the following claims hold:\\
(i)  $\tilde{f}^{\ast} \in \arg \inf_{f \in \widetilde{\MF}_{\MG}} R_{\phi_h}^{\omega}(f)$ minimizes the weighted-classification risk $R^{\omega}(\cdot)$ over $\MF_{\MG}$. \\
(ii) For $G^{*}\in{\cal G}^{\ast}$,  $\tilde{f}_{G^{*}}$ is a minimizer of $R_{\phi_h}^{\omega}(\cdot)$ over $\widetilde{\MF}_{\MG}$. 
\end{theorem}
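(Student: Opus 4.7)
My plan is to closely parallel the proof of Theorem \ref{thm:continuous functions}, replacing the unweighted 0-1 loss by its weighted counterpart and invoking Theorem \ref{thm:risk equivalence_WC} in place of the calibration identity $\Delta C_{\phi_h}(\eta) = c(1-2\eta)$. The central observation is that, for any $f$ bounded in $[-1,1]$, the hinge loss collapses to the linear form $\phi_h(Y f(X)) = 1 - Y f(X)$ almost surely, so
\[
R^{\omega}_{\phi_h}(f) \;=\; E_P[\omega] \;-\; E_P[\omega\, Y\, f(X)],
\]
and minimizing the weighted hinge risk over $\widetilde{\MF}_{\MG}$ amounts to maximizing the linear functional $f \mapsto E_P[\omega\, Y\, f(X)]$ subject to $f \in \widetilde{\MF}_{\MG}$.

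The next step is to substitute the layer-cake representation $f(x) = -1 + \int_{-1}^{1} 1\{f(x) > t\}\,dt$ into this linear functional and swap the order of integration via Fubini, which is justified because Condition \ref{con:bounded weight variable} bounds $\omega$ by $M$ and hence ensures integrability of the joint integrand. A direct calculation using the definitions of $\omega_{+}$ and $\omega_{-}$ yields, for every measurable $G \subseteq \MX$, the identity
\[
E_P[\omega\, Y\, 1\{X \in G\}] \;=\; E_P[\omega\, 1\{Y=1\}] \;-\; \MR^{\omega}(G).
\]
Condition \ref{asm:sublevel set condition} guarantees that every level set $\{f > t\}$ of $f \in \widetilde{\MF}_{\MG}$ belongs to $\MG$, hence $\MR^{\omega}(\{f > t\}) \geq \MR^{\omega \ast}$. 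Integrating over $t \in [-1,1]$ collapses the boundary constants and produces the pointwise representation and lower bound
\[
R^{\omega}_{\phi_h}(f) \;=\; \int_{-1}^{1} \MR^{\omega}(\{f > t\})\, dt \;\geq\; 2\,\MR^{\omega \ast}
\]
for every $f \in \widetilde{\MF}_{\MG}$. Equality is attained at $\tilde{f}_{G^{\ast}}$, which lies in $\widetilde{\MF}_{\MG}$ by condition \ref{asm:optimizer condition}, because $\{\tilde{f}_{G^{\ast}} > t\} = G^{\ast}$ for every $t \in [-1,1)$; this establishes claim (ii) and pins down $\inf_{f \in \widetilde{\MF}_{\MG}} R^{\omega}_{\phi_h}(f) = 2\,\MR^{\omega \ast}$.

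Finally, for claim (i), any minimizer $\tilde{f}^{\ast}$ must saturate this bound, forcing $\MR^{\omega}(\{\tilde{f}^{\ast} > t\}) = \MR^{\omega \ast}$ for Lebesgue-a.e. $t \in (-1,1)$. Choosing a sequence $t_n \uparrow 0^{-}$ along which the equality holds, the sets $\{\tilde{f}^{\ast} > t_n\}$ decrease monotonically to $\{\tilde{f}^{\ast} \geq 0\} = G_{\tilde{f}^{\ast}}$, and dominated convergence applied to $E_P[\omega Y\, 1\{X \in \{\tilde{f}^{\ast} > t_n\}\}]$ (with $\omega \leq M$ as the dominating envelope) gives $\MR^{\omega}(G_{\tilde{f}^{\ast}}) = \MR^{\omega \ast}$, so $\tilde{f}^{\ast}$ also minimizes $R^{\omega}$ over $\MF_{\MG}$. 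I expect the main technical obstacle to be precisely this last step, namely transferring the Lebesgue-a.e. optimality of the open superlevel sets $\{\tilde{f}^{\ast} > t\}$ to the optimality of the closed prediction set $G_{\tilde{f}^{\ast}} = \{\tilde{f}^{\ast} \geq 0\}$; it is here that the boundedness of $\omega$ furnished by Condition \ref{con:bounded weight variable}, which is absent in the unweighted setting of Theorem \ref{thm:continuous functions}, is essential for the convergence argument.
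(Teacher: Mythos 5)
Your proof is correct but takes a genuinely different route from the paper's. The paper approximates an arbitrary hinge-risk minimizer $\tilde f^{\ast}\in\widetilde{\MF}_{\MG}$ by a sequence of step functions $\tilde f_J^{\ast}$ with at most $J$ jumps, invokes the weighted analogue of Lemma \ref{lem:step functions} (Lemma \ref{lem:step functions_WC} in Appendix \ref{appx:proof 4}) to handle the discrete case, and passes to the limit $J\to\infty$ by dominated convergence, proving claim (i) by contradiction. Your continuous layer-cake identity $R^{\omega}_{\phi_h}(f)=\int_{-1}^{1}\MR^{\omega}(\{f>t\})\,dt$ is exactly the continuum version of the discrete formula $R^{\omega}_{\phi_h}(\tilde f)=2\sum_j c_j\MR^{\omega}(G_j)$ that underlies that lemma, and applying Fubini directly to an arbitrary $f\in\widetilde{\MF}_{\MG}$ collapses the paper's two-stage argument (discrete lemma plus approximation) into one step; it also produces a direct rather than contradiction-based proof of (i), and makes transparent exactly where Condition \ref{con:bounded weight variable} is used (once to justify Fubini, once to justify dominated convergence in the passage $\{\tilde f^{\ast}>t_n\}\downarrow G_{\tilde f^{\ast}}$). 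Two small points: condition \ref{asm:sublevel set condition} as printed controls sublevel sets $\{f\leq t\}$, but the paper's own proof of Theorem \ref{thm:continuous functions} and Example \ref{example:Monotonic classification with a class of monotonic functions} manifestly work with superlevel sets $\{f\geq t\}$ (the printed version appears to be a sign typo), so your implicit use of superlevel sets is the correct reading; and it is cleaner to run the layer-cake with the closed sets $\{f\geq t\}$ (for which the corrected (A1) directly gives membership in $\MG$) than with $\{f>t\}$ as you do---the integral over $t$ is unchanged, and $\MR^{\omega}(\{f>t\})=\MR^{\omega}(\{f\geq t\})$ for all but countably many $t$, so your lower bound survives, but switching to $\geq$ avoids the extra remark.
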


\begin{proof}
See Appendix \ref{appx:proof 4}.
\end{proof}



\section{Empirical illustration}\label{sec:empirical application}

To illustrate the hinge risk minimizing approach in a causal policy learning setting, we apply our weighted classification methods with a monotone constraint to experimental data from \cite{karlan_et_al_2019}. 
\cite{karlan_et_al_2019} conducted three experiments in India and the Philippines in which they paid off the high-interest moneylender debt of market vendors and gave them brief financial training. Though the main focus of \cite{karlan_et_al_2019} is to understand why a debt-trap occurs (i.e., why some individuals repeatedly take on high-interest rate loans), we regard their treatment of paying off debt as policy intervention and study effective treatment allocation maximizing the value of household business in the population.

We use the data of \cite{karlan_et_al_2019}, collected from an RCT experiment conducted in Cagayan de Oro, the Philippines, in 2010. The observations are divided into two groups: a treatment group (debt paid off and received financial training) and a control group. The data was collected over 5 periods, comprising a baseline period before the policy intervention and 4 follow-up periods after the policy intervention was implemented. The follow-up periods correspond to the 1st, 4th, 8th, and 18-19th months after the policy intervention was implemented. We label these follow-up periods as periods 1 through 4, respectively.
After dropping observations with missing values, our main sample consists of 411 observations, of which 289 (122) observations belong to the treatment (control) group.

We focus on the effect of treatment on the present value (PV) of business at the time of the policy intervention. To define the PV of business, we introduce some notation. Let $D \in \{-1,1\}$ denote a treatment indicator, with $1$ indicating treatment and $-1$ indicating control. Let $C$ be the amount of moneylender debt paid off, which we assume to be a cost of treatment.
Let $P_t$ denote monthly take-home profit in month $t$ after the policy intervention.
For $s \in \{1,2,3,4\}$, let $P_{(s)}$ denote an average of the monthly take-home profit observed in the follow-up period $s$.
Since the data was collected only for 4 follow-up periods, we assume that $P_{t}=P_{(1)}$ for $t\leq 2$, $P_{t}=P_{(2)}$ for $3\leq t\leq 6$, $P_{t}=P_{(3)}$ for $7\leq t\leq 11$, and $P_{t}=P_{(4)}$ for $t\geq 12$.
Let $WC_{19}$ denote the total working capital of business in the 19th month after the policy intervention (at the end of the follow-up survey), which corresponds to the total working capital of business observed in the final follow-up period.\footnote{The total working capital of a business, as defined in \cite{karlan_et_al_2019}, is the worth of current business assets plus the amount spent on an average restocking trip minus any current or daily loans owed.}
All variables $P_t$, $WC_{19}$, and $C$ are measured in USD, using the average exchange rate during September and October, 2010.
For $T \geq 1$, we define the outcome variable $Z$ as the PV of business minus the cost of the treatment as follows:
\begin{align*}
    Z = \sum_{t=1}^{T}\frac{1}{(1+r)^{t}}P_{t} + \frac{1}{(1+r)^{19}}WC_{19} - D\cdot C,
\end{align*}
where $r$ is the monthly discount rate of the business value, to 0.037/12, the average of the annual real interest rate in force in the Philippines between 2010 and 2019 divided by 12. In our analysis, we set $T=19$ (the duration of the follow-up survey), $60$, $120$, and $240$.

The covariates that we use for treatment assignment are the amount of moneylender debt in the baseline period, financial literacy index, and food expenditure ratio. Specifically, we consider the following three sets of covariates\footnote{Aside from these three covariates, \cite{karlan_et_al_2019} use vendors' time inconsistent preferences, possession of savings at a bank, math skills index, and predicted probability of household income shock for estimation of the conditional causal effects.}:
\begin{align*}
    X_{1} =& \{\mbox{amount of moneylender debt in the baseline period, financial literacy index}\},\\
    X_{2} =& \{\mbox{amount of moneylender debt in the baseline period, food expenditure ratio}\},\\
    X_{3} =& \{\mbox{amount of moneylender debt in the baseline period, financial literacy index},\\
    & \mbox{ food expenditure ratio}\}.
\end{align*}
All of the covariates are observed in the baseline period. The food expenditure ratio is the proportion of expenditure on food and drink to total expenditure, and is used to gauge living standards, i.e., the higher the ratio, the poorer the household is considered to be. The amount of moneylender debt in the baseline period is measured in USD, using the same average exchange rate used to define $Z$. For each set of covariates, we constrain the class of feasible treatment rules to the class of monotonically increasing treatment rules, i.e.,
\begin{align*}
    \MF_{M} = \{f:f(x)\geq f(\tilde{x}) \mbox{ for any $x,\tilde{x} \in \MX$ with $x \geq \tilde{x}$ and $f(\cdot) \in [-1,1]$}\},
\end{align*}
where $x \geq \tilde{x}$ is an element-wise weak inequality.
Any treatment rule in $\MF_{M}$ is more likely to award treatment to individuals with more baseline debt, higher financial literacy and a higher food expenditure ratio. This class of monotone rules is intended to represent the planner's (hypothetical) objective of prioritizing those households that are more financially-strained and debt-trapped, and those that have a lower standard of living. At the same time, the monotonicity of assignment in financial literacy disciplines the allocation of the policy by prioritizing those who are more likely to escape a debt-trap, assuming that financial literacy is a good predictor for this.

Let $\hat{e}$ be the empirical probability of treatment, which we use as an estimated propensity score. Setting $\omega = |Z|/(D\hat{e} + (1-D)/2)$ and $Y=\sign(Z)\cdot D$ in the definitions of the weighted classification and surrogate risks ((\ref{eq:weighted_classification_risk}) and (\ref{eq:weighted_surrogate_risk})), 
let $\hat{f}_{M}$ be a classifier that minimizes the empirical weighted hinge risk over $\MF_{M}$ (see Appendix \ref{appx:Monotone weighted classification} for details of this procedure).
Denote $f_{-1}$ and $f_{+1}$ the never-treating and always-treating rules (i.e., $f_{-1}(x)=-1$ and $f_{+1}(x)=1$ for all $x$). 

Table \ref{tab:welfare gain_exc} shows the estimated welfare gains of $\hat{f}_{M}$ relative to the never-treating rule (i.e., $W(\hat{f}_M) - W(f_{-1})$). Figure \ref{fig:treatment rule_exc} illustrates the resulting treatment allocation $G_{\hat{f}_M}$ for the covariate sets $X_1$ and $X_2$, for each $T$. The welfare gains in Table \ref{tab:welfare gain_exc} are estimated using the same sample as is used to estimate $\hat{f}_{M}$; hence, the estimated welfare gains of $\hat{f}_{M}$ in Table \ref{tab:welfare gain_exc} are positively biased.\footnote{To our knowledge, there is no bias-free estimation method available for the welfare gain of the policy that maximizes the full-sample objective function over the large class of monotone assignment rules. Out-of-sample methods based upon sample-splitting is a simple approach for inferring welfare gain, but the policy estimated from a subsample can sacrifice welfare performance. We leave development of bias-free estimation applicable to the current context for future research.} Using $X_2$ rather than $X_{1}$ leads to higher estimates of the welfare gain from using $\hat{f}_M$. Figure \ref{fig:treatment rule_exc} shows the estimated treatment allocation rules under the monotonicity constraint. When we use $X_{1}$, we obtain an optimal treatment rule that is identical for all $T\in \{19,60,120,240\}$. In contrast, the estimated treatment rules differ between $T=19$ and $T \in \{60,120,240\}$ when we use $X_2$.

\begin{table}[h]
\caption{Estimates of welfare gains, probability of treatment, and $E[Z(1)]$ and $E[Z(0)]$}
\label{tab:welfare gain_exc}
\begin{center}
\begin{tabular}{cccccccc}
\hline
\multirow{2}{*}{$X$}     & \multirow{2}{*}{$T$} & \multirow{2}{*}{Sample size} & \multicolumn{2}{l}{Welfare gain of} & Probability of & \multirow{2}{*}{$E[Z(1)]$} & \multirow{2}{*}{$E[Z(0)]$} \\ \cline{4-5}
&   & & $\hat{f}_M$ & $f_{+1}$ & treatment\\ \hline
\multirow{4}{*}{$X_{1}$} & 19 & \multirow{4}{*}{411}  & 627.0 & 495.6 & 0.43 &4406.8& 3911.2\\
& 60 & & 1724.0 & 1223.4 & 0.43 & 12516.6 & 11293.2 \\
& 120 & & 3093.8 & 2132.2 & 0.43 & 22642.3 & 20510.1 \\
& 240 & & 5164.4 & 3506.1 & 0.43 & 37949.7 & 34443.6 \\ \hdashline
\multirow{4}{*}{$X_{2}$} & 19 & \multirow{4}{*}{337}  & 839.5 & 518.6 & 0.43 &4483.7& 3965.1\\
& 60 & & 2214.2 & 1031.1 & 0.40 & 12513.3 & 11482.2 \\
& 120 & & 3991.9 & 1670.9 & 0.40 & 22538.8 & 20867.9\\
& 240 & & 6679.5 & 2638.3 & 0.40 & 37694.7 & 35056.5\\ \hdashline
\multirow{4}{*}{$X_{3}$} & 19 & \multirow{4}{*}{337}  & 1226.9 & 518.6 & 0.42 &4483.7& 3965.1\\
& 60 & & 3182.7 & 1031.1 & 0.44 & 12513.3 & 11482.2 \\
& 120 & & 5631.2 & 1670.9 & 0.45 & 22538.8 & 20867.9 \\
& 240 & & 9333.5 & 2638.3 & 0.45 & 37694.7 & 35056.5 \\  
\hline                      
\end{tabular}
\end{center}
\end{table}


\bigskip
\begin{figure}[h!]
\caption{Monotone treatment rules obtained from empirical hinge risk minimization}
\label{fig:treatment rule_exc}
\bigskip
\begin{tabular}{cc}
\begin{minipage}{0.5\hsize}
\begin{center}
{\small (a) $X_1$ is used and $T \in \{19,60,120,240\}$}
\includegraphics[scale=0.39]{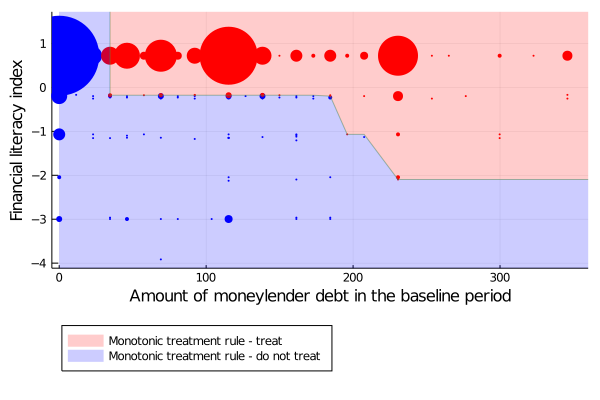}
\end{center}
\end{minipage}
\bigskip
\\
\begin{minipage}{0.5\hsize}
\begin{center}
{\small (b) $X_2$ is used and $T=19$}
\includegraphics[scale=0.39]{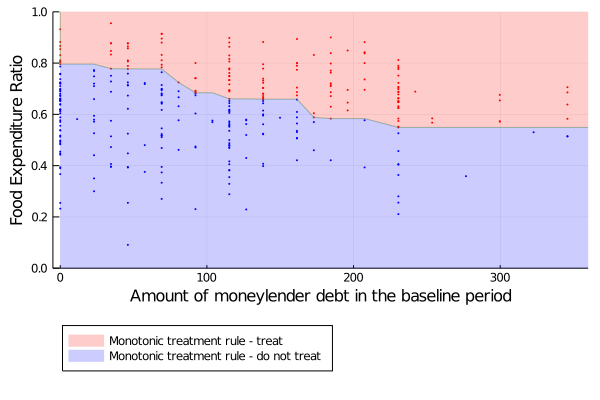}
\end{center}
\end{minipage}
\begin{minipage}[c]{0.5\hsize}
\begin{center}
{\small (c) $X_2$ is used and $T \in \{60,120,240\}$}
\includegraphics[scale=0.39]{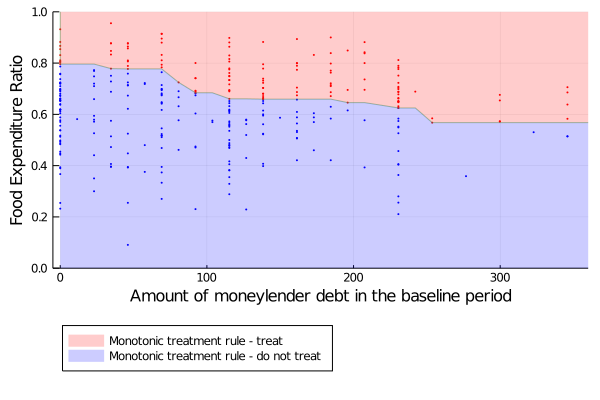}
\end{center}
\end{minipage}
\end{tabular}
\begin{tablenotes}\footnotesize
\item[] Notes: In each figure, every circle represents the sample density at a given observation.
\end{tablenotes} 
\end{figure}
\bigskip{}



\section{Conclusion}

This paper studies the consistency of surrogate risk minimization approaches to classification and weighted classification given a constrained set of classifiers, where weighted classification subsumes policy learning for individualized treatment assignment rules. Our focus is on how surrogate risk minimizing classifiers behave if the constrained class of classifiers violates the assumption of correct specification. Our first main result shows that when the constraint restricts classifiers' prediction sets only, hinge losses are the only loss functions that secure consistency of the surrogate-risk minimizing classifier without the assumption of correct specification. When the constraint additionally restricts the functional form of the classifiers, the surrogate risk minimizing classifier is not generally consistent even with hinge loss. Our second main result is to show that, in this case, the set of conditions \ref{asm:sublevel set condition} and \ref{asm:optimizer condition} in Theorem \ref{thm:continuous functions} becomes a sufficient condition for the consistency of the hinge risk minimizing classifier. 

This paper also investigates the statistical properties of hinge risk minimizing classifiers in terms of uniform upper bounds on the excess regret. Exploiting hinge loss and the class of monotone classifiers in the monotone classification problem, we show that the empirical surrogate-risk minimizing classifier can be computed using linear programming. All of the results obtained in the standard classification setting are naturally extended to the weighted classification problem, so that our contributions carry over to causal policy learning and related applications.  


\bigskip

\appendix
\part*{Appendix}

\section{Proofs of the results in Sections \ref{sec:constrained classification and surrogate loss approach}--\ref{sec:classification preserving reduction}}
\label{appx:proof 1}

This appendix provides proof of the results in Sections
\ref{sec:constrained classification and surrogate loss approach}--\ref{sec:classification preserving reduction} alongside some auxiliary lemmas. We first give the proofs of Propositions \ref{prop:misspecification relation} and \ref{prop:MG-constrained misspecification relation}.

\paragraph{Proof of Proposition \ref{prop:misspecification relation}.}
Part (i) follows from Claim 3 of Theorem 1 of \cite{Bartlett_et_al_2006}.

Part (ii):
Function $\phi$ is assumed to be convex and classification-calibrated.
Theorem 2 in \cite{Bartlett_et_al_2006} shows that a convex $\phi$ is
classification-calibrated if and only if $\phi$ is differentiable at $0$ 
and $\phi'(0) < 0$. 
Then there exist $z_2 > z_1 > 0$ in the neighborhood of zero 
such that $\phi(z_2) < \phi(z_1)$.

Take any pair $\{x_1,x_2\} \in \MX$.
Define the distribution $P$ with the support on $\{x_1,x_2\}$.
Let $P_X (x_1) = p_1$, $P_X (x_2) = p_2$, $p_1 + p_2 = 1$.
Let $\eta(x) = 1$ for all $x$.

Define the constrained class of classifiers $\MF = \{f_1, f_2\}$
with two elements:
\begin{equation*}
f_1(x) = \left\{
\begin{array}{lll}
z_1 & \text{ if } & x = x_1,\\
z_1 & \text{ if } & x = x_2,
\end{array}
\right.\ 
f_2(x) = \left\{
\begin{array}{lll}
-z_1 & \text{ if } & x = x_1,\\
z_2 & \text{ if } & x = x_2.
\end{array}
\right.
\end{equation*}
$f_1$ has the correct sign for both $x_1$ and $x_2$,
obtaining minimal classification risk $R(f_1) = R(f^*_{bayes})$,
hence $\MF$ is correctly $R$-specified.
$f_2$ has the wrong sign for $x_1$, so $R(f_2) > R(f^*_{bayes})$.

We will now choose the probabilities $p_1,p_2$ so that $f_2$
would be chosen from $\MF$ based on the surrogate loss $\phi$,
even though $f_2$ is worse under classification loss.

Setting $\eta(x) = 1$, hence $P(Y=1)=1$, simplifies the expression
for the surrogate loss of $f$:
\begin{equation*}
R_\phi(f) 
 = E_P [ \phi(Yf(X)) ] 
 = E_P [ \phi(f(X)) ]
 = p_1 \phi(f(x_1)) + p_2 \phi(f(x_2)).
\end{equation*}
The difference in surrogate loss between $f_1$ and $f_2$ is
\begin{equation*}
R_\phi(f_1) - R_\phi(f_2)
 = p_1 \underbrace{[ \phi(z_1) - \phi(-z_1)]}_{< 0}
 + p_2 \underbrace{[ \phi(z_1) - \phi(z_2)]}_{> 0}.
\end{equation*}
Let us choose $P_X$ with probabilities $p_1+p_2 = 1$ such that
$\frac{p_2}{p_1} > \frac{\phi(-z_1) - \phi(z_1)}{\phi(z_1)-\phi(z_2)}$,
then $R_\phi(f_1) - R_\phi(f_2) > 0$.
Then $f^*_{\phi} = f_2$ and $R(f^*_{\phi}) > R(f^*_{bayes})$, establishing
the second claim.

To establish the first claim that $\MF$ is $R_\phi$-misspecified, consider
the classifier
\begin{equation*}
f_3(x) = \left\{
\begin{array}{lll}
z_1 & \text{ if } & x = x_1,\\
z_2 & \text{ if } & x = x_2,
\end{array}
\right.
\end{equation*}
which is not included in constrained class $\MF$, then
\begin{equation*}
R_\phi(f^*_{\phi,FB}) 
  \leq R_\phi(f_3) 
  = p_1 \phi(z_1) + p_2 \phi(z_2) 
  < p_1 \phi(-z_1) + p_2 \phi(z_2)
  = R_\phi(f_2)
  = \inf_{f \in \MF} R_\phi (f).
\end{equation*}
\qedsymbol
\bigskip

\paragraph{Proof of Proposition \ref{prop:MG-constrained misspecification relation}}
Assume $R$-correct specification of $\MF_{\MG}$. Then $\MF_{\MG}$ includes a classifier $f^{\ast}$ that is identical to or shares the same sign as $f_{Bayes}^{\ast}(x)=2\eta(x)-1$, $P_X$-almost everywhere. Since $f \in \MF_{\MG}$ is unconstrained except for $G_f \in \MG$ and $-1 \leq f(\cdot) \leq 1$, 
the classification-calibrated property of $\phi$ and the representation of the surrogate risk
$R_{\phi}(f)=E_{P_X}[C_{\phi}(f(X),\eta(X))]$ implies
\begin{equation*}
    f_{\phi}^{\ast}(x) \in \underset{a:(2\eta(x)-1)a  \geq 0, |a| \leq 1}{\arg\min}C_{\phi}(a,\eta(x)),
\end{equation*} 
$P_X$-almost everywhere, because otherwise $f^{\ast}$ dominates $f^{\ast}_{\phi}$ in terms of the surrogate risk. This means that $f_{\phi}^{\ast}(x)$ has the same sign as $f_{Bayes}^{\ast}(x)$ , $P_X$- almost everywhere, i.e., $R(f_{\phi}^{\ast}) = R(f_{Bayes}^{\ast})$ holds.

Assume conversely that $\MF_{\MG}$ is $R$-misspecified. Then $\mathrm{sign}(f_{\phi}^{\ast})$ has to differ from $\mathrm{sign} (f^{\ast}_{Bayes}(x))$ for some $x$ with positive measure in terms of $P_X$. Failure to find such a value $x$ contradicts the assumption of $R$-misspecification of $\MF_{\MG}$. The difference in signs then implies $R(f_{\phi}^{\ast}) > R(f_{Bayes}^{\ast})$. \qedsymbol

\bigskip{}

We here let $\phi$ be any surrogate loss function. Before proceeding to the proofs of the results in Sections \ref{sec:calibration of MG-constrained classification} and \ref{sec:classification preserving reduction},
we note that if $\phi$ is classification-calibrated, $\Delta C_{\phi}$
has the same sign as the Bayes classifier: for any $\eta \in [0,1]\backslash\{1/2\}$,
\begin{align}
\Delta C_{\phi}\left(\eta\right) & \begin{cases}
\begin{array}{c}
>0\\
<0
\end{array} & \begin{array}{l}
\mbox{if }\eta>1/2\\
\mbox{if }\eta<1/2
\end{array}\end{cases},\label{eq:classification calibrated}
\end{align}
which will be used in the following proofs.

\paragraph{Proof of Theorem \ref{thm:univesal equivalence}.}

\ \\
 (`If' part)

For any class of measurable subsets $\MG \subseteq 2^{\MX}$ and any $G_{1},G_{2}\in{\cal G}$, we show in Theorem \ref{thm:risk equivalence}
that ${\cal R}_{\phi_{1}}\left(G_{2}\right)\geq{\cal R}_{\phi_{1}}\left(G_{1}\right)$
is equivalent to 
\begin{align*}
\int_{G_{2}\backslash G_{1}}\Delta C_{\phi_{1}}\left(\eta(x)\right)dP_{X}(x)\geq  \int_{G_{1}\backslash G_{2}}\Delta C_{\phi_{1}}\left(\eta(x)\right)dP_{X}(x).
\end{align*}
This inequality does not change if we replace $\Delta C_{\phi_{1}}\left(\eta(x)\right)$
with $\Delta C_{\phi_{2}}\left(\eta(x)\right)=c\Delta C_{\phi_{1}}\left(\eta(x)\right)$
with $c>0$. From Theorem \ref{thm:risk equivalence}, ${\cal R}_{\phi_{2}}\left(G_{2}\right)\geq{\cal R}_{\phi_{2}}\left(G_{1}\right)$ if $\Delta C_{\phi_{1}}\left(\eta(x)\right)$ is replaced by $\Delta C_{\phi_{2}}\left(\eta(x)\right)$ in the above inequality. Therefore, if $\Delta C_{\phi_{2}}(\cdot)=c\Delta C_{\phi_{1}}(\cdot)$
with $c>0$, $\phi_{1}\overset{u}{\sim}\phi_{2}$ holds.

\ \\
 (`Only if' part)\\
 We prove the `only if' part of the theorem by exploiting a specific class of data generating processes (DGPs). Suppose ${\cal X}=\left\{ 1,2\right\} $
and ${\cal G}=\left\{ \emptyset,G_{1},G_{2},{\cal X}\right\} $ with
$G_{1}=\left\{ 1\right\} $ and $G_{2}=\left\{ 2\right\} $. Let $\alpha=P\left(X=1\right)\left(=1-P\left(X=2\right)\right)$
and $\left(\eta_{1},\eta_{2}\right)=\left(\eta\left(1\right),\eta\left(2\right)\right)$.
The DGP varies depending on the values of $\left(\alpha,\eta_{1},\eta_{2}\right)\in\left[0,1\right]^{3}$.

In what follows, we will show that 
\begin{align}
\frac{\Delta C_{\phi_{1}}\left(\eta_{1}\right)}{\Delta C_{\phi_{1}}\left(\eta_{2}\right)}=\frac{\Delta C_{\phi_{2}}\left(\eta_{1}\right)}{\Delta C_{\phi_{2}}\left(\eta_{2}\right)} \label{eq:ratio equivalence}
\end{align}
holds for any $(\eta_{1},\eta_{2})\in([0,1]\backslash\{1/2\})^{2}$.
Then applying Lemma \ref{lem:universal equivalence} below proves the `only if' part of the theorem.

In the current setting, for $G \in \MG$, ${\cal R}_{\phi}(G)$
can be written as 
\begin{align*}
{\cal R}_{\phi}(G) & = P\left(X=1\right)\Delta C_{\phi}\left(\eta_{1}\right)1\left\{ 1\in G\right\} +P\left(X=2\right)\Delta C_{\phi}\left(\eta_{2}\right)1\left\{ 2\in G\right\} \\
 & +\sum_{x=1}^{2}P\left(X=x\right)C_{\phi}^{-}\left(\eta(x)\right)\\
& =   \alpha \Delta C_{\phi}\left(\eta_{1}\right)1\left\{ 1\in G\right\} +\left(1-\alpha\right)\Delta C_{\phi}\left(\eta_{2}\right)1\left\{ 2\in G\right\} +C_{\alpha,\eta_{1},\eta_{2}},
\end{align*}
where $C_{\alpha,\eta_{1},\eta_{2}}\equiv\alpha \Delta C_{\phi}\left(\eta_{1}\right)+\left(1-\alpha\right)\Delta C_{\phi}\left(\eta_{2}\right)$, which does not depend on $G$. Thus, we have 
\begin{align*}
{\cal R}_{\phi}\left(\emptyset\right)&=  C_{\alpha,\eta_{1},\eta_{2}}\\
{\cal R}_{\phi}\left(G_{1}\right)&=  \alpha \Delta C_{\phi}\left(\eta_{1}\right)+C_{\alpha,\eta_{1},\eta_{2}},\\
{\cal R}_{\phi}\left(G_{2}\right)&=  \left(1-\alpha\right)\Delta C_{\phi}\left(\eta_{2}\right)+C_{\alpha,\eta_{1},\eta_{2}},\\
{\cal R}_{\phi}\left({\cal X}\right)&=  \alpha \Delta C_{\phi}\left(\eta_{1}\right)+\left(1-\alpha\right)\Delta C_{\phi}\left(\eta_{2}\right)+C_{\alpha,\eta_{1},\eta_{2}}.
\end{align*}
In what follows, we will show that (\ref{eq:ratio equivalence}) holds, separately, in four cases: (\romannumeral1)
$\eta_{1}>1/2$ and $\eta_{2}>1/2$; (\romannumeral2) $\eta_{1}<1/2$
and $\eta_{2}<1/2$; (\romannumeral3) $\eta_{1}<1/2$ and $\eta_{2}>1/2$;
(\romannumeral4) $\eta_{1}>1/2$ and $\eta_{2}<1/2$.

First, we consider case (\romannumeral1): $\eta_{1}>1/2$ and
$\eta_{2}>1/2$. Because we assume $\phi_{1}\overset{u}{\sim}\phi_{2}$,
\begin{align*}
{\cal R}_{\phi_{1}}\left(G_{1}\right)\leq{\cal R}_{\phi_{1}}\left(G_{2}\right) & \Leftrightarrow{\cal R}_{\phi_{2}}\left(G_{1}\right)\leq{\cal R}_{\phi_{2}}\left(G_{2}\right),
\end{align*}
holds for any $\left(\alpha,\eta_{1},\eta_{2}\right)\in\left(0,1\right)\times\left(1/2,1\right]^{2}$.
This is equivalent to 
\begin{align*}
\alpha \Delta C_{\phi_{1}}\left(\eta_{1}\right)\leq\left(1-\alpha\right)\Delta C_{\phi_{1}}\left(\eta_{2}\right) & \Leftrightarrow\alpha \Delta C_{\phi_{2}}\left(\eta_{1}\right)\leq\left(1-\alpha\right)\Delta C_{\phi_{2}}\left(\eta_{2}\right)
\end{align*}
for any $\left(\alpha,\eta_{1},\eta_{2}\right)\in\left(0,1\right)\times\left(1/2,1\right]^{2}$.
Let $\gamma^{+}\equiv\left(1-\alpha\right)/\alpha$, which can take
any value in $\left(0,+\infty\right)$ by varying $\alpha$
on $(0,1)$. From the classification-calibrated property (\ref{eq:classification calibrated}),
both $\Delta C_{\phi_{1}}\left(\eta\right)$ and $\Delta C_{\phi_{2}}\left(\eta\right)$
are positive for $\eta\in\left(1/2,1\right]$. Thus, it follows for
any $\left(\gamma^{+},\eta_{1},\eta_{2}\right)\in\left(0,+\infty\right)\times\left(1/2,1\right]^{2}$
that 
\begin{align}
\frac{\Delta C_{\phi_{1}}\left(\eta_{1}\right)}{\Delta C_{\phi_{1}}\left(\eta_{2}\right)}\leq\gamma^{+} & \Leftrightarrow\frac{\Delta C_{\phi_{2}}\left(\eta_{1}\right)}{\Delta C_{\phi_{2}}\left(\eta_{2}\right)}\leq\gamma^{+},\label{eq:equivalence_case (i)}
\end{align}
where both $\Delta C_{\phi_{1}}\left(\eta_{1}\right)/\Delta C_{\phi_{1}}\left(\eta_{2}\right)$
and $\Delta C_{\phi_{2}}\left(\eta_{1}\right)/\Delta C_{\phi_{2}}\left(\eta_{2}\right)$
are positive. Since (\ref{eq:equivalence_case (i)}) holds for any value of $\gamma^+ \in (0,+\infty)$,  $\Delta C_{\phi_{1}}\left(\eta_{1}\right)/\Delta C_{\phi_{1}}\left(\eta_{2}\right)=\Delta C_{\phi_{2}}\left(\eta_{1}\right)/\Delta C_{\phi_{2}}\left(\eta_{2}\right)$
holds for any $\left(\eta_{1},\eta_{2}\right)\in\left(1/2,1\right]^{2}$.

Similarly, for case (\romannumeral2): $\eta_{1}<1/2$ and $\eta<1/2$,
the following equivalence statement holds for any $\left(\gamma^{+},\eta_{1},\eta_{2}\right)\in\left(0,+\infty\right)\times\left[0,1/2\right)^{2}$:
\begin{align}
\frac{\Delta C_{\phi_{1}}\left(\eta_{1}\right)}{\Delta C_{\phi_{1}}\left(\eta_{2}\right)}\geq\gamma^{+} & \Leftrightarrow\frac{\Delta C_{\phi_{2}}\left(\eta_{1}\right)}{\Delta C_{\phi_{2}}\left(\eta_{2}\right)}\geq\gamma^{+},\label{eq:equivalence_case (ii)}
\end{align}
where both $\Delta C_{\phi_{1}}\left(\eta_{1}\right)/\Delta C_{\phi_{1}}\left(\eta_{2}\right)$
and $\Delta C_{\phi_{2}}\left(\eta_{1}\right)/\Delta C_{\phi_{2}}\left(\eta_{2}\right)$
are positive. Thus, varying the value of $\gamma^{+}$ on $(0,+\infty)$
in (\ref{eq:equivalence_case (ii)}) shows that $\Delta C_{\phi_{1}}\left(\eta_{1}\right)/\Delta C_{\phi_{1}}\left(\eta_{2}\right)=\Delta C_{\phi_{2}}\left(\eta_{1}\right)/\Delta C_{\phi_{2}}\left(\eta_{2}\right)$
holds for any $\left(\eta_{1},\eta_{2}\right)\in\left[0,1/2\right)^{2}$.

Next, we consider case (\romannumeral3): $\eta_{1}<1/2\mbox{ and }\eta_{2}>1/2$.
Because we assume $\phi_{1}\overset{u}{\sim}\phi_{2}$, it follows
for any $\left(\alpha,\eta_{1},\eta_{2}\right)\in\left(0,1\right)\times\left[0,1/2\right)\times\left(1/2,1\right]$
that 
\begin{align*}
{\cal R}_{\phi_{1}}\left(\emptyset\right)\leq{\cal R}_{\phi_{1}}\left({\cal X}\right) & \Leftrightarrow{\cal R}_{\phi_{2}}\left(\emptyset\right)\leq{\cal R}_{\phi_{2}}\left({\cal X}\right),
\end{align*}
which is further equivalent to 
\begin{align*}
0\leq\alpha \Delta C_{\phi_{1}}\left(\eta_{1}\right)+\left(1-\alpha\right)\Delta C_{\phi_{1}}\left(\eta_{2}\right)  \Leftrightarrow 0 \leq\alpha \Delta C_{\phi_{2}}\left(\eta_{1}\right)+\left(1-\alpha\right)\Delta C_{\phi_{2}}\left(\eta_{2}\right).
\end{align*}
Let $\gamma^{-}\equiv\left(\alpha-1\right)/\alpha$, which can take any
value in $\left(-\infty,0\right)$ by varying the value of $\alpha$
on $(0,1)$. Because $\Delta C_{\phi_{1}}\left(\eta_{1}\right)<0$ and $\Delta C_{\phi_{2}}\left(\eta_{2}\right)>0$
hold for $\left(\eta_{1},\eta_{2}\right)\in\left[0,1/2\right)\times\left(1/2,1\right]$
due to the classification-calibrated property (\ref{eq:classification calibrated}),
\begin{align}
\frac{\Delta C_{\phi_{1}}\left(\eta_{1}\right)}{\Delta C_{\phi_{1}}\left(\eta_{2}\right)}\geq\gamma^{-} & \Leftrightarrow\frac{\Delta C_{\phi_{2}}\left(\eta_{1}\right)}{\Delta C_{\phi_{2}}\left(\eta_{2}\right)}\geq\gamma^{-}\label{eq:equivalence_case (iii)}
\end{align}
for any $\left(\gamma^{-},\eta_{1},\eta_{2}\right)\in\left(-\infty,0\right)\times\left[0,1/2\right)\times\left(1/2,1\right]$,
where both $\Delta C_{\phi_{1}}\left(\eta_{1}\right)/\Delta C_{\phi_{1}}\left(\eta_{2}\right)$
and $\Delta C_{\phi_{2}}\left(\eta_{1}\right)/\Delta C_{\phi_{2}}\left(\eta_{2}\right)$
are negative. Thus, varying the value of $\gamma^{-}$ on $(-\infty,0)$
in (\ref{eq:equivalence_case (iii)}) shows that $\Delta C_{\phi_{1}}\left(\eta_{1}\right)/\Delta C_{\phi_{1}}\left(\eta_{2}\right)=\Delta C_{\phi_{2}}\left(\eta_{1}\right)/\Delta C_{\phi_{2}}\left(\eta_{2}\right)$
holds for any $\left(\eta_{1},\eta_{2}\right)\in\left[0,1/2\right)\times\left(1/2,1\right]$.

Similarly, in case (\romannumeral4): $\eta_{1}>1/2$ and $\eta_{2}<1/2$,
the following equivalence statement holds for any $\left(\gamma^{-},\eta_{1},\eta_{2}\right)\in\left(-\infty,0\right)\times\left(1/2,1\right]\times\left[0,1/2\right)$:
\begin{align}
\frac{\Delta C_{\phi_{1}}\left(\eta_{1}\right)}{\Delta C_{\phi_{1}}\left(\eta_{2}\right)}\leq\gamma^{-} & \Leftrightarrow\frac{\Delta C_{\phi_{2}}\left(\eta_{1}\right)}{\Delta C_{\phi_{2}}\left(\eta_{2}\right)}\leq\gamma^{-},\label{eq:equivalence_case (iv)}
\end{align}
where both $\Delta C_{\phi_{1}}\left(\eta_{1}\right)/\Delta C_{\phi_{1}}\left(\eta_{2}\right)$
and $\Delta C_{\phi_{2}}\left(\eta_{1}\right)/\Delta C_{\phi_{2}}\left(\eta_{2}\right)$
are negative. Therefore, varying the value of $\gamma^{-}$ in (\ref{eq:equivalence_case (iv)})
shows that $\Delta C_{\phi_{1}}\left(\eta_{1}\right)/\Delta C_{\phi_{1}}\left(\eta_{2}\right)=\Delta C_{\phi_{2}}\left(\eta_{1}\right)/\Delta C_{\phi_{2}}\left(\eta_{2}\right)$
for any $\left(\eta_{1},\eta_{2}\right)\in\left(1/2,1\right]\times\left[0,1/2\right)$.

Combining these four results, we have $\Delta C_{\phi_{1}}\left(\eta_{1}\right)/\Delta C_{\phi_{1}}\left(\eta_{2}\right)=\Delta C_{\phi_{2}}\left(\eta_{1}\right)/\Delta C_{\phi_{2}}\left(\eta_{2}\right)$
for any $(\eta_{1},\eta_{2})\in(\left[0,1\right]\backslash\left\{ 1/2\right\} )^{2}$.
Then the proof follows from Lemma \ref{lem:universal equivalence}
below. \qedsymbol 

\bigskip{}

\begin{lemma}\label{lem:universal equivalence} Let $\phi_{1}$ and
$\phi_{2}$ be classification-calibrated loss functions. If $\Delta C_{\phi_{1}}\left(\eta_{1}\right)/\Delta C_{\phi_{1}}\left(\eta_{2}\right)=\Delta C_{\phi_{2}}\left(\eta_{1}\right)/\Delta C_{\phi_{2}}\left(\eta_{2}\right)$
holds for any $(\eta_{1},\eta_{2})\in(\left[0,1\right]\backslash\left\{ 1/2\right\} )^{2}$,
then there exists some constant $c>0$ such that $\Delta C_{\phi_{2}}\left(\eta\right)=c\Delta C_{\phi_{1}}\left(\eta\right)$
for any $\eta\in\left[0,1\right]$.
\end{lemma}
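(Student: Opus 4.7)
The plan is to show that the ratio $\Delta C_{\phi_2}(\eta)/\Delta C_{\phi_1}(\eta)$ is a positive constant on $[0,1]\setminus\{1/2\}$, and then to handle $\eta = 1/2$ separately by exploiting symmetry of $C_\phi(f,1/2)$ in $f$.

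First I would observe that the hypothesis $\Delta C_{\phi_1}(\eta_1)\Delta C_{\phi_2}(\eta_2) = \Delta C_{\phi_1}(\eta_2)\Delta C_{\phi_2}(\eta_1)$ (obtained by cross-multiplying) is exactly the statement that $r(\eta) \equiv \Delta C_{\phi_2}(\eta)/\Delta C_{\phi_1}(\eta)$ is well-defined and constant over $\eta\in[0,1]\setminus\{1/2\}$. The denominators are nonzero on this set by the classification-calibrated property (\ref{eq:classification calibrated}), which gives $\Delta C_{\phi_i}(\eta) \neq 0$ whenever $\eta \neq 1/2$. Fix any $\eta_0 \in [0,1]\setminus\{1/2\}$ and define $c \equiv \Delta C_{\phi_2}(\eta_0)/\Delta C_{\phi_1}(\eta_0)$. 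By construction $\Delta C_{\phi_2}(\eta) = c \Delta C_{\phi_1}(\eta)$ for every $\eta \in [0,1]\setminus\{1/2\}$.

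Next I would verify $c>0$. Taking $\eta_0 > 1/2$ (say $\eta_0 = 1$), the classification-calibrated property in (\ref{eq:classification calibrated}) gives $\Delta C_{\phi_1}(\eta_0)>0$ and $\Delta C_{\phi_2}(\eta_0)>0$, so their ratio $c$ is strictly positive.

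Finally I would handle the single point $\eta=1/2$. The key observation is that $C_\phi(f,1/2) = \tfrac{1}{2}[\phi(f)+\phi(-f)]$ is symmetric in $f$, so $C_\phi(f,1/2) = C_\phi(-f,1/2)$. A change of variables $f' = -f$ therefore yields
\begin{align*}
C_\phi^{-}(1/2) \;=\; \inf_{-1 \le f < 0} C_\phi(f,1/2) \;=\; \inf_{0 < f' \le 1} C_\phi(f',1/2),
\end{align*}
while $C_\phi^{+}(1/2) = \inf_{0 \le f' \le 1} C_\phi(f',1/2)$. Since the two infima differ only by whether the endpoint $f'=0$ is included, and since the values taken at a single point cannot separate infima of a (standard, continuous) loss on an interval, both infima coincide, hence $\Delta C_{\phi_i}(1/2) = 0$ for $i=1,2$. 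Therefore $\Delta C_{\phi_2}(1/2) = 0 = c \Delta C_{\phi_1}(1/2)$, completing the proof.

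The main obstacle is the endpoint issue at $\eta=1/2$: the definitions of $C_\phi^+$ and $C_\phi^-$ treat $f=0$ asymmetrically (closed versus open), so coincidence of $C_\phi^+(1/2)$ and $C_\phi^-(1/2)$ relies on the mild regularity that the value of $C_\phi(\cdot,1/2)$ at the single point $f=0$ does not create a gap in the infima; for the classification-calibrated loss functions considered in the paper (e.g.\ those in Table \ref{tb:surrogate loss functions and their forms of H}) this is immediate from continuity of $\phi$.
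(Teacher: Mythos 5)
Your proof is correct and follows the same overall route as the paper's: show the ratio $\Delta C_{\phi_2}/\Delta C_{\phi_1}$ is a positive constant on $[0,1]\setminus\{1/2\}$, then handle $\eta=1/2$ separately. Two small differences are worth noting. For constancy, the paper proceeds by contradiction (assuming $c(\eta_1)\neq c(\eta_2)$ and deriving an absurdity), whereas you observe directly that the hypothesis, after cross-multiplication, literally says $\Delta C_{\phi_2}(\eta_1)/\Delta C_{\phi_1}(\eta_1)=\Delta C_{\phi_2}(\eta_2)/\Delta C_{\phi_1}(\eta_2)$; your version is the more economical. For $\eta=1/2$, the paper simply asserts $\Delta C_{\phi_i}(1/2)=0$ ``by definition,'' which glosses over the asymmetry between the closed interval $[0,1]$ in $C_\phi^+$ and the half-open interval $[-1,0)$ in $C_\phi^-$; you correctly isolate this endpoint issue, reduce it via the symmetry $C_\phi(f,1/2)=C_\phi(-f,1/2)$ to comparing $\inf_{[0,1]}$ with $\inf_{(0,1]}$, and flag that closing the gap needs a mild regularity condition (continuity of $\phi$ at $0$) that all loss functions in Table~\ref{tb:surrogate loss functions and their forms of H} satisfy. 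That is a fair and honest caveat: strictly speaking, $\Delta C_\phi(1/2)=0$ is not a consequence of the definitions alone, and your treatment is the more careful one.
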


\begin{proof}
For $\eta\in\left[0,1\right]\backslash\left\{ 1/2\right\} $, let
$c\left(\eta\right)$ be a value such that 
\begin{align}
\Delta C_{\phi_{2}}\left(\eta\right)=  c\left(\eta\right)\Delta C_{\phi_{1}}\left(\eta\right).\label{equation for universal equivalence}
\end{align}
Because $\phi_{1}$ and $\phi_{2}$ are classification-calibrated,
$c\left(\eta\right)$ must be positive from (\ref{eq:classification calibrated}).
We will show that $c\left(\eta\right)$ is constant over $\eta\in\left[0,1\right]\backslash\left\{ 1/2\right\} $
by contradiction.

To obtain a contradiction, suppose there exists $(\eta_{1},\eta_{2})\in(\left[0,1\right]\backslash\left\{ 1/2\right\} )^{2}$
such that $c\left(\eta_{1}\right)\neq c\left(\eta_{2}\right)$. From
the assumption, the following equations hold 
\begin{align*}
\Delta C_{\phi_{2}}\left(\eta_{1}\right)= & \left(\frac{\Delta C_{\phi_{2}}\left(\eta_{2}\right)}{\Delta C_{\phi_{1}}\left(\eta_{2}\right)}\right)\Delta C_{\phi_{1}}\left(\eta_{1}\right),\\
\Delta C_{\phi_{2}}\left(\eta_{2}\right)= & \left(\frac{\Delta C_{\phi_{2}}\left(\eta_{1}\right)}{\Delta C_{\phi_{1}}\left(\eta_{1}\right)}\right)\Delta C_{\phi_{2}}\left(\eta_{2}\right).
\end{align*}

Combining these equations with equation (\ref{equation for universal equivalence}),
we have $\Delta C_{\phi_{2}}\left(\eta_{2}\right)=c\left(\eta_{1}\right)\Delta C_{\phi_{1}}\left(\eta_{2}\right)$
and $\Delta C_{\phi_{2}}\left(\eta_{2}\right)=c\left(\eta_{2}\right)\Delta C_{\phi_{1}}\left(\eta_{2}\right)$.
However, this contradicts the assumption that $c\left(\eta_{1}\right)\neq c\left(\eta_{2}\right)$.
Therefore, $c\left(\eta\right)$ must be constant over $\eta\in\left[0,1\right]\backslash\left\{ 1/2\right\} $.

When $\eta=1/2$, $\Delta C_{\phi_{1}}\left(\eta\right)=\Delta C_{\phi_{2}}\left(\eta\right)=0$
holds by definition. In this case, $\Delta C_{\phi_{2}}\left(\eta\right)=c\Delta C_{\phi_{1}}\left(\eta\right)$
holds for any $c$. 
\end{proof}
\bigskip{}

The following expression of the hinge risk will be used in the proofs of Lemma \ref{lem:step functions} and Theorem \ref{thm:continuous functions}:
\begin{align}
R_{\phi_h}(f)&=  \int_{\MX}\left[\eta(x)(1-f(x))+(1-\eta(x))(1+f(x))\right]dP_{X}(x)\nonumber \\
&=  \int_{\MX}\left(1-2\eta(x)\right)f(x)dP_{X}(x)+1.\label{eq:hinge risk expression}
\end{align}

\paragraph{Proof of Lemma \ref{lem:step functions}.}
Fix a distribution $P$ on $\{-1,1\}\times \MX$, and let $\tilde{f}\in\widetilde{{\cal F}}_{{\cal G},J}$. $\tilde{f}$ has the form
\begin{align}
\tilde{f}(x)=  2\sum_{j=1}^{J}c_{j}1\left\{ x\in G_{j}\right\} -1 \label{eq:J step function}
\end{align}
for some $G_{1},\ldots,G_{J} \in \MG$ such that $G_{J}\subseteq\cdots\subseteq G_{1}$
and some $c_{j}\geq0$ for $j=1,\ldots,J$ such that $\sum_{j=1}^{J}c_{j}=1$.
Thus, substituting $\tilde{f}$ into (\ref{eq:hinge risk expression}) yields 
\begin{align}
R_{\phi_h}(\tilde{f})  =  2\sum_{j=1}^{J}\left[c_{j}\int_{G_{j}}\left(1-2\eta(x)\right)dP_{X}(x)\right] + 2P(Y=1).\label{eq:hinge risk with step function}
\end{align}
Comparing (\ref{eq:hinge risk with step function}) with equation (\ref{eq:classification risk at G}) leads to
\begin{align}
R_{\phi_h}(\tilde{f})  =
2\sum_{j}^{J}c_{j}\MR(G_{j}).\label{eq:hinge risk for step function}
\end{align}
From this expression and the assumption that $\sum_{j=1}^{J}c_{j}=1$, $R_{\phi_h}(\tilde{f}) \geq 2\MR^{*}$ holds for any $\tilde{f} \in \widetilde{\MF}_{\MG,J}$.

For $G^{*}\in\MG^{*}$, a function $\tilde{f}_{G^{\ast}}(x) = 2\cdot 1\{x\in G^{*}\}-1$ can be extracted from $\widetilde{\MF}_{\MG,J}$ by setting $G_{1}=G^{*}$ and $c_{1}=1$.
Then, from equation (\ref{eq:hinge risk for step function}), $R_{\phi_h}(\tilde{f}_{G^{\ast}})=2\MR^{*}$ holds; that is, $\tilde{f}_{G^{\ast}}$ minimizes $R_{\phi_h}(\cdot)$ over $\widetilde{\MF}_{\MG,J}$.
This proves statement (ii) of the lemma.

We will next show that $\widetilde{\MF}_{\MG,J}$ is a classification-preserving reduction of $\MF_{\MG}$ (statement (i) of the lemma). To obtain a contradiction, suppose that a classifier $\tilde{f}$ with the form of (\ref{eq:J step function}) minimizes $R_{\phi_h}(\cdot)$ over $\widetilde{\MF}_{\MG,J}$ but does not minimize $R(\cdot)$ over $\MF_{\MG}$. As $\tilde{f}$ does not minimize $R(\cdot)$ over $\MF_{\MG}$, $G_{\tilde{f}}\notin\MG^{*}$ holds. Let $m$ be the smallest integer in $\left\{ 1,\ldots,J\right\}$ such that $\sum_{j=1}^{m}c_{j}\geq 1/2$ in (\ref{eq:step function}). Then $G_{m}=G_{\tilde{f}}$. It then follows that 
\begin{align*}
R_{\phi_h}(\tilde{f})&=  2\sum_{j=1}^{J}c_{j}\MR(G_{j}) = 2c_{m}\MR(G_{m}) + \sum_{j \in \{1,\ldots,m-1,m+1,\ldots,J\}}c_{j}\MR(G_j)\\
&\geq 2c_{m}\MR(\tilde{f}) + 2(1-c_{m})\MR^{*}\\
&> 2\MR^{*},
\end{align*}
where the last line follows from $c_{m}>0$ and $G_{\tilde{f}}\notin \MG^{\ast}$. Hence $R_{\phi_h}(\tilde{f})$ does not take the minimum value of $R_{\phi_h}(\cdot)$ over $\widetilde{\MF}_{\MG,J}$ (i.e., $R_{\phi_h}(\tilde{f})>2\MR^{*}$), which contradicts the assumption that $\tilde{f}$ minimizes $R_{\phi_h}(\cdot)$ over $\widetilde{\MF}_{\MG,J}$. Therefore, $\tilde{f}$ minimizes $R(\cdot)$ over $\MF_{\MG}$. 
Since this discussion is valid for any distribution $P$ on $\MY \times \MX$, we conclude $\widetilde{\MF}_{\MG,J}$ is a classification-preserving reduction of $\MF_{\MG}$.
\qedsymbol 
\bigskip{}

\paragraph{Proof of Theorem \ref{thm:continuous functions}.}
Fix a distribution $P$ on $\MY \times \MX$, and let $\tilde{f}^{\ast} \in \arg\inf_{f \in \widetilde{\MF}_{\MG}}R_{\phi_{h}}(f)$. Define a class of step functions with at most $J (\geq 1)$ jumps
\begin{align*}
\widetilde{{\MF}}_{J}^{\ast}\equiv & \left\{ f(x)=2\sum_{j=1}^{J}c_{j}1\{x\in G_{j}\} -1 : G_{j} \in \MG \mbox{ and } c_{j} \geq 0 \mbox{ for }j=1,\ldots,J;\right.\\
& \left. \quad \ G_{J}\subseteq\cdots\subseteq G_{1};\ G_{f} = G_{\tilde{f}^{*}};\ \sum_{j=1}^{J}c_{j}=1\right\},
\end{align*}
which is a class of step functions whose prediction sets correspond
to $G_{\tilde{f}^{*}}$, i.e., $G_{f}=G_{\tilde{f}^{*}}$
for any $f\in\widetilde{\MF}_{J}^{\ast}$. We look to find a sequence
of functions $\left\{\tilde{f}_{J}^{\ast}\right\}_{J=1}^{\infty}$
such that $\tilde{f}_{J}^{\ast}\in\widetilde{\MF}_{J}^{\ast}$ for any
$J$ and $\tilde{f}_{J}^{\ast}(x)\rightarrow\tilde{f}^{\ast}(x)$, as $J \rightarrow \infty$, $P_X$-almost everywhere. Setting $G_{j}=\left\{ x:\tilde{f}^{*}(x)\geq 2(j/J)-1\right\}$
and $c_{j}=1/J$ for $j=1,\ldots,J$ in the definition of $\widetilde{\MF}_{J}^{*}$, we define
\begin{align*}
    \tilde{f}_{J}^{\ast}(\cdot)\equiv \frac{2}{J}\sum_{j=1}^{J}1\left\{ \tilde{f}^{*}(\cdot)\geq 2(j/J)-1\right\}-1.
\end{align*}
For all $x\in{\cal X}$, 
\begin{align*}
\left|\tilde{f}_{J}^{*}(x)-\tilde{f}^{*}(x)\right|&=  \left|\frac{2}{J}\sum_{j=1}^{J}1\left\{ \tilde{f}^{*}(x)\geq 2(j/J)-1\right\}-1-\tilde{f}^{*}(x)\right|\\
& \leq  \frac{1}{J}\rightarrow0\mbox{ as }J \rightarrow \infty.
\end{align*}
Thus, $\tilde{f}_{J}^{*}(x)\rightarrow\tilde{f}^{*}(x)$ holds $P_X$-almost everywhere.

Then it follows that 
\begin{align}
R_{\phi_h}(\tilde{f}^{*})= & \int_{\mathcal{X}}\left(1-2\eta(x)\right)\tilde{f}^{*}(x)dP_{X}(x)+1\nonumber \\
= & \lim_{J\rightarrow\infty}\int_{\mathcal{X}}\left(1-2\eta(x)\right)\tilde{f}_{J}^{\ast}(x)dP_{X}(x)+1\nonumber \\
= & \lim_{J\rightarrow\infty}R_{\phi_h}(\tilde{f}_{J}^{\ast})\geq\lim_{J\rightarrow\infty}\inf_{\tilde{f}\in\widetilde{\MF}_{J}^{\ast}}R_{\phi_h}(\tilde{f}_{J}^{\ast})\label{eq:DCT1}\\
\geq & \lim_{J\rightarrow\infty}\inf_{\tilde{f}\in\widetilde{\MF}_{\MG,J}}R_{\phi_h}(\tilde{f}),\label{eq:DCT2}
\end{align}
where the first and third equalities follow from equation (\ref{eq:hinge risk expression});
the second equality follows from the dominated convergence theorem,
which holds because both $\left(1-2\eta\right)\tilde{f}_{J}^{\ast}(x)\rightarrow\left(1-2\eta(x)\right)\tilde{f}^{\ast}(x)$ and $\left|\left(1-2\eta(x)\right)\tilde{f}_{J}^{\ast}(x)\right|<1$ hold
$P_X$-almost everywhere; the first inequality follows from $\tilde{f}_{J}^{\ast}\in\widetilde{\MF}_{J}^{\ast}$;
the last inequality follows from $\widetilde{\MF}_{J}^{\ast}\subseteq\widetilde{\MF}_{\MG,J}$.

Lemma \ref{lem:step functions} shows that $\inf_{\tilde{f}\in\widetilde{\MF}_{\MG,J}}R_{\phi_h}(\tilde{f})=2\MR^{*}$
for any $J$. Hence, from equation (\ref{eq:DCT2}), we have 
\begin{align*}
    R_{\phi_h}(\tilde{f}^{*})\geq\lim_{J\rightarrow\infty}\inf_{\tilde{f}\in\widetilde{\MF}_{\MG,J}}R_{\phi_h}(\tilde{f})\geq2\MR^{*},
\end{align*}
On the other hand, Lemma \ref{lem:step functions} also shows that $R_{\phi_h}(\tilde{f}_{G^{\ast}})=2\mathcal{R}^{*}$. Therefore,
$\tilde{f}_{G^{\ast}}$ minimizes $R_{\phi_h}(\cdot)$ over $\MF_{\MG}$,
which proves statement (ii) of the theorem.

Next we will show that $\widetilde{\MF}_\MG$ is a classification-preserving reduction of $\MF_\MG$ (statement (i) of the theorem). To obtain a contradiction, suppose that $\tilde{f}^{\ast}$ does not minimize $R(\cdot)$ over $\MF_{\MG}$ (i.e., $G_{\tilde{f}^{\ast}}\notin{\cal G}^{\ast}$).
Then, from the proof of Lemma \ref{lem:step functions}, for any $J$ and $\tilde{f}\in\widetilde{\MF}_{J}^{\ast}$, $R_{\phi_h}(\tilde{f})>2 \MR^{*}$.
Therefore, from equation (\ref{eq:DCT1}),
\begin{align*}
R_{\phi_{h}}(\tilde{f}^{\ast})\geq\lim_{J\rightarrow\infty}\inf_{\tilde{f}\in\widetilde{\MF}_{J}^{\ast}}R_{\phi_h}(\tilde{f})> & 2{\cal R}^{\ast}.
\end{align*}
This contradicts the assumption that $\tilde{f}^{*}$ minimizes $R_{\phi}(\cdot)$ over
$\widetilde{\MF}_{\MG}$ because, as we have seen, $\tilde{f}_{G^{*}} (\in \widetilde{\MF}_{\MG})$ achieves $R_{\phi_h}(\tilde{f}_{G^{*}}) = 2\MR^{*}$. Therefore, $G_{\tilde{f}^{\ast}}\in{\cal G}^{\ast}$
must hold, i.e., $\tilde{f}^{\ast} \in \arg \inf_{f \in \MF_{\MG}}R(f)$. Since this discussion holds for any distribution $P$ on $\MY \times\MX$, $\widetilde{\MF}_\MG$ is a classification-preserving reduction of $\MF_\MG$.
\qedsymbol

\bigskip{}

\paragraph{Proof of Corollary \ref{cor:zhang's ineuality for admissible refinement}.}
Note that, for $G\in{\cal G}$,
\begin{align}
{\cal R}(G)&=  \int_{\MX}\left[\eta(x)1\left\{ x\notin G\right\} +(1-\eta(x))1\left\{ x\in G\right\} \right]dP_{X}(x)\nonumber \\
&=  \int_{\MX}\left[\eta(x)1\left\{ x\in G^{c}\right\} +(1-\eta(x))\left(1-1\left\{ x\in G^{c}\right\} \right)\right]dP_{X}(x)\nonumber \\
&=  -\int_{G^{c}}\left(1-2\eta(x)\right)dP_{X}(x)+P\left(Y=-1\right).\label{eq:classification risk_complement}
\end{align}

By equations (\ref{eq:classification risk at G}) and (\ref{eq:classification risk_complement}), $cR(f)$ can be written as
\begin{align*}
    cR(f) = \frac{1}{2}\left\{\int_{\MX}c(1-2\eta(x))\left(1\left\{ x\in G_{f}\right\} -1\left\{ x\notin G_{f}\right\} \right)dP_{X}(x)+c\right\}.
\end{align*}
By equation (\ref{eq:hinge risk expression}), the term inside the braces equals $R_{\phi}(\tilde{f}_{G_f})$.
Combining this result with Theorem \ref{thm:continuous functions} (i) leads to equation (\ref{eq:zhang's inequality}).

When $\widetilde{\MF}_{\MG}=\MF_\MG$, by equation (\ref{eq:zhang's inequality}) and Corollary \ref{corr:zhang's inequality},
\begin{align*}
    R_{\phi}(\tilde{f}_{G_f}) - R_{\phi}(f) \leq R_{\phi}(f)-\inf_{f\in {\MF}_{\MG}}R_{\phi}(f)
\end{align*}
holds. Combining this with equation (\ref{eq:zhang's inequality}) leads to the second result. \qedsymbol

\bigskip{}


\section{Proof of Theorems \ref{thm:statistical propery for excess classification risk} and \ref{thm:statistical propery for excess classification risk_2}}
\label{appx:proof 2}

This appendix provides proof of the results in Section \ref{sec:statistical property} with some auxiliary results. The results below are related to the theory of empirical processes. We refer to \cite{Alexander_1984}, \citet{Mammen_Tsybakov_1999}, \citet{Tsybakov_2004}, and \citet{MT17} for the general strategy of the proof.

Given $G \in \MG$, let $\mathbf{1}_G$ be an indicator function on $\MX$ such that $\mathbf{1}_G(x)=1\{x \in G\}$.  We first give the definition of bracketing entropy for a class of functions and a class of sets.

\bigskip{}

\begin{definition}[Bracketing entropy]\label{def:bracketing entropy}
(i) Let $\mathcal{\MF}$ be a class of functions on $\mathcal{X}$. For $f\in\MF$, let $\left\Vert f\right\Vert _{p,Q}{\equiv}\left(\int_{\mathcal{X}}\left|f(x)\right|^{p}dQ(x)\right)^{1/p}$. $\left\Vert \cdot\right\Vert _{p,Q}$ is the $L_{p}\left(Q\right)$-metric on $\mathcal{X}$, where $Q$ is a measure on $\mathcal{X}$. 
Given a pair of functions $\left(f_{1},f_{2}\right)$ with $f_1 \leq f_2$, let $[f_1,f_2]{\equiv}\{f \in \MF:f_1 \leq f \leq f_2\}$ be the bracket.   
Given $\epsilon>0$, let $N_{p}^{B}\left(\epsilon,\MF,Q\right)$ be the smallest $k$ such that there exist pairs of functions $\left(f_{j}^{L},f_{j}^{U}\right)$, $j=1,\ldots,k$,
with $f_{j}^{L}\leq f_{j}^{U}$ that satisfy $\left\Vert f_{j}^{U}-f_{j}^{L}\right\Vert _{p,Q}<\epsilon$ and 
\begin{align*}
\MF \subseteq  \cup_{j=1}^{k}\left[f_{j}^{L},f_{j}^{U}\right].
\end{align*}
We term $N_{p}^{B}\left(\epsilon,\MF,Q\right)$ the $L_{p}\left(Q\right)$-bracketing number of $\MF$, and $H_{p}^{B}\left(\epsilon,\MF,Q\right) \equiv \log N_{p}^{B}\left(\epsilon,\MF,Q\right)$ the $L_{p}\left(Q\right)$-bracketing entropy of $\MF$.
We also refer to $[f_{j}^{L},f_{j}^{U}]$ as the $\epsilon$-bracket with respect to $L_{p}(Q)$ if and only if $\left\Vert f_{j}^{U}-f_{j}^{L}\right\Vert _{p,Q}<\epsilon$ holds. \\  
(ii) Given a class of measurable subsets $\MG \subseteq 2^{\MX}$, let $\MH_{\MG}\equiv\{\mathbf{1}_G:G \in \MG\}$. We define $H_{p}^{B}\left(\epsilon,\MG,Q\right) \equiv H_{p}^{B}\left(\epsilon,\MH_{\MG},Q\right)$ and term this the $L_{p}\left(Q\right)$-bracketing entropy of $\MG$.
\end{definition}

\bigskip{}

Note that in the definition of $N_{p}^{B}\left(\epsilon,\MF,Q\right)$, the functions $f_j^L$ and $f_j^U$ do not have to belong to $\MF$. Note also that if $\MF \subseteq \widetilde{\MF}$, $H_{p}^{B}\left(\epsilon,\MF,Q\right) \leq H_{p}^{B}\left(\epsilon,\widetilde{\MF},Q\right)$ holds. When $\mathbf{1}_G \in \MF$ for all $G \in \MG$, $H_{p}^{B}\left(\epsilon,\MG,Q\right) \leq H_{p}^{B}\left(\epsilon,\MF,Q\right)$ holds.



The following theorem gives a finite-sample upper bound on  the mean of the estimation error in Section \ref{sec:statistical property}, auxiliary results of which are provided below. 

\bigskip{}

\begin{theorem}\label{thm:estimation error}
Let $\Check{\MF}$ be a class of classifiers whose members satisfy $-1 \leq f \leq 1$. Suppose that $\MP$ is a class of distributions on $\MY \times \MX$ such that there exist positive constants $C$ and $r$ for which
\begin{align*}
H_{1}^{B}\left(\epsilon,\Check{\MF},P_{X}\right) \leq C\epsilon^{-r} 
\end{align*}
holds for any $P \in \MP$ and $\epsilon>0$. Let $q_n$ and $\tau_n$ be as in Theorem \ref{thm:statistical propery for excess classification risk}. Let $\hat{f}$ minimize $\hat{R}_{\phi_h}(\cdot)$ over $\Check{\MF}$. Then the following holds:
\begin{align}
\sup_{P\in\MP}E_{P^{n}}\left[R_{\phi_{h}}(\hat{f})-\inf_{f\in\Check{\MF}}R_{\phi_{h}}(f)\right]	\leq\begin{cases}
\begin{array}{c}
4D_{1}\tau_{n}+8D_{2}\exp\left(-D_{1}^{2}q_{n}^{2}\right)\\
4D_{3}\tau_{n}+4n^{-1}D_{4}
\end{array} & \begin{array}{l}
\mbox{for }r\geq1\\
\mbox{for }0 < r<1
\end{array}\end{cases} \label{eq:estimation error bound} \notag
\end{align}
for some positive constants $D_{1}, D_{2}, D_{3}, D_{4}$, which depend only on $C$ and $r$.
\end{theorem}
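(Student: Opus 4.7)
The plan is to prove Theorem \ref{thm:estimation error} via a fairly standard empirical-process argument, the essential inputs being (a) the Lipschitz continuity of the hinge loss on $[-1,1]$ and (b) a maximal inequality for bracketing-entropy classes.

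First I would reduce the excess hinge risk to a uniform deviation. Since $\hat{f}$ minimizes $\hat{R}_{\phi_h}$ over $\Check{\MF}$, the usual two-term argument gives, for every $\varepsilon>0$ and every $f^{\varepsilon}\in\Check{\MF}$ with $R_{\phi_h}(f^{\varepsilon})\leq \inf_{f\in\Check{\MF}}R_{\phi_h}(f)+\varepsilon$,
\begin{equation*}
R_{\phi_h}(\hat{f})-\inf_{f\in\Check{\MF}}R_{\phi_h}(f)
\leq \bigl(R_{\phi_h}(\hat{f})-\hat{R}_{\phi_h}(\hat{f})\bigr)+\bigl(\hat{R}_{\phi_h}(f^{\varepsilon})-R_{\phi_h}(f^{\varepsilon})\bigr)+\varepsilon
\leq 2\sup_{f\in\Check{\MF}}\bigl|\hat{R}_{\phi_h}(f)-R_{\phi_h}(f)\bigr|+\varepsilon.
\end{equation*}
Letting $\varepsilon\downarrow 0$ and taking expectations under $P^n$ reduces the theorem to bounding $E_{P^n}\bigl[\sup_{f\in\Check{\MF}}|\hat{R}_{\phi_h}(f)-R_{\phi_h}(f)|\bigr]$ uniformly over $P\in\MP$.

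Second, I would lift the bracketing-entropy hypothesis on $\Check{\MF}$ to the induced loss class $\MH\equiv\{(y,x)\mapsto \phi_h(yf(x)):f\in\Check{\MF}\}$. Given an $L_1(P_X)$-bracket $[f_L,f_U]$ of width $\varepsilon$ for $\Check{\MF}$, define
\begin{equation*}
h_L(y,x)\equiv 1\{y=1\}(1-f_U(x))+1\{y=-1\}(1+f_L(x)),\quad h_U(y,x)\equiv 1\{y=1\}(1-f_L(x))+1\{y=-1\}(1+f_U(x)).
\end{equation*}
Then $h_L\leq \phi_h(Yf(X))\leq h_U$ for any $f\in[f_L,f_U]$ and $\|h_U-h_L\|_{1,P}=\|f_U-f_L\|_{1,P_X}\leq \varepsilon$, so $H_1^B(\varepsilon,\MH,P)\leq H_1^B(\varepsilon,\Check{\MF},P_X)\leq C\varepsilon^{-r}$, and every $h\in\MH$ takes values in $[0,2]$.

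Third, I would apply a maximal inequality for uniformly bounded bracketing-entropy classes (the type derived in \citet{Alexander_1984}, \citet{Mammen_Tsybakov_1999}, and used in essentially the same form by \citet{MT17}) to obtain the claimed rate $\tau_n$ together with the exponential tail term. Specifically, by a peeling-in-levels argument combined with Bernstein's inequality, one obtains, for some positive constants $D_1,D_2,D_3,D_4$ depending only on $(C,r)$,
\begin{equation*}
E_{P^n}\Bigl[\sup_{h\in\MH}|P_n h-Ph|\Bigr]
\leq \begin{cases} 2D_1\tau_n+4D_2\exp(-D_1^2q_n^2) & r\geq 1,\\ 2D_3\tau_n+2n^{-1}D_4 & 0<r<1.\end{cases}
\end{equation*}
The three regimes correspond to balancing the bracketing-entropy integral against the Bernstein variance: for $r<1$ the Dudley integral $\int_0^{1}\sqrt{H_1^B(\varepsilon,\MH,P)}\,d\varepsilon$ is finite yielding parametric $n^{-1/2}$ rates with a single-term remainder; for $r=1$ the integral diverges logarithmically, costing a $\log n$ factor; for $r\geq 2$ one optimizes the peel radius $\delta$ by solving $\delta\asymp n^{-1/2}\delta^{-r/2}$, giving $\delta\asymp n^{-1/(r+1)}$. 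Multiplying by the factor $2$ from the first paragraph yields the stated bound.

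The main technical obstacle is the bookkeeping in the peeling argument for the $r\geq 1$ regimes, in particular tracking the exponential tail $\exp(-D_1^2 q_n^2)$ with constants that depend only on $(C,r)$ and not on $P$. I would follow the implementation used in Proposition 3 of \citet{MT17}, adapting their chaining estimate from $0$-$1$ loss indicator classes to the Lipschitz hinge-loss class $\MH$; the boundedness of $\MH$ in $[0,2]$ and the entropy bound derived above are exactly the inputs needed, so the argument carries over almost verbatim.
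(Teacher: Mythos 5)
Your proposal is correct and follows essentially the same route as the paper's proof: the same two-term decomposition reducing the excess hinge risk to a uniform deviation, the same transfer of the $L_1(P_X)$ bracketing-entropy bound from $\Check{\MF}$ to the relevant loss class, and the same Alexander/Mammen--Tsybakov-style maximal inequality (Proposition \ref{prop:empirical process result} and Lemma \ref{lem:empirical process result_2} in the appendix) to produce the $\tau_n$ rate with exponential tail. The only cosmetic difference is that the paper exploits the exact linearity $\phi_h(yf(x)) = 1 - yf(x)$ on $[-1,1]$ to identify the centered process directly with one indexed by the product class $\{y\cdot h(x): h=(f+1)/2\in[0,1]\}$, for which Corollary \ref{cor:empirical process result} is tailored, whereas you bracket the composed loss class $\MH$ directly via the monotonicity of $\phi_h$; both constructions yield the same $C\epsilon^{-r}$ entropy bound and lead to the same bound after absorbing range-normalization factors into the constants.
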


\begin{proof}
Fix $P\in\mathcal{P}$. Let $\Check{f}^{\ast}$ minimize $R_{\phi_h}(\cdot)$ over $\Check{\MF}$. Define a class of functions $\dot{\MF}\equiv\left\{ (f+1)/2:f\in \Check{\MF}\right\}$, which normalizes $\Check{\MF}$ so that $0\leq f \leq 1$ for all $f \in \dot{\MF}$. 

A standard argument gives
\begin{align}
E_{P^n}\left[R_{\phi_{h}}(\hat{f})-\inf_{f\in \Check{\MF}}R_{\phi_{h}}(f)\right] & \leq  E_{P^n}\left[R_{\phi_{h}}(\hat{f})-\hat{R}_{\phi_{h}}(\hat{f})+\hat{R}_{\phi_{h}}(\Check{f}^{\ast})-R_{\phi_{h}}(\Check{f}^{\ast})\right]\tag*{} \\
 & \left(\because\hat{R}_{\phi_{h}}(\hat{f})\leq\hat{R}_{\phi_{h}}(\Check{f}^{\ast})\right)\notag \\
&=  2E_{P^n}\left[R_{\phi_{h}}\left(\frac{\hat{f}+1}{2}\right)-\hat{R}_{\phi_{h}}\left(\frac{\hat{f}+1}{2}\right) \right] \tag*{}\\ & + 2E_{P^n}\left[\hat{R}_{\phi_{h}}\left(\frac{\Check{f}^{\ast}+1}{2}\right)-R_{\phi_{h}}\left(\frac{\Check{f}^{\ast}+1}{2}\right)\right]\tag*{} \\
&\leq  4\sup_{f\in \dot{\MF}}E_{P^n}\left[\left|R_{\phi_{h}}(f)-\hat{R}_{\phi_{h}}(f)\right|\right]. \label{eq:empirical process}
\end{align}
Since $R_{\phi_{h}}(f)-\hat{R}_{\phi_{h}}(f)$ can be seen as the centered empirical process indexed by $f\in \dot{\MF}$, we can apply results in empirical process theory to (\ref{eq:empirical process}) to obtain a finite-sample upper bound on  the mean of the excess hinge risk. 

We follow the general strategy of Theorem 1 in \citet{Mammen_Tsybakov_1999}
and Proposition B.1 in \citet{MT17}. 
Note that 
\begin{align}
\sup_{f\in \dot{\MF}}E_{P^n}\left[\left|R_{\phi_{h}}(f)-\hat{R}_{\phi_{h}}(f)\right|\right] 
 = \sup_{f\in \dot{\MF}}E_{P^n}\left[\left|E_P\left(Yf(x)\right) - \frac{1}{n}\sum_{i=1}^{n}Y_{i}f\left(X_{i}\right)\right|\right] \label{eq:empirical process expression}
\end{align}
and that 
\begin{align*}
    \sup_{f\in \dot{\MF}}\left|E_P\left(Yf(x)\right)-\frac{1}{n}\sum_{i=1}^{n}Y_{i}f\left(X_{i}\right)\right|\leq 2
\end{align*}
with probability one.

We first prove the result for the case of $r \geq 1$. For any $f\in\dot{\MF}$ and $D>0$, 
\begin{alignat*}{1}
&\frac{\sqrt{n}}{q_{n}}\sup_{f\in\dot{\MF}}E_{P^{n}}\left[\left|E\left(Yf(x)\right)-\frac{1}{n}\sum_{i=1}^{n}Y_{i}f\left(X_{i}\right)\right|\right]\\ 
\leq&\  D +\frac{2\sqrt{n}}{q_{n}}P^{n}\left(\sup_{f\in \dot{\MF}}\frac{\sqrt{n}}{q_{n}}\left|E\left(Yf(x)\right)-\frac{1}{n}\sum_{i=1}^{n}Y_{i}f\left(X_{i}\right)\right|>D\right).
\end{alignat*}
We apply Corollary \ref{cor:empirical process result} by setting $Z=\left(Y,X\right)$, $g\left(z_{1}\right)=z_{1}$ and
$\mathcal{H}=\dot{\MF}$ as appear in the statement of the corollary.
Note that, since $\dot{\MF}$ is an affine transformation multiplying $1/2$ to $f \in \Check{\MF}$, $H_{1}^{B}(\epsilon,\Check{\MF},P_2) = H_{1}^{B}(2\epsilon,\MH,P_2)$ holds.
Then, by Corollary \ref{cor:empirical process result} shown below with $K=2^{-r}C$,  there exist $D_{1},D_{2},D_{3}>0$, depending only on $C$ and $r$, such that 
\begin{align*}
P^{n}\left(\sup_{f\in\dot{\MF}}\frac{\sqrt{n}}{q_{n}}\left|E\left(Yf(x)\right)-\frac{1}{n}\sum_{i=1}^{n}Y_{i}f\left(X_{i}\right)\right|>D\right)\leq  D_{2}\exp\left(-D^{2}q_{n}^{2}\right),
\end{align*}
for $D_{1}\leq D\leq D_{3}\sqrt{n}/q_{n}$. Thus when $r \geq 1$, we have 
\begin{align}
\tau_{n}^{-1}E_{P^{n}}\left[R_{\phi_h}(\hat{f})-\inf_{f\in\Check{\MF}}R_{\phi_h}(f)\right] & \leq 
4\tau_{n}^{-1}\sup_{f\in \dot{\MF}}E_{P^{n}}\left[\left|E\left(Yf(x)\right)-\frac{1}{n}\sum_{i=1}^{n}Y_{i}f\left(X_{i}\right)\right|\right] \nonumber \\
& \leq 4D_1+8\tau_{n}^{-1}D_{2}\exp\left(-D_{1}^{2}q_{n}^{2}\right), \nonumber 
\end{align}
which leads to the result for the case of $r \geq 1$.

The result for the case of $0 < r <1$ follows immediately  by applying Lemma \ref{lem:empirical process result_2} to equation (\ref{eq:empirical process expression}), where we set $Z=\left(Y,X\right)$, $g\left(z_{1}\right)=z_{1}$, and
$\mathcal{H}=\dot{\MF}$.
\end{proof}

\bigskip{}

We now give the proofs of Theorems \ref{thm:statistical propery for excess classification risk} and \ref{thm:statistical propery for excess classification risk_2}.

\paragraph{Proof of Theorem \ref{thm:statistical propery for excess classification risk}.}
Let $P \in \MP$ be fixed. Define $\hat{f}^{\dagger}\equiv \tilde{f}_{G_{\hat{f}}}$. Since $R(\hat{f}^{\dagger})=R(\hat{f})$, it follows from equation (\ref{eq:excess risk decomposition}) that
\begin{align}
R(\hat{f}) - \inf_{f\in \MF_\MG}R(f) = 
    R(\hat{f}^{\dagger}) - \inf_{f\in \MF_\MG}R(f) = \frac{1}{2}\left(R_{\phi_h}(\hat{f}^{\dagger}) - \inf_{f\in \widetilde{\MF}_{\MG}}R_{\phi_h}(f) \right). \label{eq:decomposition_admissible refinement}
\end{align}
When (\ref{eq:bracketing entoropy conditoin_G}) holds for all $\epsilon>0$, the result follows by applying Theorem \ref{thm:estimation error} to (\ref{eq:decomposition_admissible refinement}).

We consider the case when (\ref{eq:bracketing entoropy conditoin_tildeF}) holds for all $\epsilon>0$. Define a class of step functions $\MI_{\MG} \equiv \{\tilde{f}_{G}: G \in \MG\}$.
We now show that (A) $\hat{f}^{\dagger}$ minimizes $\hat{R}_{\phi_h}(\cdot)$ over $\MI_{\MG}$, and that (B) $\inf_{f\in \widetilde{\MF}_{\MG}}R_{\phi_h}(f) = \inf_{f\in \MI_{\MG}}R_{\phi_h}(f)$. If both hold, we can apply Theorem \ref{thm:estimation error} to the excess hinge rink in (\ref{eq:decomposition_admissible refinement}) with $\Check{\MF}$ replaced by $\MI_{\MG}$. 

We first prove (B). Since 
\begin{align*}
    \inf_{f\in \MI_{\MG}}R_{\phi_h}(f)  = 2\inf_{G\in \MG} \MR(G)
     = 2\MR^{\ast},
\end{align*}
Theorem \ref{thm:continuous functions} (i) shows that $\inf_{f\in \widetilde{\MF}_{\MG}}R_{\phi_h}(f) = \inf_{f\in \MI_{\MG}}R_{\phi_h}(f)$.

We next prove (A). Note first that $\hat{f}^{\dagger} \in \MI_{\MG}$ holds. Let $P_n$ be the empirical distribution of the sample $\{(Y_{i},X_{i}):i=1,\ldots,n\}$. Replacing $P$ with $P_n$ in Theorem \ref{thm:continuous functions} shows that $\hat{f}$ minimizes $\hat{R}(\cdot)$ over $\widetilde{\MF}_{\MG}$. Hence $\hat{\MR}(G)\equiv \inf_{f\in \MF_G}\hat{R}(f)$ is minimized by $G_{\hat{f}}$ over $\MG$. Then Theorem \ref{thm:continuous functions} (ii) with $P$ replaced by $P_n$ shows that the new classifier $\hat{f}^{\dagger}$ also minimizes $\hat{R}_{\phi_h}(\cdot)$ over $\widetilde{\MF}_{\MG}$. 
Then replacing $R_{\phi_h}$ with $\hat{R}_{\phi_h}$ in the statement of (B), $\hat{f}^{\dagger}$ minimizes $\hat{R}_{\phi_h}(\cdot)$ over $\MI_{\MG}$.

From the definitions of $H_{1}^{B}(\epsilon,\MG,P_X)$ and $\MI_{\MG}$, we have $H_{1}^{B}(\epsilon,\MI_{\MG},P_X)=H_{1}^{B}(2\epsilon,\MG,P_X)$.
Therefore, we can apply Theorem \ref{thm:estimation error}, with $\Check{\MF}$ replaced by $\MI_{\MG}$, to (\ref{eq:decomposition_admissible refinement}) and then obtain the inequality in (\ref{eq:upper bound_statistical property_admissible refinement}). \qedsymbol

\bigskip{}

\paragraph{Proof of Theorem \ref{thm:statistical propery for excess classification risk_2}.}
The result in Theorem \ref{thm:statistical propery for excess classification risk_2} follows by combining equation (\ref{eq:decomposition}) and Theorem \ref{thm:estimation error}. \qedsymbol

\bigskip{}

The following corollary is similar to Corollary D.1 in \cite{MT17}. 
The difference is that the class of functions $\MH$ in the following corollary does not need to be a class of binary functions. 

\bigskip{}
\begin{corollary} \label{cor:empirical process result}
Let $Z=\left(Z_{1},Z_{2}\right)\sim P$, and $\left\{ Z_{i}\right\} _{i=1}^{n}$
be a sequence of random variables that are i.i.d. as $Z$.
Denote by $P_{2}$ the marginal distribution of $Z_{2}$. Suppose
$P_{2}$ is absolutely continuous with respect to Lebesugue measure
and its density is bounded from above by a finite constant $A>0$.
Let $\MF$ be a class of real-valued functions of the form
$f\left(z\right)=f\left(z_{1},z_{2}\right)=g\left(z_{1}\right)\cdot h\left(z_{2}\right)$,
where $h\in\mathcal{H}$, $\mathcal{H}$ is a class of functions with
values in $\left[0,1\right]$, and $g$ takes values in $\left[-1,1\right]$.
Suppose $\mathcal{H}$ satisfies 
\begin{align*}
H_{1}^{B}\left(\epsilon,\mathcal{H},P_{2}\right)\leq K\epsilon^{-r}
\end{align*}
for some constants $K>0$ and $r \geq1$ and for all $\epsilon>0$.
Then there exist positive constants $D_{1,}D_{2},D_{3}$, depending
only on $K$ and $r$, such that for $n\geq3$:
\begin{align*}
P^{n}\left(\sup_{f\in\MF}\left|\frac{1}{\sqrt{n}}\sum_{i=1}^{n}\left(f\left(Z_{i}\right)-E_P[f\left(Z_{i}\right)]\right)\right|>xq_{n}\right)\leq  D_{2}\exp\left(-x^{2}q_{n}^{2}\right),
\end{align*}
for $D_{1}\leq x\leq D_{3}\sqrt{n}/q_{n}$, where 
\begin{align*}
q_{n}= \begin{cases}
\begin{array}{c}
\log n\\
n^{\left(r-1\right)/2\left(r+2\right)}
\end{array} \begin{array}{l}
r=1\\
r>1
\end{array}\end{cases}.
\end{align*}
\end{corollary}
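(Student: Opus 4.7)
}
My plan is to reduce the assertion to Corollary D.1 of \citet{MT17} by first transferring the $L_1(P_2)$-bracketing-entropy bound from $\mathcal{H}$ to the product class $\mathcal{F}$, and then invoking the peeling/chaining machinery of \citet{Alexander_1984} and \citet{Mammen_Tsybakov_1999} exactly as \citet{MT17} do.

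For the entropy transfer, I decompose the fixed function $g$ into its positive and negative parts, $g = g^{+} - g^{-}$ with $g^{\pm} \in [0, 1]$. Given any $L_1(P_2)$-$\epsilon$-bracket $[h^L, h^U]$ for $h \in \mathcal{H}$, I form the pair
\[
f^L(z_1, z_2) = g^{+}(z_1) h^L(z_2) - g^{-}(z_1) h^U(z_2), \quad
f^U(z_1, z_2) = g^{+}(z_1) h^U(z_2) - g^{-}(z_1) h^L(z_2),
\]
which brackets every $f = g \cdot h$ with $h \in [h^L, h^U]$. Since $|g| \leq 1$, its $L_1(P)$-width satisfies
\[
\|f^U - f^L\|_{1, P} = E_P\!\bigl[|g(Z_1)| (h^U(Z_2) - h^L(Z_2))\bigr] \leq \|h^U - h^L\|_{1, P_2} < \epsilon,
\]
so $H_1^B(\epsilon, \mathcal{F}, P) \leq H_1^B(\epsilon, \mathcal{H}, P_2) \leq K \epsilon^{-r}$, and $\mathcal{F}$ has uniform envelope $1$.

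For the uniform deviation bound, I will apply the bracketing maximal inequality that underlies Corollary D.1 of \citet{MT17}. The class $\mathcal{F}$ has envelope $1$ and polynomial bracketing entropy, so a standard chaining/peeling argument yields, for constants $D_1, D_2, D_3$ depending only on $K$ and $r$,
\[
P^n\!\left(\sup_{f \in \mathcal{F}} \left|\tfrac{1}{\sqrt{n}} \textstyle\sum_{i=1}^n (f(Z_i) - E_P f(Z_i))\right| > x q_n \right) \leq D_2 \exp(-x^2 q_n^2)
\]
for $D_1 \leq x \leq D_3 \sqrt{n}/q_n$ and $n \geq 3$. The stated rate $q_n$ comes from balancing the bracketing integral $\int_0^1 H_1^B(\epsilon, \mathcal{F}, P)^{1/2} d\epsilon \asymp \int_0^1 \epsilon^{-r/2} d\epsilon$ against $\sqrt{n}$: for $r = 1$ this produces the log correction $q_n = \log n$, and for $r > 1$ the optimal truncation $\epsilon_n \asymp n^{-1/(r+2)}$ yields $q_n = n^{(r-1)/(2(r+2))}$.

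The main obstacle is checking that the \citet{MT17} argument, stated for $\{0, 1\}$-indicator classes, still goes through for $\mathcal{F}$. The only place the binary structure is used is the Bernstein variance control $\mathrm{Var}(h) \leq E[h]$. For $f \in [f^L, f^U]$ with $|f| \leq 1$ and $\|f^U - f^L\|_{1, P} < \epsilon$, one has $\mathrm{Var}(f - f^L) \leq E[(f^U - f^L)^2] \leq 2 \|f^U - f^L\|_{1, P} \leq 2 \epsilon$, which furnishes the same variance proxy (up to a constant). Substituting this into the peeling step preserves every exponent in the argument and only inflates $D_1, D_2, D_3$ by factors of $K$ and $r$; no other modification is required.
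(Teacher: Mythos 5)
Your overall route is the same as the paper's: transfer bracketing entropy from $\mathcal{H}$ to the product class using a positive/negative split of $g$, then invoke an \citet{Alexander_1984}-style chaining bound (which the paper isolates as Proposition \ref{prop:empirical process result}, precisely because Corollary~D.1 of \citet{MT17} is stated only for indicator classes). Your bracket construction $f^{L}=g^{+}h^{L}-g^{-}h^{U}$, $f^{U}=g^{+}h^{U}-g^{-}h^{L}$ is essentially identical to the one inside the paper's Proposition~\ref{prop:empirical process result}, and your observation that the only missing ingredient for non-binary classes is the variance proxy $E[(f^{U}-f^{L})^{2}]\leq 2\|f^{U}-f^{L}\|_{1,P}$ is exactly right.

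However, there is a concrete error in the step where you explain why $q_{n}$ takes the stated form, and it hides the one nontrivial piece of bookkeeping. You write the chaining integral as $\int_{0}^{1}H_{1}^{B}(\epsilon,\mathcal{F},P)^{1/2}\,d\epsilon \asymp \int_{0}^{1}\epsilon^{-r/2}\,d\epsilon$. But the chaining/peeling argument is driven by $L_{2}$-bracketing entropy, not $L_{1}$. The $L_{1}\!\to\!L_{2}$ conversion you invoke for the variance bound (an $L_{1}$ bracket of width $\epsilon$ is an $L_{2}$ bracket of width $\asymp\sqrt{\epsilon}$) \emph{doubles} the entropy exponent: $H_{2}^{B}(\delta,\mathcal{F},P)\leq H_{1}^{B}(\delta^{2}/2,\mathcal{F},P)\lesssim \delta^{-2r}$, so the relevant integrand is $\delta^{-r}$, not $\epsilon^{-r/2}$. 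Your expression would give a finite integral at $r=1$ (no logarithmic correction) and the wrong polynomial exponent for $r>1$; the rates $q_{n}=\log n$ at $r=1$ and $q_{n}=n^{(r-1)/(2(r+2))}$ at $r>1$ emerge only after doubling. The paper makes this explicit by converting $H_{1}^{B}(\epsilon,\mathcal{H},P_{2})\leq K\epsilon^{-r}$ into $H_{2}^{B}(\epsilon,\mathcal{H},P_{2})\leq K\epsilon^{-2r}$ (so its Proposition~\ref{prop:empirical process result}, whose parameter must be $\geq 2$, is applied with exponent $2r$), which is precisely what aligns the condition $r\geq 1$ here with the proposition's $\tilde r\geq 2$. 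So: your bracket construction, your variance bound, and your appeal to a chaining maximal inequality are all sound, but you need to carry the $L_{1}\!\to\!L_{2}$ width conversion through to the entropy exponent before computing $q_{n}$.
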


\begin{proof}
Let $\left[h_{j}^{L},h_{j}^{U}\right]$, $j=1,\ldots,N_{1}^{B}(\epsilon,\mathcal{H},P_{2})$,
be a set of $\epsilon$-brackets of $\mathcal{H}$ with respect to $L_1(P_2)$ such that $\left\Vert h_{j}^{U}-h_{j}^{L}\right\Vert _{1,P_{2}}<\epsilon$
and $\mathcal{H}\subseteq\cup_{j=1}^{N_{1}^{B}\left(\epsilon,\mathcal{H},P_{2}\right)}\left[h_{j}^{L},h_{j}^{U}\right]$.
Since $\left|h_{j}^{U}-h_{j}^{L}\right|<1$, $\left\Vert h_{j}^{U}-h_{j}^{L}\right\Vert _{2,P_{2}}^{2}\leq\left\Vert h_{j}^{U}-h_{j}^{L}\right\Vert _{1,P_{2}}<\epsilon$
holds. We hence have 
\begin{align*}
 H_{2}^{B}\left(\epsilon,\mathcal{H},P_2\right)\leq H_{1}^{B}\left(\epsilon^{2},\MH,P_2\right)\leq K\epsilon^{-2r}.   
\end{align*}
The result immediately follows by applying Proposition \ref{prop:empirical process result}.
\end{proof}

\bigskip{}

\begin{lemma}\label{lem:empirical process result_2}
Maintain the same definitions and assumptions as in Corollary \ref{cor:empirical process result} with $r \geq 1$ replaced by $0< r<1$.
Then, there exist positive constants $D_3$ and $D_4$, depending only on $K$ and $r$, such that: 
\begin{align*}
\sup_{f\in\MF}E_{P^{n}}\left[\left|\frac{1}{n}\sum_{i=1}^{n}f(Z_{i})-E_{P}\left[f(Z)\right]\right|\right]\leq  \frac{D_{3}}{\sqrt{n}}+\frac{D_{4}}{n}.
\end{align*}
\end{lemma}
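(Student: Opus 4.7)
The plan is to read the stated inequality as the expected supremum of the empirical process, $E_{P^n}\!\left[\sup_{f\in\MF}|P_n f - P f|\right]$, which is what is actually used in the proof of Theorem \ref{thm:estimation error} (there $\hat{f}$ is data-dependent, so a $\sup$-outside-$E$ bound is required). The key input will be a classical bracketing maximal inequality, and the regime $0<r<1$ makes the Dudley bracketing integral converge, which is what ultimately drives the $n^{-1/2}$ rate.

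First I would translate bracketing of $\MH$ into bracketing of $\MF$. Given any covering of $\MH$ by $L_1(P_2)$--$\epsilon$-brackets $[h_j^L,h_j^U]$, split $g=g^+ - g^-$ with $g^{\pm}\in[0,1]$ and form
$[\,g^+ h_j^L - g^- h_j^U,\; g^+ h_j^U - g^- h_j^L\,]$.
These are genuine brackets for $\MF$, and their $L_1(P)$-widths satisfy
$\int |g|(h_j^U - h_j^L)\,dP \leq \|h_j^U - h_j^L\|_{1,P_2} < \epsilon$, yielding $H_1^B(\epsilon,\MF,P)\leq K\epsilon^{-r}$. Because every $f\in\MF$ lies in $[-1,1]$ and the bracket differences lie in $[0,2]$, the elementary inequality $\|\cdot\|_{2,P}^2 \leq 2\|\cdot\|_{1,P}$ upgrades this to $H_2^B(\epsilon,\MF,P)\leq K'\epsilon^{-2r}$ for some $K'$ depending only on $K,r$.

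Second, since $r<1$, the Dudley bracketing integral converges:
\begin{equation*}
J_{[\,]}(1,\MF,L_2(P)) \;\equiv\; \int_0^1 \sqrt{H_2^B(\epsilon,\MF,P)}\,d\epsilon \;\leq\; \sqrt{K'}\int_0^1 \epsilon^{-r}\,d\epsilon \;=\; \frac{\sqrt{K'}}{1-r}.
\end{equation*}
Because all $f\in\MF$ are uniformly bounded by $1$, applying a standard bracketing maximal inequality for bounded function classes (e.g.\ Lemma 3.4.2 of van der Vaart and Wellner, or the finite-sample version of Proposition \ref{prop:empirical process result} used to prove Corollary \ref{cor:empirical process result}) gives
\begin{equation*}
E_{P^n}\Bigl[\sup_{f\in\MF}|P_n f - Pf|\Bigr] \;\leq\; \frac{C_1}{\sqrt{n}}\,J_{[\,]}(1,\MF,L_2(P)) \;+\; \frac{C_2}{n},
\end{equation*}
where the first term is the chaining contribution and the second absorbs the Bernstein/truncation remainder from handling the finest bracket level. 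Plugging in the bound on $J_{[\,]}$ yields the desired $D_3/\sqrt n + D_4/n$ with $D_3,D_4$ depending only on $K$ and $r$.

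The main obstacle is bookkeeping rather than conceptual: one has to invoke a bracketing inequality whose two constants can be traced back to $K$ and $r$ only, and to reconcile the literal $\sup_f E_{P^n}[\cdots]$ in the statement with the $E_{P^n}[\sup_f \cdots]$ form actually needed in Theorem \ref{thm:estimation error}. Once the integral $J_{[\,]}(1,\MF,L_2(P))$ is seen to be finite (which is exactly where $r<1$ is used, in contrast to the $r\geq 1$ case handled by Corollary \ref{cor:empirical process result} via the Alexander-type peeling bound), the $\sqrt n$-rate is essentially automatic and the $1/n$ term is a lower-order artifact.
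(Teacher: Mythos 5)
Your proposal is correct and follows essentially the same route as the paper: construct brackets for $\MF=\{g\cdot h\}$ from brackets of $\MH$, upgrade the $L_1$ bracketing bound to $L_2$ via boundedness to get $H_2^B(\epsilon,\MF,P)\leq K'\epsilon^{-2r}$, observe that $r<1$ makes the Dudley bracketing integral $\int_0\epsilon^{-r}\,d\epsilon$ finite, and feed this into a bracketing maximal inequality (the paper cites Proposition 3.5.15 of Giné and Nickl (2016) rather than van der Vaart and Wellner, but the estimate is of the same form and produces the same $J/\sqrt n + J^2/n$ structure). Your observation about reading the left side as $E_{P^n}[\sup_f\cdots]$ rather than $\sup_f E_{P^n}[\cdots]$ is also consistent with what the paper actually proves, since the maximal inequality it invokes controls the expected supremum.
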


\begin{proof}
We look to apply Proposition 3.5.15 in \cite{Gine_Nickl_2016}. Note first that $\left|f\right| \leq1$
and $\left\Vert f\right\Vert _{2,P}\leq1$ for all $f\in\MF$. Then we can apply
Proposition 3.5.15 in \cite{Gine_Nickl_2016}, with $F=1$ and $\delta=1$, and obtain
\begin{align}
\sup_{f\in\MF}E_{P^{n}}\left[\left|\frac{1}{n}\sum_{i=1}^{n}f(Z_{i})-E_{P}\left[f(Z)\right]\right|\right] & \leq  \left(\frac{58}{\sqrt{n}}+\frac{1}{3n}\int_{0}^{2}\sqrt{\log\left(2N_{2}^{B}\left(\epsilon,\MF,P\right)\right)}d\epsilon\right) \notag \\
 & \times\int_{0}^{2}\sqrt{\log\left(2N_{2}^{B}\left(\epsilon,\MF,P\right)\right)}d\epsilon. \notag \\
& \leq \left(\frac{58}{\sqrt{n}}+\frac{2}{3n}+\frac{1}{3n}\int_{0}^{2}\sqrt{H_{2}^{B}\left(\epsilon,\MF,P\right)}d\epsilon\right) \notag \\
 & \times\left(\frac{2}{3}+\frac{1}{3}\int_{0}^{2}\sqrt{H_{2}^{B}\left(\epsilon,\MF,P\right)}d\epsilon\right). \label{eq:Gini_Nickl_2016_bound}
\end{align}
By combining the arguments from the proofs of Corollary \ref{cor:empirical process result} and Proposition
\ref{prop:empirical process result} below, we have
\begin{align*}
H_{2}^{B}\left(\epsilon,\MF,P\right)\leq  K\epsilon^{-2r}.
\end{align*}
Substituting this upper bound into (\ref{eq:Gini_Nickl_2016_bound}) yields
\begin{align*}
\sup_{f\in\MF}E_{P^{n}}\left[\left|\frac{1}{n}\sum_{i=1}^{n}f(Z_{i})-E_{P}\left[f(Z)\right]\right|\right] 
&\leq   \left(\frac{58}{\sqrt{n}}+\frac{2}{3n}+\frac{1}{3n}\int_{0}^{2}K\epsilon^{-r}d\epsilon\right)\\  &\times  \left(\frac{2}{3}+\frac{1}{3}\int_{0}^{2}K\epsilon^{-r}d\epsilon\right)\\
&=  \left(\frac{58}{\sqrt{n}}+\frac{2}{3n}+\frac{2^{1-r}K}{3n(1-r)}\right) 
 \left(\frac{2}{3}+\frac{2^{1-r}K}{3(1-r)}\right).
\end{align*}
Therefore, setting 
\begin{align*}
D_{3}{\equiv} & \left(\frac{116}{3}+\frac{29\cdot2^{2-r}K}{3(1-r)}\right),\\
D_{4}{\equiv} & \left(\frac{2}{3}+\frac{2^{1-r}K}{3(1-r)}\right)^{2},
\end{align*}
leads to the result.
\end{proof}

\bigskip{}

\begin{proposition} \label{prop:empirical process result}

Let $Z=\left(Z_{1},Z_{2}\right)\sim P$, and $\left\{ Z_{i}\right\} _{i=1}^{n}$
be a sequence of random variables that are i.i.d. as $Z$.
Denote by $P_{2}$ the marginal distribution of $Z_{2}$. Let $\MF$
be a class of real-valued functions of the form $f\left(z\right)=f\left(z_{1},z_{2}\right)=g\left(z_{1}\right)\cdot h\left(z_{2}\right)$,
where $h\in\mathcal{H}$, $\mathcal{H}$ is a class of functions with
values in $\left[0,1\right]$, and $g$ takes values in $\left[-1,1\right]$.
Suppose $\mathcal{H}$ satisfies 
\begin{align}
H_{2}^{B}\left(\epsilon,\mathcal{H},P_{2}\right)\leq  K\epsilon^{-r} \label{eq:assumption 1 for empirical result}
\end{align}
for some constants $K>0$ and $r \geq2$ and for all $\epsilon>0$.
Then there exist positive constants $C_{1,}C_{2},C_{3}$, depending
only on $K$ and $r$, such that if 
\begin{align}
\xi\leq \frac{\sqrt{n}}{128}\label{eq:assumption 2 for empirical result}
\end{align}
and 
\begin{align}
\xi\geq  \begin{cases}
\begin{array}{c}
C_{1}n^{\left(r-2\right)/2\left(r+2\right)}\\
C_{2}\log\max\left(n,e\right)
\end{array}  \begin{array}{l}
r\geq2\\
r=2
\end{array},\label{eq:assumption 3 for empirical result}\end{cases}
\end{align}
then
\begin{align*}
P^{n}\left(\sup_{f\in\MF}\left|\frac{1}{\sqrt{n}}\sum_{i=1}^{n}\left(f\left(Z_{i}\right)-E_{P}[f\left(Z_{i}\right)]\right)\right|>\xi\right)\leq  C_{3}\exp\left(-\xi^{2}\right).
\end{align*}
\end{proposition}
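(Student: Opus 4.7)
My plan is to reduce Proposition \ref{prop:empirical process result} to a standard tail inequality for empirical processes indexed by function classes with polynomial $L_2$-bracketing entropy, namely Alexander's (1984) exponential inequality. The argument has just two steps: (i) transfer the bracketing-entropy bound from $\MH$ to $\MF$ via the product structure $f = g\cdot h$, and (ii) verify the envelope and variance conditions needed to apply Alexander's bound to $\MF$.

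For step (i), fix $\epsilon > 0$ and take an $L_2(P_2)$-bracketing cover $\{[h_j^L,h_j^U]\}_{j=1}^N$ of $\MH$ with $\|h_j^U - h_j^L\|_{2,P_2} \leq \epsilon$ and $\log N \leq K\epsilon^{-r}$. Because $g$ is fixed with $|g|\leq 1$, the functions
\begin{align*}
f_j^L(z) &\equiv g(z_1)\,h_j^L(z_2)\,\mathbf{1}\{g(z_1)\geq 0\} + g(z_1)\,h_j^U(z_2)\,\mathbf{1}\{g(z_1)<0\}, \\
f_j^U(z) &\equiv g(z_1)\,h_j^U(z_2)\,\mathbf{1}\{g(z_1)\geq 0\} + g(z_1)\,h_j^L(z_2)\,\mathbf{1}\{g(z_1)<0\},
\end{align*}
bracket $gh$ whenever $h\in[h_j^L,h_j^U]$, and their gap is $f_j^U - f_j^L = |g|(h_j^U - h_j^L)$. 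Since $g^2\leq 1$ and $P$ has marginal $P_2$ on $z_2$,
\begin{align*}
\|f_j^U - f_j^L\|_{2,P}^2 = \int g(z_1)^2 \bigl(h_j^U(z_2) - h_j^L(z_2)\bigr)^2 dP(z) \leq \|h_j^U - h_j^L\|_{2,P_2}^2 \leq \epsilon^2,
\end{align*}
so $H_2^B(\epsilon,\MF,P) \leq H_2^B(\epsilon,\MH,P_2) \leq K\epsilon^{-r}$.

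For step (ii), every $f\in\MF$ satisfies $|f|\leq 1$ and $\|f\|_{2,P}^2 \leq 1$, so the class has envelope and variance uniformly bounded by one. I would then invoke Alexander's (1984, Theorem 2.1) exponential inequality for the uniform empirical process over bracketing classes. In the non-Donsker regime $r\geq 2$, his peeling device partitions $\MF$ into geometric $L_2$-shells, applies Bernstein's inequality on each shell together with bracketing chaining, and yields the subgaussian tail $C_3\exp(-\xi^2)$ for $\xi$ in the stated range. The upper bound $\xi \leq \sqrt{n}/128$ from (\ref{eq:assumption 2 for empirical result}) keeps us in the Bernstein regime where the variance term dominates the boundedness term; the lower thresholds $C_1 n^{(r-2)/(2(r+2))}$, or $C_2\log n$ when $r=2$, from (\ref{eq:assumption 3 for empirical result}) arise from balancing the bracketing-entropy integral $\int_{1/\sqrt n}^{1}\sqrt{K\epsilon^{-r}}\,d\epsilon$ against the deviation level $\xi$, ensuring the expected supremum does not swamp $\xi$. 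The constants $C_1,C_2,C_3$ depend only on $K$ and $r$ because Alexander's constants depend only on the entropy exponent and multiplicative constant.

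The entropy transfer in step (i) is elementary; all of the real work sits inside Alexander's inequality. The main obstacle in a formal write-up is therefore bookkeeping: tracking Alexander's constants carefully enough to match the precise thresholds in (\ref{eq:assumption 2 for empirical result})--(\ref{eq:assumption 3 for empirical result}). No new mathematical ideas beyond Alexander (1984) (or an equivalent reference such as van der Vaart 1998, Section 19.3, or Massart 2007, Theorem 6.8) are required.
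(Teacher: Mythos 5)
Your proposal tracks the paper's proof closely: step (i), including the exact bracket construction $f_j^L = g\,h_j^L\mathbf{1}\{g\geq 0\}+g\,h_j^U\mathbf{1}\{g<0\}$ and its twin, is verbatim what the paper does, and the conclusion $H_2^B(\epsilon,\MF,P)\leq H_2^B(\epsilon,\MH,P_2)$ is the correct and essential reduction. The one place where your plan understates the work is step (ii): the paper does \emph{not} invoke Alexander (1984, Theorem 2.1 or 2.3) as a black box. Those theorems are stated for metric (covering) entropy, not bracketing entropy, so the paper instead reimplements the full chaining argument from scratch—partitioning into the three terms $R_1, R_2, R_3$, choosing the dyadic radii $\delta_j$ via Alexander's Lemma 3.1 (Lemma~\ref{lemma:Alexander (1984)}), and controlling each piece with Bennett's inequality (Lemma~\ref{lemma:Bennet's inequality}) together with the properties of $\psi_1$. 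Your candidate alternatives likewise do not close the gap directly: van der Vaart's bracketing maximal inequalities bound the \emph{expectation} of the supremum rather than the tail, and Talagrand/Massart-type concentration (Massart 2007, Thm.~6.8) still requires a separate bound on the expectation before it yields the stated $C_3\exp(-\xi^2)$ bound with thresholds matching (\ref{eq:assumption 2 for empirical result})--(\ref{eq:assumption 3 for empirical result}). So the "bookkeeping" you flag is in fact the bulk of the proof, and the right framing is that you would have to reproduce Alexander's chaining strategy for brackets rather than cite a ready-made theorem. The mathematical ideas you identify are the correct ones and match the paper's; the gap is that the citation-based closure is not available off the shelf.
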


\begin{proof}
We follow the general strategy of Theorem 2.3 and Corollary 2.4 in \cite{Alexander_1984} and Proposition D.1 in \cite{MT17}.
Define 
\begin{align*}
v_{n}(f){\equiv}  \frac{1}{\sqrt{n}}\sum_{i=1}^{n}\left[f\left(Z_{i}\right)-E\left(f\left(Z_{i}\right)\right)\right].
\end{align*}
We start by giving some definitions. Let $\delta_{0}>\delta_{1}>\cdots>\delta_{N}>0$
be a sequence of real numbers where $\left\{ \delta_{j}\right\} _{j=0}^{N}$
and $N$ are specified later. For each $\delta_{j}$, there exists
a set of $\delta_{j}$-brackets $\mathcal{H}_{j}^{B}$ of $\mathcal{H}$
with respect to $L_{2}\left(P_{2}\right)$ such that $\left|\mathcal{H}_{j}^{B}\right|=N_{2}^{B}\left(\delta_{j},\mathcal{H},P_{2}\right)$.
Define the function $H(\cdot):\left(0,\infty\right)\rightarrow\left[0,\infty\right)$
as follows:
\begin{align*}
H\left(u\right)=  \begin{cases}
\begin{array}{c}
Ku^{-r}\\
0
\end{array}  \begin{array}{l}
\mbox{if }u<1\\
\mbox{if }u\geq1
\end{array}\end{cases}.
\end{align*}
Note that by Assumption (\ref{eq:assumption 1 for empirical result}) and the fact that $\mathcal{H}$ has
unit diameter by definition, $N_{2}^{B}\left(\delta_{j},\mathcal{H},P_{2}\right)\leq\exp\left(H\left(\delta_{j}\right)\right)$
for all $\delta_{j}>0$. For each $0\leq j\leq N$ and any $f=g\cdot h\in\MF$,
define $f_{j}^{L}{\equiv}g\cdot h_{j}^{L}1\left\{ g\geq0\right\} +g\cdot h_{j}^{U}1\left\{ g<0\right\} $
and $f_{j}^{U}{\equiv}g\cdot h_{j}^{U}1\left\{ g\geq0\right\} +g\cdot h_{j}^{L}1\left\{ g<0\right\} $
for some $\left(h_{j}^{L},h_{j}^{U}\right)$ that forms a $\delta_{j}$-bracket
for $h$ with respect to $L_2(P_2)$ such that $h\in\left[h_{j}^{L},h_{j}^{U}\right]$ and $\left[h_{j}^{L},h_{j}^{U}\right]\in\mathcal{H}_{j}^{B}$.
From the construction, $\left[f_{j}^{U},f_{j}^{L}\right]$ is a $\delta_{j}$-bracket
for $f$ with respect to $L_2(P)$. Let $f_{j}=f_{j}^{L}$, and let $\MF_{j}=\left\{ f_{j}:f\in\MF\right\} $.
We have $\left|\MF_{j}\right|\leq\exp\left(H\left(\delta_{j}\right)\right)$
and $\left\Vert f-f_{j}\right\Vert _{2,P}<\delta_{j}$ for every $f\in\MF$. 

By a standard chaining argument,
\begin{align*}
P\left(\sup_{f\in\MF}\left|v_{n}(f)\right|>\xi\right) & \leq  
 \left|\MF_{0}\right|\sup_{f\in\MF}P\left(\left|v_{n}(f)\right|>\frac{7}{8}\xi\right)\\
 & +\sum_{j=0}^{N-1}\left|\MF_{j}\right|\left|\MF_{j+1}\right|\sup_{f\in\MF}P\left(\left|v_{n}\left(f_{j}-f_{j+1}\right)\right|>\eta_{j}\right)\\
 & +P\left(\sup_{f\in\MF}\left|v_{n}\left(f_{N}-f\right)\right|>\frac{\xi}{16}+\eta_{N}\right),
\end{align*}
where $\left\{ \eta_{j}\right\} _{j=0}^N$ are to be chosen so as to satisfy $\sum_{j=0}^{N}\eta_{j}\leq\xi/16$. Define 
\begin{align*}
R_{1}&=  \left|\MF_{0}\right|\sup_{f\in\MF}P\left(\left|v_{n}(f)\right|>\frac{7}{8}\xi\right),\\
R_{2}&= \sum_{j=0}^{N-1}\left|\MF_{j}\right|\left|\MF_{j+1}\right|\sup_{f\in\MF}P\left(\left|v_{n}\left(f_{j}-f_{j+1}\right)\right|>\eta_{j}\right),\\
R_{3}&=  P\left(\sup_{f\in\MF}\left|v_{n}\left(f_{N}-f\right)\right|>\frac{\xi}{16}+\eta_{N}\right).
\end{align*}
We now choose $\left\{ \delta_{j}\right\} _{j=0}^N$, $\left\{ \eta_{j}\right\} _{j=0}^N$
and $N$ to make the three terms sufficiently small. 

First we study $R_{1}$. Set $\delta_{0}$ such that $H\left(\delta_{0}\right)=\xi^{2}/4$.
Then, applying Hoeffdings's inequality, we have 
\begin{align*}
R_{1}\leq2\left|\MF_{0}\right|\exp\left(-2\left(\frac{7}{8}\xi\right)^{2}\right)\leq  2\exp\left(-\xi^{2}\right),
\end{align*}
where we use the fact that $\left|\MF_{0}\right|\leq\exp\left(H\left(\delta_{0}\right)\right)=\exp\left(\xi^{2}/4\right)$ in the second inequality.

Next, we study $R_{2}$. Since $\left\Vert f_{j}-f_{j+1}\right\Vert _{2,P}\leq2\delta_{j}$
by construction, applying Bennet's inequality (Lemma \ref{lemma:Bennet's inequality})
to each $\sup_{f\in\MF}P\left(\left|v_{n}\left(f_{j}-f_{j+1}\right)\right|>\eta_{j}\right)$ in $R_{2}$ leads to
\begin{align*}
R_{2}\leq  \sum_{j=0}^{N-1}2\exp\left(2H\left(\delta_{j+1}\right)\right)\exp\left(-\psi_{1}\left(\eta_{j},n,4\delta_{j}^{2}\right)\right),
\end{align*}
where $\psi_{1}$ satisfies the properties described in Lemma \ref{lemma:Bennet's inequality}.

Next, we study $R_{3}$. Given the construction of $\MF_{N}$,
\begin{align*}
\left|v_{n}\left(f_{N}-f\right)\right|&\leq  \left|v_{n}\left(f_{N}^{U}-f_{N}^{L}\right)\right|+2\sqrt{n}\left\Vert f_{N}^{U}-f_{L}^{L}\right\Vert _{1,P}\\
&\leq  \left|v_{n}\left(f_{N}^{U}-f_{N}^{L}\right)\right|+2\sqrt{n}\delta_{N}.
\end{align*}
The last inequality holds because $\left\Vert f_{N}^{U}-f_{N}^{L}\right\Vert _{1,P}\leq\left\Vert h_{N}^{U}-h_{N}^{L}\right\Vert _{1,P_2}$
and 
\begin{equation*}
    \left\Vert h_{N}^{U}-h_{N}^{L}\right\Vert _{1,P_2}\leq\left\Vert h_{N}^{U}-h_{N}^{L}\right\Vert _{2,P_2}\leq\delta_{N},
\end{equation*}
which holds from H\"older's inequality. Set $\delta_{N}\leq s{\equiv}\xi/\left(32\sqrt{n}\right)$.
Then, by the above derivation and Bennet's inequality, 
\begin{align*}
R_{3} &\leq  P\left(\sup_{f\in\MF}\left|v_{n}\left(f_{N}^{U}-f_{N}^{L}\right)\right|>\eta_{N}\right)\\
&\leq  2\left|\MF_{N}\right|\exp\left(-\psi_{1}\left(\eta_{N},n,\delta_{N}^{2}\right)\right).
\end{align*}

To develop upper bounds on $R_{2}$ and $R_{3}$, we consider two
distinct cases. First we consider the case $\delta_{0}\leq s$.
Set $N=0$ and $\eta_{0}=\xi/16$. Then we have that $R_{2}=0$ and
\begin{align*}
R_{3}\leq  2\left|\MF_{0}\right|\exp\left(-\psi_{1}\left(\eta_{0},n,\delta_{0}^{2}\right)\right).
\end{align*}
Since Assumption (\ref{eq:assumption 2 for empirical result}) and $\delta_{0}\leq s$ hold, we have 
\begin{align*}
2\eta_{0}=  \frac{\xi}{8}\geq 4\sqrt{n}\left(\frac{\xi}{32\sqrt{n}}\right)^2\geq4\sqrt{n}\delta_{0}^2.
\end{align*}
Hence by the properties of $\psi_{1}$ specified in Lemma \ref{lemma:Bennet's inequality},
\begin{align*}
\psi_{1}\left(\eta_{0},n,\delta_{0}^{2}\right)\geq\frac{1}{4}\psi_{1}\left(2\eta_{0},n,\delta_{0}^{2}\right) \geq \frac{1}{4}\eta_0 \sqrt{n}.
\end{align*}
Using $\eta_{0}=\xi/16$ and Assumption (\ref{eq:assumption 2 for empirical result}), we obtain 
\begin{align*}
\psi_{1}\left(\eta_{0},n,\delta_{0}^{2}\right)
\geq \frac{1}{4}\eta_0 \sqrt{n}
= \frac{\xi}{64}\sqrt{n}\geq2\xi^{2}.
\end{align*}

By the definition of $\delta_{0}$, we also have $\left|\MF_{0}\right|\leq\exp\left(\xi^{2}/4\right).$
Therefore, combining these results gives 
\begin{align*}
R_{2}+R_{3}\leq  2\exp\left(-\xi^{2}\right).
\end{align*}

Next we consider the case $\delta_{0}>s$. We here apply Lemma \ref{lemma:Alexander (1984)}, where we let $N$ and $\left\{ \delta_{j}\right\} _{j=0}^{N}$
be as in Lemma \ref{lemma:Alexander (1984)} and $t=\delta_{0}$ and
$s$ be as defined above. Let $\eta_{j}=8\sqrt{2}\delta_{j}H\left(\delta_{j+1}\right)^{1/2}$
for $0\leq j<N$ and $\eta_{N}=8\sqrt{2}\delta_{N}H\left(\delta_{N}\right)^{1/2}$.
Then Lemma \ref{lemma:Alexander (1984)} leads to 
\begin{align*}
\sum_{j=0}^{N}\eta_{j}=8\sqrt{2}\sum_{j=0}^{N}H\left(\delta_{j+1}\right)^{1/2}\leq  64\sqrt{2}\int_{s/4}^{\delta_{0}}H\left(u\right)^{1/2}du.
\end{align*}
We have that for $0<s<t$,
\begin{align*}
\int_{s}^{t}H\left(u\right)^{1/2}du\leq  \begin{cases}
\begin{array}{c}
K^{1/2}\log\left(1/s\right)\\
2K^{1/2}\left(r-2\right)^{-1}s^{\left(2-r\right)/2}
\end{array}  \begin{array}{l}
r=2\\
r>2.
\end{array}\end{cases}
\end{align*}
Combining this with Assumption (\ref{eq:assumption 3 for empirical result}), where $C_{1}$ and $C_{1}$
are set to be sufficiently large, we have
\begin{align*}
\sum_{j=0}^{N}\eta_{j}\leq  \frac{\xi}{16},
\end{align*}
which is consistent with our choice of $\left\{ \eta_{j}\right\} _{j}$.
Setting $C_{1}$ and $C_{2}$ sufficiently large, it follows from
Assumption (\ref{eq:assumption 3 for empirical result}) that
\begin{align*}
H\left(s\right)\leq  \frac{\xi\sqrt{n}}{16}.
\end{align*}
Hence we have 
\begin{align*}
\left(\frac{\eta_{j}}{4\delta_{j}^{2}\sqrt{n}}\right)^{2}<\frac{8H\left(s\right)}{ns^{2}}\leq  16.
\end{align*}
Then from the properties of $\psi_{1}$,
\begin{align*}
\psi_{1}\left(\eta_{j},n,4\delta_{j}^{2}\right)\geq  \frac{\eta_{j}^{2}}{16\delta_{j}^{2}}.
\end{align*}
Using our bound on $R_{2}$, we obtain 
\begin{align*}
R_{2}\leq\sum_{j=0}^{N-1}2\exp\left(2H\left(\delta_{j+1}\right)-\frac{\eta_{j}^{2}}{16\delta_{j}^{2}}\right)\leq  \sum_{j=0}^{N-1}2\exp\left(-4^{j+1}H\left(\delta_{0}\right)\right).
\end{align*}
Similarly, we obtain  
\begin{align*}
R_{3}\leq  2\exp\left(-4^{N+1}H\left(\delta_{0}\right)\right).
\end{align*}
Putting these results together and using Assumption (\ref{eq:assumption 3 for empirical result}), we have
\begin{align*}
R_{2}+R_{3} \leq  \sum_{j=0}^{\infty}2\exp\left(-4^{j+1}H\left(\delta_{0}\right)\right)\leq C\exp\left(-\xi^{2}\right),
\end{align*}
where $C$ is a constant that depends only on $K$ and $r$.
\end{proof}

\bigskip{}

\begin{lemma}[Bennet's inequality: see Theorem 2.9 in \cite{Boucheron_et_al_2013}] \label{lemma:Bennet's inequality}
Let $\left\{ Z_{i}\right\} _{i=1}^{n}$ be a sequence of independent
random vectors with distribution $P$. Let $f$ be some function taking
values in $\left[0,1\right]$ and define
\begin{align*}
v_{n}(f){\equiv}  \frac{1}{\sqrt{n}}\sum_{i=1}^{n}\left[f\left(Z_{i}\right)-E_{P}\left(f\left(Z_{i}\right)\right)\right].
\end{align*}
Then, for any $\xi\geq0$, the following holds:
\begin{align*}
P^{n}\left(\left|v_{n}(f)\right|>\xi\right)\leq  2\exp\left(-\psi_{1}\left(\xi,n,a\right)\right),
\end{align*}
where $a=\mbox{var}\left(v_{n}(f)\right)$ and
\begin{align*}
\psi_{1}\left(\xi,n,a\right)=  \xi\sqrt{n}h\left(\frac{\xi}{\sqrt{n}\alpha}\right),
\end{align*}
with $h(x)=\left(1+x^{-1}\right)\log\left(1+x\right)-1$. 

Furthermore, $\psi_{1}$ has the following two properties:
\begin{align*}
\psi_{1}\left(\xi,n,\alpha\right)\geq\psi_{1}\left(C\xi,n,\rho\alpha\right)\geq  C^{2}\rho^{-1}\psi_{1}\left(\xi,n,\alpha\right)
\end{align*}
for $C\leq1$ and $\rho\geq1$, and 
\begin{align*}
\psi_{1}\left(\xi,n,\alpha\right)\geq  \begin{cases}
\begin{array}{c}
\frac{\xi^{2}}{4\alpha}\\
\frac{\xi}{2}\sqrt{n}
\end{array}  \begin{array}{l}
\mbox{if }\xi<4\sqrt{n}\alpha\\
\mbox{if }\xi\geq 4\sqrt{n}\alpha
\end{array}.\end{cases}
\end{align*}
\end{lemma}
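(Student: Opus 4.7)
The plan is to cite Theorem~2.9 of \cite{Boucheron_et_al_2013} for the main tail inequality, since this is the standard formulation of Bennet's inequality; the only work that really remains is to verify the auxiliary properties of $\psi_{1}$. For completeness, the standard Cramér–Chernoff derivation proceeds as follows. With $Y_{i} \equiv f(Z_{i}) - E_{P}[f(Z_{i})] \in [-1,1]$, $E[Y_{i}] = 0$, and $\mathrm{var}(Y_{i}) = a$, the classical moment generating function bound $\log E[e^{tY_{i}}] \leq a(e^{t} - t - 1)$ combined with Markov's inequality gives
\[
P(v_{n}(f) > \xi) \leq \exp\!\Bigl(-\sup_{\lambda>0}\bigl\{\lambda \xi - n a (e^{\lambda/\sqrt{n}} - \lambda/\sqrt{n} - 1)\bigr\}\Bigr);
\]
evaluating the supremum at its optimizer yields the exponent $\xi\sqrt{n}\, h(\xi/(\sqrt{n} a)) = \psi_{1}(\xi, n, a)$, and the two-sided bound follows by applying the same argument to $-Y_{i}$.

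For the two monotonicity properties, the key analytic facts are that $h(x) = (1 + 1/x)\log(1+x) - 1$ is positive and strictly increasing on $(0,\infty)$, and that $h(x)/x$ is strictly decreasing from $1/2$ at $0^{+}$ toward $0$ at $\infty$ (both follow from term-by-term differentiation of the series $h(x) = x/2 - x^{2}/6 + x^{3}/12 - \cdots$). The first inequality $\psi_{1}(\xi,n,\alpha) \geq \psi_{1}(C\xi, n, \rho\alpha)$ for $C \leq 1 \leq \rho$ is then immediate, since $\xi \mapsto \xi\sqrt{n}\,h(\xi/(\sqrt{n}\alpha))$ is increasing in $\xi$ and decreasing in $\alpha$. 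For the sharper inequality $\psi_{1}(C\xi,n,\rho\alpha) \geq C^{2}\rho^{-1}\psi_{1}(\xi,n,\alpha)$, set $y \equiv \xi/(\sqrt{n}\alpha)$: monotonicity of $h(y)/y$ gives $h(Cy/\rho)/(Cy/\rho) \geq h(y)/y$, hence $h(Cy/\rho) \geq (C/\rho)h(y)$, and multiplying both sides by $C\xi\sqrt{n}$ delivers the claim.

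For the piecewise lower bounds, the only numerical input required is $h(4) = (5/4)\log 5 - 1 \approx 1.012 > 1/2$, which is a direct calculation. In the regime $\xi < 4\sqrt{n}\alpha$, the argument $y = \xi/(\sqrt{n}\alpha)$ lies in $(0,4)$; the decreasing property of $h(y)/y$ together with $h(4)/4 > 1/4$ gives $h(y) \geq y/4$, and multiplying by $\xi\sqrt{n}$ yields $\psi_{1}(\xi,n,\alpha) \geq \xi^{2}/(4\alpha)$. In the regime $\xi \geq 4\sqrt{n}\alpha$, we have $y \geq 4$, so monotonicity of $h$ gives $h(y) \geq h(4) > 1/2$, whence $\psi_{1}(\xi,n,\alpha) \geq \xi\sqrt{n}/2$. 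The only mildly delicate step of the whole proof is confirming the numerical value of $h(4)$ that bridges the two regimes at the threshold $y = 4$; everything else is routine monotonicity and a citation.
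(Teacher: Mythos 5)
The paper itself does not prove this lemma: it is stated as a citation to Theorem~2.9 of \cite{Boucheron_et_al_2013}, with no argument supplied for the auxiliary properties of $\psi_{1}$. Your decision to cite the tail bound and then verify the monotonicity and piecewise bounds for $\psi_{1}$ is therefore more than the paper does, and the overall structure (reduce everything to facts about the increasing function $h$ and the decreasing function $h(x)/x$) is the right one, with the Cram\'er--Chernoff sketch and the calculation $h(4)=(5/4)\log 5 - 1 \approx 1.012$ both correct.

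The one step that does not hold up is your justification of the two key analytic facts about $h$. You assert that ``$h$ is positive and strictly increasing'' and ``$h(x)/x$ is strictly decreasing'' \emph{both follow from term-by-term differentiation of the series} $h(x)=x/2 - x^{2}/6 + x^{3}/12 - \cdots$. This argument fails for two independent reasons. First, the series is obtained by expanding $\log(1+x)$, whose Taylor series converges only on $(-1,1]$; so the series representation is simply unavailable at $x=4$ (or anywhere on $(1,\infty)$), which is exactly the region you need for the second regime and for the threshold computation. Second, even inside the radius of convergence, the series is alternating, so term-by-term differentiation does not by itself determine the sign of $h'$ or of $(h(x)/x)'$. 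The fix is elementary and should replace the parenthetical: compute directly
\begin{align*}
h'(x) = \frac{x - \log(1+x)}{x^{2}} > 0,
\qquad
\frac{d}{dx}\Bigl[\frac{h(x)}{x}\Bigr] = \frac{2x - (x+2)\log(1+x)}{x^{3}} < 0,
\end{align*}
where the second sign follows since $g(x)\equiv\log(1+x)-\tfrac{2x}{x+2}$ has $g(0)=0$ and $g'(x)=x^{2}/\bigl((1+x)(x+2)^{2}\bigr)>0$. With that repair, the rest of your chain of inequalities --- $h$ increasing gives $\psi_{1}(\xi,n,\alpha)\geq\psi_{1}(C\xi,n,\rho\alpha)$; $h(x)/x$ decreasing gives $h((C/\rho)y)\geq (C/\rho)h(y)$ and hence the factor $C^{2}\rho^{-1}$; $h(4)>1$ gives the $\xi^{2}/(4\alpha)$ bound and $h(4)>1/2$ gives the $\xi\sqrt{n}/2$ bound --- is sound.
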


\bigskip{}

\begin{lemma}[Lemma 3.1 in \cite{Alexander_1984}] \label{lemma:Alexander (1984)} Let $H:\left(0,t\right]\rightarrow\mathbb{R}^{+}$ be a decreasing
function, and let $0<s<t$. Set $\delta_{0}{\equiv}t$, $\delta_{j+1}{\equiv}s\vee\sup\left\{ x\leq\delta_{j}/2:H(x)\geq4H\left(\delta_{j}\right)\right\} $
for $j\geq0$, and $N{\equiv}\min\left\{ j:\delta_{j}=s\right\} $. Then
\begin{align*}
\sum_{j=0}^{N}\delta_{j}H\left(\delta_{j+1}\right)^{1/2}\leq  8\int_{s/4}^{t}H(x)^{1/2}dx.
\end{align*}
\end{lemma}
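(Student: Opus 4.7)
The plan is to bound each summand $\delta_j H(\delta_{j+1})^{1/2}$ by a constant multiple of $\int_{I_j} H(x)^{1/2}\,dx$ for a carefully chosen sub-interval $I_j \subseteq (s/4, t]$, with the family $\{I_j\}$ arranged to be (essentially) pairwise disjoint. Summing over $j$ and using disjointness then yields $\sum_j \delta_j H(\delta_{j+1})^{1/2} \leq 8 \int_{s/4}^{t} H(x)^{1/2}\,dx$.

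The workhorse is the monotonicity of $H$: on any interval $(a,b] \subseteq (0,t]$, $H(x) \geq H(b)$, so $\int_a^b H^{1/2}\,dx \geq (b-a)\,H(b)^{1/2}$. Applied to the natural candidate $I_j = (\delta_{j+1}/2, \delta_{j+1}]$, this yields $\int_{I_j} H^{1/2} \geq \tfrac{1}{2}\delta_{j+1} H(\delta_{j+1})^{1/2}$. Since $\delta_{j+1} \leq \delta_j/2$, the $\{I_j\}$ are pairwise disjoint and contained in $[s/2, t/2] \subseteq (s/4, t]$, immediately delivering the bound $\sum_j \delta_{j+1} H(\delta_{j+1})^{1/2} \leq 2\int_{s/4}^{t} H^{1/2}$.

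The main obstacle is translating this into a bound on the desired $\sum_j \delta_j H(\delta_{j+1})^{1/2}$: the ratio $\delta_j/\delta_{j+1}$ equals $2$ only in the \emph{halving case} $\delta_{j+1} = \delta_j/2$ and may be arbitrarily large otherwise. In the halving case, $\delta_j = 2\delta_{j+1}$ costs only a factor of $2$, giving $\delta_j H(\delta_{j+1})^{1/2} \leq 4 \int_{I_j} H^{1/2}$. In the \emph{non-halving case} $\delta_{j+1} < \delta_j/2$, the defining supremum forces two complementary properties: $H(\delta_{j+1}) \geq 4 H(\delta_j)$ (whenever $\delta_{j+1} > s$), so that $H(\delta_j)^{1/2} \leq \tfrac{1}{2}H(\delta_{j+1})^{1/2}$; and $H(x) < 4 H(\delta_j)$ on $(\delta_{j+1}, \delta_j/2]$, so that $H^{1/2}$ does not exceed $2H(\delta_j)^{1/2}$ on the gap. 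Together, these let us charge $\delta_j H(\delta_{j+1})^{1/2}$ against the combined integral over $(\delta_{j+1}/2, \delta_j]$ by splitting: the piece over $(\delta_{j+1}/2, \delta_{j+1}]$ (where $H \geq H(\delta_{j+1})$) absorbs the contribution from $\delta_{j+1} H(\delta_{j+1})^{1/2}$, while the piece over $(\delta_{j+1}, \delta_j]$ (where $\int H^{1/2} \geq (\delta_j/2) H(\delta_j)^{1/2}$) absorbs the contribution from $(\delta_j - \delta_{j+1}) H(\delta_{j+1})^{1/2}$ after using $H(\delta_{j+1})^{1/2} \leq 2 H(\delta_j)^{1/2}$ locally via the complementary property.

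The hardest part will be the bookkeeping required to combine the halving and non-halving interval choices into a single disjoint family covering a subset of $(s/4, t]$ without losing too much in constants, and to accommodate the boundary terms at $j = N$ (where $\delta_N = s$ and the recursive property may be vacuous). The lower limit $s/4$ rather than $s/2$ provides the extra room needed for sub-intervals of the form $(\delta_{j+1}/2, \delta_{j+1}]$ whose left endpoints can dip down to $s/2$, and the need to pair each summand with both an ``upper'' integral over $(\delta_{j+1}/2, \delta_{j+1}]$ and a ``lower'' integral over $(\delta_{j+1}, \delta_j]$ in the non-halving case is what produces the final constant $8 = 4 \cdot 2$ in place of the cleaner $4$ one would obtain in the purely halving regime.
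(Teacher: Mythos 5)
The paper does not prove this lemma; it is imported verbatim from Alexander (1984), so there is no in-paper argument to compare yours against, and I am assessing your sketch on its own terms. The overall charging scheme --- bound each $\delta_{j}H(\delta_{j+1})^{1/2}$ by a local piece of the entropy integral using monotonicity of $H$, and use the defining supremum to relate $H(\delta_{j+1})$ to $H(\delta_{j})$ when $\delta_{j+1}\ll\delta_{j}$ --- is the right shape, and the halving case, the tail terms near $j=N$, and the overlap bookkeeping are all genuinely handleable. The gap is in the non-halving case, and it is not cosmetic. You invoke both $H(\delta_{j+1})\geq 4H(\delta_{j})$ and, ``via the complementary property,'' $H(\delta_{j+1})^{1/2}\leq 2H(\delta_{j})^{1/2}$; these are mutually contradictory unless $H(\delta_{j+1})=4H(\delta_{j})$ exactly. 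The complementary property controls $H(x)$ only for $x$ strictly to the right of $\delta_{j+1}$ and says nothing about $H(\delta_{j+1})$ itself: if the supremum defining $\delta_{j+1}$ is attained (which happens whenever $H$ has a jump at $\delta_{j+1}$), then $H(\delta_{j+1})\geq 4H(\delta_{j})$ with no upper bound in terms of $H(\delta_{j})$. In that case $(\delta_{j}-\delta_{j+1})H(\delta_{j+1})^{1/2}$ cannot be absorbed into $\int_{(\delta_{j+1},\delta_{j}]}H^{1/2}$, where $H$ is small, nor into an interval to the left of $\delta_{j+1}$, where the available length is at most $\delta_{j+1}\ll\delta_{j}$.

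This is not a repairable slip within your framework, because for a completely arbitrary decreasing $H$ the stated inequality is false. Take $t=1$, $s=10^{-3}$, $H\equiv 10^{6}$ on $(0,10^{-2}]$ and $H\equiv 1$ on $(10^{-2},1]$. Then $\delta_{1}=10^{-2}$ and $H(\delta_{1})=10^{6}$, so the $j=0$ term alone equals $10^{3}$, while $8\int_{s/4}^{1}H(x)^{1/2}\,dx\approx 86$. Any correct proof must therefore use some regularity of $H$ at the points $\delta_{j+1}$ --- e.g.\ right-continuity, which yields exactly the inequality $H(\delta_{j+1})\leq 4H(\delta_{j})$ that your absorption step needs whenever $\delta_{j+1}<\delta_{j}/2$. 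That regularity does hold for the continuous majorant $H(u)=Ku^{-r}$ with which the lemma is actually applied in the proof of Proposition \ref{prop:empirical process result}, but it is not a consequence of monotonicity alone, and your sketch does not flag it as an extra hypothesis. Once it is assumed, your decomposition does close: each point of $(s/4,t]$ is covered by at most two of your charging intervals, which is where the constant $8$ comes from.
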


\bigskip


\section{Proof of the results in Section \ref{sec:Applications to monotone classification}}
\label{appx:proof 3}

This appendix provides proof of the results in Section \ref{sec:Applications to monotone classification}.
Throughout this appendix, we suppose $\mathcal{X}=[0,1]^{d_x}$ as in Section \ref{sec:Applications to monotone classification}.

We first provide the proof of Lemma \ref{lemma:bracketing entropy_monotone G}.

\paragraph{Proof of Lemma \ref{lemma:bracketing entropy_monotone G}.}
Let $\mu_X$ be the Lebesgue measure on $\MX$. From Theorem 8.3.2 in \cite{Dudley1999}, $H_{1}^{B}(\epsilon,\MG_M,\mu_X)\leq K \epsilon^{d_x-1}$ holds for some positive constant $K$ and for all $\epsilon>0$. Since $P_X$ is absolutely continuous with respect to $\mu_X$ and has a density that is bounded from above by $A$, we have $H_{1}^{B}(A^{-1}\epsilon,\MG_M,P_X) \leq H_{1}^{B}(\epsilon,\MG_M,\mu_X)$. Thus result (i) follows by setting $C=A^{-d_x}K$.\qedsymbol
\bigskip{}

The following lemma is used in the proof of Theorem \ref{thm:berstein polynomial approximation}.

\bigskip
\begin{lemma} \label{lemma:bracketing entropy_monotone function}
Suppose that $P_{X}$ is absolutely continuous with respect to the Lebesgue measure on $\mathcal{X}$ and has a density that is bounded from above by a finite constant $A>0$. Then there exists a constant $\widetilde{C}$, which depends only on $A$, such that 
\begin{align*}
H_{1}^{B}\left(\epsilon,\MF_M,P_{X}\right) \leq \widetilde{C}\epsilon^{-d_x}.
\end{align*}
holds for all $\epsilon>0$.
\end{lemma}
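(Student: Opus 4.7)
}

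The plan is to bracket an arbitrary $f \in \MF_M$ by discretizing its superlevel sets and then applying the bracketing bound for $\MG_M$. As a preliminary reduction, since $P_X$ has density bounded above by $A$, any function $g$ satisfies $\|g\|_{1,P_X}\leq A\|g\|_{1,\mu_X}$, where $\mu_X$ is Lebesgue measure on $\MX$; hence $H_{1}^{B}(\epsilon,\MF_M,P_{X})\leq H_{1}^{B}(\epsilon/A,\MF_M,\mu_X)$, and it suffices to prove the bound under $\mu_X$. By the same reasoning as in Lemma \ref{lemma:bracketing entropy_monotone G} (via Theorem 8.3.2 of \citet{Dudley1999}), there is a constant $K_{0}$ such that $H_{1}^{B}(\delta,\MG_{M},\mu_X)\leq K_{0}\delta^{1-d_x}$ for every $\delta>0$.

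Given $f\in\MF_M$, I would use the layer-cake identity $f(x)=-1+\int_{-1}^{1}\mathbf{1}\{f(x)\geq t\}\,dt$, whose superlevel sets $G_{t}^{f}\equiv\{x:f(x)\geq t\}$ all lie in $\MG_{M}$ by monotonicity of $f$. Fix $K\equiv \lceil 16/\epsilon\rceil$ and a grid $t_{k}=-1+2k/K$, $k=1,\ldots,K$, and define the step approximation $f^{K}(x)\equiv -1+(2/K)\sum_{k=1}^{K}\mathbf{1}_{G_{t_{k}}^{f}}(x)$, which satisfies $|f(x)-f^{K}(x)|\leq 2/K$ pointwise. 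For each level $k$ I would choose an $L_{1}(\mu_X)$-bracket $[\mathbf{1}_{G_{k,L}},\mathbf{1}_{G_{k,U}}]$ of width $\delta=\epsilon/4$ that contains $\mathbf{1}_{G_{t_{k}}^{f}}$, with the choices drawn from a fixed $\delta$-cover of $\MG_{M}$ with at most $\exp(K_{0}\delta^{1-d_x})$ pairs. Assembling these yields the bracket
\begin{align*}
\hat{f}_{L}(x) &\equiv -1-\tfrac{2}{K}+\tfrac{2}{K}\sum_{k=1}^{K}\mathbf{1}_{G_{k,L}}(x), \\
\hat{f}_{U}(x) &\equiv -1+\tfrac{2}{K}+\tfrac{2}{K}\sum_{k=1}^{K}\mathbf{1}_{G_{k,U}}(x).
\end{align*}
The additive $\pm 2/K$ offsets absorb the discretization gap $|f-f^{K}|$, so $\hat{f}_{L}\leq f\leq \hat{f}_{U}$ pointwise, and
\begin{equation*}
\|\hat{f}_{U}-\hat{f}_{L}\|_{1,\mu_X}\leq \tfrac{4}{K}+\tfrac{2}{K}\sum_{k=1}^{K}\|\mathbf{1}_{G_{k,U}}-\mathbf{1}_{G_{k,L}}\|_{1,\mu_X}\leq \tfrac{4}{K}+2\delta\leq \epsilon.
\end{equation*}

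Since the bracket choices at the $K$ levels are made independently, the number of pairs $(\hat{f}_{L},\hat{f}_{U})$ produced is at most $\exp(K\cdot K_{0}\delta^{1-d_x})$, and with $K\asymp\epsilon^{-1}$ and $\delta\asymp\epsilon$ this is $\exp(\widetilde{C}\epsilon^{-d_x})$ for some $\widetilde{C}$ depending only on $A$. The main step to verify carefully is that this finite collection genuinely covers $\MF_{M}$: every $f\in\MF_{M}$ has nested superlevel sets $G_{t_{1}}^{f}\supseteq\cdots\supseteq G_{t_{K}}^{f}$, each of which belongs to $\MG_M$ and is therefore contained in some bracket from the $\delta$-cover, and the corresponding combination of brackets generates an $(\hat{f}_{L},\hat{f}_{U})$ sandwiching $f$. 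Undoing the Lebesgue-to-$P_X$ reduction at the end gives $H_{1}^{B}(\epsilon,\MF_M,P_X)\leq \widetilde{C}\epsilon^{-d_x}$, with $\widetilde{C}$ depending only on $A$.
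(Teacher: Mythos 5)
Your proof is correct, but it takes a genuinely different route from the paper's. The paper's proof is a two-line argument: it rescales $\MF_{M}$ to $[0,1]$-valued monotone functions and then directly cites Corollary 1.3 of \citet{Gao_Wellner_2007}, which gives the $\epsilon^{-d_x}$ bracketing entropy rate for bounded monotone functions on $[0,1]^{d_x}$ with respect to a measure with bounded density. You instead construct the brackets from scratch, using only the bracketing bound for the class of prediction sets $\MG_M$ (i.e., Lemma \ref{lemma:bracketing entropy_monotone G} / Dudley's Theorem 8.3.2): you discretize the range of $f$ into $K\asymp\epsilon^{-1}$ levels, exploit that the superlevel sets $G_{t_k}^{f}$ all lie in $\MG_M$, replace each by an $L_{1}(\mu_X)$-bracket of width $\delta\asymp\epsilon$, and sum them back up via the layer-cake identity; the $\pm 2/K$ offsets correctly absorb the discretization error so that $\hat{f}_L\leq f\leq \hat{f}_U$, and the count $\exp(K\cdot K_0 \delta^{1-d_x})=\exp(O(\epsilon^{-d_x}))$ matches the claimed rate. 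What the paper's route buys is brevity, at the cost of depending on a second external reference; what your route buys is a self-contained derivation that makes the connection to $H_1^B(\cdot,\MG_M,\cdot)$ explicit and transparently explains where the extra factor of $\epsilon^{-1}$ (relative to the set-class entropy $\epsilon^{1-d_x}$) comes from. One small terminological nit: you refer to a ``$\delta$-cover'' of $\MG_M$ when you mean a $\delta$-bracketing set, and, as in the paper's own proof, the constant $\widetilde{C}$ you produce implicitly also depends on $d_x$ (not just $A$), which the lemma's phrasing glosses over; neither of these affects correctness.
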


\begin{proof}
Transform $\MF_{M}$ into $\widetilde{\MF}_{M}=\left\{ \left(f+1\right)/2:f\in\MF_{M}\right\} $,
which is a class of monotonically increasing functions taking values in $[0,1]$. Following
this transformation, $N_{1}^{B}\left(\epsilon,\MF_{M},P_{X}\right)=N_{1}^{B}\left(\epsilon/2,\widetilde{\MF}_{M},P_{X}\right)$
holds. Then the result follows by applying Corollary 1.3 in \citet{Gao_Wellner_2007}
to $\widetilde{\MF}_{M}$, in which we set $\tilde{C}=2^{-d_{x}}C_{2}$,
where $C_{2}$ is the same constant that appears in Corollary 1.3 in \citet{Gao_Wellner_2007}. Note that this corollary requires that $P_{X}$ is absolutely continuous with respect to the Lebesgue measure
on $\MX$ and has a bounded density. 
\end{proof}

\bigskip{}

The following lemma gives finite upper bounds for two approximation errors:
\begin{align*}
    \inf_{f \in \mathcal{B}_{\mathbf{k}}}R_{\phi_h}(f) - \inf_{f \in \MF_M}R_{\phi_h}(f) \mbox{ and } R_{\phi_h}(1\{\cdot \in G_{\hat{f}_{B}^{\dagger}}\} - 1\{\cdot \notin G_{\hat{f}_{B}^{\dagger}}\})
    - R_{\phi_h}(\hat{f}_{B}^{\dagger})
\end{align*}
in (\ref{eq:upper bound_statistical propery}) with $(\Check{\MF},\widetilde{\MF},\hat{f})=(\mathcal{B}_{\mathbf{k}},\MF_M,\hat{f}_{B}^{\dagger})$.
\bigskip{}

\begin{lemma}
\label{lem:berstein approximation error} Let $k_{j}\geq1$,
for $j=1,\ldots,d_{x}$, be fixed. Suppose that the density of $P_{X}$
is bounded from above by some finite constant $A>0$ . 
\\
(i) The following holds for the approximation error to the best classifier:
\begin{align*}
\inf_{f\in\MB_\Bk}R_{\phi_h}(f)-\inf_{f\in\MF_{M}}R_{\phi_h}(f)\leq  2A\sum_{j=1}^{d_x}\sqrt{\frac{\log k_{j}}{k_{j}}}+\sum_{j=1}^{d_x}\frac{4}{\sqrt{k_{j}}}.
\end{align*}
(ii) For $\hat{f}_{B} \in \arg\inf_{f\in\mathbf{B}_{\mathbf{k}}}\hat{R}_{\phi_{h}}(f)$ such that the associated coefficients of the Bernstein bases take values in $\{-1,1\}$, the following holds for the approximation error to the step function:
\begin{align*}
    R_{\phi_h}\left(1\left\{ \cdot\in G_{\hat{f}_B}\right\} -1\left\{ \cdot\notin G_{\hat{f}_B}\right\} \right)-R_{\phi_h}(\hat{f}_B)
    \leq  2A\sum_{j=1}^{d_x}\sqrt{\frac{\log k_{j}}{k_{j}}}+\sum_{j=1}^{d_x}\frac{4}{\sqrt{k_{j}}}. 
\end{align*}
\end{lemma}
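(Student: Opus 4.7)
}

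The common backbone of both bounds is the elementary identity for hinge loss on classifiers with range in $[-1,1]$: since $\phi_h(yf(x))=1-yf(x)$ in this range,
\begin{align*}
R_{\phi_h}(f)-R_{\phi_h}(g) \;=\; E_P[Y(g(X)-f(X))] \;\le\; E_P\bigl[|f(X)-g(X)|\bigr].
\end{align*}
Thus both approximation errors reduce to bounding an $L_1(P_X)$-distance between a multivariate Bernstein polynomial in $\MB_\Bk$ and a monotone step function $\tilde f_G = 2\cdot 1\{\cdot\in G\}-1$. Throughout I will exploit the probabilistic representation $B_\Bk[f](x)=E[f(X/\Bk)]$, where $X/\Bk=(X_1/k_1,\dots,X_{d_x}/k_{d_x})$ and $X_j\sim\mathrm{Binomial}(k_j,x_j)$ are independent.

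For (i), the plan is to pick a specific feasible approximant. By Theorem \ref{thm:continuous functions}(ii) applied to $\MG=\MG_M$ and $\widetilde{\MF}_{\MG_M}=\MF_M$ (valid by Example \ref{example:Monotonic classification with a class of monotonic functions}), there is $G^\ast\in\MG_M^\ast$ such that $\tilde f_{G^\ast}\in\MF_M$ attains $\inf_{f\in\MF_M}R_{\phi_h}(f)$. I would then choose the Bernstein polynomial whose coefficients are the values $\theta_{j_1\cdots j_{d_x}}=\tilde f_{G^\ast}(j_1/k_1,\dots,j_{d_x}/k_{d_x})\in\{-1,1\}$. Because $G^\ast$ is monotone, these coefficients are monotone in the grid index, so this Bernstein polynomial lies in $\MB_\Bk$. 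The target bound is then
\begin{align*}
E_P\bigl[|B_\Bk[\tilde f_{G^\ast}](X)-\tilde f_{G^\ast}(X)|\bigr]
\;\le\; 2\,P\bigl(\mathbf{1}\{X/\Bk\in G^\ast\}\neq \mathbf{1}\{X\in G^\ast\}\bigr).
\end{align*}
Using monotonicity of $G^\ast$, a mismatch of the two indicators forces $X/\Bk$ to cross the boundary of $G^\ast$ in at least one coordinate, so a union bound over coordinates plus Hoeffding's inequality $P(|X_j/k_j-x_j|>t)\le 2e^{-2k_jt^2}$ controls each term. Splitting each coordinate integral at $t_j=\sqrt{\log k_j/k_j}$ and using the density bound $A$, the $|x_j-c_j|\le t_j$ slab contributes at most $2A\sqrt{\log k_j/k_j}$ and its complement at most $4/\sqrt{k_j}$ (absorbing the residual term $2e^{-2\log k_j}=2/k_j^2\le 2/\sqrt{k_j}$). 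Summing over $j$ gives the stated inequality.

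For (ii), the key structural observation is that since the coefficients of $\hat f_B$ are in $\{-1,1\}$, we can write $\hat f_B(x)=2p(x)-1$ where $p(x)=\sum_{j\in J^+} b_{\Bk,j}(x)\in[0,1]$ is itself the multivariate Bernstein polynomial of the indicator of a monotone subset $J^+$ of the coefficient grid. The prediction set is $G_{\hat f_B}=\{p\ge 1/2\}$, so a direct sign-by-sign calculation gives $|\hat f_B(x)-\tilde f_{G_{\hat f_B}}(x)|=1-|\hat f_B(x)|=2\min(p(x),1-p(x))$. Hence
\begin{align*}
R_{\phi_h}(\tilde f_{G_{\hat f_B}})-R_{\phi_h}(\hat f_B)
\;\le\; 2\,E_P\bigl[\min(p(X),1-p(X))\bigr].
\end{align*}
Since $\min(p,1-p)\le 2p(1-p)$ and $p(x)=P(X/\Bk\in J^+/\Bk)$ for monotone $J^+$, the integrand is small except on a narrow transition band where the multinomial mean $x$ lies within $O(\sqrt{\log k_j/k_j})$ of the boundary of $J^+/\Bk$ in some coordinate. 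Bounding the measure of that band by the same Hoeffding/coordinate-union argument as in (i) yields the identical right-hand side.

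The main technical obstacle is the multivariate coordinate decomposition: the boundary of $G^\ast$ (resp.\ $J^+/\Bk$) is not axis-aligned, so I cannot simply reduce to a one-dimensional Bernstein step-function estimate. The argument relies on the fact that for a monotone upper set $G$, if $\mathbf{1}_G(x)\neq\mathbf{1}_G(y)$ then $y-x$ must have at least one coordinate whose sign disagrees with the corresponding $\mathbf{1}_G$ jump; this permits a clean union bound over coordinates once applied to $y=X/\Bk$, and it is the only place where the monotonicity of $\MG_M$ is essential to obtain a dimension-additive (rather than dimension-multiplicative) rate.
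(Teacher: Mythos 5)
For part (i) your approach and the paper's coincide almost step for step: pick $G^\ast$, feed the $\{-1,1\}$-valued coefficients $\tilde f_{G^\ast}(j_1/k_1,\dots,j_{d_x}/k_{d_x})$ into $\MB_\Bk$, and control the resulting Bernstein-to-step-function gap by the probabilistic representation plus a union bound over coordinates. The paper packages precisely this into its auxiliary Lemma on the step-function approximation error, so there is no genuine difference here.

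For part (ii) your route is genuinely different, and in one respect cleaner than the paper's. Writing $\hat f_B=2p-1$ with $p=\sum_{j\in J^+}b_{\Bk,j}\in[0,1]$, and observing that $|\hat f_B-\tilde f_{G_{\hat f_B}}|=1-|\hat f_B|=2\min(p,1-p)$ pointwise, gives you the target quantity directly in terms of $G_{\hat f_B}=\{p\ge 1/2\}$, i.e.\ the actual prediction set of $\hat f_B$. The paper instead invokes its auxiliary step-function lemma, which controls the gap between $\hat f_B$ and the step function of the \emph{coefficient set} (the monotone set of grid indices with $\hat\theta=+1$), and then states that part (ii) ``follows immediately''. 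That set need not coincide with the grid restriction of $G_{\hat f_B}$ in $d_x\ge 2$ (a product-binomial median argument shows $\hat f_B$ can be negative at a grid point whose coefficient is $+1$), so your identity sidesteps a nontrivial gluing step that the paper does not spell out.

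Where your argument is under-developed is in the tail computation itself, for both parts. The remark that a mismatch ``forces $X/\Bk$ to cross the boundary of $G^\ast$ in at least one coordinate'' is not by itself a small-probability event: for $x$ near the boundary the binomial $\xi_j/k_j$ is on the wrong side with probability close to $1/2$, so that union bound alone is vacuous. The argument that actually works is the one the paper carries out: first remove a slab of width $\Delta_j$ around the boundary of the monotone set (measure controlled by the density bound $A$ and the additive width $\sum_j\Delta_j$), and then show that for $x$ in the remaining interior, a sign mismatch requires a coordinate deviation $|\xi_j/k_j-x_j|\ge\Delta_j$ for some $j$, after which Hoeffding applies; this last step needs the explicit nearest-grid-point constructions $j_v(x)$ and $\tilde j_v(x)$ inside $\tilde G$ and $\tilde G^c$, and it is exactly where the monotonicity of the prediction set is used. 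You correctly identify the non-axis-aligned boundary as the key obstacle, but you do not resolve it; as written, ``Splitting each coordinate integral at $t_j$'' presupposes a per-coordinate boundary $c_j$ that does not literally exist. A minor infelicity: the inequality $\min(p,1-p)\le 2p(1-p)$ is asserted but never used; your subsequent slab argument bounds $\min(p,1-p)$ directly (by $1-p$ on $\tilde G$, by $p$ on $\tilde G^c$, by $1/2$ on the slab), which incidentally yields a factor-$A$ rather than factor-$2A$ slab contribution, tighter than the stated bound and therefore fine.
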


\bigskip

The two approximation errors have the same upper bound which converges to zero as $k_j$ ($j=1,\ldots,d_x$) increases. The convergence rate is $\max_{j=1,\ldots,d_x}\sqrt{(\log k_j)/k_j}$. Note also that the upper bound on the approximation error to the step function does not depend on the sample size $n$. 

The following two lemmas will be used in the proof of Lemma \ref{lem:berstein approximation error}.

\bigskip{}

\begin{lemma}\label{lem:step function approximation optimality_Bernstein polynomial}
Let $\hat{f}_{B}\in{\arg\inf}_{f\in\MB_{\Bk}}\hat{R}_{\phi_{h}}(f)$,
and $\hat{\mathbf{\theta}}{\equiv} \left\{ \hat{\theta}_{j_{1}\ldots j_{d_{x}}}\right\} _{j_{1}=1,\ldots,k_{1};\ldots;j_{d_{x}}=1,\ldots,k_{d_{x}}}$
be the vector of the coefficients of the Bernstein bases in $\hat{f}_{B}$. Let $r_{1}^{+}$
and $r_{1}^{-}$ be the smallest non-negative value and the largest
negative value in $\hat{\mathbf{\theta}}$, respectively.\\
(i) If all non-negative elements in $\hat{\mathbf{\theta}}$ take
the same value $r_{1}^{+}$, let $r_{2}^{+}$ be $1$; otherwise,
let $r_{2}^{+}$ be the second smallest non-negative value in $\hat{\mathbf{\theta}}$.
Propose a $\left(k_{1}+1\right)\times\cdots\times\left(k_{d_{x}}+1\right)$-dimensional vector
$\tilde{\mathbf{\theta}}{\equiv}\left\{ \widetilde{\Theta}_{j_{1}\ldots j_{d_{x}}}\right\} _{j_{1}=1,\ldots,k_{1};\ldots;j_{d_{x}}=1,\ldots,k_{d_{x}}}$
such that, for all $j_{1},\ldots,j_{d_{x}}$, if $\hat{\theta}_{j_{1}\ldots j_{d_{x}}}=r_{1}^{+}$,
$\widetilde{\Theta}_{j_{1}\ldots j_{d_{x}}}=r_{2}^{+}$; otherwise, $\widetilde{\Theta}_{j_{1}\ldots j_{d_{x}}}=\hat{\theta}_{j_{1}\ldots j_{d_{x}}}$.
Then a new classifier 
\begin{align*}
    \tilde{f}_{B}(x){\equiv}\sum_{j_{1}=1}^{k_{1}}\cdots\sum_{j_{d_{x}}=1}^{k_{d_{x}}}\widetilde{\Theta}_{j_{1}\ldots j_{d_{x}}}\left(b_{k_{1}j_{1}}\left(x_{1}\right)\times\cdots\times b_{k_{1}j_{1}}\left(x_{d_x}\right)\right)
\end{align*}
minimizes $\hat{R}_{\phi_{h}}(\cdot)$ over $\MB_{\Bk}$.\\
(ii) Similarly, if all negative elements in $\hat{\mathbf{\theta}}$
take the same value $r_{1}^{-}$, let $r_{2}^{-}$ be $-1$; otherwise,
let $r_{2}^{-}$ be the second largest negative value in $\hat{\mathbf{\theta}}$.
Propose a $\left(k_{1}+1\right)\times\cdots\times\left(k_{d_{x}}+1\right)$-dimensional vector
$\Check{\mathbf{\theta}}{\equiv}\left\{ \Check{\mathbf{\theta}}_{j_{1}\ldots j_{d_{x}}}\right\} _{j_{1}=1,\ldots,k_{1};\ldots;j_{d_{x}}=1,\ldots,k_{d_{x}}}$
such that, for all $j_{1},\ldots,j_{d_{x}}$, if $\hat{\theta}_{j_{1}\ldots j_{d_{x}}}=r_{1}^{-}$,
$\Check{\theta}_{j_{1}\ldots j_{d_{x}}}=r_{2}^{-}$; otherwise, $\Check{\theta}_{j_{1}\ldots j_{d_{x}}}=\hat{\theta}_{j_{1}\ldots j_{d_{x}}}$.
Then a new classifier 
\begin{align*}
    \Check{f}_{B}(x){\equiv}\sum_{j_{1}=1}^{k_{1}}\cdots\sum_{j_{d_{x}}=1}^{k_{d_{x}}}\Check{\theta}_{j_{1}\ldots j_{d_{x}}}\left(b_{k_{1}j_{1}}\left(x_{1}\right)\times\cdots\times b_{k_{1}j_{1}}\left(x_{d_x}\right)\right)
\end{align*}
minimizes $\hat{R}_{\phi_{h}}(\cdot)$ over $\MB_{\Bk}$.\\
(iii) A classifier 
\begin{align*}
    \hat{f}_{B}^{\dagger}(x){\equiv}\sum_{j_{1}=1}^{k_{1}}\cdots\sum_{j_{d_{x}=1}}^{k_{d_{x}}}\sign\left(\hat{\theta}_{j_{1}\ldots j_{d_{x}}}\right)\cdot\left(b_{k_{1}j_{1}}\left(x_{1}\right)\times\cdots\times b_{k_{d_{x}}j_{d_{x}}}\left(x_{d_x}\right)\right)
\end{align*}
minimizes $\hat{R}_{\phi_h}(\cdot)$ over $\MB_{\Bk}$.
\end{lemma}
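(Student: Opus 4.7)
The plan is to reformulate the empirical hinge risk minimization over $\MB_{\Bk}$ as a linear program in $\theta$ and then exploit the optimality of $\hat{\theta}$ directly. Since $f \in \MB_{\Bk}$ takes values in $[-1,1]$, one has $\hat{R}_{\phi_{h}}(f) = 1 - n^{-1}\sum_{i=1}^{n} Y_{i} f(X_{i})$, so minimizing $\hat{R}_{\phi_{h}}$ over $\MB_{\Bk}$ is equivalent to maximizing the linear functional $g(\theta) \equiv \sum_{j_{1},\ldots,j_{d_{x}}} \theta_{j_{1}\ldots j_{d_{x}}}\, c_{j_{1}\ldots j_{d_{x}}}$ with $c_{j_{1}\ldots j_{d_{x}}} \equiv \sum_{i=1}^{n} Y_{i} \prod_{l=1}^{d_{x}} b_{k_{l} j_{l}}(X_{il})$, subject to the monotonicity and box constraints defining $\widetilde{\Theta}$.

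For part (i), let $S \equiv \{(j_{1},\ldots,j_{d_{x}}) : \hat{\theta}_{j_{1}\ldots j_{d_{x}}} = r_{1}^{+}\}$, and note that the proposed $\tilde{\theta}$ simply adds $\Delta \equiv r_{2}^{+} - r_{1}^{+} > 0$ to each coordinate in $S$ and leaves the rest alone. First I would verify that $\tilde{\theta} \in \widetilde{\Theta}$: the upper bound holds since $r_{2}^{+} \leq 1$ by construction; for the monotonicity constraint, any index $j$ dominating some $\tilde{j} \in S$ with $j \notin S$ must satisfy $\hat{\theta}_{j} > r_{1}^{+} \geq 0$, and since by the definition of $r_{2}^{+}$ every non-negative coefficient value distinct from $r_{1}^{+}$ is at least $r_{2}^{+}$, we get $\hat{\theta}_{j} \geq r_{2}^{+}$; any index $j$ dominated by some $\tilde{j} \in S$ with $j \notin S$ must satisfy $\hat{\theta}_{j} < r_{1}^{+}$, which forces $\hat{\theta}_{j} < 0 < r_{2}^{+}$, so the inequality is preserved in this direction as well.

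Optimality of $\tilde{\theta}$ then follows from a two-sided sign argument on $\sigma \equiv \sum_{(j_{1},\ldots,j_{d_{x}}) \in S} c_{j_{1}\ldots j_{d_{x}}}$. If $\sigma > 0$, then $g(\tilde{\theta}) > g(\hat{\theta})$, contradicting optimality of $\hat{\theta}$. If $\sigma < 0$, lowering every coordinate in $S$ by a sufficiently small $\epsilon > 0$ produces a feasible vector with strictly larger $g$, again a contradiction; feasibility of this downward perturbation uses that every coordinate of $\hat{\theta}$ strictly below $r_{1}^{+}$ is strictly negative (by definition of $r_{1}^{+}$ as the smallest non-negative value), hence remains strictly below $r_{1}^{+} - \epsilon$ for small enough $\epsilon$. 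Therefore $\sigma = 0$, $g(\tilde{\theta}) = g(\hat{\theta})$, and $\tilde{f}_{B}$ is a minimizer. Part (ii) is the mirror image, applied to $r_{1}^{-}, r_{2}^{-}$ by symmetry.

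For part (iii), I would iterate (i) and (ii): each application of (i) strictly reduces the number of distinct non-negative coefficient values (merging $r_{1}^{+}$ into $r_{2}^{+}$, or raising to $1$ when $r_{1}^{+}$ is the only non-negative value), and each application of (ii) does the same for distinct negative values. Because the initial vector $\hat{\theta}$ takes only finitely many distinct values, finitely many iterations drive every non-negative coefficient to $+1$ and every negative coefficient to $-1$; the resulting minimizer is precisely $\hat{f}_{B}^{\dagger}$. The main potential obstacle is the feasibility verification for the downward perturbation used inside the optimality argument, where both monotonicity and the $[-1,1]$ bound must be preserved simultaneously; once the structural role of $r_{1}^{+}$ as the smallest non-negative value is exploited to supply the required slack between it and all strictly smaller coordinates, the rest of the argument is LP bookkeeping.
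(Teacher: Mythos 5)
Your proof is correct and follows the same high-level strategy as the paper's: reformulate the empirical hinge risk minimization over $\MB_{\Bk}$ as a linear program in $\theta$ over the polytope $\widetilde{\Theta}$, and then exploit the optimality of $\hat{\theta}$ through a perturbation argument. There are two places where your treatment is noticeably tighter than the paper's, and they are worth flagging.

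First, you explicitly verify that $\tilde{\theta}\in\widetilde{\Theta}$, checking both the $[-1,1]$ box constraint and, for the monotonicity constraint, carefully separating the cases of dominating and dominated indices outside $S$. The paper asserts $\tilde{\theta}\in\widetilde{\Theta}$ ``by construction'' without spelling this out, and it is in fact the less routine part of the argument.

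Second, and more substantively: to rule out $\sigma<0$, you lower \emph{all} coordinates in $S$ by a common small $\epsilon>0$. This keeps $S$-coordinates equal to each other, so monotonicity \emph{within} $S$ is automatic, and your observation that every coordinate strictly below $r_1^+$ is strictly negative supplies the slack needed against dominated coordinates. The paper instead locates a single $j\in J_1$ with $c_j<0$ and proposes setting $\hat\theta_j$ alone to the value $r_1^-$. That one-coordinate move is not guaranteed to be feasible: if some $\tilde j\in J_1$ is strictly dominated by $j$ in the partial order, then after the move one would have $\theta_j = r_1^- < r_1^+ = \theta_{\tilde j}$ with $j\geq\tilde j$, violating monotonicity. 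Your uniform $\epsilon$-shift sidesteps this issue and makes the contradiction with the optimality of $\hat\theta$ airtight. (The paper's closing sentence, which phrases the contradiction in terms of ``non-negativity'' of $\hat\theta_j$, is also somewhat opaque; the real contradiction is with optimality, which your write-up makes explicit.)

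Your treatment of part (iii) by finite iteration, counting the strictly decreasing number of distinct non-negative (resp.\ negative) coefficient values, matches the paper's ``apply (i) and (ii) repeatedly'' but makes the termination argument precise. Overall: same approach, executed with more care.
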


\begin{proof}
First, note that $\tilde{\mathbf{\theta}},\Check{\theta}\in\widetilde{\Theta}$
holds by construction. We now prove (i). The proof of (ii) follows using a similar argument. Define 
\begin{align*}
L_{n}\left(\mathbf{\theta}\right)\equiv   \sum_{i=1}^{n}\left\{Y_{i}\cdot\sum_{j_{1}=1}^{k_{1}}\cdots\sum_{j_{d_{x}}=1}^{k_{d_{x}}}\theta_{j_{1}\ldots j_{d_{x}}}\sum_{i=1}^{n}\left(b_{k_{1}j_{1}}\left(X_{1i}\right)\times\cdots\times b_{k_{d_{x}}j_{d_{x}}}\left(X_{d_{x}i}\right)\right)\right\}.
\end{align*}
Minimization of $\hat{R}_{\phi_{h}}(\cdot)$ over $\MB_{\Bk}$ is
equivalent to the maximization of $L_{n}(\cdot)$ over $\widetilde{\Theta}$.
Thus, $\hat{\theta}$ maximizes $L_{n}(\cdot)$
over $\widetilde{\Theta}$.

We prove the result by contradiction. Suppose $\tilde{\mathbf{\theta}}\notin{\arg\max}_{\mathbf{\theta}\in\widetilde{\Theta}}L_{n}\left(\mathbf{\theta}\right)$.
Let 
\begin{align*}
J_{1}\equiv\left\{ \left(j_{1},\ldots,j_{d_{x}}\right):\hat{\theta}_{j_{1}\ldots j_{d_{x}}}=r_{1}^{+}\right\}.    
\end{align*}
Then, 
\begin{align*}
L_{n}\left(\tilde{\mathbf{\theta}}\right)-L_{n}\left(\hat{\mathbf{\theta}}\right)=  \sum_{\left(j_{1},\ldots,j_{d_{x}}\right)\in J_{1}}\left\{\left(r_{2}^{+}-r_{1}^{+}\right)\sum_{i=1}^{n}Y_{i}\left(b_{k_{1}j_{1}}\left(X_{1i}\right)\times\cdots\times b_{k_{d_{x}}j_{d_{x}}}\left(X_{d_{x}i}\right)\right)\right\}<0.
\end{align*}
Since $r_{2}^{+}-r_{1}^{+}\geq0$, the above inequality implies
that there exists some $\left(j_{1},\ldots,j_{d_{x}}\right) \in J_{1}$
such that $\sum_{i=1}^{n}Y_{i}\left(b_{k_{1}j_{1}}\left(X_{1i}\right)\times\cdots\times b_{k_{d_{x}}j_{d_{x}}}\left(X_{d_{x}i}\right)\right)<0$.
For such $\left(j_{1},\ldots,j_{d_{x}}\right)$, setting $\hat{\theta}_{j_{1}\ldots j_{d_{x}}}$
to $r_{1}^{-}$ can increase the value of $L_{n}\left(\hat{\mathbf{\theta}}\right)$
without violating the constraints in $\widetilde{\Theta}$. But this
contradicts the requirement that $\hat{\theta}_{j_{1}\ldots j_{d_{x}}}$ is non-negative.
Therefore, $\tilde{\mathbf{\theta}}$ maximizes $L_{n}(\cdot)$
over $\widetilde{\Theta}$, or equivalently $\tilde{f}_{B}$ minimizes $\hat{R}_{\phi_h}(\cdot)$ over $\MB_{\Bk}$.

Result (iii) is shown by applying Lemma \ref{lem:step function approximation optimality_Bernstein polynomial} (i) and (ii) repeatedly to $\hat{f}_{B}$. 
\end{proof}

\bigskip{}

\begin{lemma}\label{lem:bernstein approximation error bound on  step function}
Fix $G \in \MG$ and $k_j \geq 1$ for $j=1,\ldots,d_x$. Define a classifier
 \begin{align*}
     f_G(x) {\equiv} \sum_{j_1 = 1}^{k_1}\cdots \sum_{j_{d_x} = 1}^{k_{d_x}} \theta_{j_1\dots j_{d_x}} \left(b_{k_1 j_1}(x_1)\times \cdots \times b_{k_{d_x} j_{d_x}}(x_{d_x})\right),
 \end{align*}
such that, for all $j_1,\ldots,j_{d_x}$, 
\begin{align*}
    \theta_{j_{1}\ldots j_{d_{x}}}=\begin{cases}
\begin{array}{c}
1\\
-1
\end{array}  \begin{array}{l}
\mbox{if }\left(j_{1}/k_{1},\ldots,j_{d_{x}}/k_{d_{x}}\right)\in G\\
\mbox{if }\left(j_{1}/k_{1},\ldots,j_{d_{x}}/k_{d_{x}}\right)\notin G
\end{array}\end{cases}.
\end{align*}
Then the following holds:
\begin{align*}
\left|R_{\phi_h}\left(f_G\right)-R_{\phi_h}\left(1\left\{\cdot \in G \right\} - 1\left\{\cdot \notin G \right\} \right) \right|\leq  2A\sum_{j=1}^{d_x}\sqrt{\frac{\log k_{j}}{k_{j}}}+\sum_{j=1}^{d_x}\frac{4}{\sqrt{k_{j}}}.
\end{align*}
\end{lemma}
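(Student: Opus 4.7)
The plan is to reduce the hinge-risk gap to an $L^{1}(P_{X})$-approximation error and then exploit the probabilistic interpretation of the Bernstein polynomial as a conditional expectation under a Binomial law. Since $f_{G},\tilde{f}_{G}\in[-1,1]$ and $Y\in\{-1,1\}$, hinge loss acts affinely, so I would first observe
\begin{align*}
|R_{\phi_h}(f_{G})-R_{\phi_h}(\tilde{f}_{G})|=|E_{P}[Y(\tilde{f}_{G}(X)-f_{G}(X))]|\leq E_{P_{X}}[|f_{G}(X)-\tilde{f}_{G}(X)|],
\end{align*}
so it is enough to bound the $L^{1}(P_{X})$ distance between $f_{G}$ and the $\pm1$-valued step function $\tilde{f}_{G}$.

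Next, since $\theta_{j_{1}\ldots j_{d_{x}}}=\tilde{f}_{G}(j_{1}/k_{1},\ldots,j_{d_{x}}/k_{d_{x}})$ and $\prod_{r}b_{k_{r}j_{r}}(x_{r})$ equals the conditional pmf $P(J=j\mid X=x)$ with $J_{r}\sim\mbox{Binomial}(k_{r},X_{r})$ independent across $r$, one gets $f_{G}(x)=E[\tilde{f}_{G}(J/k)\mid X=x]$ with $J/k=(J_{1}/k_{1},\ldots,J_{d_{x}}/k_{d_{x}})$. Because $\tilde{f}_{G}\in\{-1,1\}$,
\begin{align*}
|f_{G}(x)-\tilde{f}_{G}(x)|\leq 2P(\tilde{f}_{G}(J/k)\neq\tilde{f}_{G}(x)\mid X=x),
\end{align*}
so integrating over $P_{X}$ reduces the task to controlling $P(\tilde{f}_{G}(J/k)\neq\tilde{f}_{G}(X))$. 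For $\delta=(\delta_{1},\ldots,\delta_{d_{x}})>0$, let $\partial_{\delta}G$ denote the set of $x\in\MX$ whose $L^{\infty}$-box $\prod_{r}[x_{r}-\delta_{r},x_{r}+\delta_{r}]$ intersects both $G$ and $G^{c}$. If $X\notin\partial_{\delta}G$ and $|J_{r}/k_{r}-X_{r}|\leq\delta_{r}$ for every $r$, then $J/k$ lies in a box entirely on one side of $G$, forcing $\tilde{f}_{G}(J/k)=\tilde{f}_{G}(X)$. A union bound and Hoeffding's inequality (applied to $J_{r}/k_{r}$ conditional on $X_{r}$) yield
\begin{align*}
P(\tilde{f}_{G}(J/k)\neq\tilde{f}_{G}(X))\leq P_{X}(\partial_{\delta}G)+\sum_{r=1}^{d_{x}}2e^{-2k_{r}\delta_{r}^{2}}.
\end{align*}

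The main obstacle is the boundary estimate $\mu(\partial_{\delta}G)\leq 2\sum_{r}\delta_{r}$, which is where monotonicity of $G$ (the lemma is applied with $G\in\MG_{M}$ since $G=G_{\hat{f}_{B}^{\dagger}}$ for $\hat{f}_{B}^{\dagger}\in\MF_{M}$) enters crucially. Because $G$ is an upper set, $\partial_{\delta}G=\{x:x+\delta\in G,\ x-\delta\notin G\}$ after a monotone extension of $g\equiv 1\{\cdot\in G\}$ outside $[0,1]^{d_{x}}$ by $0$ below and $1$ above. I would use the telescoping identity
\begin{align*}
g(x+\delta)-g(x-\delta)=\sum_{r=1}^{d_{x}}\bigl(g(y^{(r)})-g(y^{(r-1)})\bigr),
\end{align*}
where $y^{(r)}$ flips the first $r$ coordinates of $x-\delta$ up to $x_{s}+\delta_{s}$; by monotonicity, each increment is nonnegative and $\{0,1\}$-valued, and the sum is at least $1$ on $\partial_{\delta}G$. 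Integrating the $r$-th increment over $[0,1]^{d_{x}}$ and changing variables in the non-$r$ coordinates reduces it to a one-dimensional quantity $\int_{0}^{1}(h(s+\delta_{r})-h(s-\delta_{r}))\,ds\leq 2\delta_{r}$ for a non-decreasing $\{0,1\}$-valued slice $h$. Summing gives $\mu(\partial_{\delta}G)\leq 2\sum_{r}\delta_{r}$, so by the density bound $P_{X}(\partial_{\delta}G)\leq 2A\sum_{r}\delta_{r}$.

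Combining these pieces,
\begin{align*}
E_{P_{X}}[|f_{G}(X)-\tilde{f}_{G}(X)|]\leq 4A\sum_{r=1}^{d_{x}}\delta_{r}+4\sum_{r=1}^{d_{x}}e^{-2k_{r}\delta_{r}^{2}}.
\end{align*}
Finally I would choose $\delta_{r}=\tfrac{1}{2}\sqrt{\log k_{r}/k_{r}}$, which makes $4A\delta_{r}=2A\sqrt{\log k_{r}/k_{r}}$ and $4e^{-2k_{r}\delta_{r}^{2}}=4/\sqrt{k_{r}}$, yielding exactly the bound $2A\sum_{r}\sqrt{\log k_{r}/k_{r}}+\sum_{r}4/\sqrt{k_{r}}$ claimed by the lemma.
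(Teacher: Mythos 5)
Your proof is correct and reaches the paper's bound exactly, and the core ingredients are the same as the paper's: interpret $f_G(x)$ via independent Binomials, invoke Hoeffding concentration of $J_r/k_r$ around $X_r$, bound the $P_X$-measure of a $\delta$-boundary shell around $G$ using monotonicity and the density bound $A$, and then optimize $\delta_r = \tfrac{1}{2}\sqrt{\log k_r / k_r}$. What you do differently, and what it buys: you first collapse the hinge-risk difference to an $L^1(P_X)$ approximation error via $|R_{\phi_h}(f_G)-R_{\phi_h}(\tilde f_G)| \leq E_{P_X}|f_G - \tilde f_G|$ and then read $f_G(x)=E[\tilde f_G(J/k)\mid X=x]$ as a conditional expectation, so that $|f_G(x)-\tilde f_G(x)|\leq 2P(\tilde f_G(J/k)\neq \tilde f_G(x)\mid X=x)$; the paper instead manipulates the risk-difference integral directly, splitting into grid-point sums over $J_{\mathbf k}$ and $J_{\mathbf k}^c$ and treating the pieces $(IV)$ and $(V)$ separately. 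You also work with a single two-sided boundary set $\partial_\delta G = \{x : x+\delta\in G,\ x-\delta\notin G\}$ and bound its Lebesgue measure by $2\sum_r\delta_r$ via a telescoping/Fubini identity on the monotone indicator, whereas the paper builds two one-sided inner and outer shells $G\setminus\tilde G$ and $G^c\setminus\tilde G^c$ with a geometric estimate ("$\int_{G^c\setminus\tilde G^c}dx$ is largest when $G^c=\mathcal X$") that is harder to make fully rigorous. Your telescoping argument is cleaner and more self-contained, and the $L^1$/conditional-expectation framing makes the role of Bernstein polynomials as a smoothing operator more transparent; the paper's version has the advantage of being phrased directly in terms of the grid-point sums that appear in the explicit definition of $f_G$. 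One small thing to state explicitly when writing this up: the telescoping bound requires extending $g=1\{\cdot\in G\}$ monotonically outside $[0,1]^{d_x}$ (e.g.\ by coordinatewise clipping before applying $1_G$), and the lemma as stated says $G\in\MG$ but really needs $G\in\MG_M$ — both the paper's argument and yours use monotonicity, and, as you note, the lemma is only ever applied to monotone $G$.
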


\begin{proof}
Define
\begin{align*}
J_{\Bk}{\equiv} \left\{ \left(j_{1},\ldots,j_{d_x}\right) :\left(j_{1}/k_{1},\ldots,j_{d_x}/k_{d_x}\right)\in G\right\} ,
\end{align*}
which is a set of grid points on $G$. It follows that 
\begin{align}
 & R_{\phi_h}\left(f_G\right)-R_{\phi_h}\left(1\left\{\cdot \in G \right\} - 1\left\{\cdot \notin G \right\} \right)\nonumber \\
= &\ \int_{\left[0,1\right]^{d_x}}\left(2\eta(x)-1\right)\left(1\left\{x \in G \right\} - 1\left\{x \notin G \right\}-B_{\Bk}\left(\theta,x\right)\right)dP_{X}(x)\nonumber \\
= &\ \int_{\left[0,1\right]^{d_x}}\left(2\eta(x)-1\right)1\left\{ x\in G\right\} dP_{X}(x)-\int_{\left[0,1\right]^{d_x}}\left(2\eta(x)-1\right)1\left\{ x\notin G\right\} dP_{X}(x)\nonumber \\
 & -\underset{(I)}{\underbrace{\int_{\left[0,1\right]^{d_x}}\left(2\eta(x)-1\right)B_{\Bk}\left(\theta,x\right)dP_{X}(x)}}.\label{eq:berstein regret}
\end{align}
(I) can be written as 
\begin{align*}
(I) &=  \int_{\left[0,1\right]^{d_x}}\left(2\eta(x)-1\right)\sum_{\left(j_{1},\ldots,j_{d_x}\right)\in J_{\Bk}}\left(\prod_{v=1}^{d_x}b_{k_{v}j_{v}}(x_v)\right)dP_{X}(x)\\
 & -\int_{\left[0,1\right]^{d_x}}\left(2\eta(x)-1\right)\sum_{\left(j_{1},\ldots,j_{d_x}\right)\notin J_{\Bk}}\left(\prod_{v=1}^{d_x}b_{k_{v}j_{v}}(x_v)\right)dP_{X}(x).
\end{align*}
Thus,
\begin{align*}
\left(\ref{eq:berstein regret}\right)= & \int_{\left[0,1\right]^{d_x}}\left(2\eta(x)-1\right)\underset{(II)}{\underbrace{\left(1\left\{ x\in G\right\} -\sum_{\left(j_{1},\ldots,j_{d_x}\right)\in J_{\Bk}}\left[\prod_{v=1}^{d_x}b_{k_{v}j_{v}}(x_v)\right]\right)}dP_{X}(x)}\\
+ & \int_{\left[0,1\right]^{d_x}}\left(2\eta(x)-1\right)\underset{(III)}{\underbrace{\left(\sum_{\left(j_{1},\ldots,j_{d_x}\right)\notin J_{k}}\left[\prod_{v=1}^{d_x}b_{k_{v}j_{v}}(x_v)\right]-1\left\{ x\in{G}^{c}\right\} \right)}}dP_{X}(x).
\end{align*}

Let $Bin\left(k_{j},x_{j}\right)$, $j=1,\ldots,d_x$, be independent
binomial variables with parameters $(k_{j},x_{j})$. Then, both (II) and
(III) are equivalent to 
\begin{align*}
 & \Pr\left(\left(Bin(k_{1},x_{1}),\ldots,Bin(k_{d_x},x_{d_x})\right)\in J_{\Bk}^{c}\right)1\left\{ x\in G\right\} \\
- & \Pr\left(\left(Bin(k_{1},x_{1}),\ldots,Bin(k_{d_x},x_{d_x})\right)\in J_{\Bk}\right)1\left\{ x\in{G}^{c}\right\} .
\end{align*}
Hence,
\begin{align*}
(\ref{eq:berstein regret}) &= 2\int_{G}\left(2\eta(x)-1\right) \Pr\left(\left(Bin(k_{1},x_{1}),\ldots,Bin(k_{d_x},x_{d_x})\right)\in J_{\Bk}^{c}\right)dP_{X}(x).\\
 & - 2\int_{G^c}\Pr\left(\left(Bin(k_{1},x_{1}),\ldots,Bin(k_{d_x},x_{d_x})\right)\in J_{\Bk}\right)dP_{X}(x),
\end{align*} 
and therefore
 \begin{align*}
&\left|R_{\phi_h}\left(f_G\right)-R_{\phi_h}\left(1\left\{\cdot \in G \right\} - 1\left\{\cdot \notin G \right\} \right) \right| \\
\leq &\quad 2\underset{(IV)}{\underbrace{\int_{G}\Pr\left(\left(Bin(k_{1},x_{1}),\ldots,Bin(k_{d_x},x_{d_x})\right)\in J_{\Bk}^{c}\right)dP_{X}(x)}}\\
 & +2\underset{(V)}{\underbrace{\int_{G^{c}}\Pr\left(\left(Bin(k_{1},x_{1}),\ldots,Bin(k_{d_x},x_{d_x})\right)\in J_{\Bk}\right)dP_{X}(x)}}.
\end{align*}
We first evaluate (V). Let $\epsilon$ be a small positive value which converges to zero as $k_{v}\rightarrow\infty$. For small $\Delta_{v}\leq\epsilon/\sqrt{d_x}$,
$v=1,\ldots,d_x$, which converges to zero as $k_{v}\rightarrow\infty$,
define 
\begin{align*}
\tilde{G}^{c}{\equiv}  \left\{ x\in{G}^{c}:\left(x_{1}+\Delta_{1},\ldots,x_{d_x}+\Delta_{d_x}\right)\in{G}^{c}\right\} .
\end{align*}
This set is either nonempty or empty. We consider these cases
separately. First, suppose that $\tilde{G}^{c}$ is nonempty.
For each $x\in\tilde{G}^{c}$, let 
\begin{align*}
\left(j_{1}(x),\ldots,j_{d_x}(x)\right)\in  \underset{\left(j_{1},\ldots,j_{d_x}\right)\in J_{\Bk}:j_{1}/k_{1}\geq x_{1}+\Delta_{1},\ldots,j_{d_x}/k_{d_x}\geq x_{d_x}+\Delta_{d_x}}{\arg\min}\left\Vert x-\left(j_{1}/k_{1},\ldots,j_{d_x}/k_{d_x}\right)\right\Vert .
\end{align*}
Then
\begin{align*}
(V) &\leq  \int_{{G}^{c}\backslash\tilde{G}^{c}}
\Pr\left(\left(Bin(k_{1},x_{1}),\ldots,Bin(k_{d_x},x_{d_x})\right)\in J_{\Bk}\right)dP_{X}(x)\\
 & +\int_{\tilde{G}^{c}}\left(\sum_{v=1}^{d_x}\Pr\left(Bin(k_{d_x},x_{d_x})\geq j_{v}(x)\right)\right)dP_{X}(x)\\
& \leq  A\cdot\left(\Delta_{1}+\cdots+\Delta_{d_x}\right)\\
& +\int_{\tilde{G}^{c}}\sum_{v=1}^{d_x}\exp\left\{-2k_{v}\left(\frac{j_{v}(x)}{k_{v}}-x_{v}\right)^{2}\right\}dP_{X}(x)\\
&\leq A\cdot\left(\Delta_{1}+\cdots+\Delta_{d_x}\right)+\sum_{v=1}^{d_x}\int_{\tilde{G}^{c}}\exp\left(-2k_{v}\Delta_{v}^{2}\right)dP_{X}(x).
\end{align*}
To obtain the second inequality, we apply Hoeffding's inequality
to $\Pr\left(Bin(k_{d_x},x_{d_x})\geq j_{v}(x)\right)$,
which is applicable since $k_{v}x_{v}\leq j_{v}(x)$
for each $x\in\tilde{G}^{c}$, and use the following: 
\begin{align*}
 & \int_{{G}^{c}\backslash\tilde{G^{c}}}\Pr\left(\left(Bin(k_{1},x_{1}),\ldots,Bin(k_{d_x},x_{d_x})\right)\in J_{\Bk}\right)dP_{X}(x)\\
\leq &\ A\int_{{G}^{c}\backslash\tilde{G^{c}}}dx\leq A\cdot\left(\Delta_{1}+\cdots+\Delta_{d_x}\right),
\end{align*}
where the second inequality holds because $\int_{{G}^{c}\backslash\tilde{G^{ c}}}dx$ is bounded from above by $\sum_{v=1}^{d_x}\Delta_{v}-\left(d_x-1\right)\prod_{v=1}^{d_x}\Delta_{v}$, which is derived
when ${G}^{c}=\mathcal{X}$. The last inequality
follows from the fact that $j_{v}(x)/k_{v}-x_{v}\geq\Delta_{v}$
for all $v=1,\ldots,d_x$ and $x\in\tilde{G}$ . 

Next, we consider the case that $\tilde{G}^{ c}$ is empty. In
this case,
\begin{align*}
(V)= & \int_{{G}^{c}}\Pr\left(\left(Bin(k_{1},x_{1}),\ldots,Bin(k_{d_x},x_{d_x})\right)\in J_{\Bk}\right)dP_{X}(x)\\
\leq & \int_{{G}^{c}}dP_{X}(x)\leq A\cdot\max_{v=1,\ldots,d_x}\Delta_{v}.
\end{align*}
The inequality follows because $\int_{{G}^{c}}dx$
is bounded from above by $\max_{v=1,\ldots,d_x}\Delta_{v}$ when $\tilde{G}^{ c}$
is empty. Therefore, regardless of whether $\tilde{G}^{ c}$ is
empty or not, we have
\begin{align*}
(V)\leq  A\cdot\left(\Delta_{1}+\cdots+\Delta_{d_x}\right)+\sum_{v=1}^{d_x}\int_{\tilde{G}^{ c}}\exp\left(-2k_{v}\Delta_{v}^{2}\right)dP_{X}(x).
\end{align*}
Set $\Delta_{v}=\sqrt{\log k_{v}}/\left(2\sqrt{k_{v}}\right)$ for
each $v=1,\ldots,d_x$. Then we have
\begin{align*}
(V)& \leq  \frac{A}{2}\left(\sum_{v=1}^{d_x}\sqrt{\frac{\log k_{v}}{k_{v}}}\right)+\sum_{v=1}^{d_x}\exp\left(-\frac{1}{2}\log k_{v}\right)\\
&=  \frac{A}{2}\left(\sum_{v=1}^{d_x}\sqrt{\frac{\log k_{v}}{k_{v}}}\right)+\sum_{v=1}^{d_x}\frac{1}{\sqrt{k_{v}}}.
\end{align*}

Next, we evaluate (IV). For small $\Delta_{v}\leq\epsilon/\sqrt{d_x}$,
$v=1,\ldots,d_x$, which converges to zero as $k_{v}\rightarrow\infty$,
define 
\begin{align*}
\tilde{G}{\equiv} \left\{ x\in G:\left(x_{1}-\Delta_{1},\ldots,x_{d_x}-\Delta_{d_x}\right)\in G\right\} .
\end{align*}
We again separately consider the two cases: $\tilde{G}$ is nonempty or empty. First, suppose that $\tilde{G}$ is nonempty. For each
$x\in\tilde{G}$, let 
\begin{align*}
\left(\tilde{j}_{1}(x),\ldots,\tilde{j}_{d_x}(x)\right)\in  \underset{\left(j_{1},\ldots,j_{d_x}\right)\in\left(J_{k}\right)^{c}:j_{1}/k_{1}\leq x_{1}-\Delta_{1},\ldots,j_{d_x}/k_{d_x}\leq x_{d_x}-\Delta_{d_x}}{\arg\min}\left\Vert x-\left(j_{1}/k_{1},\ldots,j_{d_x}/k_{d_x}\right)\right\Vert .
\end{align*}
Then, 
\begin{align*}
(IV) & \leq  \int_{G\backslash\tilde{G}} \Pr\left(\left(Bin(k_{1},x_{1}),\ldots,Bin(k_{d_x},x_{d_x})\right)\in J_{\Bk}^{c}\right)dP_{X}(x)\\
 & +\int_{\tilde{G}}\left(\sum_{v=1}^{d_x}\Pr\left(Bin(k_{d_x},x_{d_x})\leq\tilde{j}_{v}(x)\right)\right)dP_{X}(x)\\
&\leq  A\cdot\left(\Delta_{1}+\cdots+\Delta_{d_x}\right)+\int_{\tilde{G}}\left(\sum_{v=1}^{d_x}\exp\left\{ -2k_{v}\left(x_{v}-\frac{\tilde{j}_{v}(x)}{k_{v}}\right)^{2}\right\} \right)dP_{X}(x)\\
&\leq A\cdot\left(\Delta_{1}+\cdots+\Delta_{d_x}\right)+\sum_{v=1}^{d_x}\int_{\tilde{G}}\exp\left(-2k_{v}\Delta_{v}^{2}\right)dP_{X}(x).
\end{align*}
The second inequality follows from Hoeffding's inequality and the fact that
\begin{align*}
 & \int_{G\backslash\tilde{G}} \Pr\left(\left(Bin(k_{1},x_{1}),\ldots,Bin(k_{d_x},x_{d_x})\right)\in J_{\Bk}^{c}\right)dP_{X}(x)\\
\leq &\ A\int_{G\backslash\tilde{G}}dx\leq  A\cdot\left(\Delta_{1}+\cdots+\Delta_{d_x}\right),
\end{align*}
where the inequality holds because $\int_{G\backslash\tilde{G}}dx$
attains its largest value, $\sum_{v=1}^{d_x}\Delta_{v}-\prod_{v=1}^{d_x}\Delta_{v}$,
when $G=\mathcal{X}$. The last inequality follows from the fact that
$j_{v}(x)/k_{v}\leq x_{v}-\Delta_{v}$ for all $v=1,\ldots d_x$
and $x\in\tilde{G}$ . 

Next, we consider the case that $\tilde{G}$ is empty. In this case, 
\begin{align*}
(IV) & =  \int_{G}\Pr\left(\left(Bin(k_{1},x_{1}),\ldots,Bin(k_{d_x},x_{d_x})\right) \in J_{\Bk}^{c}\right)dP_{X}(x)\\
& \leq  \int_{G}dP_{X}(x)\leq A\cdot\max_{v=1,\ldots,d_x}\Delta_{v},
\end{align*}
where the inequality follows because $\int_{G}dx$ is bounded
from above by $\max_{v=1,\ldots,d_x}\Delta_{v}$ when $\tilde{G}$
is empty. Therefore, regardless of whether $\tilde{G}$ is
empty or not, we have
\begin{align*}
(IV)\leq  A\cdot\left(\Delta_{1}+\cdots+\Delta_{d_x}\right)+\sum_{v=1}^{d_x}\int_{\tilde{G}}\exp\left(-2k_{v}\Delta_{v}^{2}\right)dP_{X}(x).
\end{align*}
Set $\Delta_{v}=\sqrt{\log k_{v}}/\left(2\sqrt{k_{v}}\right)$ for
each $v=1,\ldots,d_x$. Then, we have 
\begin{align*}
(IV) &\leq  \frac{A}{2}\left(\sum_{v=1}^{d_x}\sqrt{\frac{\log k_{v}}{k_{v}}}\right)+\sum_{v=1}^{d_x}\exp\left(-\frac{1}{2}\log k_{v}\right)\\
& = \frac{A}{2}\left(\sum_{v=1}^{d_x}\sqrt{\frac{\log k_{v}}{k_{v}}}\right)+\sum_{v=1}^{d_x}\frac{1}{\sqrt{k_{v}}}.
\end{align*}

Consequently, combining above the results, we obtain 
\begin{align*}
\left|R_{\phi_h}\left(f_G\right)-R_{\phi_h}\left(1\left\{\cdot \in G \right\} - 1\left\{\cdot \notin G \right\} \right) \right|\leq  2A\left(\sum_{v=1}^{d_x}\sqrt{\frac{\log k_{v}}{k_{v}}}\right)+\sum_{v=1}^{d_x}\frac{4}{\sqrt{k_{v}}}.
\end{align*}
\end{proof}

\bigskip{}

Finally, the following provides the proof of Lemma \ref{lem:berstein approximation error}.

\paragraph{Proof of Lemma \ref{lem:berstein approximation error}.}

We first prove (i). Let $G^{\ast}$ minimize $\MR(\cdot)$ over $\MG_{M}$. From Theorem
\ref{thm:continuous functions}, a classifier $\tilde{f}^{\ast}(x){\equiv}1\left\{ x\in G^{\ast}\right\} -1\left\{ x\in\left(G^{\ast}\right)^{c}\right\} $
minimizes the hinge risk $R_{\phi_h}(\cdot)$ over $\MF_{M}$.
Define a vector $\theta^{\ast}=\left\{ \theta_{j_{1}\ldots j_{d}}^{\ast}\right\} _{j_{1}=0,\ldots,k_{1};\ldots;j_{d}=0,\ldots,k_{d}}$
such that for each $j_{1},\ldots,j_{d}$, 
\begin{align*}
\theta_{j_{1}\ldots j_{d}}^{\ast}= & \begin{cases}
\begin{array}{c}
1\\
-1
\end{array} & \begin{array}{l}
\mbox{if }\left(j_{1}/k_{1},\ldots,j_{d}/k_{d}\right)\in G^{\ast}\\
\mbox{otherwise}.
\end{array}\end{cases}
\end{align*}
Note that $\theta^{\ast}$ is contained by $\widetilde{\Theta}$. Thus, it follows that 
\begin{align*}
 & \inf_{f \in \MB_\Bk}R_{\phi_h}(f)-R_{\phi_h}(\tilde{f}^{\ast})\leq R_{\phi_h}\left(B_{\Bk}\left(\theta^{\ast},\cdot\right)\right)-R_{\phi_h}(\tilde{f}^{\ast}).
\end{align*}
Then, applying Lemma \ref{lem:bernstein approximation error bound on  step function} to $R_{\phi_h}\left(B_{\Bk}\left(\theta^{\ast},\cdot\right)\right)-R_{\phi_h}(\tilde{f}^{\ast})$ yields result (i). 

Result (ii) follows immediately from Lemmas \ref{lem:step function approximation optimality_Bernstein polynomial} (iii) and \ref{lem:bernstein approximation error bound on  step function}. 
\qedsymbol

\bigskip{}


\section{Statistical properties of weighted classification and their application to weighted classification}
\label{appx:weighted classification}

\subsection{Statistical properties of weighted classification with hinge loss}
\label{sec:Statistical property for the weighted classification with hinge loss_WC}

This section extends the analysis of Section \ref{sec:statistical property} to weighted classification with hinge losses. 
Let  $\{(\omega_i, Y_i, X_i): i=1,\ldots,n \}$ be a sample of observations that are i.i.d. as $(\omega,Y,X)$. Let $P^{n}$ denote the joint distribution of the sample of $n$ observations. Given the sample, the empirical weighted classification risk and hinge risk for a classifier $f$ are defined as
\begin{align*}
\hat{R}^{\omega}(f) & \equiv n^{-1} \sum_{i=1}^{1}\omega_{i}1\{Y_{i} \cdot \text{sign}(f(X_{i})) \leq 0\}, \\ \hat{R}_{\phi_h}^{\omega}(f) & \equiv n^{-1} \sum_{i=1}^{1}\omega_{i} \max\{0,1-Y_{i}f(X_{i})\}, 
\end{align*}
respectively. 
Let $\Check{\MF}$ be a subclass of $\MF_\MG$, on which we learn a best classifier, and $\widetilde{\MF}_{\MG}$ be a constrained classification-preserving reduction of $\MF_\MG$ in the sense of Definition \ref{def:Constrained classification preserving reduction}.

As an analogue of Theorems \ref{thm:statistical propery for excess classification risk} and \ref{thm:statistical propery for excess classification risk_2}, the following theorem gives general upper bounds on the mean of the $\MG$-constrained excess weighted classification risk.

\bigskip{}

\begin{theorem}\label{thm:statistical propery for excess classification risk_WC}
Suppose that $\widetilde{\MF}_\MG$ is a subclass of $\MF_{\MG}$ satisfying conditions \ref{asm:sublevel set condition} and \ref{asm:optimizer condition} in Theorem \ref{thm:continuous functions}.
Let $\hat{f} \in {\arg\inf}_{f\in \Check{\MF}}\hat{R}_{\phi_{h}}^{\omega}(f)$, and $(q_{n}, \tau_{n}, L_{C}(r,n))$ be as in Theorem \ref{thm:statistical propery for excess classification risk}. \\
(i) Let $\MP$ be a class of distributions on $\Real_{+} \times \{-1,1\} \times \MX$ such that, for any $P \in \MP$, Condition \ref{con:bounded weight variable} holds and there exist positive constants $C$ and $r$ for which condition (\ref{eq:bracketing entoropy conditoin_G}) holds for all $\epsilon>0$ or condition (\ref{eq:bracketing entoropy conditoin_tildeF}) holds for all $\epsilon>0$.  
Then if $\Check{\MF}=\widetilde{\MF}_{\MG}$, 
\begin{align}
\sup_{P\in\mathcal{P}}E_{P^{n}}\left[R^{\omega}(\hat{f})-\inf_{f\in\MF_{\MG}}R^{\omega}(f)\right]\leq M L_{C}(r,n).
\nonumber
\end{align}
(ii) Suppose that $\MP$ is a class of distributions on $\Real_{+} \times \{-1,1\} \times \MX$ such that, for any $P \in \MP$, Condition \ref{con:bounded weight variable} holds and there exist positive constants $C^{\prime}$ and $r^{\prime}$ for which condition (\ref{eq:bracketing entoropy conditoin}) holds for all $\epsilon>0$.  
Then the following holds: 
\begin{align}
\sup_{P\in\mathcal{P}}E_{P^{n}}\left[R^{\omega}(\hat{f})-\inf_{f\in\MF_{\MG}}R^{\omega}(f)\right] &\leq  ML_{C^\prime}(r^\prime,n) 
 + \frac{1}{2}\left(\inf_{f\in \Check{\MF}}R_{\phi_{h}}^{\omega}(f)-\inf_{f\in\widetilde{\MF}_{\MG}}R_{\phi_{h}}^{\omega}(f)\right) \notag \\
 & + \frac{1}{2} \left(R_{\phi_{h}}^{\omega}(\tilde{f}_{G_{\hat{f}}} )-R_{\phi_{h}}^{\omega}(\hat{f})\right). \label{eq:upper bound_statistical propery_WC}
\end{align}

\end{theorem}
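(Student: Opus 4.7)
The plan is to adapt the proofs of Theorems \ref{thm:statistical propery for excess classification risk} and \ref{thm:statistical propery for excess classification risk_2} to the weighted setting by first establishing a weighted analog of the decomposition in Corollary \ref{cor:zhang's ineuality for admissible refinement}, and then bounding the estimation error via empirical-process arguments that carry the weight through the concentration inequalities. The boundedness hypothesis (Condition \ref{con:bounded weight variable}) will be used precisely to import Hoeffding-type bounds from the unweighted setting at the cost of an extra factor $M$.

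First I would prove a weighted counterpart of Corollary \ref{cor:zhang's ineuality for admissible refinement}: for any $f \in \MF_\MG$,
\begin{align*}
R^{\omega}(f) - \inf_{f' \in \MF_{\MG}} R^{\omega}(f') = \frac{1}{2}\left(R^{\omega}_{\phi_h}(f) - \inf_{f' \in \widetilde{\MF}_{\MG}} R^{\omega}_{\phi_h}(f')\right) + \frac{1}{2}\left(R^{\omega}_{\phi_h}(\tilde{f}_{G_f}) - R^{\omega}_{\phi_h}(f)\right).
\end{align*}
This follows by the same chain of equalities as in the proof of Corollary \ref{cor:zhang's ineuality for admissible refinement}, using Corollary \ref{cor:zhang's inequality_WC} (the weighted Zhang inequality) in place of Corollary \ref{corr:zhang's inequality}, and Theorem \ref{thm:continuous functions_WC}(ii) in place of Theorem \ref{thm:continuous functions}(ii). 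For part (i) with $\Check{\MF} = \widetilde{\MF}_\MG$, I mimic the proof of Theorem \ref{thm:statistical propery for excess classification risk}: define $\hat{f}^\dagger \equiv \tilde{f}_{G_{\hat{f}}}$, note $R^\omega(\hat{f}^\dagger) = R^\omega(\hat{f})$ and $\tilde{f}_{G_{\hat{f}^\dagger}} = \hat{f}^\dagger$, so the last term above vanishes at $\hat{f}^\dagger$ and
\begin{align*}
R^\omega(\hat{f}) - \inf_{f \in \MF_\MG} R^\omega(f) = \frac{1}{2}\left(R^\omega_{\phi_h}(\hat{f}^\dagger) - \inf_{f \in \widetilde{\MF}_\MG} R^\omega_{\phi_h}(f)\right).
\end{align*}
Because Theorems \ref{thm:risk equivalence_WC} and \ref{thm:continuous functions_WC} hold under the empirical distribution as well, $\hat{f}^\dagger$ also minimizes $\hat{R}^\omega_{\phi_h}$ over the step-function class $\MI_\MG \equiv \{\tilde{f}_G : G \in \MG\} \subseteq \widetilde{\MF}_\MG$, and the population hinge-risk minima over $\widetilde{\MF}_\MG$ and $\MI_\MG$ coincide; the remaining excess hinge risk is then controlled by a uniform deviation bound applied either to $\widetilde{\MF}_\MG$ or to $\MI_\MG$, depending on whether (\ref{eq:bracketing entoropy conditoin_tildeF}) or (\ref{eq:bracketing entoropy conditoin_G}) is assumed.

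For part (ii), combining the weighted identity with the standard add-and-subtract yields
\begin{align*}
R^\omega(\hat{f}) - \inf_{f \in \MF_\MG} R^\omega(f) &= \frac{1}{2}\left(R^\omega_{\phi_h}(\hat{f}) - \inf_{f \in \Check{\MF}} R^\omega_{\phi_h}(f)\right) + \frac{1}{2}\left(\inf_{f \in \Check{\MF}} R^\omega_{\phi_h}(f) - \inf_{f \in \widetilde{\MF}_\MG} R^\omega_{\phi_h}(f)\right) \\
&\quad + \frac{1}{2}\left(R^\omega_{\phi_h}(\tilde{f}_{G_{\hat{f}}}) - R^\omega_{\phi_h}(\hat{f})\right),
\end{align*}
so only the first (estimation) term requires a concentration argument, which I would obtain from empirical-process theory applied to $\Check{\MF}$ under (\ref{eq:bracketing entoropy conditoin}).

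The main obstacle will be adapting Theorem \ref{thm:estimation error} — and the underlying Proposition \ref{prop:empirical process result} and Lemma \ref{lem:empirical process result_2} — to the weighted linear loss $(\omega, Y, X) \mapsto \omega(1 - Y f(X))/2$. Under Condition \ref{con:bounded weight variable} this loss is bounded in $[0, M]$ rather than $[0, 1]$, so after dividing by $M$ the same bracketing-entropy chaining with Hoeffding and Bennett inequalities goes through; and because $f \mapsto \omega(1 - Yf(X))/2$ is Lipschitz in $f$ with constant at most $M/2$ uniformly in $(\omega, Y, X)$, the bracketing entropy of the weighted-loss class is dominated, up to a multiplicative constant in $\epsilon$, by $H_1^B(\epsilon, \Check{\MF}, P_X)$ (respectively by $H_1^B(\epsilon, \widetilde{\MF}_\MG, P_X)$ or $H_1^B(\epsilon, \MG, P_X)$). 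Tracking the factor $M$ through the constants $D_1, \dots, D_4$ in $L_C(r,n)$ then produces the bounds $M \cdot L_C(r, n)$ in part (i) and $M \cdot L_{C'}(r', n)$ in part (ii), completing (\ref{eq:upper bound_statistical propery_WC}).
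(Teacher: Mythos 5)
Your decomposition for part (ii) is exactly the one the paper uses (its equation (\ref{eq:decomposition_WC})), and your weighted counterpart of Corollary \ref{cor:zhang's ineuality for admissible refinement} is precisely the paper's Corollary \ref{cor:zhang's ineuality for admissible refinement_WC}. Where you diverge is in how the concentration step absorbs the weight. You argue via a Lipschitz contraction: the weighted hinge loss $\omega(1-Yf)/2$ has uniform Lipschitz constant $M/2$ in $f$, so brackets for $\Check{\MF}$ transfer to brackets for the loss class, and after normalizing by $M$ the unweighted machinery applies. The paper instead exploits the fact that its empirical-process result, Corollary \ref{cor:empirical process result} (and Lemma \ref{lem:empirical process result_2}), is already stated for classes of the product form $g(z_1)\cdot h(z_2)$ with $g$ valued in $[-1,1]$ and the bracketing-entropy condition placed only on the $h$-class. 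Writing the centered empirical process as $\sup_{f\in\dot{\MF}}|E_P[(\omega/M)Yf(X)] - n^{-1}\sum_i(\omega_i/M)Y_i f(X_i)|$ and setting $g(\omega,Y)=(\omega/M)Y$, which lies in $[-1,1]$ by Condition \ref{con:bounded weight variable}, the paper applies that corollary directly with no contraction step; the factor $M$ then appears simply by un-normalizing. Both routes work, but the paper's is a little more direct because it does not need to pass through the bracketing entropy of a composed loss class.

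For part (i) your plan — note the third term in the decomposition vanishes at $\hat{f}^\dagger\equiv\tilde{f}_{G_{\hat{f}}}$, observe that $\hat{f}^\dagger$ minimizes the empirical weighted hinge risk over the step-function class $\MI_\MG\subseteq\widetilde{\MF}_\MG$, and that the population minima over $\MI_\MG$ and $\widetilde{\MF}_\MG$ coincide, then apply the concentration bound to $\MI_\MG$ or $\widetilde{\MF}_\MG$ — is the direct weighted analogue of the proof of Theorem \ref{thm:statistical propery for excess classification risk}. The paper's own proof of part (i) takes a slightly different turn: it bounds the weighted excess hinge risk of $\hat{f}^\dagger$ by $M$ times the unweighted excess hinge risk (via $\sup_f E_P[\omega Y f] - E_P[\omega Y \hat{f}^\dagger]\leq M(\sup_f E_P[Y f]-E_P[Y \hat{f}^\dagger])$, justified by the classification-preserving-reduction property keeping the sign nonnegative) and then invokes the unweighted Theorem \ref{thm:statistical propery for excess classification risk}. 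Your version, which keeps the weights throughout and concentrates the weighted empirical process, is arguably the more transparent route; both deliver the claimed bound.
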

\begin{proof}
See Appendix \ref{appx:proof 4}.
\end{proof}

\bigskip{}

Similar comments as in Remark \ref{rem:approximation error} apply to Theorem \ref{thm:statistical propery for excess classification risk_WC} (ii). The two approximation errors in (\ref{eq:upper bound_statistical propery_WC}) are small as $\Check{\MF}$ is a good approximation of $\widetilde{\MF}_{\MG}$.


\subsection{Monotone weighted classification}
\label{appx:Monotone weighted classification}

Finally, we extend the results for monotone classification in Section \ref{sec:Applications to monotone classification} to weighted classification. Let $\MF_{\MG_M}$, $\MF_M$, and $\MB_{\Bk}$ be as in Section \ref{sec:Applications to monotone classification}, and suppose $\MX = [0,1]^{d_x}$.
Our aim is to find a best classifier that minimizes $R^{\omega}(\cdot)$ over $\MF_{\MG_M}$. We again consider using the whole class of monotone classifiers $\MF_M$ and sieve of Bernstein polynomials $\MB_\Bk$ in the empirical hinge risk minimization for weighted classification. 

The following theorem shows the statistical properties of monotone weighted classification using $\MF_M$. 

\bigskip{}

\begin{theorem} \label{thm:nonparametric monotone classifiation_WC}
Let $\mathcal{P}$ be a class of distributions of $(\omega,Y,X)$
such that Condition \ref{con:bounded weight variable} holds for any $P \in \MP$, and that for any $P\in \MP$ the marginal distribution $P_{X}$ is absolutely continuous
with respect to the Lebesgue measure on $\MX$ and has a density that
is bounded from above by some finite constant $A>0$. Let $q_{n}$ and $\tau_{n}$ be as in Theorem \ref{thm:nonparametric monotone classifiation}, and let $\hat{f}_{M} \in {\arg\inf}_{f\in \MF_M}\hat{R}_{\phi_{h}}^{\omega}(f)$. Then the following holds:
\begin{align}
\sup_{P\in\mathcal{P}}E_{P^{n}}\left[R^{\omega}(\hat{f}_{M})-\inf_{f\in\MF_{\MG_{M}}}R^{\omega}(f)\right] 
 \leq 
 \begin{cases}
\begin{array}{c}
2MD_{1}\tau_{n}+4MD_{2}\exp\left(-D_{1}^{2}q_{n}^{2}\right)\\
2MD_{3}\tau_{n}+2Mn^{-1}D_{4}
\end{array} & \begin{array}{l}
\mbox{if }d_x \geq 2\\
\mbox{if }d_x = 1
\end{array}\end{cases} \nonumber
\end{align}
for some positive constants $D_{1}$ and $D_{2}$, which depend only on $d_x$ and $A$.
\end{theorem}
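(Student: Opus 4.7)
The plan is to reduce this theorem to the general results already established by checking that all the structural hypotheses of Theorem \ref{thm:statistical propery for excess classification risk_WC} (i) are met by the monotone weighted classification setup, and then reading off the resulting bound for each regime of $d_x$.

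First, I would verify that $\MF_M$ qualifies as a classification-preserving reduction of $\MF_{\MG_M}$. This is exactly the content of Example \ref{example:Monotonic classification with a class of monotonic functions}: sublevel sets of any monotone $f\in\MF_M$ belong to $\MG_M$ (so condition \ref{asm:sublevel set condition} holds) and, for each $G\in\MG_M$, the indicator-style classifier $\tilde f_G(x)=2\cdot 1\{x\in G\}-1$ is itself monotone and hence lies in $\MF_M$ (so condition \ref{asm:optimizer condition} holds). Therefore the choice $\Check{\MF}=\widetilde{\MF}_{\MG_M}=\MF_M$ is admissible in Theorem \ref{thm:statistical propery for excess classification risk_WC} (i).

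Next, I would invoke Lemma \ref{lemma:bracketing entropy_monotone G}, which gives $H_1^B(\epsilon,\MG_M,P_X)\leq C\epsilon^{1-d_x}$ under the absolute continuity and bounded density assumption on $P_X$. This verifies the bracketing entropy condition (\ref{eq:bracketing entoropy conditoin_G}) for the class $\MP$ with exponent $r=d_x-1$. Combined with Condition \ref{con:bounded weight variable}, which is assumed uniformly across $\MP$, all hypotheses of Theorem \ref{thm:statistical propery for excess classification risk_WC} (i) are in place.

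Applying that theorem yields
\begin{equation*}
\sup_{P\in\MP}E_{P^n}\!\left[R^{\omega}(\hat f_M)-\inf_{f\in\MF_{\MG_M}}R^{\omega}(f)\right]\leq M L_C(d_x-1,n),
\end{equation*}
and substituting the piecewise definition of $L_C(r,n)$ from Theorem \ref{thm:statistical propery for excess classification risk} with $\tau_n$ and $q_n$ as specified (i.e.\ $r<1$ iff $d_x=1$; $r\geq 1$ iff $d_x\geq 2$) reproduces the two-branch bound stated in the theorem, with constants $D_1,\dots,D_4$ depending only on $d_x$ and $A$ through $C$ and $r=d_x-1$. There is essentially no obstacle here beyond bookkeeping: the heavy lifting was done in Theorem \ref{thm:statistical propery for excess classification risk_WC}, and the monotone case only requires matching the exponent from Lemma \ref{lemma:bracketing entropy_monotone G} with the three regimes for $\tau_n$.
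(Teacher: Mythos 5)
Your proof is correct and is essentially the paper's own argument: verify via Example \ref{example:Monotonic classification with a class of monotonic functions} that $\MF_M$ satisfies conditions \ref{asm:sublevel set condition} and \ref{asm:optimizer condition} (so it is a classification-preserving reduction of $\MF_{\MG_M}$), plug in the entropy bound $r=d_x-1$ from Lemma \ref{lemma:bracketing entropy_monotone G}, and apply Theorem \ref{thm:statistical propery for excess classification risk_WC} (i). The bookkeeping matching $r<1$ to $d_x=1$ and $r\geq 1$ to $d_x\geq 2$ is exactly as you describe.
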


\begin{proof}
Since $\MF_M$ satisfies conditions \ref{asm:sublevel set condition} and \ref{asm:optimizer condition} in Theorem \ref{thm:continuous functions} with $\MG$ being $\MG_M$ (Example \ref{example:Monotonic classification with a class of monotonic functions}), the result follows from Theorem \ref{thm:statistical propery for excess classification risk_WC} (i) and Lemma \ref{lemma:bracketing entropy_monotone G}.
\end{proof}
\bigskip{}

The classifier $\hat{f}_M$ that minimizes $R_{\phi_h}^{\omega}(\cdot)$ over $\MF_{M}$ can be obtained by solving the linear program in (\ref{eq:LP_monotone classification}) with $\sum_{i=1}^{n}Y_{i}f_{i}$ replaced by $\sum_{i=1}^{n}\omega_{i}Y_{i}f_{i}$.

For monotone weighted classification using Bernstein polynomials $\MB_\Bk$, similarly to Section \ref{sec:Monotone classification with Bernstein polynomial}, we propose using
\begin{align*}
    \hat{f}_{B}^{\dagger}(x)\equiv \sum_{j_{1}=1}^{k_{1}}\cdots\sum_{j_{d_{x}=1}}^{k_{d_{x}}}\sign\left(\hat{\theta}_{j_{1}\ldots j_{d_{x}}}\right)\cdot\left(b_{k_{1}j_{1}}\left(x_{1}\right)\times\cdots\times b_{k_{d_{x}}j_{d_{x}}}\left(x_{d_x}\right)\right),
\end{align*}
where $\{\hat{\theta}_{j_1\ldots j_{d_{x}}}\}_{j_1=0,\ldots,k_1;\cdots;j_{d_x}=0,\ldots,k_{d_x}}$ is the vector of coefficients characterizing some $\hat{f}_B\in \arg\inf_{f \in \mathcal{B}_{\mathbf{k}}}\hat{R}_{\phi_h}^{\omega}(f)$.
$\hat{f}_{B}^{\dagger}$ converts each estimated coefficient $\hat{\theta}_{j_1\ldots j_{d_x}}$ in $\hat{f}_{B}$ to either $-1$ or $1$ depending on its sign. The following theorem states the statistical properties of monotone weighted classification using $ \hat{f}_{B}^{\dagger}$. 
\bigskip

\begin{theorem} \label{thm:berstein polynomial approximation_WC} 
Let $\MP$ be a class of distributions of $(\omega,Y,X)$ that satisfies the same conditions as in Theorem \ref{thm:nonparametric monotone classifiation_WC}.
Let $\tilde{q}_{n}$ and $\tilde{\tau}_{n}$ be as in Theorem \ref{thm:berstein polynomial approximation} . Then the following holds:
\begin{align}
\sup_{P\in\mathcal{P}}E_{P^{n}}\left[R^{\omega}(\hat{f}_{B}^{\dagger})-\inf_{f\in\MF_{\MG_{M}}}R^{\omega}(f)\right]
&\leq 2MD_{1}\tilde{\tau}_{n}+4MD_{2}\exp\left(-D_{1}^{2}\tilde{q}_{n}^{2}\right) \notag \\
  & + 4MA\sum_{j=1}^{d_x}\sqrt{\frac{\log k_{j}}{k_{j}}}+\sum_{j=1}^{d_x}\frac{8M}{\sqrt{k_{j}}}, \label{eq:regret bound_berstein monotone classification_WC} \notag
\end{align}
where $D_{1}$ and $D_{2}$ are the same constants as in Theorem \ref{thm:nonparametric monotone classifiation_WC}, which depend only on $d_x$ and $A$.
\end{theorem}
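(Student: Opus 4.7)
My plan is to mirror the proof of Theorem \ref{thm:berstein polynomial approximation}, replacing each ingredient by its weighted counterpart. First, I would note that $\MB_{\Bk} \subseteq \MF_M$, so Lemma \ref{lemma:bracketing entropy_monotone function} yields $H_1^B(\epsilon, \MB_{\Bk}, P_X) \leq \widetilde{C}\epsilon^{-d_x}$ for a constant $\widetilde{C}$ depending only on $A$. This bracketing-entropy bound is in terms of $P_X$ only and thus carries over from the unweighted case with no modification. Since $\MF_M$ is a classification-preserving reduction of $\MF_{\MG_M}$ (Example \ref{example:Monotonic classification with a class of monotonic functions}), I can then apply Theorem \ref{thm:statistical propery for excess classification risk_WC}(ii) with $\Check{\MF} = \MB_{\Bk}$, $\widetilde{\MF}_{\MG_M} = \MF_M$, $r' = d_x$, and $\hat{f} = \hat{f}_B^{\dagger}$. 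The estimation-error term yields $M L_{\widetilde{C}}(d_x, n) = 2MD_1 \tilde{\tau}_n + 4MD_2 \exp(-D_1^2 \tilde{q}_n^2)$ for $d_x \geq 1$, matching the first two terms on the right-hand side of (\ref{eq:regret bound_berstein monotone classification_WC}).

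Second, I need a weighted analogue of Lemma \ref{lem:berstein approximation error}. The key algebraic identity is that for any $f:\MX \to [-1,1]$,
\begin{equation*}
R_{\phi_h}^{\omega}(f) = E_P[\omega] + \int_{\MX} \bigl(\omega_-(x)(1-\eta(x)) - \omega_+(x)\eta(x)\bigr) f(x)\, dP_X(x),
\end{equation*}
which plays the role of the unweighted identity $R_{\phi_h}(f) = 1 + \int (1-2\eta(x))f(x)\, dP_X(x)$ used in the proof of Lemma \ref{lem:bernstein approximation error bound on step function}. Under Condition \ref{con:bounded weight variable}, the coefficient $|\omega_-(x)(1-\eta(x)) - \omega_+(x)\eta(x)|$ is bounded by $M$ instead of $1$. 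Consequently, the pointwise binomial-tail estimates in the proof of Lemma \ref{lem:bernstein approximation error bound on step function} (via Hoeffding's inequality on $\Pr(Bin(k_v, x_v) \gtrless j_v(x))$) carry through verbatim with $(1-2\eta(x))$ replaced by $2(\omega_-(x)(1-\eta(x)) - \omega_+(x)\eta(x))$. This yields, for $G \in \MG_M$ and $f_G$ as defined there,
\begin{equation*}
\bigl|R_{\phi_h}^{\omega}(f_G) - R_{\phi_h}^{\omega}(\mathbf{1}_G - \mathbf{1}_{G^c})\bigr| \leq 2MA\sum_{j=1}^{d_x}\sqrt{\tfrac{\log k_j}{k_j}} + \sum_{j=1}^{d_x}\tfrac{4M}{\sqrt{k_j}}.
\end{equation*}
Third, I would extend Lemma \ref{lem:step function approximation optimality_Bernstein polynomial}(iii) to the weighted setting. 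Since $\hat{R}_{\phi_h}^{\omega}$ is still linear in the Bernstein coefficients $\theta \in \widetilde{\Theta}$ (the weights $\omega_i$ simply reweight the linear objective), the same sign-binarization argument shows that $\hat{f}_B^{\dagger}$ remains a minimizer of $\hat{R}_{\phi_h}^{\omega}$ over $\MB_{\Bk}$, and consequently the two approximation errors
\begin{equation*}
\inf_{f \in \MB_{\Bk}} R_{\phi_h}^{\omega}(f) - \inf_{f \in \MF_M} R_{\phi_h}^{\omega}(f) \quad \text{and} \quad R_{\phi_h}^{\omega}\bigl(\tilde{f}_{G_{\hat{f}_B^{\dagger}}}\bigr) - R_{\phi_h}^{\omega}(\hat{f}_B^{\dagger})
\end{equation*}
are each bounded by $2MA\sum_j \sqrt{(\log k_j)/k_j} + \sum_j 4M/\sqrt{k_j}$ via the weighted analogue of Lemma \ref{lem:berstein approximation error}. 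Substituting these into (\ref{eq:upper bound_statistical propery_WC}) produces the remaining terms $4MA\sum_j \sqrt{(\log k_j)/k_j} + \sum_j 8M/\sqrt{k_j}$, completing the bound.

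The main obstacle is verifying the weighted version of Lemma \ref{lem:bernstein approximation error bound on step function} rigorously. The geometric Bernstein-approximation argument there is delicate, involving a case analysis on whether the shrunk sets $\widetilde{G}$ and $\widetilde{G}^c$ are empty, bounds on $\int_{G \setminus \widetilde{G}} dP_X$ coming from the density bound $A$, and Hoeffding's inequality for binomial tails. All of these components only require upper bounds on the integrand in absolute value, and the integrand in the weighted case is $|\omega_-(x)(1-\eta(x)) - \omega_+(x)\eta(x)| \leq M$, so the bound $A$ gets replaced by $MA$ throughout. Once this extension is in hand, the rest of the proof is routine bookkeeping.
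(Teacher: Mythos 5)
Your proposal is correct and follows the same route as the paper's proof: combine Theorem \ref{thm:statistical propery for excess classification risk_WC}(ii) with the bracketing-entropy bound of Lemma \ref{lemma:bracketing entropy_monotone function} and the weighted Bernstein approximation-error lemma (the paper's Lemma \ref{lem:berstein approximation error_WC}, built on the weighted analogues of Lemmas \ref{lem:step function approximation optimality_Bernstein polynomial} and \ref{lem:bernstein approximation error bound on step function}). One minor slip: the integrand coefficient $(1-2\eta(x))$ is replaced by $\omega_-(x)(1-\eta(x)) - \omega_+(x)\eta(x)$ (no factor of $2$), but since you only need $|\cdot|\leq M$ for the bound, your final result is unaffected.
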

\begin{proof}
The result follows by combining Theorem \ref{thm:statistical propery for excess classification risk_WC} (ii), Lemma \ref{lemma:bracketing entropy_monotone function} in Appendix \ref{appx:proof 3}, and Lemma \ref{lem:berstein approximation error_WC} in Appendix \ref{appx:proof 4}.
\end{proof}
\bigskip{}

Similar comments as those in Section \ref{sec:Applications to monotone classification} apply to Theorems \ref{thm:nonparametric monotone classifiation_WC} and \ref{thm:berstein polynomial approximation_WC}.
Using $\MF_M$ leads to the faster convergence rate than using the Bernstein polynomials $\MB_{\mathbf{k}}$.
When using the Bernstein polynomials $\MB_\Bk$, to achieve the convergence rate of $\tilde{\tau}_n$ for the excess risk, Theorem \ref{thm:berstein polynomial approximation_WC}  suggests setting the tuning parameters $k_{j}$, $j=1,\ldots,d_x$, sufficiently large so that $\sqrt{\log k_{j}/k_{j}}=O\left(\tilde{\tau}_n\right)$.

\bigskip{}

\section{Proof of the results in Section \ref{sec:Extension to individualized treatment rules} and Appendix \ref{appx:weighted classification}}
\label{appx:proof 4}

This section provides proof of the results in Section \ref{sec:Extension to individualized treatment rules} and Appendix \ref{appx:weighted classification} for weighted classification. Most of the various proofs are natural extensions of the proofs of the results in Sections \ref{sec:calibration of MG-constrained classification}--\ref{sec:Applications to monotone classification}. For simplicity of notation, define a function
\begin{align}
    L(\omega_{+},\omega_{-},\eta)\equiv -\omega_{+}\eta + \omega_{-}(1-\eta), \tag*{}
\end{align}
the right hand side of which appears in (\ref{eq:conditon_risk equivalence_WC}). Similar to (\ref{eq:classification risk at G}) and (\ref{eq:simplified surrogate risk}), the following expressions are useful for the proofs in this appendix:
\begin{align}
  \MR^{\omega}(G)&=\int_{G}L(\omega_{+}(x),\omega_{-}(x),\eta(x))dP_{X}(x) +\int_{\MX}\omega_{+}(x)\eta(x)dP_{X}(x), 
  \label{eq:classification risk at G_WC} \\
  \MR_{\phi}^{\omega}(G)&=  \int_{G}\Delta C_{\phi}^{\omega}\left(\omega_{+}(x),\omega_{-}(x),\eta(x)\right)dP_X(x)  +\int_{\MX}C_{\phi}^{w-}\left(\omega_{+}(x),\omega_{-}(x),\eta(x)\right)dP_X(x).\label{eq:simplified surrogate risk_WC}
\end{align}
We first state the proofs of Theorem \ref{thm:risk equivalence_WC} and Corollary \ref{cor:zhang's inequality_WC}.

\paragraph{Proof of Theorem \ref{thm:risk equivalence_WC}.}

We first prove the `if' part of the first statement. Fix $P$, $\MG$, and $G_1, G_2 \in \MG$. By equation
(\ref{eq:simplified surrogate risk_WC}), 
\begin{align*}
  \MR_{\phi}^{\omega}\left(G_{1}\right)-\MR_{\phi}^{\omega}\left(G_{2}\right)
&=  \int_{G_{1}\backslash G_{2}}\Delta C_{\phi}^{\omega}\left(\omega_{+}(x),\omega_{-}(x),\eta(x)\right)dP_X(x)\\
 & -\int_{G_{2}\backslash G_{1}}\Delta C_{\phi}^{\omega}\left(\omega_{+}(x),\omega_{-}(x),\eta(x)\right)dP_X(x).
\end{align*}
Thus, $\MR_{\phi_{1}}^{\omega}\left(G_{1}\right)\geq\MR_{\phi_{1}}^{\omega}\left(G_{2}\right)$
is equivalent to 
\begin{align*}
  \int_{G_{1}\backslash G_{2}}\Delta C_{\phi_{1}}^{\omega}\left(\omega_{+}(x),\omega_{-}(x),\eta(x)\right)dP_X(x)
\geq \int_{G_{2}\backslash G_{1}}\Delta C_{\phi_{1}}^{\omega}\left(\omega_{+}(x),\omega_{-}(x),\eta(x)\right)dP_X(x).
\end{align*}
Replacing $\Delta C_{\phi_{1}}^{\omega}$ by $\Delta C_{\phi_{2}}^{\omega}=c\Delta C_{\phi_{1}}^{\omega}$
with $c>0$ does not change the above inequality. Moreover, replacing $\Delta C_{\phi_{1}}^{\omega}$ with $\Delta C_{\phi_{2}}^{\omega}$
in the above inequality is equivalent to $\MR_{\phi_{2}}^{\omega}\left(G_{1}\right)\geq\MR_{\phi_{2}}^{\omega}\left(G_{2}\right)$.
Therefore, since the above discussion holds for any $P$, $\MG$, and $G_1,G_2 \in \MG$, the `if' part of the first statement holds.

The `only if' part of the first statement of the theorem follows directly from Theorem \ref{thm:univesal equivalence} upon setting $\Delta C_{\phi}^{\omega}\left(\omega_{+}(x),\omega_{-}(x),\eta(x)\right)=\Delta C_{\phi}\left(\eta(x)\right)$, or equivalently $\omega_{+}(x)=\omega_{-}(x)=1$, for all $x\in\MX$.

We next prove the second statement, or equivalently that $\Delta C_{\phi_{01}}^{\omega}\left(\omega_{+},\omega_{-},\eta\right)=L(\omega_{+},\omega_{-},\eta)$ holds for all $\left(\omega_{+},\omega_{-},\eta\right)\in\Real_{+}\times\Real_{+}\times\left[0,1\right]$. For $f \in \MF_{\MG}$, it follows that 
\begin{align*}
C_{\phi_{01}}^{\omega}\left(\omega_{+},\omega_{-},f,\eta\right) &=  \omega_{+}1\{f\leq 0\}\eta
 +\omega_{-}1\{f>0\}\left(1-\eta\right),\\
\Delta C_{\phi_{01}}^{w+}\left(\omega_{+},\omega_{-},f,\eta\right) &= \omega_{-}(1-\eta),\\
\Delta C_{\phi_{01}}^{w-}\left(\omega_{+},\omega_{-},f,\eta\right) &= \omega_{+}\eta.
\end{align*}
Thus, we have 
\begin{align*}
    \Delta C_{\phi_{01}}^{\omega}\left(\omega_{+},\omega_{-},f,\eta\right) = \Delta C_{\phi_{01}}^{w+}\left(\omega_{+},\omega_{-},f,\eta\right) - \Delta C_{\phi_{01}}^{w-}\left(\omega_{+},\omega_{-},f,\eta\right)= L(\omega_{+},\omega_{-},\eta).
\end{align*}

Finally, we prove the last statement. For the hinge loss function $\phi_{h}(\alpha)=c\max\left\{ 0,1-\alpha\right\} $
and $f\in\MF_{\MG}$, we have
\begin{align*}
C_{\phi_{h}}\left(\omega_{+},\omega_{-},f,\eta\right)=  c\left(\omega_{+}\left(1-f\right)\eta
 +\omega_{-}\left(1+f\right)\left(1-\eta\right)\right).
\end{align*}
Hence, we obtain
\begin{align*}
\Delta C_{\phi_{h}}^{w+}\left(\omega_{+},\omega_{-},\eta\right)= & \begin{cases}
\begin{array}{c}
2c\omega_{-}\left(1-\eta\right)\\
c\left(\omega_{+}\eta+\omega_{-}\left(1-\eta\right)\right)
\end{array} & \begin{array}{l}
\mbox{for }L\left(\omega_{+},\omega_{-},\eta\right)< 0\\
\mbox{for }L\left(\omega_{+},\omega_{-},\eta\right)\geq 0
\end{array}\end{cases},\\
\Delta C_{\phi_{h}}^{\omega}\left(\omega_{+},\omega_{-},\eta\right)= & \begin{cases}
\begin{array}{c}
c\left(\omega_{+}\eta+\omega_{-}\left(1-\eta\right)\right)\\
2c\omega_{+}\eta
\end{array} & \begin{array}{l}
\mbox{for }L\left(\omega_{+},\omega_{-},\eta\right)< 0\\
\mbox{for }L\left(\omega_{+},\omega_{-},\eta\right)\geq 0
\end{array}\end{cases}.
\end{align*}
Therefore, $\Delta C_{\phi_{h}}^{\omega}\left(\omega_{+},\omega_{-},\eta\right)=cL(\omega_{+},\omega_{-},\eta)$
holds for all $\left(\omega_{+},\omega_{-},\eta\right)\in\Real_{+}\times\Real_{+}\times\left[0,1\right]$.\qedsymbol 

\bigskip{}

\paragraph{Proof of Corollary \ref{cor:zhang's inequality_WC}.}

Equation (\ref{eq:zhangs inequality_WC}) follows from
\begin{align*}
 & R^{\omega}(f)-\inf_{f\in\MF_{\MG}}R^{\omega}(f)=\MR^{\omega}\left(G_{f}\right)-\MR^{\omega}\left(G^{\ast}\right)\\
= &\ \int_{\MX}L(\omega_{+}(x),\omega_{-}(x),\eta(x))\left(1\{x\in G_{f}\} - 1\{x\in G^{\ast}\} \right)dP_X(x)\\
= &\ c^{-1}\int_{\MX}\Delta C_{\phi}^{\omega}\left(\omega_{+}(x),\omega_{-}(x),\eta(x)\right)\left(1\{x\in G_{f}\} - 1\{x\in G^{\ast}\} \right)dP_X(x)\\
= &\ c^{-1}\left(\MR_{\phi}^{\omega}\left(G_{f}\right)-\MR_{\phi}^{\omega}\left(G^{\ast}\right)\right)=c^{-1}\left(\inf_{\tilde{f}\in\MF_{G_{f}}}R_{\phi}^{\omega}(\tilde{f})-\inf_{f\in\MF_{\MG}}R_{\phi}^{\omega}(f)\right)\\
\leq &\ c^{-1}\left(R_{\phi}^{\omega}(f)-\inf_{f\in\MF_{\MG}}R_{\phi}^{\omega}(f)\right),
\end{align*}
where the first equality follows from (\ref{eq:classification risk at G_WC}); the second equality follows from the assumption; the third equality follows from (\ref{eq:simplified surrogate risk_WC}).\qedsymbol 

\bigskip{}

We next provide the proof of Theorem \ref{thm:continuous functions_WC}.
Beforehand, note that the weighted hinge risk can be expressed as
\begin{align}
R_{\phi_{h}}^{\omega}(f)= & \int_{\MX}\left(\omega_{+}(x)\left(1-f(x)\right)\eta(x)+\omega_{-}(x)\left(1+f(x)\right)\left(1-\eta(x)\right)\right)dP_{X}(x) \nonumber \\
= & \int_{\MX}L(\omega_{+}(x),\omega_{-}(x),\eta(x))f(x)dP_{X}(x) +E_{P}[\omega]. \label{eq:hinge risk expression_WC}
\end{align}
Moreover, for $G\in \MG$, $\MR(G)$ can be written as
\begin{align}
\MR^{\omega}(G) & =  \int_{\MX}\left(\omega_{+}(x)\eta(x)1\left\{ x\in G^{c}\right\} +\omega_{-}(x)\left(1-\eta(x)\right)1\left\{ x\in G\right\} \right)dP_{X}(x) \nonumber \\
&=  -\int_{G^{c}}L(\omega_{+}(x),\omega_{-}(x),\eta(x))\left(1-\eta(x)\right)dP_{X}(x) \nonumber \\
 & +\int_{\MX}\omega_{-}(x)\left(1-\eta(x)\right)dP_{X}(x). \label{eq:classification risk_complement_WC}
\end{align}

The following lemma, which is an analogue of Lemma \ref{lem:step functions}, will be used in the proof of Theorem \ref{thm:continuous functions_WC}.

\bigskip

\begin{lemma}\label{lem:step functions_WC}
Let $\MG \subseteq 2^{\MX}$ be a class of measurable subsets and $\widetilde{\MF}_{\MG,J}$ be defined as in (\ref{eq:class of step functioins}).\\
(i) Let $\tilde{f}^{\ast}$ be a minimizer of the weighted
hinge risk $R_{\phi_{h}}^{\omega}(\cdot)$ over $\widetilde{\MF}_{\MG,J}$.
Then $\tilde{f}^{\ast}$ minimizes the weighted classification risk
$R^{\omega}(\cdot)$ over $\widetilde{\MF}_{\MG,J}$,
and leads to $R_{\phi_{h}}^{\omega}(\tilde{f}^{\ast})=2\MR^{w\ast}$.\\
(ii) For $G^{\ast}\in\MG^{\ast}$, $\tilde{f}_{G^{\ast}}$
minimizes $R_{\phi_{h}}^{\omega}(\cdot)$ over $\widetilde{\MF}_{\MG,J}$.
\end{lemma}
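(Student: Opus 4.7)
}
My plan is to follow the structure of the proof of Lemma \ref{lem:step functions}, replacing the quantity $1-2\eta(x)$ by $L(\omega_+(x),\omega_-(x),\eta(x))$ throughout. The key identity I will establish first is that, for any step function $\tilde f(x)=2\sum_{j=1}^J c_j \mathbf{1}\{x\in G_j\}-1 \in \widetilde{\MF}_{\MG,J}$,
\begin{align*}
R_{\phi_h}^{\omega}(\tilde f) \;=\; 2\sum_{j=1}^{J} c_j\,\MR^{\omega}(G_j).
\end{align*}
To derive this, I will substitute the step-function form into the representation of the weighted hinge risk in (\ref{eq:hinge risk expression_WC}), rewrite the integral of $L$ over $\MX$ using $L(\omega_+,\omega_-,\eta)=-\omega_+\eta+\omega_-(1-\eta)$ and the identity $E_P[\omega]=\int\omega_+\eta\,dP_X+\int\omega_-(1-\eta)\,dP_X$, and then match the resulting expression with the representation of $\MR^{\omega}(G_j)$ in (\ref{eq:classification risk at G_WC}); the constant term $\int\omega_+\eta\,dP_X$ appearing in $\MR^{\omega}(G_j)$ will pair up cleanly with $-\int L\,dP_X+E_P[\omega]$ because $\sum_{j=1}^{J}c_j=1$.

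Once this identity is in hand, claim (ii) follows immediately. Since $c_j\geq 0$ and $\sum_j c_j=1$, we have $R_{\phi_h}^{\omega}(\tilde f)\geq 2\MR^{w\ast}$ for every $\tilde f\in\widetilde{\MF}_{\MG,J}$, and setting $G_1=G^{\ast}$, $c_1=1$ (and the remaining $c_j=0$) produces $\tilde f_{G^{\ast}}\in\widetilde{\MF}_{\MG,J}$ with $R_{\phi_h}^{\omega}(\tilde f_{G^{\ast}})=2\MR^{\omega}(G^{\ast})=2\MR^{w\ast}$. So $\tilde f_{G^{\ast}}$ attains the infimum and $\inf_{\widetilde{\MF}_{\MG,J}} R_{\phi_h}^{\omega}=2\MR^{w\ast}$.

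For claim (i), I will argue by contradiction exactly as in the proof of Lemma \ref{lem:step functions}. Suppose some minimizer $\tilde f^{\ast}$ satisfies $G_{\tilde f^{\ast}}\notin\MG^{\ast}$. Writing $\tilde f^{\ast}$ in step-function form with nested sets $G_J\subseteq\cdots\subseteq G_1$, let $m$ be the smallest index with $\sum_{j=1}^{m}c_j\geq 1/2$; by minimality $c_m>0$, and the nested structure forces $G_{\tilde f^{\ast}}=G_m$. Then
\begin{align*}
R_{\phi_h}^{\omega}(\tilde f^{\ast}) \;=\; 2\sum_{j=1}^{J}c_j \MR^{\omega}(G_j)
\;\geq\; 2c_m\,\MR^{\omega}(G_m) + 2(1-c_m)\,\MR^{w\ast} \;>\; 2\MR^{w\ast},
\end{align*}
contradicting the optimality of $\tilde f^{\ast}$ since $\tilde f_{G^{\ast}}$ already achieves $2\MR^{w\ast}$. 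Hence $G_{\tilde f^{\ast}}\in\MG^{\ast}$, which gives $R^{\omega}(\tilde f^{\ast})=\MR^{w\ast}$ and also $R_{\phi_h}^{\omega}(\tilde f^{\ast})=2\MR^{w\ast}$. The only real work beyond Lemma \ref{lem:step functions} is the bookkeeping that turns the weighted hinge risk into the clean multiple $2\sum_j c_j\MR^{\omega}(G_j)$; once that identity is in place the proof is a verbatim translation of the unweighted argument, so I do not expect any serious obstacle.
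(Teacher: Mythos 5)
Your proposal is correct and follows essentially the same route as the paper's proof: you establish the key identity $R_{\phi_h}^{\omega}(\tilde f)=2\sum_j c_j\,\MR^{\omega}(G_j)$ by substituting the step function into the weighted hinge risk representation and matching with the representation of $\MR^{\omega}(G)$, derive (ii) immediately from $\sum_j c_j=1$, and prove (i) by the same contradiction argument using the index $m$ at which the cumulative weight first reaches $1/2$. The bookkeeping you describe (using $E_P[\omega]=\int\omega_+\eta\,dP_X+\int\omega_-(1-\eta)\,dP_X$ to absorb the constant terms) is exactly what makes the paper's comparison of equations work, so the translation from the unweighted Lemma \ref{lem:step functions} is faithful.
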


\begin{proof}
Let $\tilde{f}\in\widetilde{\MF}_{\MG,J}$. The classifier $\tilde{f}$ has the form of 
\begin{align*}
    \tilde{f}(x)=2\sum_{j=1}^{J}c_{j}1\{x \in G_{j}\}-1 \label{eq:J step function_2}
\end{align*}
for some $G_{1},\ldots,G_{J} \in \MG$ such that $G_{J}\subseteq\cdots \subseteq G_{1}$ and some $c_{j}\geq 0$ for $j=1,\ldots,J$ with $\sum_{j=1}^{J}c_{j}=1$.
Substituting $\tilde{f}$ into (\ref{eq:hinge risk expression_WC}) yields
\begin{align}
    R_{\phi_{h}}^{\omega}(\tilde{f})= 2\sum_{j=1}^{J}c_{j}\int_{G_{j}}L(\omega_{+}(x),\omega_{-}(x),\eta(x))\left(1-\eta(x)\right)dP_{X}(x) +2\int_{\MX}\omega_{+}(x)\eta(x)dP_{X}(x).  \tag*{}
\end{align}
Comparing this expression with equation (\ref{eq:classification risk at G_WC}), 
\begin{align}
R_{\phi_{h}}^{\omega}(\tilde{f}) = 2\sum_{j=1}^{J}c_{j}\MR^{\omega}(G_{j}).
\label{eq:hinge risk for step function_WC}
\end{align}
From this expression and the assumption that $\sum_{j=1}^{J}c_{j}=1$, $R_{\phi_{h}}^{\omega}(f)\geq 2\MR^{w*}$ for any $f \in \widetilde{\MF}_{\MG,J}$.

For $G^{*} \in \MG^{*}$, $\tilde{f}$ is equivalent to $\tilde{f}_{G^{\ast}}$ when $G_1=G^{*}$ and $c_j=1$. Thus, from equation (\ref{eq:hinge risk for step function_WC}), $R_{\phi_{h}}^{\omega}(\tilde{f}_{G^{\ast}}) = 2\MR^{w*}$; that is, $\tilde{f}_{G^{\ast}}$ minimizes $R_{\phi_{h}}^{\omega}$ over $\widetilde{\MF}_{\MG,J}$. This proves statement (ii) of the lemma.

We will next prove that the minimizer $\tilde{f}^{\ast}$ of $R_{\phi_{h}}^{\omega}(\cdot)$
over $\widetilde{\MF}_{\MG,J}$ also minimizes $R^{\omega}(\cdot)$
over $\widetilde{\MF}_{\MG,J}$. Let $\tilde{f}^{\ast}$ be denoted by 
\begin{align}
\tilde{f}^{*}(x)=  2\sum_{j=1}^{J}c_{j}1\left\{ x\in G_{j}\right\} -1 \label{eq:step function}
\end{align}
for some $G_{1},\ldots,G_{J} \subseteq \MG$ such that $G_{J}\subseteq\cdots\subseteq G_{1}$
and some $c_{j}\geq0$ for $j=1,\ldots,J$ such that $\sum_{j=1}^{J}c_{j}=1$.
To obtain a contradiction, suppose
$\tilde{f}^{\ast}$ does not minimize $R^{\omega}(\cdot)$
over $\widetilde{\MF}_{\MG,J}$. As $\tilde{f}^{\ast}$ does not minimize
$R^{\omega}(\cdot)$ over$\widetilde{\MF}_{\MG,J}$, $G_{\tilde{f}^{*}}\notin\MG^{\ast}$. Letting $m$ be the smallest number in $\left\{ 1,\ldots,J\right\} $
such that $\sum_{j=1}^{m}\geq 1/2$, $\tilde{G}_{m}=G_{\tilde{f}^{\ast}}$.
Then
\begin{align*}
R_{\phi_{h}}^{\omega}(\tilde{f})&= 2\sum_{j=1}^{J}c_{j}\MR^{\omega}(G_{j})\\
&\geq 2c_{m}\MR^{\omega}(G_{\tilde{f}^{*}}) + 2(1-c_{m})\MR^{w*}\\
&>2\MR^{w*}
\end{align*}
where the last line comes from $c_{m}^{-}>0$ and $G_{\tilde{f}^{*}}\notin\MG^{\ast}$.
Since $\inf_{f\in\widetilde{\MF}_{\MG,J}}R_{\phi_{h}}^{\omega}(f)=2\MR^{w\ast}$,
this contradicts the assumption that $\tilde{f}^{*}$ minimizes $R_{\phi_{h}}^{\omega}$ over $\widetilde{\MF}_{\MG,J}$. 
\end{proof}

\bigskip{}

We are now equipped to give proof of Theorem \ref{thm:continuous functions_WC}.

\paragraph{Proof of Theorem \ref{thm:continuous functions_WC}.}
Given $\tilde{f}^{\ast} \in \arg \inf_{f \in \widetilde{\MF}_{\MG}} R_{\phi_h}^{\omega}(f)$, let $\widetilde{\MF}_{J}^{\ast}$ be a class of classifiers as in the proof of Theorem \ref{thm:continuous functions}. Note that $G_{f}=G_{\tilde{f}^{\ast}}$ for any $f\in\widetilde{\MF}_{J}^{\ast}$.
Similarly to the proof of Theorem \ref{thm:continuous functions}, we define a sequence of classifiers
$\left\{ \tilde{f}_{J}^{\ast}\right\} _{J=1}^{\infty}$ such that
\begin{align*}
\tilde{f}_{J}^{\ast}(x)\equiv \frac{2}{J}\sum_{j=1}^{J}1\{\tilde{f}^{*}(x)\geq 2(j/J)-1\}-1.
\end{align*}
We show in the proof of Theorem \ref{thm:continuous functions} that $\tilde{f}_{J}^{\ast}\rightarrow\tilde{f}^{\ast}$
as $J\rightarrow\infty$, $P_X$-almost everywhere.

Then
\begin{align}
R_{\phi_{h}}^{\omega}(\tilde{f}^{\ast})= & \int_{\MX}L(\omega_{+}(x),\omega_{-}(x),\eta(x))\tilde{f}^{\ast}(x)dP_{X}(x)+E_{P}[\omega] \notag \\
= & \lim_{J\rightarrow\infty}\int_{\MX}\tilde{f}_{J}^{\ast}(x)dP_{X}(x)+E_{P}[\omega] \notag \\
= & \lim_{J\rightarrow\infty}R_{\phi_{h}}^{\omega}(\tilde{f}_{J}^{\ast}(x))\geq\lim_{J\rightarrow\infty}\inf_{\tilde{f}\in\widetilde{\MF}_{J}^{\ast}}R_{\phi_{h}}^{\omega}(\tilde{f}) \label{eq:DCT1_WC} \\
\geq & \lim_{J\rightarrow\infty}\inf_{\tilde{f}\in\widetilde{\MF}_{\MG.J}}R_{\phi_{h}}^{\omega}(\tilde{f}), \notag
\end{align}
where the first and third equalities follow from (\ref{eq:hinge risk expression_WC}). The second equality
follows from the dominated convergence theorem, which holds as both
\begin{align*}
 L(\omega_{+}(X),\omega_{-}(X),\eta(X))\tilde{f}_{J}^{\ast}(X)   \rightarrow  L(\omega_{+}(X),\omega_{-}(X),\eta(X))\tilde{f}^{\ast}(X)
\end{align*}
and 
\begin{align*}
\left|L(\omega_{+}(X),\omega_{-}(X),\eta(X))\tilde{f}_{J}^{\ast}(X)\right|<  \infty
\end{align*}
hold $P_X$-almost everywhere, where the second condition is satisfied by Condition \ref{con:bounded weight variable}.
The first inequality follows from $\tilde{f}_{J}^{\ast}\in\widetilde{\MF}_{J}^{\ast}$,
and the last inequality follows from $\widetilde{\MF}_{J}^{\ast}\subseteq\widetilde{\MF}_{\MG.J}$.

Lemma \ref{lem:step functions_WC} shows that $\inf_{\tilde{f}\in\widetilde{\MF}_{\MG,J}}R_{\phi_{h}}^{\omega}(\tilde{f})=2\MR^{w\ast}$
for any $J$. Thus, we have
\begin{align*}
R_{\phi_{h}}^{\omega}(\tilde{f}^{\ast})\geq  \lim_{J\rightarrow\infty}\inf_{\tilde{f}\in\widetilde{\MF}_{\MG.J}}R_{\phi_{h}}^{\omega}(\tilde{f})=2\MR^{w\ast}.
\end{align*}
This means that the minimal value of $R_{\phi_{h}}^{\omega}$ on $\widetilde{\MF}_{\MG}$
is at least $2\MR^{w\ast}$. Lemma \ref{lem:step functions_WC} also shows that $\tilde{f}_{G^{*}}$ leads to $R_{\phi_{h}}^{\omega}(\tilde{f}_{G^{*}})=2\MR^{w\ast}$.
Therefore, $\tilde{f}_{G^{*}}$ minimizes $R_{\phi_{h}}^{\omega}$ over
$\widetilde{\MF}_{\MG}$, which proves statement (ii) of the
theorem.

We will next prove statement (i) of the theorem. Let $\tilde{f}^{\ast} \in \arg \inf_{f \in \widetilde{\MF}_{\MG}} R_{\phi_h}^{\omega}(f)$. To obtain a contradiction, suppose that $\tilde{f}^{\ast}$ does not minimize $R^{\omega}(\cdot)$
over $\widetilde{\MF}_{\MG}$, or equivalently $G_{\tilde{f}^{\ast}}\notin\MG^{\ast}$.
Then, from equation (\ref{eq:hinge risk for step function_WC}), for any $J$ and $\tilde{f}\in \widetilde{\MF}_{J}^{*}$, $R_{\phi_h}^{\omega}(\tilde{f})>2 \MR^{w*}$ holds.
Therefore, from equation (\ref{eq:DCT1_WC}), 
\begin{align*}
R_{\phi_{h}}^{\omega}(\tilde{f}^{\ast})  \geq\lim_{J\rightarrow\infty}\inf_{\tilde{f}\in\widetilde{\MF}_{J}^{\ast}}R_{\phi_{h}}^{\omega}(\tilde{f})>2\MR^{w\ast}.
\end{align*}
This contradicts the assumption that $\tilde{f}^{\ast}$ minimizes $R_{\phi_{h}}^{\omega}$
over $\widetilde{\MF}_{\MG}$ because $R_{\phi_{h}}^{\omega}(\tilde{f}_{G^*})=2\MR^{w\ast}$.\qedsymbol 

\bigskip{}

The following corollary shows a similar relationship between the $\MG$-constrained excess weighted-classification risk and $\MF_{\MG}$-constrained excess weighted-hinge risk as is present in Corollary \ref{cor:zhang's ineuality for admissible refinement}.

\bigskip{}

\begin{corollary}\label{cor:zhang's ineuality for admissible refinement_WC}
Assume $\widetilde{\MF}_{\MG}$ is a subclass of $\MF_{\MG}$ satisfying conditions \ref{asm:sublevel set condition} and \ref{asm:optimizer condition} in Theorem \ref{thm:continuous functions}. If $\phi$ satisfies $\Delta C_{\phi}^{\omega}\left(\omega_{+},\omega_{-},\eta\right)=c(-\omega_{+}\eta+\omega_{-}\left(1-\eta\right))$ for some $c>0$,
\begin{align}
c(R^{\omega}(f)-\inf_{f\in \MF_\MG} R^{\omega}(f)) =\ & \frac{1}{2}\left(R_{\phi}^{\omega}(f)-\inf_{f\in\widetilde{\MF}_{\MG}}R_{\phi}^{\omega}(f)\right) \nonumber + 
\frac{1}{2}\left(R_{\phi}^{\omega}(\tilde{f}_{G_f}) - R_{\phi}^{\omega}(f) \right)\label{eq:zhang's inequality_WC} \notag
\end{align}
for any $f\in \MF_\MG$. 
\end{corollary}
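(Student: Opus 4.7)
The plan is to follow the proof of Corollary \ref{cor:zhang's ineuality for admissible refinement} in the weighted setting. The crux is to establish
\begin{equation*}
R_\phi^\omega(\tilde{f}_{G_f}) - \inf_{f \in \widetilde{\MF}_{\MG}} R_\phi^\omega(f) = 2c\bigl(R^\omega(f) - \inf_{f \in \MF_{\MG}} R^\omega(f)\bigr),
\end{equation*}
since writing $R_\phi^\omega(\tilde{f}_{G_f}) - \inf R_\phi^\omega$ as $[R_\phi^\omega(f) - \inf R_\phi^\omega] + [R_\phi^\omega(\tilde{f}_{G_f}) - R_\phi^\omega(f)]$ and dividing both sides by two immediately yields the claimed identity.

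To establish the displayed equation, I would first compute $R_\phi^\omega(\tilde{f}_G)$ directly for any $G \in \MG$. Because $\tilde{f}_G$ takes only the values $\pm 1$, splitting on $X \in G$ versus $X \in G^c$ and grouping by the value of $Y$ gives
\begin{equation*}
R_\phi^\omega(\tilde{f}_G) = \phi(1)\bigl(E_P[\omega] - \MR^\omega(G)\bigr) + \phi(-1)\MR^\omega(G),
\end{equation*}
where I use the first-line expression for $\MR^\omega(G)$ in (\ref{eq:classification risk at G_WC}) to identify the coefficient of $\phi(-1)$ as $\MR^\omega(G)$ and that of $\phi(1)$ as $E_P[\omega]-\MR^\omega(G)$. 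To pin down the two values, I would specialize the assumption $\Delta C_\phi^\omega = cL$ to $\omega_+ = \omega_- = 1$, which recovers $\Delta C_\phi(\eta) = c(1-2\eta)$; Corollary \ref{corr:universally equivalence} then forces $\phi(\alpha) = c\max\{0,1-\alpha\} + \gamma$ for some $\gamma \geq 0$, so that $\phi(-1) - \phi(1) = 2c$ and $\phi(1) = \gamma$. Substituting yields $R_\phi^\omega(\tilde{f}_G) = 2c\MR^\omega(G) + \gamma E_P[\omega]$.

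Next, I would invoke Theorem \ref{thm:continuous functions_WC}: for any $G^* \in \MG^*$, $\tilde{f}_{G^*}$ minimizes $R_{\phi_h}^\omega$ over $\widetilde{\MF}_{\MG}$, and because $R_\phi^\omega$ and the scaled $R_{\phi_h}^\omega$ (with coefficient $c$) differ only by the $f$-independent constant $\gamma E_P[\omega]$, the same $\tilde{f}_{G^*}$ minimizes $R_\phi^\omega$ over $\widetilde{\MF}_{\MG}$. Thus $\inf_{f \in \widetilde{\MF}_{\MG}} R_\phi^\omega(f) = 2c\MR^{w\ast} + \gamma E_P[\omega]$. Subtracting this from $R_\phi^\omega(\tilde{f}_{G_f}) = 2c\MR^\omega(G_f) + \gamma E_P[\omega]$ makes the $\gamma E_P[\omega]$ terms cancel and, since $\MR^\omega(G_f) = R^\omega(f)$ and $\MR^{w\ast} = \inf_{f\in\MF_\MG}R^\omega(f)$, produces the displayed equation. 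The main obstacle is the appeal to Corollary \ref{corr:universally equivalence} to identify $\phi$ as a hinge loss plus a non-negative constant; the required classification-calibration of $\phi$ is itself implied by $\Delta C_\phi^\omega = cL$ with $c > 0$, since the specialization to $\omega_+ = \omega_- = 1$ forces $\Delta C_\phi$ to change sign exactly at $\eta = 1/2$.
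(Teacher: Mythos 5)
Your proof reaches the paper's conclusion by a route that is cleaner in one respect and rests on the same unstated assumption in another. The paper averages the two representations (\ref{eq:classification risk at G_WC}) and (\ref{eq:classification risk_complement_WC}) of $\MR^\omega(G_f)$ to get $2cR^\omega(f) = c\int_\MX L\,\tilde{f}_{G_f}\,dP_X + cE_P[\omega]$, identifies the right side as $R_\phi^\omega(\tilde{f}_{G_f})$ via (\ref{eq:hinge risk expression_WC}), and leaves the identification of $\inf_{\widetilde{\MF}_\MG}R_\phi^\omega$ implicit. You instead observe that $\tilde{f}_G$ takes only the values $\pm 1$, giving $R_\phi^\omega(\tilde{f}_G) = \phi(1)E_P[\omega] + (\phi(-1)-\phi(1))\MR^\omega(G)$, and then pin down $\phi(-1)-\phi(1)$. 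This isolates the key algebra more transparently, and it explicitly surfaces the dependence on $\inf_{\widetilde{\MF}_\MG}R_\phi^\omega$ and on the endpoint values $\phi(\pm1)$, both of which the paper glosses over.

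Your appeal to Corollary \ref{corr:universally equivalence} to deduce $\phi(-1)-\phi(1)=2c$ is exactly the point where both your proof and the paper's are actually relying on $\phi$ being a (shifted) $c$-hinge, so you are right to pause there. Be aware, however, that the hypothesis $\Delta C_\phi^\omega=cL$ does not by itself pin down $\phi(\pm 1)$: for example, $\phi(\alpha)=c$ on $[-1,0)$ and $\phi(\alpha)=c\max\{0,1-\alpha\}$ on $[0,1]$ also gives $\Delta C_\phi^\omega=cL$ for all $(\omega_+,\omega_-,\eta)$, yet $\phi(-1)-\phi(1)=c$ rather than $2c$, and the stated identity would then be off by a factor of two. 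So the hinge form is really an assumption the corollary is imposing rather than a consequence of $\Delta C_\phi^\omega=cL$. Your further argument that classification-calibration follows because $\Delta C_\phi$ changes sign at $\eta=1/2$ does not close this, since Definition \ref{def:classification-calibrated loss functions} takes infima over all of $\mathbb{R}$ while $\Delta C_\phi$ uses the restriction to $[-1,1]$. This imprecision is shared with the paper's own proof and with Corollary \ref{corr:universally equivalence} as stated; within that imprecision your argument is as complete as the paper's and clearer about what is actually being used.
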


\begin{proof}
By equations (\ref{eq:classification risk at G_WC}) and (\ref{eq:classification risk_complement_WC}), $R^{\omega}(f)$ can be written as
\begin{align*}
    cR^{\omega}(f) = \frac{c}{2}\left\{\int_{\MX}L(\omega_{+}(x),\omega_{-}(x),\eta(x))\tilde{f}_{G_f}(x)dP_{X}(x) 
    + E_{P}[\omega]\right\}.
\end{align*}
From equation (\ref{eq:hinge risk expression_WC}), the right-hand side is equal to $2^{-1}R_{\phi}^{\omega}(\tilde{f}_{G_f})$.
\end{proof}

\bigskip{}

We now give the proof of Theorem \ref{thm:statistical propery for excess classification risk_WC}, which is an extension of the proof of Theorem \ref{thm:statistical propery for excess classification risk}.

\paragraph{Proof of Theorem \ref{thm:statistical propery for excess classification risk_WC} (ii).}

For convenience of notation, we prove the result with $C^\prime$ and $r^\prime$ replaced by $C$ and $r$, respectively. Let $P\in\mathcal{P}$ be fixed. First of all, Corollary \ref{cor:zhang's ineuality for admissible refinement_WC} and decomposing $R_{\phi_{h}}^{\omega}(\hat{f}) - \inf_{f\in \widetilde{\MF}_{\MG}}R_{\phi_{h}}^{\omega}(f)$ gives 
\begin{equation}
\label{eq:decomposition_WC}
\begin{split}
R^{\omega}(\hat{f})-\inf_{f\in\MF_{\MG}}R^{\omega}(f) & =  \frac{1}{2}\left(R_{\phi_{h}}^{\omega}(\hat{f})-\inf_{f\in \Check{\MF}}R_{\phi_h}^{\omega}(f)\right)+\frac{1}{2}\left(\inf_{f\in \Check{\MF}}R_{\phi_h}^{\omega}(f)-\inf_{f\in\widetilde{\MF}_{\MG}}R_{\phi_h}^{\omega}(f)\right) \\
 & +\frac{1}{2}\left(R_{\phi_h}^{\omega}(\tilde{f}_{G_{\hat{f}}})-R_{\phi_h}^{\omega}(\hat{f})\right).
\end{split}
\end{equation}
Hence, to obtain the inequality in (\ref{eq:upper bound_statistical propery_WC}), we need to prove that
\begin{align*}
  R_{\phi_{h}}^{\omega}(\hat{f})-\inf_{f\in \Check{\MF}}R_{\phi_h}^{\omega}(f) \leq  L_{C}(r,n).
\end{align*}
We follow the same strategy as the proof of Theorem \ref{thm:estimation error}.
Let $\Check{f}^{\ast}$ minimizes $R_{\phi_h}^{\omega}(\cdot)$ over $\Check{\MF}$. A standard argument gives
\begin{align}
E_{P^n}\left[R_{\phi_{h}}^{\omega}(\hat{f})-\inf_{f\in \Check{\MF}}R_{\phi_{h}}^{\omega}(f)\right] & \leq E_{P^n}\left[R_{\phi_{h}}^{\omega}(\hat{f})-\hat{R}_{\phi_{h}}^{\omega}(\hat{f})+\hat{R}_{\phi_{h}}^{\omega}(\Check{f}^{\ast})-R_{\phi_{h}}^{\omega}(\Check{f}^{\ast})\right] \nonumber \\
 & \left(\because\hat{R}_{\phi_{h}}^{\omega}(\hat{f})\leq\hat{R}_{\phi_{h}}^{\omega}(\Check{f}^{\ast})\right)\nonumber \\
& =  2E_{P^n}\left[R_{\phi_{h}}^{\omega}\left(\frac{\hat{f}+1}{2}\right)-\hat{R}_{\phi_{h}}^{\omega}\left(\frac{\hat{f}+1}{2}\right) \right] \nonumber \\  
& + 2E_{P^n}\left[\hat{R}_{\phi_{h}}^{\omega}\left(\frac{\Check{f}^{\ast}+1}{2}\right)-R_{\phi_{h}}^{\omega}\left(\frac{\Check{f}^{\ast}+1}{2}\right)\right]\nonumber \\
  & \leq  4\sup_{f\in \dot{\MF}}E_{P^n}\left[\left|R_{\phi_{h}}^{\omega}(f)-\hat{R}_{\phi_{h}}^{\omega}(f)\right|\right], \nonumber
\end{align}
where $\dot{\MF}=\left\{ (f+1)/2:f\in \Check{\MF}\right\}$ as in the proof of Theorem \ref{thm:estimation error}. 

Note that
\begin{align}
    & \sup_{f\in \dot{\MF}}E_{P^n}\left[\left|R_{\phi_{h}}^{\omega}(f)-\hat{R}_{\phi_{h}}^{\omega}(f)\right|\right] \notag \\
    =&\ M\sup_{f\in \dot{\MF}}E_{P^{n}}\left[\left|E_{P}\left[\left(\frac{\omega}{M}\right)Yf(x)\right]-\frac{1}{n}\sum_{i=1}^{n}\left(\frac{\omega_{i}}{M}\right)Y_{i}f\left(X_{i}\right)\right|\right] \label{eq:empirical process expression_WC}
\end{align}
and that 
\begin{align*}
 \sup_{f\in \dot{\MF}}\left|E_{P}\left[\left(\frac{\omega}{M}\right)Yf(x)\right]-\frac{1}{n}\sum_{i=1}^{n}\left(\frac{\omega_{i}}{M}\right)Y_{i}f\left(X_{i}\right)\right|\leq 2.   
\end{align*}
We first prove the result for the case of $r\geq 1$.
For any $f\in\dot{\MF}$ and $D>0$,
\begin{alignat*}{1}
&\frac{\sqrt{n}}{q_{n}}\sup_{f\in\dot{\MF}}E_{P^{n}}\left[\left|E_{P}\left[\left(\frac{\omega}{M}\right)Yf(x)\right]-\frac{1}{n}\sum_{i=1}^{n}\left(\frac{\omega_{i}}{M}\right)Y_{i}f\left(X_{i}\right)\right|\right]\\ 
\leq&\  D +\frac{2\sqrt{n}}{q_{n}}P^{n}\left(\sup_{f\in \Grave{\MF}}\frac{\sqrt{n}}{q_{n}}\left|E_{P}\left[\left(\frac{\omega}{M}\right)Yf(x)\right]-\frac{1}{n}\sum_{i=1}^{n}\left(\frac{\omega_{i}}{M}\right)Y_{i}f\left(X_{i}\right)\right|>D\right).
\end{alignat*}
Then, applying Corollary \ref{cor:empirical process result}, setting
$Z_1=\left(\omega,Y\right)$, $Z_2=X$, $g\left(Z_{1}\right)=(\omega/M)\cdot Y$ and
$\MH=\dot{\MF}$, shows that
there exist $D_{1},D_{2},D_{3}>0$, depending only on $r$ and $C$, such that 
\begin{align*}
P^{n}\left(\sup_{f\in\Grave{\MF}}\frac{\sqrt{n}}{q_{n}}\left|E_{P}\left[\left(\frac{\omega}{M}\right)Yf(x)\right]-\frac{1}{n}\sum_{i=1}^{n}\left(\frac{\omega_{i}}{M}\right)Y_{i}f\left(X_{i}\right)\right|>D\right)\leq  D_{2}\exp\left(-D^{2}q_{n}^{2}\right),
\end{align*}
for $D_{1}\leq D\leq D_{3}\sqrt{n}/q_{n}$. Therefore, when $r\geq 1$, we have 
\begin{align*}
\tau_{n}^{-1}E_{P^{n}}\left[R_{\phi_h}^{\omega}(\hat{f})-\inf_{f\in\Check{\MF}}R_{\phi_h}^{\omega}(f)\right]\leq  4MD_{1} + 8M\tau_{n}^{-1}D_{2}\exp\left(-D_{1}^{2}q_{n}^{2}\right). 
\end{align*}
Combining this result with (\ref{eq:decomposition_WC}) leads to the inequality in (\ref{eq:upper bound_statistical propery_WC}) for the case of $r \geq 1$. 

The inequality in (\ref{eq:upper bound_statistical propery_WC}) for the case of $r < 1$ follows immediately by applying Lemma \ref{lem:empirical process result_2} to equation (\ref{eq:empirical process expression_WC}).
\qedsymbol 
\bigskip{}

\paragraph{Proof of Theorem \ref{thm:statistical propery for excess classification risk_WC} (i).}
Let $P \in \MP$ be fixed. We follow the same strategy as in the proof of Theorem \ref{thm:statistical propery for excess classification risk}. Define $\hat{f}^{\dag}(x) = 1\{x \in G_{\hat{f}}\} - 1\{x \notin G_{\hat{f}}\}$. Then equation (\ref{eq:decomposition_WC}) becomes
\begin{align}
R^{\omega}(\hat{f}) - \inf_{f\in \MF_\MG}R^{\omega}(f) &= R^{\omega}(\hat{f}^{\dagger}) - \inf_{f\in \MF_\MG}R^{\omega}(f) \nonumber \\
&= \frac{1}{2}\left(R_{\phi_h}^{\omega}(\hat{f}^{\dagger}) - \inf_{f\in \widetilde{\MF}_{\MG}}R_{\phi_h}^{\omega}(f) \right). \nonumber
\end{align}
It follows that
\begin{align*}
 R_{\phi_h}^{\omega}(\hat{f}^{\dagger}) - \inf_{f\in \widetilde{\MF}_{\MG}}R_{\phi_h}^{\omega}(f) &= E_{P}[\omega Y\hat{f}^{\dag}(X)] - \inf_{f\in \widetilde{\MF}_{\MG}}E_{P}[\omega Yf(X)]\\
&\leq M\left|E_{P}[Y\hat{f}^{\dag}(X)] - \inf_{f\in \widetilde{\MF}_{\MG}}E_{P}[Yf(X)]\right| \\
&= M\left( R_{\phi_h}(\hat{f}^{\dag}) - \inf_{f\in \widetilde{\MF}_{\MG}}R_{\phi_h}(f) \right), 
\end{align*}
where the third line follows because $\widetilde{\MF}_\MG$ is a classification-preserving reduction of $\MF_\MG$ and, accordingly, $E_{P}[Y\hat{f}^{\dag}(X)] \geq \inf_{f\in \widetilde{\MF}_{\MG}}E_{P}[Yf(X)]$ holds.
Thus we have
\begin{align*}
    R^{\omega}(\hat{f}) - \inf_{f\in \MF_\MG}R^{\omega}(f) \leq M\left( R_{\phi_h}(\hat{f}^{\dag}) - \inf_{f\in \widetilde{\MF}_{\MG}}R_{\phi_h}(f) \right).
\end{align*}
Then the result follows by applying the same argument in the proof of Theorem \ref{thm:statistical propery for excess classification risk} to the above equation.
\qedsymbol 

\bigskip{}

The following are extensions of Lemmas \ref{lem:step function approximation optimality_Bernstein polynomial}, \ref{lem:bernstein approximation error bound on  step function}, and \ref{lem:berstein approximation error}.

\bigskip{}

\begin{lemma}\label{lem:step function approximation optimality_Bernstein polynomial_WC}
Let $\hat{f}_{B}\in{\arg\inf}_{f\in\MB_{\Bk}}\hat{R}_{\phi_{h}}^{\omega}(f)$,
and $\hat{\mathbf{\theta}}{\equiv}\left\{ \hat{\theta}_{j_{1}\ldots j_{d_{x}}}\right\} _{j_{1}=1,\ldots,k_{1};\ldots;j_{d_{x}}=1,\ldots,k_{d_{x}}}$
be the vector of the coefficients characterizing $\hat{f}_{B}$. Let $r_{1}^{+}$
and $r_{1}^{-}$ be the smallest non-negative value and the largest
negative value in $\hat{\mathbf{\theta}}$, respectively.\\
(i) If all non-negative elements in $\hat{\mathbf{\theta}}$ take
the same value $r_{1}^{+}$, let $r_{2}^{+}$ be $1$; otherwise,
let $r_{2}^{+}$ be the second smallest non-negative value in $\hat{\mathbf{\theta}}$.
Propose a $\left(k_{1}+1\right)\times\cdots\times\left(k_{d_{x}}+1\right)$-dimensional vector
$\tilde{\mathbf{\theta}}{\equiv}\left\{ \widetilde{\Theta}_{j_{1}\ldots j_{d_{x}}}\right\} _{j_{1}=1,\ldots,k_{1};\ldots;j_{d_{x}}=1,\ldots,k_{d_{x}}}$
such that for all $j_{1},\ldots,j_{d_{x}}$ if $\hat{\theta}_{j_{1}\ldots j_{d_{x}}}=r_{1}^{+}$,
$\widetilde{\Theta}_{j_{1}\ldots j_{d_{x}}}=r_{2}^{+}$; otherwise, $\widetilde{\Theta}_{j_{1}\ldots j_{d_{x}}}=\hat{\theta}_{j_{1}\ldots j_{d_{x}}}$.
Then, a classifier 
\begin{align*}
    \tilde{f}_{B}(x){\equiv}\sum_{j_{1}=1}^{k_{1}}\cdots\sum_{j_{d_{x}}=1}^{k_{d_{x}}}\widetilde{\Theta}_{j_{1}\ldots j_{d_{x}}}\left(b_{k_{1}j_{1}}\left(x_{1}\right)\times\cdots\times b_{k_{1}j_{1}}\left(x_{d_x}\right)\right)
\end{align*}
minimizes $\hat{R}_{\phi_{h}}$ over $\MB_{\Bk}$.\\
(ii) Similarly, if all negative elements in $\hat{\mathbf{\theta}}$
take the same value $r_{1}^{-}$, let $r_{2}^{-}$ be $-1$; otherwise,
let $r_{2}^{-}$ be the second largest negative value in $\hat{\mathbf{\theta}}$.
Propose a $\left(k_{1}+1\right)\times\cdots\times\left(k_{d_{x}}+1\right)$-vector
$\Check{\mathbf{\theta}}{\equiv}\left\{ \Check{\mathbf{\theta}}_{j_{1}\ldots j_{d_{x}}}\right\} _{j_{1}=1,\ldots,k_{1};\ldots;j_{d_{x}}=1,\ldots,k_{d_{x}}}$
such that for all $j_{1},\ldots,j_{d_{x}}$ if $\Check{\theta}_{j_{1}\ldots j_{d_{x}}}=r_{1}^{-}$,
$\Check{\theta}_{j_{1}\ldots j_{d_{x}}}=r_{2}^{-}$; otherwise, $\Check{\theta}_{j_{1}\ldots j_{d_{x}}}=\hat{\theta}_{j_{1}\ldots j_{d_{x}}}$.
Then, a classifier 
\begin{align*}
    \Check{f}_{B}(x){\equiv}\sum_{j_{1}=1}^{k_{1}}\cdots\sum_{j_{d_{x}}=1}^{k_{d_{x}}}\Check{\theta}_{j_{1}\ldots j_{d_{x}}}\left(b_{k_{1}j_{1}}\left(x_{1}\right)\times\cdots\times b_{k_{1}j_{1}}\left(x_{d_x}\right)\right)
\end{align*}
minimizes $\hat{R}_{\phi_{h}}^{\omega}$ over $\MB_{\Bk}$.\\
(iii) A classifier 
\begin{align*}
    \hat{f}_{B}^{\dagger}(x){\equiv}\sum_{j_{1}=1}^{k_{1}}\cdots\sum_{j_{d_{x}=1}}^{k_{d_{x}}}\sign\left(\hat{\theta}_{j_{1}\ldots j_{d_{x}}}\right)\cdot\left(b_{k_{1}j_{1}}\left(x_{1}\right)\times\cdots\times b_{k_{d_{x}}j_{d_{x}}}\left(x_{d_x}\right)\right)
\end{align*}
minimizes $\hat{R}_{\phi_h}(\cdot)$ over $\MB_{\Bk}$.
\end{lemma}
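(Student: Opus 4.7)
The plan is to mirror the argument given for Lemma \ref{lem:step function approximation optimality_Bernstein polynomial} but with the weight variable $\omega$ incorporated into the linear objective. Since every $f \in \MB_{\Bk}$ takes values in $[-1,1]$, the hinge loss reduces to $\phi_h(Y_i f(X_i)) = 1 - Y_i f(X_i)$, so minimizing $\hat{R}_{\phi_h}^{\omega}(f) = n^{-1}\sum_i \omega_i(1 - Y_i f(X_i))$ over $\MB_{\Bk}$ is equivalent to maximizing the linear functional
\begin{equation*}
L_n^{\omega}(\theta) \;\equiv\; \sum_{j_1,\ldots,j_{d_x}} \theta_{j_1\ldots j_{d_x}}\, S_{j_1\ldots j_{d_x}}, \qquad S_{j_1\ldots j_{d_x}} \;\equiv\; \sum_{i=1}^{n} \omega_i Y_i \prod_{v=1}^{d_x} b_{k_v j_v}(X_{vi}),
\end{equation*}
over the polyhedron $\widetilde{\Theta}$. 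Thus $\hat{\theta}$ maximizes $L_n^{\omega}(\cdot)$ on $\widetilde{\Theta}$, and all three claims become statements about when alternative vertices of (or points in) $\widetilde{\Theta}$ also achieve this maximum. The entire proof of Lemma \ref{lem:step function approximation optimality_Bernstein polynomial} goes through verbatim after this replacement, because $\omega$ enters only through the coefficients $S_{j_1\ldots j_{d_x}}$, on whose signs the argument depends but whose specific form is irrelevant.

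For claim (i), I first note that $\tilde{\theta}\in\widetilde{\Theta}$: by definition of $r_1^+$ and $r_2^+$, no entry of $\hat{\theta}$ lies in the open interval $(r_1^+, r_2^+)$, so raising every entry equal to $r_1^+$ up to $r_2^+$ preserves every inequality $\theta_{j_1\ldots j_{d_x}}\geq \theta_{\tilde{j}_1\ldots \tilde{j}_{d_x}}$ for $(j_1,\ldots,j_{d_x})\geq (\tilde{j}_1,\ldots,\tilde{j}_{d_x})$. Letting $J_1^+\equiv\{(j_1,\ldots,j_{d_x}):\hat{\theta}_{j_1\ldots j_{d_x}}=r_1^+\}$, a direct computation gives
\begin{equation*}
L_n^{\omega}(\tilde{\theta}) - L_n^{\omega}(\hat{\theta}) \;=\; (r_2^+ - r_1^+) \sum_{(j_1,\ldots,j_{d_x})\in J_1^+} S_{j_1\ldots j_{d_x}}.
\end{equation*}
If $\tilde{\theta}$ did not attain the maximum, this would be strictly negative, forcing $\sum_{J_1^+} S_{j_1\ldots j_{d_x}} < 0$ (when $r_2^+>r_1^+$). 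I then derive a contradiction by simultaneously decreasing every coefficient in $J_1^+$ from $r_1^+$ to $r_1^+ - \epsilon$: the change in $L_n^{\omega}$ is $-\epsilon \sum_{J_1^+} S_{j_1\ldots j_{d_x}} > 0$, so $\hat{\theta}$ was not a maximizer. Claim (ii) follows by the symmetric argument in the negative regime with an upward perturbation, and claim (iii) follows by iterating (i) and (ii) finitely many times until no coefficient lies strictly between $-1$ and $1$.

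The main obstacle will be establishing that the perturbation used to reach the contradiction is feasible in $\widetilde{\Theta}$, since $\widetilde{\Theta}$ is defined by both box and monotonicity constraints. The key observation is the \emph{gap property}: by the definitions of $r_1^+$ and $r_2^+$, no coefficient of $\hat{\theta}$ falls in $(r_1^-, r_1^+)$ or $(r_1^+, r_2^+)$, so uniformly decreasing the entries on $J_1^+$ by a small amount $\epsilon$ preserves every monotonicity inequality (entries at strictly smaller indices are $\leq r_1^-$ and entries at strictly larger indices are $\geq r_2^+$), and the new values remain in $[-1,1]$. Once this feasibility argument is recorded, the rest of the proof is essentially bookkeeping identical to the unweighted case.
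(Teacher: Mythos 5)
Your proof follows the same high-level reduction as the paper's (rewrite $\hat R_{\phi_h}^{\omega}$ as a linear objective $L_n^{\omega}(\theta)$ over the polytope $\widetilde{\Theta}$, then argue by contradiction on optimality of $\hat\theta$), but the contradiction step is handled differently. The paper's own proof, after deducing $\sum_{J_1}S_{j_1\ldots j_{d_x}}<0$ for some particular index, proposes setting that single coordinate $\hat\theta_{j_1\ldots j_{d_x}}$ to $r_1^-$ and asserts this keeps $\hat\theta$ in $\widetilde{\Theta}$; as written that move can violate the monotonicity constraints (e.g., if a smaller index also carries the value $r_1^+$). You instead derive $\sum_{J_1^+}S_{j_1\ldots j_{d_x}}<0$ for the \emph{total} over $J_1^+$ and perturb every coordinate in $J_1^+$ uniformly downward by a small $\epsilon$. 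Your gap property (no entry of $\hat\theta$ in $(r_1^-,r_1^+)$ or $(r_1^+,r_2^+)$) then gives feasibility of both the perturbed vector and of $\tilde\theta$ without further case-splitting. This is a genuinely tighter feasibility argument than the paper's one-coordinate move; both lead to the same contradiction with the optimality of $\hat\theta$. Your treatment of the degenerate case $r_2^+=r_1^+$ and the iterated application for part (iii) match the paper. The proposal is correct.
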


\begin{proof}
First, note that $\tilde{\mathbf{\theta}},\Check{\theta}\in\widetilde{\Theta}$
holds by construction. We here prove (i). The proof of (ii) follows using a similar argument. Define 
\begin{align*}
L_{n}\left(\mathbf{\theta}\right)= & \sum_{i=1}^{n}\left\{\omega_{i}Y_{i}\sum_{j_{1}=1}^{k_{1}}\cdots\sum_{j_{d_{x}}=1}^{k_{d_{x}}}\theta_{j_{1}\ldots j_{d_{x}}}\sum_{i=1}^{n}\left(b_{k_{1}j_{1}}\left(X_{1i}\right)\times\cdots\times b_{k_{d_{x}}j_{d_{x}}}\left(X_{d_{x}i}\right)\right)\right\}.
\end{align*}
Minimization of $\hat{R}_{\phi_{h}}$ over $\MB_{\Bk}$ is
equivalent to the maximization of $L_{n}\left(\mathbf{\theta}\right)$ over $\widetilde{\Theta}$.
Thus, $\hat{\theta}$ maximizes $L_{n}\left(\mathbf{\theta}\right)$
over $\widetilde{\Theta}$.

We prove the result by contradiction. Suppose $\tilde{\mathbf{\theta}}\notin\underset{\mathbf{\theta}\in\widetilde{\Theta}}{\arg\max}L_{n}\left(\mathbf{\theta}\right)$.
Let 
\begin{align*}
J_{1}\equiv\left\{ \left(j_{1},\ldots,j_{d_{x}}\right):\hat{\theta}_{j_{1}\ldots j_{d_{x}}}=r_{1}^{+}\right\}.    
\end{align*}
Then, 
\begin{align*}
L_{n}\left(\tilde{\mathbf{\theta}}\right)-L_{n}\left(\hat{\mathbf{\theta}}\right)
&=  \sum_{\left(j_{1},\ldots,j_{d_{x}}\right)\in J_{1}}\left\{\left(r_{2}^{+}-r_{1}^{+}\right)\sum_{i=1}^{n}\omega_{i}Y_{i}\left(b_{k_{1}j_{1}}\left(X_{1i}\right)\times\cdots\times b_{k_{d_{x}}j_{d_{x}}}\left(X_{d_{x}i}\right)\right)\right\}\\
&<0
\end{align*}
holds. Since $r_{2}^{+}-r_{1}^{+}\geq0$, the above equation implies
that there exists some $\left(j_{1},\ldots,j_{d_{x}}\right)$ in $J_{1}$
such that $\sum_{i=1}^{n}\omega_{i}Y_{i}\left(b_{k_{1}j_{1}}\left(X_{1i}\right)\times\cdots\times b_{k_{d_{x}}j_{d_{x}}}\left(X_{d_{x}i}\right)\right)<0$.
For such $\left(j_{1},\ldots,j_{d_{x}}\right)$, setting $\hat{\theta}_{j_{1}\ldots j_{d_{x}}}$
to $r_{1}^{-}$ increases the value of $L_{n}\left(\hat{\mathbf{\theta}}\right)$
without violating the constraints in $\widetilde{\Theta}$. But this
contradicts the requirement that $\hat{\theta}_{j_{1}\ldots j_{d_{x}}}$ is non-negative.
Therefore, $\tilde{\mathbf{\theta}}$ maximizes $L_{n}\left(\mathbf{\theta}\right)$
over $\widetilde{\Theta}$, or equivalently $\tilde{f}_{B}$ minimizes $\hat{R}_{\phi_h}$ over $\MB_{\Bk}$.

Result (iii) follows by applying results (i) and (ii) repeatedly to $\hat{f}_{B}$. 
\end{proof}

\bigskip{}

\begin{lemma}\label{lem:bernstein approximation error bound on  step function_WC}
Fix $G \in \MG$ and $k_1,\ldots,k_{d_x}$. Define a classifier
 \begin{align*}
     f_G(x) = \sum_{j_1 = 1}^{k_1}\cdots \sum_{j_{d_x} = 1}^{k_{d_x}} \theta_{j_1\dots j_{d_x}} \left(b_{k_1 j_1}(x_1)\times \cdots \times b_{k_{d_x} j_{d_x}}(x_{d_x})\right),
 \end{align*}
such that, for all $j_1,\ldots,j_{d_x}$, $\theta_{j_1\ldots j_{d_x}}=1$ if $(j_1/k_1,\ldots,j_{d_x}/k_{d_x}) \in G$, and $\theta_{j_1\ldots j_{d_x}}=-1$ otherwise. 
Then the following holds:

\begin{align*}
\left|R_{\phi_h}^{\omega}\left(f_G\right)-R_{\phi_h}^{\omega}\left(1\left\{\cdot \in G \right\} - 1\left\{\cdot \notin G \right\} \right) \right|\leq  2MA\sum_{j=1}^{d_x}\sqrt{\frac{\log k_{j}}{k_{j}}}+\sum_{j=1}^{d_x}\frac{4M}{\sqrt{k_{j}}}.
\end{align*}
\end{lemma}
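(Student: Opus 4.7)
The plan is to mimic the proof of Lemma \ref{lem:bernstein approximation error bound on step function} verbatim, using the weighted hinge-risk representation (\ref{eq:hinge risk expression_WC}) in place of (\ref{eq:hinge risk expression}). Specifically, writing $\tilde{f}_G(x) = 1\{x \in G\} - 1\{x \notin G\}$ and substituting both $f_G$ and $\tilde{f}_G$ into (\ref{eq:hinge risk expression_WC}), the $E_P[\omega]$ terms cancel and we obtain
\begin{equation*}
R_{\phi_h}^{\omega}(f_G) - R_{\phi_h}^{\omega}(\tilde{f}_G) = \int_{\MX} L(\omega_+(x),\omega_-(x),\eta(x)) \bigl( B_{\Bk}(\theta,x) - \tilde{f}_G(x) \bigr) dP_X(x),
\end{equation*}
where $\theta$ is the coefficient vector defining $f_G$. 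Under Condition \ref{con:bounded weight variable} we have $|L(\omega_+(x),\omega_-(x),\eta(x))| \leq M$ almost surely, which replaces the trivial bound $|1-2\eta(x)| \leq 1$ used in the unweighted proof.

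With this bound in hand, the argument proceeds exactly as in the proof of Lemma \ref{lem:bernstein approximation error bound on step function}. I would split the integral according to whether $x \in G$ or $x \in G^c$, and in each case rewrite the Bernstein polynomial evaluated at $x$ as a probability involving independent binomial random variables $\mathrm{Bin}(k_j,x_j)$. The two quantities (IV) and (V) from that earlier proof then reappear and admit the same upper bounds via Hoeffding's inequality and the boundedness of the density of $P_X$ by $A$, after setting $\Delta_v = \sqrt{\log k_v}/(2\sqrt{k_v})$. Each of these two terms is bounded by $(A/2)\sum_{j=1}^{d_x}\sqrt{\log k_j / k_j} + \sum_{j=1}^{d_x} 1/\sqrt{k_j}$, and the prefactor $M$ coming from $|L| \leq M$ carries through, yielding the claimed bound
\begin{equation*}
\bigl| R_{\phi_h}^{\omega}(f_G) - R_{\phi_h}^{\omega}(\tilde{f}_G) \bigr| \leq 2MA \sum_{j=1}^{d_x} \sqrt{\tfrac{\log k_j}{k_j}} + \sum_{j=1}^{d_x} \tfrac{4M}{\sqrt{k_j}}.
\end{equation*}

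There is essentially no new obstacle compared to the unweighted case; the only substantive modification is replacing $|1-2\eta|$ by $|L(\omega_+,\omega_-,\eta)|$ and invoking Condition \ref{con:bounded weight variable} to obtain the uniform bound $M$. The binomial-concentration step, the geometric splitting of $G$ and $G^c$ into the shrunk sets $\tilde G$ and $\tilde G^c$ separated from the boundary by $\Delta_v$, and the bookkeeping on the Lebesgue measure of the thin strips $G \setminus \tilde G$ and $G^c \setminus \tilde G^c$ are all identical to the unweighted proof and need only be restated mutatis mutandis. Hence the lemma follows immediately by applying Lemma \ref{lem:bernstein approximation error bound on step function} to the integrand $(L/M)(\tilde f_G - B_{\Bk}(\theta,\cdot))$ and multiplying the resulting bound by $M$.
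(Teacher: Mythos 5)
Your proposal is correct and follows the paper's own proof essentially verbatim: substitute the weighted hinge-risk representation (\ref{eq:hinge risk expression_WC}) for the unweighted one, bound $|L(\omega_+(x),\omega_-(x),\eta(x))| \le M$ using Condition \ref{con:bounded weight variable}, and reuse the binomial-concentration bounds on the two integrals $(IV)$ and $(V)$, which are purely geometric and do not depend on the integrand's weight factor. The only imprecision is in your closing sentence: Lemma \ref{lem:bernstein approximation error bound on  step function} is stated as a bound on hinge-risk differences, not on integrals against arbitrary bounded integrands, so one cannot literally apply it to $(L/M)(\tilde f_G - B_{\Bk}(\theta,\cdot))$; what actually carries over, as you correctly describe earlier, is the internal derivation bounding $(IV)$ and $(V)$, each by $(A/2)\sum_j\sqrt{\log k_j/k_j}+\sum_j 1/\sqrt{k_j}$.
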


\begin{proof}
Define
\begin{align*}
J_{\Bk} \equiv & \left\{ \left(j_{1},\ldots,j_{d_x}\right) :\left(j_{1}/k_{1},\ldots,j_{d_x}/k_{d_x}\right)\in G\right\} ,
\end{align*}
which is a set of grid points on $G$, and 
\begin{align*}
    L(x) \equiv -\omega_{+}(x)\eta(x) + \omega_{-}(x)(1-\eta(x)).
\end{align*}
It follows that 
\begin{align}
 & R_{\phi_h}^{\omega}\left(f_G\right)-R_{\phi_h}^{\omega}\left(1\left\{\cdot \in G \right\} - 1\left\{\cdot \notin G \right\} \right)\nonumber \\
= & \int_{\left[0,1\right]^{d_x}}L(x)\left(1\left\{x \in G \right\} - 1\left\{x \notin G \right\}-B_{k}\left(\theta,x\right)\right)dP_{X}(x)\nonumber \\
= & \int_{\left[0,1\right]^{d_x}}L(x)1\left\{ x\in G\right\} dP_{X}(x) 
-\int_{\left[0,1\right]^{d_x}}L(x)1\left\{ x\notin G\right\} dP_{X}(x)\nonumber \\ &-\underset{(I)}{\underbrace{\int_{\left[0,1\right]^{d_x}}L(x)B_{\Bk}\left(\theta,x\right)dP_{X}(x)}}.\label{eq:berstein regret_WC}
\end{align}
(I) can be written as 
\begin{align*}
(I) &=  \int_{\left[0,1\right]^{d_x}}L(x)\sum_{\left(j_{1},\ldots,j_{d_x}\right)\in J_{\Bk}}\left(\prod_{v=1}^{d_x}b_{k_v j_v}(x_v)\right)dP_{X}(x)\\
 & -\int_{\left[0,1\right]^{d_x}}L(x)\sum_{\left(j_{1},\ldots,j_{d_x}\right)\notin J_{\Bk}}\left(\prod_{v=1}^{d_x} b_{k_v j_v}(x_v)\right)dP_{X}(x).
\end{align*}
Thus, 
\begin{align*}
\left(\ref{eq:berstein regret_WC}\right)= & \int_{\left[0,1\right]^{d_x}}L(x)\underset{(II)}{\underbrace{\left(1\left\{ x\in G\right\} -\sum_{\left(j_{1},\ldots,j_{d_x}\right)\in J_{\Bk}}b_{k_v j_v}(x_v)\right)}dP_{X}(x)}\\
+ & \int_{\left[0,1\right]^{d_x}}L(x)\underset{(III)}{\underbrace{\left(\sum_{\left(j_{1},\ldots,j_{d_x}\right)\notin J_{k}}b_{k_v j_v}(x_v)-1\left\{ x\in{G}^{c}\right\} \right)}}dP_{X}(x).
\end{align*}

Let $Bin\left(k_{j},x_{j}\right)$, $j=1,\ldots,d_x$, be independent
binomial variables with parameters $(k_{j},x_{j})$. Then (II) and
(III) are equivalent to 
\begin{align*}
 & \Pr\left(\left(Bin(k_{1},x_{1}),\ldots,Bin(k_{d_x},x_{d_x})\right)\in J_{\Bk}^{c}\right)1\left\{ x\in G\right\} \\
- & \Pr\left(\left(Bin(k_{1},x_{1}),\ldots,Bin(k_{d_x},x_{d_x})\right)\in J_{\Bk}\right)1\left\{ x\in{G}^{c}\right\} .
\end{align*}
Hence, 
\begin{align*}
(\ref{eq:berstein regret_WC})&=  2\int_{G}L(x) \Pr\left(\left(Bin(k_{1},x_{1}),\ldots,Bin(k_{d_x},x_{d_x})\right)\in J_{\Bk}^{c}\right)dP_{X}(x) \\
 & - 2\int_{{G}^{c}} L(x)\Pr\left(\left(Bin(k_{1},x_{1}),\ldots,Bin(k_{d_x},x_{d_x})\right)\in J_{\Bk}\right)  dP_{X}(x),
\end{align*} 
and therefore
 \begin{align*}
&\left|R_{\phi_h}^{\omega}\left(f_G\right)-R_{\phi_h}^{\omega}\left(1\left\{\cdot \in G \right\} - 1\left\{\cdot \notin G \right\} \right) \right| \\
&\leq  2M\underset{(IV)}{\underbrace{\int_{G}\Pr\left(\left(Bin(k_{1},x_{1}),\ldots,Bin(k_{d_x},x_{d_x})\right)\in J_{\Bk}^{c}\right)dP_{X}(x)}}\\
 & +2M\underset{(V)}{\underbrace{\int_{{G}^{c}}\Pr\left(\left(Bin(k_{1},x_{1}),\ldots,Bin(k_{d_x},x_{d_x})\right)\in J_{\Bk}\right)dP_{X}(x)}},
\end{align*}
because $|L(x)|<M$ for all $x\in \MX$. 
The proof of Lemma \ref{lem:bernstein approximation error bound on  step function} shows that
\begin{align*}
(IV) &\leq \frac{A}{2}\left(\sum_{v=1}^{d_x}\sqrt{\frac{\log k_{v}}{k_{v}}}\right)+\sum_{v=1}^{d_x}\frac{1}{\sqrt{k_{v}}},\\
(V) &\leq \frac{A}{2}\left(\sum_{v=1}^{d_x}\sqrt{\frac{\log k_{v}}{k_{v}}}\right)+\sum_{v=1}^{d_x}\frac{1}{\sqrt{k_{v}}}.
\end{align*}

Therefore,
\begin{align*}
\left|R_{\phi_h}^{\omega}\left(f_G\right)-R_{\phi_h}^{\omega}\left(1\left\{\cdot \in G \right\} - 1\left\{\cdot \notin G \right\} \right) \right|\leq  2MA\left(\sum_{v=1}^{d_x}\sqrt{\frac{\log k_{v}}{k_{v}}}\right)+\sum_{v=1}^{d_x}\frac{4M}{\sqrt{k_{v}}}.
\end{align*}
\end{proof}

\bigskip{}

\begin{lemma}
\label{lem:berstein approximation error_WC} Let $k_{j}\geq1$,
for $j=1,\ldots,d_x$, be fixed. Suppose that the density of $P_{X}$
is bounded from above by $A>0$ . Suppose further that $\Bk=\left(k_{1},\ldots,k_{d_x}\right)$ satisfies $\sqrt{d_x\log k_{j}}/\left(2\sqrt{k_{j}}\right)\leq\epsilon$ for all $j=1,\ldots,d_x$ and some $\epsilon >0$.\\
(i) The following holds for the approximation error to the best classifier:
\begin{align*}
\inf_{f\in\MB_\Bk}R_{\phi_h}^{\omega}(f)-\inf_{f\in\MF_{M}}R_{\phi_h}^{\omega}(f)\leq  2AM\sum_{j=1}^{d_x}\sqrt{\frac{\log k_{j}}{k_{j}}}+\sum_{j=1}^{d_x}\frac{4M}{\sqrt{k_{j}}}.
\end{align*}
(ii) For $\hat{f}_{B} \in \arg\inf_{f\in\mathbf{B}_{\mathbf{k}}}\hat{R}_{\phi_{h}}^{\omega}(f)$ such that its coefficients take values in $\{-1,1\}$, the following holds for the approximation error to the step function:
\begin{align*}
    R_{\phi_h}^{\omega}\left(1\left\{ \cdot\in G_{\hat{f}_B}\right\} -1\left\{ \cdot\notin G_{\hat{f}_B}\right\} \right)-R_{\phi_h}^{\omega}(\hat{f}_B)
    \leq  2AM\sum_{j=1}^{d_x}\sqrt{\frac{\log k_{j}}{k_{j}}}+\sum_{j=1}^{d_x}\frac{4M}{\sqrt{k_{j}}}. 
\end{align*}
\end{lemma}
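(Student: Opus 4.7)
}
The plan is to mimic the structure of the proof of Lemma \ref{lem:berstein approximation error} in the unweighted setting, substituting Theorem \ref{thm:continuous functions_WC} (characterization of step-function optimizers over $\MF_M$) and Lemma \ref{lem:bernstein approximation error bound on  step function_WC} (weighted analog of the hinge-risk gap between a step function and its Bernstein interpolant) for their unweighted counterparts. Since Lemma \ref{lem:bernstein approximation error bound on  step function_WC} already packages the bound $2AM\sum_j\sqrt{\log k_j/k_j}+\sum_j 4M/\sqrt{k_j}$ for the hinge-risk gap between $f_G$ (the Bernstein polynomial with $\pm 1$ coefficients matching $G$) and $1\{\cdot\in G\}-1\{\cdot\notin G\}$, both parts of the lemma reduce to identifying the relevant $G$ and invoking this lemma.

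For part (i), first let $G^{\ast}\in\arg\inf_{G\in\MG_M}\MR^{\omega}(G)$, and set $\tilde{f}^{\ast}(x)\equiv 1\{x\in G^{\ast}\}-1\{x\notin G^{\ast}\}$. By Theorem \ref{thm:continuous functions_WC} applied with $\widetilde{\MF}_\MG=\MF_M$ (which is a classification-preserving reduction of $\MF_{\MG_M}$ by Example \ref{example:Monotonic classification with a class of monotonic functions}), $\tilde{f}^{\ast}$ minimizes $R_{\phi_h}^{\omega}(\cdot)$ over $\MF_M$. Next, construct a Bernstein polynomial by defining $\theta^{\ast}_{j_1\ldots j_{d_x}}=1$ if $(j_1/k_1,\ldots,j_{d_x}/k_{d_x})\in G^{\ast}$ and $-1$ otherwise. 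Monotonicity of $G^{\ast}$ (recall $G^\ast\in\MG_M$) and the coordinate-wise ordering of the grid show $\theta^{\ast}\in\widetilde{\Theta}$, so $f_{G^{\ast}}\equiv B_{\Bk}(\theta^{\ast},\cdot)\in\MB_{\Bk}$. Then
\begin{align*}
\inf_{f\in\MB_{\Bk}}R_{\phi_h}^{\omega}(f)-\inf_{f\in\MF_M}R_{\phi_h}^{\omega}(f)
\leq R_{\phi_h}^{\omega}(f_{G^{\ast}})-R_{\phi_h}^{\omega}(\tilde{f}^{\ast}),
\end{align*}
and Lemma \ref{lem:bernstein approximation error bound on  step function_WC} with $G=G^{\ast}$ yields the required bound.

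For part (ii), the hypothesis that the Bernstein coefficients of $\hat{f}_B$ take values in $\{-1,1\}$ means that $\hat{f}_B$ is exactly of the form $f_G$ of Lemma \ref{lem:bernstein approximation error bound on  step function_WC} for a specific monotone grid set $G\subset\MX$. That $G$ coincides with $G_{\hat f_B}$ up to null-measure boundary effects because the Bernstein basis is a partition of unity and $\text{sign}(\hat f_B)=\text{sign}\bigl(\sum_{(j_1,\ldots,j_{d_x})\in J_{\Bk}}\prod_v b_{k_vj_v}-\sum_{(j_1,\ldots,j_{d_x})\notin J_{\Bk}}\prod_v b_{k_vj_v}\bigr)$, giving $G_{\hat f_B}=G$; this lets us invoke Lemma \ref{lem:bernstein approximation error bound on  step function_WC} directly with this $G$ to conclude
\begin{align*}
\bigl|R_{\phi_h}^{\omega}(\hat{f}_B)-R_{\phi_h}^{\omega}\bigl(1\{\cdot\in G_{\hat{f}_B}\}-1\{\cdot\notin G_{\hat{f}_B}\}\bigr)\bigr|\leq 2AM\sum_{j=1}^{d_x}\sqrt{\tfrac{\log k_j}{k_j}}+\sum_{j=1}^{d_x}\tfrac{4M}{\sqrt{k_j}}.
\end{align*}

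The only delicate point is constructing the Bernstein interpolant in part (i) so that it both lies in $\widetilde{\Theta}$ (i.e., preserves the $[-1,1]$ bound and monotonicity in the coefficient grid) and targets $\tilde{f}^{\ast}$; this is automatic because $G^{\ast}\in\MG_M$ is itself a monotone set, so thresholding the lattice $\{(j_1/k_1,\ldots,j_{d_x}/k_{d_x})\}$ by membership in $G^{\ast}$ produces coefficients whose coordinate-wise ordering respects the required monotonicity. Once that is checked, both parts reduce to direct application of Lemma \ref{lem:bernstein approximation error bound on  step function_WC}, so no new empirical process or approximation arguments are needed beyond what the weighted step-function interpolation lemma already supplies.
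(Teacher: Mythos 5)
Part (i) of your proposal reproduces the paper's argument exactly: take $G^{\ast}\in\arg\inf_{G\in\MG_M}\MR^{\omega}(G)$, use Theorem \ref{thm:continuous functions_WC} (with $\widetilde{\MF}_\MG=\MF_M$) to identify $\tilde{f}^{\ast}=1\{\cdot\in G^{\ast}\}-1\{\cdot\notin G^{\ast}\}$ as a hinge-risk minimizer over $\MF_M$, construct $\theta^{\ast}$ by thresholding the grid against $G^{\ast}$, observe that monotonicity of $G^{\ast}$ puts $\theta^{\ast}\in\widetilde{\Theta}$, and close with Lemma \ref{lem:bernstein approximation error bound on step function_WC}. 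No issues there.

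For part (ii), your high-level plan (apply Lemma \ref{lem:bernstein approximation error bound on step function_WC} to the $\pm1$-coefficient minimizer) also tracks the paper, which on this point only says the bound ``follows immediately.'' However, the specific step on which you lean --- that $G_{\hat f_B}$ coincides with the monotone set $G$ whose grid restriction matches the $+1$ coefficients --- is not correct in general, and the justification you offer (partition of unity plus the sign identity for $\hat f_B$) is circular: it merely restates the definition of $\hat f_B$. A concrete failure: take $d_x=3$, $k_1=k_2=k_3=2$, and $\hat\theta_j=+1$ iff $j\geq(1,1,1)$; then $\hat f_B(1/2,1/2,1/2)=2\Pr(\mathrm{Bin}(2,1/2)\geq1)^3-1=2(3/4)^3-1<0$, so the grid point $(1/2,1/2,1/2)\notin G_{\hat f_B}$ even though $\hat\theta_{(1,1,1)}=+1$. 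Consequently $f_{G_{\hat f_B}}\neq\hat f_B$: applying Lemma \ref{lem:bernstein approximation error bound on step function_WC} with $G=G_{\hat f_B}$ bounds $|R^{\omega}_{\phi_h}(f_{G_{\hat f_B}})-R^{\omega}_{\phi_h}(\mathrm{step}(G_{\hat f_B}))|$, which is not the quantity in the lemma statement, while applying it with the grid-matching set $G^{\dagger}$ (so that $\hat f_B=f_{G^{\dagger}}$) bounds the distance to $\mathrm{step}(G^{\dagger})$ rather than $\mathrm{step}(G_{\hat f_B})$. What remains is to control $R^{\omega}_{\phi_h}(\mathrm{step}(G_{\hat f_B}))-R^{\omega}_{\phi_h}(\mathrm{step}(G^{\dagger}))$, and your ``null-measure boundary'' remark does not address this because a mismatched coefficient changes the Bernstein polynomial on a set of positive measure, and (as the example shows) the mismatch is not an edge phenomenon --- it persists for all large $k_j$ whenever $d_x\geq 2$. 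To be fair, the paper's own write-up glosses over exactly the same step, so you have not introduced a new error relative to the source; but you have asserted a claim that is false, rather than leaving it implicit, and the assertion cannot be what closes the gap.
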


\begin{proof}

We first prove (i). Let $G^{\ast}$ minimize $\MR^{\omega}(\cdot)$ over $\MG_{M}$. From Theorem
\ref{thm:continuous functions_WC}, a classifier $\tilde{f}^{\ast}(x){\equiv}1\left\{ x\in G^{\ast}\right\} -1\left\{ x\in\left(G^{\ast}\right)^{c}\right\} $
minimizes the hinge risk $R_{\phi_h}^{\omega}(\cdot)$ over $\MF_{M}$.
Define a vector $\theta^{\ast}=\left\{ \theta_{j_{1}\ldots j_{d}}^{\ast}\right\} _{j_{1}=0,\ldots,k_{1};\ldots;j_{d}=0,\ldots,k_{d}}$
such that for each $j_{1},\ldots,j_{d}$, 
\begin{align*}
\theta_{j_{1}\ldots j_{d}}^{\ast}= & \begin{cases}
\begin{array}{c}
1\\
-1
\end{array} & \begin{array}{l}
\mbox{if }\left(j_{1}/k_{1},\ldots,j_{d}/k_{d}\right)\in G^{\ast}\\
\mbox{otherwise}.
\end{array}\end{cases}
\end{align*}
Note that $\theta^{\ast}$ is contained in $\widetilde{\Theta}$. Thus, it follows that 
\begin{align*}
 & \inf_{f \in \MB_\Bk}R_{\phi_h}^{\omega}(f)- \inf_{f \in \MF_M}R_{\phi_h}^{\omega}(f)\leq R_{\phi_h}^{\omega}\left(B_{\Bk}\left(\theta^{\ast},\cdot\right)\right)-R_{\phi_h}^{\omega}(\tilde{f}^{\ast}).
\end{align*}
Then, applying Lemma \ref{lem:bernstein approximation error bound on  step function_WC} to $R_{\phi_h}^{\omega}\left(B_{\Bk}\left(\theta^{\ast},\cdot\right)\right)-R_{\phi_h}^{\omega}(\tilde{f}^{\ast})$ establishes result (i). 

The inequality in Lemma \ref{lem:berstein approximation error_WC} (ii) follows immediately from Lemma \ref{lem:bernstein approximation error bound on  step function_WC}. Applying Lemma \ref{lem:step function approximation optimality_Bernstein polynomial_WC} (iii) to any $\hat{f}_{B} \in {\arg\inf}_{f\in\mathbf{B}_{\mathbf{k}}}\hat{R}_{\phi_{h}}(f)$
shows that a classifier 
\begin{align*}
    \hat{f}_{B}^{\dagger}(x)=\sum_{j_{1}=1}^{k_{1}}\cdots\sum_{j_{d_{x}}=1}^{k_{d_{x}}}\mbox{sign}\left(\hat{\theta}_{j_{1}\ldots j_{d_{x}}}\right)\left(b_{k_{1}j_{1}}\left(x_{1}\right)\times\cdots\times b_{k_{d_{x}}j_{d_{x}}}\left(x_{d_{x}}\right)\right)
\end{align*}
minimizes $\hat{R}_{\phi_h}^{\omega}(\cdot)$ over $\MB_{\Bk}$, which proves the existence of $\hat{f}_{B} \in \arg\inf_{f\in\mathbf{B}_{\mathbf{k}}}\hat{R}_{\phi_{h}}^{\omega}(f)$ such that its coefficients take values in $\{-1,1\}$.
\end{proof}

\bigskip{}

\bibliographystyle{ecta}
\bibliography{ref_surrogate_loss, EWM}

\begin{thebibliography}{66}
\newcommand{\enquote}[1]{``#1''}
\expandafter\ifx\csname natexlab\endcsname\relax\def\natexlab#1{#1}\fi

\bibitem[\protect\citeauthoryear{Agarwal, Beygelzimer, Dud{\'\i}k, Langford,
  and Wallach}{Agarwal et~al.}{2018}]{Agarwal_et_al_2018}
\textsc{Agarwal, A., A.~Beygelzimer, M.~Dud{\'\i}k, J.~Langford, and
  H.~Wallach} (2018): \enquote{A reductions approach to fair classification,}
  in \emph{International Conference on Machine Learning (ICML)}, PMLR, 60--69.

\bibitem[\protect\citeauthoryear{Alexander}{Alexander}{1984}]{Alexander_1984}
\textsc{Alexander, K.~S.} (1984): \enquote{{Probability inequalities for
  empirical processes and a law of the iterated logarithm},} \emph{Annals of
  Probability}, 12, 1041--1067.

\bibitem[\protect\citeauthoryear{Athey and Wager}{Athey and Wager}{2021}]{AW17}
\textsc{Athey, S. and S.~Wager} (2021): \enquote{Policy learning with
  observational data,} \emph{Econometrica}, 89, 133--161.

\bibitem[\protect\citeauthoryear{Babii, Chen, Ghysels, and Kumar}{Babii
  et~al.}{2020}]{Babii_et_al_2020}
\textsc{Babii, A., X.~Chen, E.~Ghysels, and R.~Kumar} (2020): \enquote{Binary
  choice with asymmetric loss in a data-rich environment: Theory and an
  application to racial justice,} \emph{arXiv preprint arXiv:2010.08463}.

\bibitem[\protect\citeauthoryear{Bartlett, Jordan, and McAuliffe}{Bartlett
  et~al.}{2006}]{Bartlett_et_al_2006}
\textsc{Bartlett, P.~L., M.~I. Jordan, and J.~D. McAuliffe} (2006):
  \enquote{Convexity, classification, and risk bounds,} \emph{Journal of the
  American Statistical Association}, 101, 138--156.

\bibitem[\protect\citeauthoryear{Beygelzimer and Langford}{Beygelzimer and
  Langford}{2009}]{BeygelzimerLangford09}
\textsc{Beygelzimer, A. and J.~Langford} (2009): \enquote{The offset tree for
  learning with partial labels,} in \emph{Proceedings of the 15th ACM SIGKDD
  International Conference on Knowledge Discovery and Data Mining}, Association
  for Computing Machinery, 129--137.

\bibitem[\protect\citeauthoryear{Bhattacharya and Dupas}{Bhattacharya and
  Dupas}{2012}]{BhattacharyaDupas2012}
\textsc{Bhattacharya, D. and P.~Dupas} (2012): \enquote{Inferring welfare
  maximizing treatment assignment under budget constraints,} \emph{Journal of
  Econometrics}, 167, 168--196.

\bibitem[\protect\citeauthoryear{Boucheron, Lugosi, and Massart}{Boucheron
  et~al.}{2013}]{Boucheron_et_al_2013}
\textsc{Boucheron, S., G.~Lugosi, and P.~Massart} (2013): \emph{Concentration
  Inequalities: A Nonasymptotic Theory of Independence}, Oxford, UK: Oxford
  University Press.

\bibitem[\protect\citeauthoryear{Breiman, Friedman, Stone, and Olshen}{Breiman
  et~al.}{1984}]{Breiman84book}
\textsc{Breiman, L., J.~Friedman, C.~Stone, and R.~Olshen} (1984):
  \emph{Classification and Regression Trees}, The Wadsworth and Brooks-Cole
  statistics-probability series, Taylor \& Francis.

\bibitem[\protect\citeauthoryear{Calders and Verwer}{Calders and
  Verwer}{2010}]{Calders_Verwer_2010}
\textsc{Calders, T. and S.~Verwer} (2010): \enquote{{Three naive Bayes
  approaches for discrimination-free classification},} \emph{Data Mining and
  Knowledge Discovery}, 21, 277--292.

\bibitem[\protect\citeauthoryear{Cano, Guti{\'{e}}rrez, Krawczyk,
  Wo{\'{z}}niak, and Garc{\'{i}}a}{Cano et~al.}{2019}]{Cano_et_al_2019}
\textsc{Cano, J.~R., P.~A. Guti{\'{e}}rrez, B.~Krawczyk, M.~Wo{\'{z}}niak, and
  S.~Garc{\'{i}}a} (2019): \enquote{Monotonic classification: An overview on
  algorithms, performance measures and data sets,} \emph{Neurocomputing}, 341,
  168--182.

\bibitem[\protect\citeauthoryear{Chamberlain}{Chamberlain}{2011}]{Chamberlain2011}
\textsc{Chamberlain, G.} (2011): \enquote{Bayesian aspects of treatment
  choice,} in \emph{The Oxford Handbook of Bayesian Econometrics}, ed. by
  J.~Geweke, G.~Koop, and H.~van Dijk, Oxford University Press, 11--39.

\bibitem[\protect\citeauthoryear{Chen and Li}{Chen and Li}{2014}]{Chen_Li_2014}
\textsc{Chen, C.~C. and S.~T. Li} (2014): \enquote{Credit rating with a
  monotonicity-constrained support vector machine model,} \emph{Expert Systems
  with Applications}, 41, 7235--7247.

\bibitem[\protect\citeauthoryear{Chouldechova and Roth}{Chouldechova and
  Roth}{2018}]{Chouldechova_et_al_2018}
\textsc{Chouldechova, A. and A.~Roth} (2018): \enquote{The frontiers of
  fairness in machine learning,} \emph{arXiv preprint arXiv:1810.08810}.

\bibitem[\protect\citeauthoryear{Cortes and Vapnik}{Cortes and
  Vapnik}{1995}]{Cortes_Vapnik_1995}
\textsc{Cortes, C. and V.~Vapnik} (1995): \enquote{{Support-vector networks},}
  \emph{Machine Learning}, 20, 273--297.

\bibitem[\protect\citeauthoryear{Dehejia}{Dehejia}{2005}]{Dehejia2005}
\textsc{Dehejia} (2005): \enquote{Program evaluation as a decision problem,}
  \emph{Journal of Econometrics}, 125, 141--173.

\bibitem[\protect\citeauthoryear{Devroye, Gy\"{o}rfi, and Lugosi}{Devroye
  et~al.}{1996}]{DGLbook96}
\textsc{Devroye, L., L.~Gy\"{o}rfi, and G.~Lugosi} (1996): \emph{A
  Probabilistic Theory of Pattern Recognition}, Springer.

\bibitem[\protect\citeauthoryear{Donini, Oneto, Ben-David, Shawe-Taylor, and
  Pontil}{Donini et~al.}{2018}]{Donini_et_al_2018}
\textsc{Donini, M., L.~Oneto, S.~Ben-David, J.~Shawe-Taylor, and M.~Pontil}
  (2018): \enquote{{Empirical risk minimization under fairness constraints},}
  in \emph{Proceedings of the 32nd Conference on Neural Information Processing
  Systems (NeurIPS 2018)}, 2796--2806.

\bibitem[\protect\citeauthoryear{Dudley}{Dudley}{1999}]{Dudley1999}
\textsc{Dudley, R.~M.} (1999): \emph{Uniform Central Limit Theorems}, Cambridge
  University Press.

\bibitem[\protect\citeauthoryear{Dwork, Hardt, Pitassi, Reingold, and
  Zemel}{Dwork et~al.}{2012}]{Dwork_et_al_2012}
\textsc{Dwork, C., M.~Hardt, T.~Pitassi, O.~Reingold, and R.~Zemel} (2012):
  \enquote{{Fairness through awareness},} in \emph{Proceedings of the 2012
  Innovations in Theoretical Computer Science Conference (ITCS 2012)},
  214--226.

\bibitem[\protect\citeauthoryear{Feldman, Friedler, Moeller, Scheidegger, and
  Venkatasubramanian}{Feldman et~al.}{2015}]{Feldman_et_al_2015}
\textsc{Feldman, M., S.~A. Friedler, J.~Moeller, C.~Scheidegger, and
  S.~Venkatasubramanian} (2015): \enquote{{Certifying and removing disparate
  impact},} in \emph{Proceedings of the 2015 ACM SIGKDD International
  Conference on Knowledge Discovery and Data Mining}, 259--268.

\bibitem[\protect\citeauthoryear{Freund and Schapire}{Freund and
  Schapire}{1997}]{Freund_Schapire_1997}
\textsc{Freund, Y. and R.~Schapire} (1997): \enquote{{A decision-theoretic
  generalization of on-line learning and an application to boosting},}
  \emph{Journal of Computer System Sciences}, 55, 119--139.

\bibitem[\protect\citeauthoryear{Gao and Wellner}{Gao and
  Wellner}{2007}]{Gao_Wellner_2007}
\textsc{Gao, F. and J.~A. Wellner} (2007): \enquote{Entropy estimate for
  high-dimensional monotonic functions,} \emph{Journal of Multivariate
  Analysis}, 98, 1751--1764.

\bibitem[\protect\citeauthoryear{Gin\'e and Nickl}{Gin\'e and
  Nickl}{2016}]{Gine_Nickl_2016}
\textsc{Gin\'e, E. and R.~Nickl} (2016): \emph{Mathematical Foundations of
  Infinite-Dimensional Statistical Models}, New York: Cambridge University
  Press.

\bibitem[\protect\citeauthoryear{Hirano and Porter}{Hirano and
  Porter}{2009}]{HiranoPorter2009}
\textsc{Hirano, K. and J.~R. Porter} (2009): \enquote{Asymptotics for
  statistical treatment rules,} \emph{Econometrica}, 77, 1683--1701.

\bibitem[\protect\citeauthoryear{Horowitz}{Horowitz}{1992}]{Horowitz_1992}
\textsc{Horowitz, J.~L.} (1992): \enquote{A smoothed maximum score estimator
  for the binary response model,} \emph{Econometrica}, 60, 505--531.

\bibitem[\protect\citeauthoryear{Jiang}{Jiang}{2004}]{Jiang_2004}
\textsc{Jiang, W.} (2004): \enquote{Process consistency for adaboost,}
  \emph{Annals of Statistics}, 32, 13--29.

\bibitem[\protect\citeauthoryear{Kallus}{Kallus}{2021}]{Kallus_2020}
\textsc{Kallus, N.} (2021): \enquote{More efficient policy learning via optimal
  retargeting,} \emph{Journal of the American Statistical Association}, 116,
  646--658.

\bibitem[\protect\citeauthoryear{Kamishima, Akaho, and Sakuma}{Kamishima
  et~al.}{2011}]{Kamishima_et_al_2011}
\textsc{Kamishima, T., S.~Akaho, and J.~Sakuma} (2011):
  \enquote{{Fairness-aware learning through regularization approach},} in
  \emph{Proceedings of the 2011 IEEE 11th International Conference on Data
  Mining (ICDM 2011)}, 643--650.

\bibitem[\protect\citeauthoryear{Karlan, Mullainathan, and Roth}{Karlan
  et~al.}{2019}]{karlan_et_al_2019}
\textsc{Karlan, D., S.~Mullainathan, and B.~N. Roth} (2019): \enquote{Debt
  traps? Market vendors and moneylender debt in India and the Philippines,}
  \emph{American Economic Review: Insights}, 1, 27--42.

\bibitem[\protect\citeauthoryear{Kasy}{Kasy}{2018}]{Kasy2018}
\textsc{Kasy, M.} (2018): \enquote{Optimal taxation and insurance using machine
  learning — Sufficient statistics and beyond,} \emph{Journal of Public
  Economics}, 167, 205--219.

\bibitem[\protect\citeauthoryear{Kitagawa and Lin}{Kitagawa and
  Lin}{2021}]{KL21}
\textsc{Kitagawa, T. and Y.-L. Lin} (2021): \enquote{Boosting for treatment
  choice,} \emph{unpublished manuscript}.

\bibitem[\protect\citeauthoryear{Kitagawa and Tetenov}{Kitagawa and
  Tetenov}{2018}]{KT18}
\textsc{Kitagawa, T. and A.~Tetenov} (2018): \enquote{Who should be treated?
  Empirical welfare maximization methods for treatment choice,}
  \emph{Econometrica}, 86, 591--616.

\bibitem[\protect\citeauthoryear{Kitagawa and Tetenov}{Kitagawa and
  Tetenov}{2021}]{KT21}
---\hspace{-.1pt}---\hspace{-.1pt}--- (2021): \enquote{Equality-minded
  treatment choice,} \emph{Journal of Business \& Economic Statistics}, 39,
  561--574.

\bibitem[\protect\citeauthoryear{Kitagawa and Wang}{Kitagawa and
  Wang}{2023}]{KW23}
\textsc{Kitagawa, T. and G.~Wang} (2023): \enquote{Who should get vaccinated?
  Individualized allocation of vaccines over SIR network,} \emph{Journal of
  Econometrics}, 232, 109--131.

\bibitem[\protect\citeauthoryear{Koltchinskii}{Koltchinskii}{2006}]{Koltchinskii_2006}
\textsc{Koltchinskii, V.} (2006): \enquote{Local Rademacher complexities and
  oracle inequalities in risk minimization,} \emph{Annals of Statistics}, 34,
  2593--2656.

\bibitem[\protect\citeauthoryear{Lieli and White}{Lieli and
  White}{2010}]{LieliWhite2010}
\textsc{Lieli, R.~P. and H.~White} (2010): \enquote{The construction of
  empirical credit scoring rules based on maximization principles,}
  \emph{Journal of Econometrics}, 157, 110--119.

\bibitem[\protect\citeauthoryear{Lugosi}{Lugosi}{2002}]{Lugosi02}
\textsc{Lugosi, G.} (2002): \enquote{Pattern classification and learning
  theory,} in \emph{Principles of Nonparametric Learning}, ed. by
  L.~Gy\"{o}rfi, Vienna: Springer, 1--56.

\bibitem[\protect\citeauthoryear{Lugosi and Vayatis}{Lugosi and
  Vayatis}{2004}]{Lugosi_Vayatis_2004}
\textsc{Lugosi, G. and N.~Vayatis} (2004): \enquote{On the Bayes-risk
  consistency of regularized boosting methods,} \emph{Annals of Statistics},
  32, 30--55.

\bibitem[\protect\citeauthoryear{Mammen and Tsybakov}{Mammen and
  Tsybakov}{1999}]{Mammen_Tsybakov_1999}
\textsc{Mammen, E. and A.~B. Tsybakov} (1999): \enquote{{Smooth discrimination
  analysis},} \emph{Annals of Statistics}, 27, 1808--1829.

\bibitem[\protect\citeauthoryear{Mannor, Meir, and Zhang}{Mannor
  et~al.}{2003}]{Mannor_et_al_2003}
\textsc{Mannor, S., R.~Meir, and T.~Zhang} (2003): \enquote{Greedy algorithms
  for classification – consistency, convergence rates, and adaptivity,}
  \emph{Journal of Machine Learning Research}, 4, 713--742.

\bibitem[\protect\citeauthoryear{Manski}{Manski}{1975}]{Manski1975}
\textsc{Manski, C.~F.} (1975): \enquote{Maximum score estimation of the
  stochastic utility model of choice,} \emph{Journal of Econometrics}, 3,
  205--228.

\bibitem[\protect\citeauthoryear{Manski}{Manski}{2004}]{Manski2004}
---\hspace{-.1pt}---\hspace{-.1pt}--- (2004): \enquote{Statistical treatment
  rules for heterogeneous populations,} \emph{Econometrica}, 72, 1221--1246.

\bibitem[\protect\citeauthoryear{Manski and Thompson}{Manski and
  Thompson}{1989}]{ManskiThompson1989}
\textsc{Manski, C.~F. and T.~Thompson} (1989): \enquote{Estimation of best
  predictors of binary response,} \emph{Journal of Econometrics}, 40, 97--123.

\bibitem[\protect\citeauthoryear{Mbakop and Tabord-Meehan}{Mbakop and
  Tabord-Meehan}{2021}]{MT17}
\textsc{Mbakop, E. and M.~Tabord-Meehan} (2021): \enquote{Model selection for
  treatment choice: Penalized welfare maximization,} \emph{Econometrica}, 89,
  825--848.

\bibitem[\protect\citeauthoryear{Nguyen, Wainwright, and Jordan}{Nguyen
  et~al.}{2009}]{Nguyen_et_al_2009}
\textsc{Nguyen, X., M.~J. Wainwright, and M.~I. Jordan} (2009): \enquote{On
  surrogate loss functions and f-divergences,} \emph{Annals of Statistics}, 37,
  876--904.

\bibitem[\protect\citeauthoryear{Qian and Murphy}{Qian and
  Murphy}{2011}]{Qian_Murphy_2011}
\textsc{Qian, M. and S.~A. Murphy} (2011): \enquote{Performance guarantees for
  individualized treatment rules,} \emph{Annals of Statistics}, 39, 1180--1210.

\bibitem[\protect\citeauthoryear{Rambachan, Kleinberg, Ludwig, and
  Mullainathan}{Rambachan et~al.}{2020}]{Rambachan_2020}
\textsc{Rambachan, A., J.~Kleinberg, J.~Ludwig, and S.~Mullainathan} (2020):
  \enquote{An economic approach to regulating algorithms,} \emph{NBER working
  paper}.

\bibitem[\protect\citeauthoryear{Sakaguchi}{Sakaguchi}{2021}]{Sakaguchi_2021}
\textsc{Sakaguchi, S.} (2021): \enquote{Estimation of optimal dynamic treatment
  assignment rules under policy constraints,} \emph{arXiv preprint
  arXiv:2106.05031}.

\bibitem[\protect\citeauthoryear{Scott}{Scott}{2012}]{Scott_2012}
\textsc{Scott, C.} (2012): \enquote{Calibrated asymmetric surrogate losses,}
  \emph{Electronic Journal of Statistics}, 6, 958--992.

\bibitem[\protect\citeauthoryear{Steinwart}{Steinwart}{2005}]{Steinwart_2005}
\textsc{Steinwart, I.} (2005): \enquote{Consistency of support vector machines
  and other regularized kernel classifiers,} \emph{IEEE Transactions on
  Information Theory}, 51, 713--742.

\bibitem[\protect\citeauthoryear{Steinwart}{Steinwart}{2007}]{Steinwart_2007}
---\hspace{-.1pt}---\hspace{-.1pt}--- (2007): \enquote{How to compare different
  loss functions and their risks,} \emph{Constructive Approximation}, 26,
  225--287.

\bibitem[\protect\citeauthoryear{Stoye}{Stoye}{2009}]{Stoye2009}
\textsc{Stoye, J.} (2009): \enquote{Minimax regret treatment choice with finite
  samples,} \emph{Journal of Econometrics}, 151, 70--81.

\bibitem[\protect\citeauthoryear{Stoye}{Stoye}{2012}]{Stoye2012}
---\hspace{-.1pt}---\hspace{-.1pt}--- (2012): \enquote{Minimax regret treatment
  choice with covariates or with limited validity of experiments,}
  \emph{Journal of Econometrics}, 166, 138--156.

\bibitem[\protect\citeauthoryear{Swaminathan and Joachims}{Swaminathan and
  Joachims}{2015}]{SJ15}
\textsc{Swaminathan, A. and T.~Joachims} (2015): \enquote{Counterfactual risk
  minimization: Learning from logged bandit feedback,} \emph{Journal of Machine
  Learning Research}, 16, 1731--1755.

\bibitem[\protect\citeauthoryear{Tetenov}{Tetenov}{2012}]{Tetenov2012}
\textsc{Tetenov, A.} (2012): \enquote{Statistical treatment choice based on
  asymmetric minimax regret criteria,} \emph{Journal of Econometrics}, 166,
  157--165.

\bibitem[\protect\citeauthoryear{Tsybakov}{Tsybakov}{2004}]{Tsybakov_2004}
\textsc{Tsybakov, A.~B.} (2004): \enquote{{Optimal aggregation of classifiers
  in statistical learning},} \emph{Annals of Statistics}, 32, 135--166.

\bibitem[\protect\citeauthoryear{Vapnik}{Vapnik}{1998}]{Vapnik98}
\textsc{Vapnik, V.~N.} (1998): \emph{Statistical Learning Theory}, John Wiley
  \& Sons.

\bibitem[\protect\citeauthoryear{Viviano}{Viviano}{2019}]{Viviano_2019}
\textsc{Viviano, D.} (2019): \enquote{Policy targeting under network
  interference,} \emph{arXiv preprint arXiv:1906.10258}.

\bibitem[\protect\citeauthoryear{Wang and Ghosh}{Wang and
  Ghosh}{2012}]{Wang_Ghosh_2012}
\textsc{Wang, J. and S.~K. Ghosh} (2012): \enquote{Shape restricted
  nonparametric regression based on multivariate Bernstein polynomials,}
  \emph{North Carolina State University Department of Statistics technical
  report}.

\bibitem[\protect\citeauthoryear{Zadrozny}{Zadrozny}{2003}]{Zadrozny03}
\textsc{Zadrozny, B.} (2003): \enquote{Policy mining: Learning decision
  policies from fixed sets of data,} \emph{Ph.D Thesis, University of
  California, San Diego}.

\bibitem[\protect\citeauthoryear{Zeng, Ustun, and Rudin}{Zeng
  et~al.}{2017}]{Zeng_et_al_2017}
\textsc{Zeng, J., B.~Ustun, and C.~Rudin} (2017): \enquote{Interpretable
  classification models for recidivism prediction,} \emph{Journal of the Royal
  Statistical Society, Series A: Statistics in Society}, 180, 689--722.

\bibitem[\protect\citeauthoryear{Zhang}{Zhang}{2004}]{Zhang_2004}
\textsc{Zhang, T.} (2004): \enquote{Statistical behavior and consistency of
  classification methods based on convex risk minimization,} \emph{Annals of
  Statistics}, 32, 56--85.

\bibitem[\protect\citeauthoryear{Zhang, Laber, Davidian, and Tsiatis}{Zhang
  et~al.}{2018}]{Zhang_et_al_2018}
\textsc{Zhang, Y., E.~B. Laber, M.~Davidian, and A.~A. Tsiatis} (2018):
  \enquote{Interpretable dynamic treatment regimes,} \emph{Journal of the
  American Statistical Association}, 113, 1541--1549.

\bibitem[\protect\citeauthoryear{Zhao, Zeng, Laber, and Kosorok}{Zhao
  et~al.}{2015}]{Zhao_et_al_2015}
\textsc{Zhao, Y., D.~Zeng, E.~B. Laber, and M.~R. Kosorok} (2015): \enquote{New
  statistical learning methods for estimating optimal dynamic treatment
  regimes,} \emph{Journal of the American Statistical Association}, 110,
  583--598.

\bibitem[\protect\citeauthoryear{Zhao, Zeng, Rush, and Kosorok}{Zhao
  et~al.}{2012}]{Zhao2012JASA}
\textsc{Zhao, Y., D.~Zeng, A.~J. Rush, and M.~R. Kosorok} (2012):
  \enquote{Estimating individualized treatment rules using outcome weighted
  learning,} \emph{Journal of the American Statistical Association}, 107,
  1106--1118.

\end{thebibliography}

\end{document}